\newcommand{\RR}{\mathbb{R}}
\newcommand{\NN}{\mathbb{N}}
\newcommand{\ZZ}{\mathbb{Z}}
\newcommand{\dis}{\displaystyle}
\renewcommand{\epsilon}{\varepsilon}
\newcommand{\Supp}{\text{\rm supp}}
\newcommand{\Card}{\text{\rm card}}
\theoremstyle{definition}
\newtheorem{lemma}{Lemma}
\newtheorem{theorem}{Theorem}
\newtheorem{corollary}{Corollary}
\newtheorem{proposition}{Proposition}
\newtheorem{definition}{Definition}
\newtheorem{example}{Example}
\newtheorem{observation}{Observation}
\newtheorem{remark}{Remark}
\newtheorem{notation}{Notation}
\begin{document}

\thispagestyle{empty}

\includegraphics[scale=1, height=1.7cm]{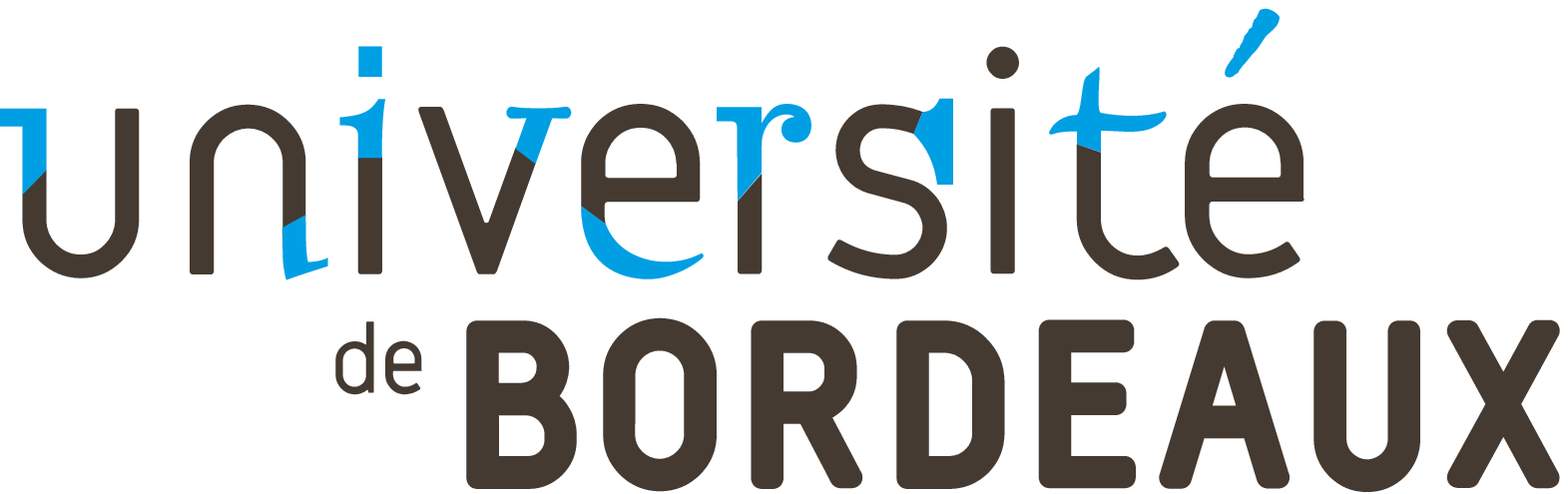}
\hfill
\includegraphics[scale=1, height=1.7cm]{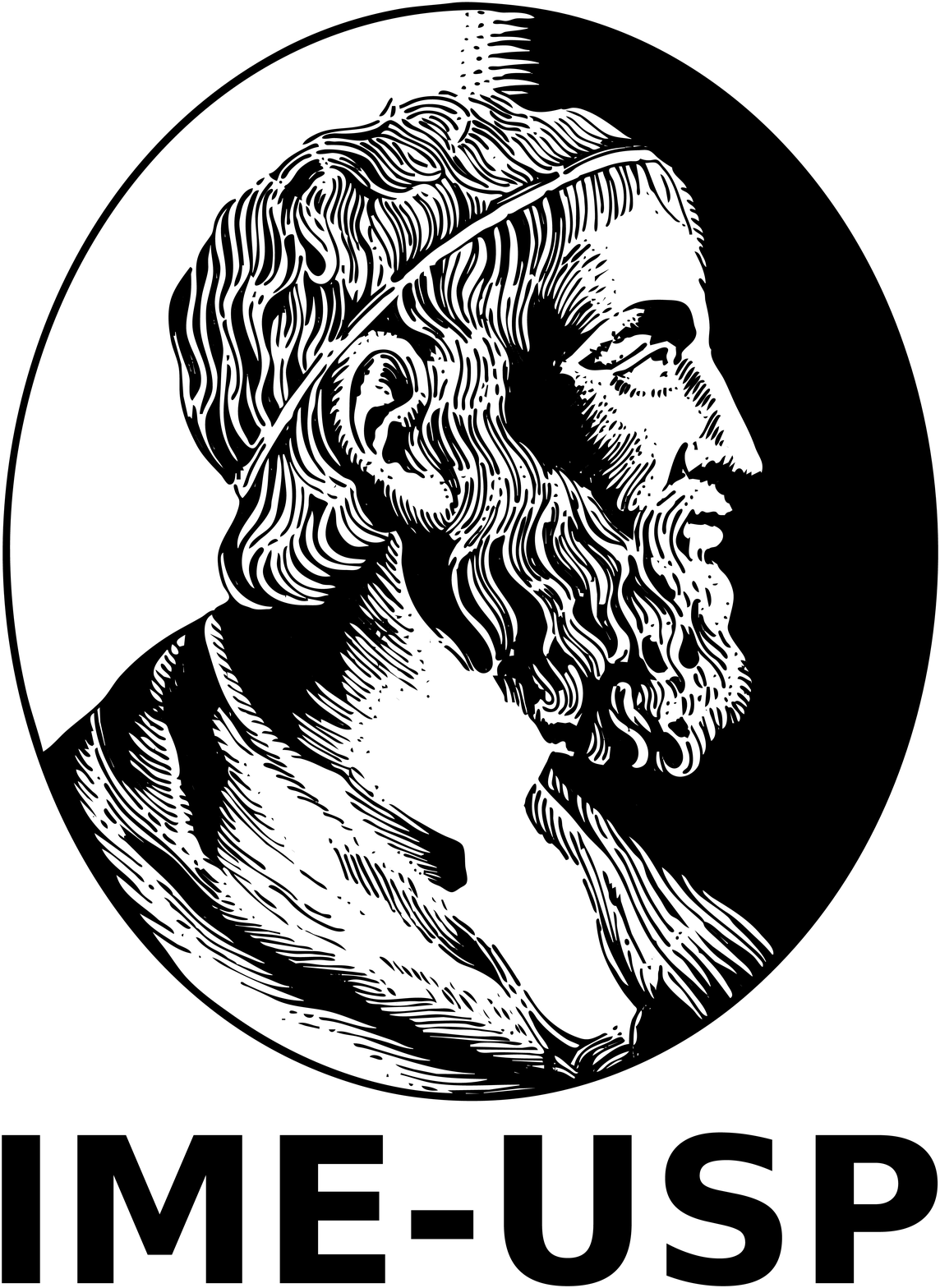}
\hfill

\begin{center}
	
\normalsize{THÈSE EN COTUTELLE PRÉSENTÉE \\
POUR OBTENIR LA GRADE DE}\\ [0.5cm]
\Large{{\bf DOCTEUR DE \\
L'UNIVERSITÉ DE BORDEAUX\\
ET DE L'UNIVERSITÉ DE SÃO PAULO}}\\ [0.5cm]
\normalsize{ÉCOLE DOCTORALE MATHÉMATIQUES ET INFORMATIQUE\\
INSTITUTO DE MATEMÁTICA E ESTATÍSTICA \\ [0.5cm]
SPÉCIALITÉ: Mathématiques Appliquées et Calcul Scientifique}

\vspace{1cm}

\Large{\bf Par Gregório DALLE VEDOVE NOSAKI}

\vspace{0.5cm}

\textbf{\LARGE{Chaos and Turing Machines on Bidimensional \\
			 Models at Zero Temperature}}
\vspace{1cm}

\normalsize{Sous la direction de Philippe THIEULLEN \\
	et de Rodrigo BISSACOT}

\end{center}

\vspace{1cm}

Soutenu le 15 décembre de 2020

\vfill

{\bf Membres du jury :}
\begin{table}[b]
\begin{footnotesize}
	\centering
	\makebox[\textwidth]{%
		\begin{tabular}{lllr}
			M. Eduardo GARIBALDI      & Professeur associé  & Universidade de Campinas & Examinateur   \\
			M. Samuel PETITE  & Maître de conférences    & Université de Picardie Jules Verne & Examinateur\\
			M. Mathieu SABLIK       & Professeur & Université de Toulouse III Paul Sabatier & Rapporteur  \\
			M. Aernout VAN ENTER       & Professeur émérite  & University of Groningen & Rapporteur  \\
			M. Pierre PICCO       & Directeur de recherche  & Institut de Math. de Marseille & Examinateur \\
			M. Philippe THIEULLEN         & Professeur & Université de Bodeaux & CoDirecteur  \\
			M. Rodrigo BISSACOT     & Professeur associé & Universidade de São Paulo & CoDirecteur     \\
			Mme. Nathalie AUBRUN & Chagé de recherche & Université Paris-Saclay & Examinatrice \\
			M. Artur LOPES & Professeur associé & Universidade Federal do Rio Grande do Sul & Invité \\
		\end{tabular}
	}
\end{footnotesize}
\end{table}
\newpage

\noindent \textbf{Titre:} Machine de Turing et Chaos pour des Modèles Bidimensionnels à Température Zéro

\noindent \textbf{Résumé:} En mécanique statistique d'équilibre ou formalisme thermodynamique un des objectifs est de décrire le comportement des familles de mesures d'équilibre pour un potentiel paramétré par la température inverse $\beta$. Nous considérons ici une mesure d'équilibre comme une mesure shift invariante qui maximise la pression. Il existe d'autres constructions qui prouvent le comportement chaotique de ces mesures lorsque le système se fige, c'est-à-dire lorsque $\beta\rightarrow + \infty $. Un des exemples les plus importants a été donné par Chazottes et Hochman~\cite {CH} où ils prouvent la non-convergence des mesures d'équilibre pour un potentiel localement constant lorsque la dimension est supérieure à 3. Dans ce travail, nous présentons une construction et un exemple potentiel localement constant tel qu'il e\-xis\-te une suite $(\beta_k)_{k\geq 0}$ où la non-convergence est assurée pour toute choix suite de mesures d'équilibre à l'inverse de la température $\beta_k$ lorsque $\beta_k \rightarrow+\infty$. Pour cela nous utilisons la construction décrite par Aubrun et Sablik~\cite{AS} qui améliore le résultat de Hochman~\cite{Hochman} utilisé dans la construction de Chazottes et Hochman~\cite{CH}.
\\

\noindent \textbf{Mots clés:} formalisme thermodynamique, measure d'équilibre, décalage.

\noindent \rule{16cm}{0.1pt}

\vspace{0.5cm}

\noindent\textbf{Title:} Chaos and Turing Machine on Bidimensional Models at Zero Temperature

\noindent\textbf{Abstract:} In equilibrium statistical mechanics or thermodynamics formalism one of the main objectives is to describe the behavior of families of equilibrium measures for a potential parametrized by the inverse temperature $\beta$. Here we consider equilibrium measures as the shift invariant measures that maximizes the pressure. Other constructions already prove the chaotic behavior of these measures when the system freezes, that is, when $\beta\rightarrow+\infty$. One of the most important examples was given by Chazottes and Hochman~\cite{CH} where they prove the non-convergence of the equilibrium measures for a locally constant potential when the dimension is bigger than or equal to 3. In this work we present a construction of a bidimensional example described by a finite alphabet and a locally constant potential in which there exists a subsequence $(\beta_k)_{k\geq 0}$ where the non-convergence occurs for any sequence of equilibrium measures at inverse temperatures $\beta_k$ when $\beta_k\rightarrow+\infty$. In order to describe such an example, we use the construction described by Aubrun and Sablik~\cite{AS} which improves the result of Hochman~\cite{Hochman} used in the construction of Chazottes and Hochman~\cite{CH}.
\\

\noindent \textbf{Keywords:} thermodynamic formalism, equilibrium measure, subshift.

\newpage

\noindent \textbf{Título:} Caos e Máquinas de Turing em Modelos Bidimensionais à Temperatura Zero

\noindent \textbf{Resumo:} Em mecânica estatística de equilíbrio ou formalismo termodinâmico um dos principais objetivos é descrever o comportamento das famílias de medidas de equilíbrio para um dado potencial parametrizado pelo inverso da temperatura $\beta$. Entendemos aqui por medidas de equilíbrio as medidas shift invariantes que mazimizam a pressão. Diversas construções já demonstraram um comportamento caótico destas medidas quando o sistema congela, ou seja, $\beta\rightarrow+\infty$. Um dos principais exemplos é o construído por Chazottes e Hochman \cite{CH} onde eles conseguem provar a não convergência de uma família de medidas de equilíbrio para um dado potential localmente constante nos casos onde a dimensão é maior ou igual a 3. Neste trabalho apresentaremos a construção de um exemplo no caso bidimensional sobre um alfabeto finito e um potencial localmente constante tal que existe uma sequencia $(\beta_k)_{k\geq 0}$ onde não ocorre a convergência para qualquer sequência de medidas de equilíbrio ao inverso da temperatura $\beta_k$ quando $\beta_k\rightarrow+\infty$. Para tal, usaremos a construção descrita por Aubrun e Sablik em \cite{AS} que melhora o resultado de Hochman \cite{Hochman} usado na construção de Chazottes e Hochman \cite{CH}.
\\

\noindent \textbf{Palavras-chave:} formalismo termodinâmico, medida de equilíbrio, subshift.

\vspace{7cm}

\textsc{ This study was financed in part by the \\
	Coordenação de Aperfeiçoamento de Pessoal de Nível Superior \\
	Brasil (CAPES) – Finance Code 001}

\chapter*{Résumé étendu}

L'un des problèmes les plus importants dans la mécanique statistique à l'équilibre consiste à décrire des familles de mesures de Gibbs pour un potentiel donné ou pour une famille d'interactions. Nous travaillons avec des systèmes classiques, ce qui signifie que notre espace de configuration sera
\[
\Sigma^d(\mathcal{A}):=\mathcal{A}^{\ZZ^d}
\]
où $\mathcal{A}$ est un alphabet fini et $d\in\NN$ est la dimension du réseau. Nous introduisons la fonction
\[
\varphi:\Sigma^d(\mathcal{A})\to\RR
\]
qu'il s'appelle {\it potentiel par site} et peut être physiquement interprétée comme la contribution énergétique de l'origine du réseau pour chaque configuration $x\in\Sigma^d(\mathcal{A})$.

À partir de ces éléments, nous désignons pour chaque $\beta>0$ l'ensemble $\mathcal{G}(\beta\varphi)$ qui est l'ensemble des {\it mesures de Gibbs} associées à $\beta\varphi$ à la température inverse $\beta$. Il existe plusieurs définitions que nous pouvons considérer comme une mesure de Gibbs, en utilisant des mesures conformes, des équations DLR, des limites thermodynamiques, etc. Voir Georgii~\cite{Georgii}, le livre classique sur les mesures de Gibbs et \cite{Kimura} pour les équivalences de plusieurs des ces définitions. Par compacité, nous savons que cet ensemble a au moins une mesure de Gibbs invariante pour translation. Dans cette thèse, nous nous intéressons au comportement de l'ensemble des mesures de Gibbs qui sont des mesures de probabilité invariantes, appelées mesures d'équilibre, lorsque la température tend vers zéro, c'est-à-dire lorsque $\beta \rightarrow +\infty$.

Une mesure de probabilité $\mu_\beta$ sur $\Sigma^d(\mathcal{A})$ est une {\it mesure d'équilibre} (ou {\it état d'équilibre}) à   la température inverse $\beta>0$ pour un potentiel $\beta\varphi$ si c'est une mesure invariante par décalage (ou mesure invariante par translation) qui maximise la pression, c'est-à-dire si
\[\dis P(\beta\varphi):=\sup_{\mu \in \mathcal{M}_\sigma(\Sigma^d (\mathcal{A}))} \left\{h(\mu)-\int\beta\varphi d\mu \right\} = h(\mu_\beta)-\int \beta\varphi d\mu_\beta.\]

Nous considérerons par la suite l'ensemble uniquement ces mesures d'équilibre $\mu_\beta$, celles qui maximisent la pression $P(\beta\varphi)$ ci-dessus sur toutes les mesures de probabilité invariantes pour translation définies sur $\Sigma^d(\mathcal{A})$. La fonction $h(nu)$ dans l'expression de $P(\beta\varphi)$ est l'entropie de Kolmogorov-Sinai de $\nu$.

Dans le cas unidimensionnel, si un potentiel $\varphi$ est Hölder continu, nous avons toujours une mesure de Gibbs unique qui est aussi la seule mesure d'équilibre. Pour une dimension $d>1$ la situation est radicalement différente et nous pouvons avoir plusieurs mesures de Gibbs même pour un potentiel à courte portée, l'exemple le plus connu est le modèle d'Ising.

Les états d'équilibre à température zéro ({\it les états fondamentaux}) sont les mesures de probabilité invariantes qui minimisent
\[\displaystyle \int \varphi d\nu\]
sur toutes les mesures de probabilité invariantes $\nu$. En autres termes, étant donné un potentiel, nous avons que les points d'accumulation  pour la topologie faible*  des états d'équilibre quand $\beta\rightarrow+\infty $ sont nécessairement les mesures minimisantes pour le potentiel $\varphi$. Une étude plus détaillée sur les limites possibles lorsque le système se fige et comment elle sont liées aux configurations avec une énergie minimale peut être trouvée dans \cite{vEFS}.

Chazottes et Hochman~\cite{CH} ont montré dans le cas unidimensionnel un exemple de potentiel Lipschitz $\varphi$ (mais à longue portée) où la suite $\mu_{\beta\varphi}$ ne converge pas lorsque $\beta\rightarrow +\infty$. Ici, $\mu_{\beta\varphi}$ est l'unique mesure de Gibbs invariante par translation (ou l'unique mesure de Gibbs) à  la température inverse $\beta>0$ (qui est également l'unique mesure d'équilibre). En revanche, \cite{Bremont, CGU, GT, Leplaideur} ont montré qu'une interaction de courte portée dans le cas unidimensionnel sur un alphabet fini implique la convergence de $\mu_{\beta\varphi}$. Le cas où $\mathcal{A}$ est un ensemble dénombrable a également été étudié dans~\cite{Kempton}. La construction d'exemples de non-convergence a été donnée par van Enter et W. Ruszel~\cite{vER}, où un exemple de potentiel de courte portée sur un espace d'états continu et un comportement chaotique ont été construits. Récemment, l'argument de van Enter et Ruszel a été implémenté pour le cas où $\mathcal{A} $ est un ensemble fini dans~\cite{BGT, BLL, CR-L}.

Chazottes et Hochman~\cite{CH} ont également montré que le même type de non-convergence peut être observé lorsque la dimension est $d\geq 3$ même pour un potentiel localement constant (à courte portée). La construction de leur exemple n'est possible que pour $d\geq 3$ car ils s'appuient fortement sur la théorie des sous-shifts multidimensionnels de type fini et des Machines de Turing, développée par Hochman~\cite{Hochman} qui fournit une méthode pour transférer une construction  unidimensionnelle  à un sous-shifts de type fini, mais de dimension supérieure. Grâce au théorème de Hochman, Chazottes et Hochman ont pu construire un exemple pour $d=3$ avec un potentiel $\varphi $ localement constant sur un espace d'états fini. Leur construction peut être facilement étendue à n'importe quelle dimension $ d\geq3 $. Ces résultats nous amènent à croire que l'énoncé est également vrai pour $ d = 2 $. Notre résultat principal est double: nous étendons le théorème du comportement chaotique de Chazottes-Hochman en dimension 2 en utilisant une approche différente impliquant le diagramme espace-temps d'une machine de Turing développée par Aubrun-Sablik et nous clarifions le rôle de la reconstruction et la complexité relative fonction de l'extension par un sous-shift de type fini qui manque dans les arguments de Chazottes-Hochman.

Le résultat principal d'Aubrun et Sablik~\cite{AS}, appelé théorème de simulation, affirme que tout sous-shift $d$-dimensionnel défini par un ensemble de motifs interdits énumérés par une machine de Turing est une sous-action d'un sous-shift de type fini $(d+1)$-dimensionnel. Il existe d'autres travaux dans lesquels les résultats de simulation obtenus jusqu'ici dans cette théorie ont été améliorés~\cite{DRS, DRS2}. Dans ces travaux les auteurs améliorent les résultats obtenus jusqu'à présent en diminuant la dimension du sous-shift de type fini qui génère le sous-shift effectif ; cependant les preuves sont basées sur le théorème du point fixe de Kleene et n'utilisent pas d'arguments géométriques.

La construction d'Aubrun et Sablik~\cite{AS} améliore la méthode de Hochman~\cite{Hochman} en augmentant uniquement de 1 la dimension du SFT, en particulier, elle permet d'obtenir  la construction de Chazottes et Hochman~\cite{CH} en dimension 2.

Dans le deuxième chapitre, nous présentons les principales définitions du formalisme thermodynamique, les résultats classiques et les notations standards. Nous commençons par la définition des sous-shift et définissons une classe spéciale de sous-shift basée sur la concaténation de blocs de même taille afin de former chaque configuration possible. Dans la deuxième section de ce chapitre, nous présentons une brève revue de l'entropie traitant des partitions, de l'entropie d'une partition, de l'entropie métrique et topologique et des concepts de pression, de mesure d'équilibre et de mesure de Gibbs. Dans la troisième section nous donnons une idée générale des opérations transformant un sous-shift en un autre basé sur \cite{Aubrun} afin d'appréhender la notion de simulation d'un sous-décalage par un autre. Enfin, nous présentons une définition formelle d'une machine de Turing, comment représenter le travail d'une machine de Turing dans un diagramme espace-temps et aussi une idée de la construction d'Aubrun et de Sablik~\cite{AS}.

Le troisième chapitre est dédié à la construction de notre exemple en s'inspirant de la construction présentée dans les travaux de Chazottes et Hochman~\cite{CH}. Nous définissons d'abord un sous-shift unidimensionnel basé sur un processus d'itération qui nous donne à chaque étape des blocs de même longueur qui sont concaténés pour former un sous-shift tel que défini au deuxième chapitre. Nous montrons que le contrôle que nous avons obtenu sur l'ensemble des mots interdits de ce sous-shift, implique qu'il existe une machine de Turing qui liste tous les mots interdits, c'est-à-dire que notre sous-shift est un sous-shift effectivement fermé. De là, nous pouvons utiliser le théorème de simulation d'Aubrun-Sablik~\cite{AS} et obtenir un sous-shift bidimensionnel de type fini qui simule notre sous-shift effectivement fermé unidimensionnel précédent. Toujours dans la deuxième section de ce chapitre, nous prouvons quelques résultats importants qui expliquent comment déconstruire une configuration dans le sous-shift bidimensionnel en tant que motifs concaténés dans un dictionnaire donné. Dans la troisième et dernière partie de ce chapitre, nous définissons une nouvelle coloration pour le sous-shift bidimensionnel, comme dans Chazottes et Hochman~\cite{CH}, qui consiste à dupliquer un symbole distinctif, afin de transférer l'entropie du sous-shift initial vers le sous-shift de type fini obtenu par le théorème de simulation.

Après toutes ces constructions, on se retrouve avec un sous-shift de type fini bidimensionnel $X$ défini sur un alphabet fini $\mathcal{A}$, un entier $D\geq 1$ et un ensemble fini de motifs interdits $\mathcal{F}\subset \mathcal{A}^{\llbracket1, D \rrbracket^2} $. On définit ensuite le potentiel localement constant par site suivant
\[
\begin{array}{rcl}
\varphi:\mathcal{A}^{\ZZ^2} = \Sigma^2 (\mathcal{A}) & \to & \RR \\
x & \mapsto & \varphi(x) = \mathds{1}_F (x) \\
\end{array}
\]
où $F$ est l'ensemble clopen égal à l'union des cylindres générés par chaque motif dans $\mathcal{F}$.

Le dernier chapitre est dédié à la d\'emonstration du résultat principal qui est le suivant.

\begin{theorem}
	Il existe un potentiel localement constant $\varphi:\Sigma^2(\mathcal{A}) \to \RR $, il existe une sous-suite $(\beta_k)_{k\geq0} $ qui tend vers l'infini et deux ensembles compacts et invariants qui sont disjoints et non vides $X_A,X_B$ de $\Sigma^2(\mathcal{A})$, tels que si $\mu_{\beta_k} $ est une mesure d'équilibre  la température inverse $\beta_k$ associée au potentiel $\beta_k\varphi$, le support de n'importe quelle mesure d'accumulation pour la topologie faible* de la suite $(\mu_{\beta_{2k}})_{k\geq0} $ est inclus dans $X_B$, et le support de n'importe quelle mesure d'accumulation pour la topologie faible* de $(\mu_{\beta_{2k+1}})_{k\geq0}$ est inclus dans $X_A$.
\end{theorem}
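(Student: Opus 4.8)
The plan is to combine a soft thermodynamic step, confining all zero-temperature accumulation points to the SFT $X=\{x\in\Sigma^2(\mathcal{A}):\varphi\text{ vanishes along the whole orbit of }x\}$, with a quantitative entropy--energy comparison read off from the hierarchical structure of $X$ built in Chapter~3. (Equilibrium measures exist and $\mu\mapsto h(\mu)$ is upper semicontinuous, because the shift on $\Sigma^2(\mathcal{A})$ is expansive and $\varphi$ is locally constant.) For the reduction: since $\varphi=\mathds{1}_F\ge0$ and $\varphi$ vanishes identically exactly on $X$, every invariant measure carried by $X$ is a minimizing measure, so by the variational principle for $X$ we have $P(\beta\varphi)\ge h_{\mathrm{top}}(X)\ge0$ for all $\beta$; but also $P(\beta\varphi)=h(\mu_\beta)-\beta\mu_\beta(F)\le\log|\mathcal{A}|-\beta\mu_\beta(F)$, whence $\mu_\beta(F)\le\log|\mathcal{A}|/\beta\to0$. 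As $F$ is clopen, any weak* accumulation point $\mu_\infty$ has $\mu_\infty(F)=0$, and by translation-invariance $\mu_\infty$ gives no mass to any translate of $F$, so $\mu_\infty(X)=1$; upper semicontinuity of $h$ together with $h(\mu_\beta)\ge P(\beta\varphi)\ge h_{\mathrm{top}}(X)$ moreover gives $h(\mu_\infty)\ge h_{\mathrm{top}}(X)$, so accumulation points are measures of maximal entropy on $X$ — which is what makes the finer comparison below decisive.

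\textbf{The sets $X_A$ and $X_B$.} By the deconstruction results of Chapter~3, each $x\in X$ admits a canonical shift-equivariant hierarchical parsing into level-$n$ blocks, each carrying a type in $\{A,B\}$. I would take $X_A$ (resp. $X_B$) to be the closed set of $x\in X$ whose level-$n$ blocks are all of type $A$ (resp. $B$) for all large $n$ — equivalently, the orbit closure of the configurations obtained by always choosing the type-$A$ (resp. type-$B$) option in the iteration. These are compact and shift-invariant, non-empty because the iteration explicitly realizes such configurations, and disjoint because at each fixed level the type is unambiguous and the two types are realized by mutually incompatible patterns, so a configuration cannot be eventually of both types. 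The only features used below are: $X_A,X_B$ are closed, and there are decreasing families of closed sets $Z_k^A\supseteq Z_{k+1}^A\supseteq\cdots$ and $Z_k^B\supseteq Z_{k+1}^B\supseteq\cdots$ expressing ``type $A$, resp. $B$, up to scale $k$'', with $\bigcap_kZ_k^A\subseteq X_A$ and $\bigcap_kZ_k^B\subseteq X_B$.

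\textbf{The comparison and the sequence $(\beta_k)$.} This is the core. The symbol-duplication coloring of Chapter~3 gives $X$ topological entropy $\log2$ times the asymptotic density of ``free'' sites, and this density is governed by the hierarchical type: committing to type $A$ up to scale $n$ forces free-site density $d_A(n)$, committing to type $B$ forces $d_B(n)$, both converging to the common value $h_{\mathrm{top}}(X)/\log2$ while $d_A(n)-d_B(n)$ changes sign with $n$ — this alternation is exactly what the iteration is built to produce. At inverse temperature $\beta$ an equilibrium measure can profit from the richer, deeper structure only by paying for the forbidden patterns needed to make finite pieces mutually compatible; estimating the box partition function $Z_{\Lambda_N}(\beta)\asymp\sum_pe^{-\beta\#\{\text{forbidden sub-patterns of }p\}}$ over patterns $p$ on an $N\times N$ box shows that there are thresholds $0<\beta_n'<\beta_n''$ with $\beta_n'\to\infty$ such that, for $\beta\in[\beta_n',\beta_n'']$, matching the favourable type up to scale $n$ gains strictly more entropy than the $O(1/\beta)$ energy it costs, while committing one scale deeper is not yet profitable. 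Picking $\beta_k\in[\beta_k',\beta_k'']$ one gets $\mu_{\beta_k}(Z_k)\ge1-\eta_k$ with $\eta_k\to0$, where $Z_k=Z_k^A$ for $k$ odd and $Z_k=Z_k^B$ for $k$ even. Since each $Z_k$ is closed, any weak* accumulation point $\nu$ of $(\mu_{\beta_{2k+1}})_k$ satisfies $\nu(Z_m^A)\ge\limsup_k\mu_{\beta_{2k+1}}(Z_m^A)=1$ for every $m$, hence $\nu\bigl(\bigcap_mZ_m^A\bigr)=1$ and $\Supp\nu\subseteq X_A$; symmetrically the even subsequence accumulates on measures with support in $X_B$, which (with the reduction step) proves the theorem.

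\textbf{Main obstacle.} The genuinely hard point is the entropy side of the comparison: showing that the symbol-duplication coloring really transfers the intended alternating free-site density into the topological entropy of the \emph{two-dimensional} SFT, and that the ``seams'' introduced by the simulation layer — the Aubrun--Sablik encoding and the space--time diagram of the Turing machine enumerating the forbidden words — force only a bounded, $\beta$-independent number of errors, hence a correction negligible against the gap $|d_A(n)-d_B(n)|\log2$. This is precisely where the relative-complexity function of the SFT extension must be controlled and where the argument has to be quantitative rather than soft; the reduction step is routine thermodynamic bookkeeping, and the construction of $X_A,X_B$ follows directly from the constructions already in place in Chapters~2 and~3.
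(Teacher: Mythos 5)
Your architecture matches the paper's: a soft step confining zero-temperature accumulation points to $X$, followed by a quantitative entropy--energy comparison exploiting the alternation built into the hierarchical construction. The soft step is routine and correct, and your $X_A,X_B$ would serve (the paper takes the simpler choice of the fibres over the all-$1$ and all-$2$ configurations of $\tilde{\mathcal{A}}$). The problem is that the entire content of the theorem sits in the paragraph you call ``the comparison'', and there you assert rather than prove the two claims that carry all the weight: the existence of windows $[\beta_n',\beta_n'']$ in which matching the favourable type up to scale $n$ is profitable while one scale deeper is not, and the resulting concentration $\mu_{\beta_k}(Z_k)\ge 1-\eta_k$. Neither follows from a generic estimate of the box partition function. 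In the paper this step is a sandwich on the pressure: a lower bound $P(\beta_k\varphi)\ge f_k^B\ln 2-2D\beta_k/\ell_k$ coming from the maximal-entropy measure of the concatenated subshift $\langle B_k\rangle$ (Lemma~\ref{lemma.topological.entropy.bound}, Corollary~\ref{cor.bound.below.pressure}), against an upper bound showing that \emph{if} $\mu_{\beta_k}([1])>\tfrac14$ at an even $k$ then $h(\mu_{\beta_k})$ is at most roughly $\tfrac34 f_k^B\ln 2$ plus error terms (Lemmas~\ref{lemma.upper.bound.entropy} and~\ref{lemma.bounds}); the contradiction forces $\mu_{\beta_k}([1])\le\tfrac14$, and combined with $\mu_{\beta_k}([0])\to 0$ this pins down symbol frequencies rather than a hierarchical type. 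That upper bound in turn rests on the overlap lemmas for $a_k,b_k$ (Lemmas~\ref{lemma:OverlappingWords}, \ref{lemma:OverlappingWordsPrime}), the counting of $0$'s in locally admissible patterns (Lemma~\ref{lemma:FrequencyOf0}), and the covering/complexity estimate (Lemma~\ref{lemma:CoveringComplexity}); none of this is supplied or replaced by your argument.

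Second, the point you flag as the ``main obstacle'' is not a peripheral technicality but a load-bearing input, and it interacts with the choice of $\beta_k$ in a way your ``pick $\beta_k$ in a window afterwards'' picture misses. The error terms can only be absorbed if the reconstruction function $R^{\hat X}$ and the relative complexity $C^{\hat X}$ of the Aubrun--Sablik SFT are at most exponential (Propositions~\ref{prop.Sebastian3} and~\ref{prop.Sebastian2}), and if the parameters are scheduled co-recursively as in Definition~\ref{notation}: $\beta_k$ is fixed \emph{after} $\ell_k'$ (so that $\epsilon_k=R_k'^2\ln(\Card(\mathcal{A}))/\beta_k$, $H(\epsilon_k)$, $1/\ell_k'$ and $\ln(C_k')/\ell_k'^2$ are all $\ll f_{k-1}^B$) but \emph{before} $N_k$ and $\ell_k$ (so that $2D\beta_k/\ell_k\ll f_{k-1}^B$), since $\ell_k$ must dominate $\beta_k$, which must dominate $R_k'\sim K^{\ell_k'}$. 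Without these computability bounds and this ordering the comparison does not close, so the proposal has a genuine gap exactly where the theorem stops being soft.
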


Le théorème précédent affirme qu'il existe une sous-suite $(\beta_k)_{k\in\NN} $ avec $\beta_k \to +\infty $ telle que tout choix de mesure d'équilibre associé au potentiel $ \beta_k\varphi $ alterne entre deux mesures de probabilité supportées par des ensembles compacts et disjoints. C'est-à-dire qu'il existe un potentiel localement constant par site qui présente une convergence chaotique à température zéro.

Nous calculons en annexe une borne supérieure de la complexité relative et de la fonction de reconstruction du sous-shift de type fini donnée dans \cite{AS}; nous remercions Sebasti\'an Babieri pour de nombreuses discussions sur ce sujet.

\tableofcontents

\chapter{Introduction}

One of the most important problems in equilibrium statistical mechanics consists in des\-cri\-bing families of Gibbs states for a given potential or an interaction family. We work with classical lattice systems, which means that our configuration space will be
\[
\Sigma^d(\mathcal{A}):=\mathcal{A}^{\ZZ^d}
\]
where $\mathcal{A}$ is a finite set and $d\in\NN$ is the dimension of our lattice. Let us introduce the function
\[
\varphi:\Sigma^d(\mathcal{A})\to\RR
\]
which is called {\it per site potential } and can be physically interpreted as the energy contribution of the origin of the lattice for each configuration $x\in \Sigma^d(\mathcal{A})$, since we are only considering only translation invariant measures.

Given these elements we denote for every $\beta>0$ the set $\mathcal{G}(\beta\varphi)$ which is the set of {\it Gibbs measures} associated to $\beta\varphi$ at the inverse temperature $\beta$. The are several definitions we could consider as a Gibbs measure, using conformal measures, DLR equations, thermodynamic limits etc. See Georgii~\cite{Georgii}, the classical book about Gibbs measures and \cite{Kimura} for the equivalence of several of these definitions. By compactness we know that this set has at least one shift translation invariant Gibbs measure. In the present thesis we are interested on the behavior of the set of Gibbs measures which are translational-invariant probability measures, called equilibrium measures, when the temperature goes to zero, that is, when $\beta\rightarrow+\infty$.

A probability measure $\mu_\beta$ over $\Sigma^d(\mathcal{A})$ is an {\it equilibrium measure} (or {\it equilibrium state}) at inverse temperature $\beta>0$ for a potential $\beta\varphi$ if it is a shift invariant (or translation invariant) measure which maximizes the pressure, that is if
\[
\dis P(\beta\varphi):=\sup_{\mu \in\mathcal{M}_\sigma(\Sigma^d(\mathcal{A}))}\left\{ h(\mu)-\int \beta\varphi d\mu\right\} =h(\mu_\beta)-\int \beta\varphi d\mu_\beta.
\]
We will consider later the whole set of equilibrium measures $\mu_\beta$ which maximize the pressure $P(\beta\varphi)$ above over all shift invariant probability measures on $\Sigma^d(\mathcal{A})$. The function $h(\nu)$ in the expression of $P(\beta\varphi)$ is the Kolmogorov-Sinai entropy of $\nu$.

In the one-dimensional case if a potential $\varphi$ is Hölder continuous we always have a unique Gibbs measure which is also the only equilibrium measure. For a dimension $d>1$ the situation is dramatically different and we can have multiple Gibbs states even for a potential with finite range, the most famous example is the Ising model. 

The zero-temperature equilibrium states ({\it ground states}) are the shift invariant probability measures which minimize
\begin{displaymath}
\displaystyle\int\varphi d\nu
\end{displaymath}
over all shift-invariant probability measures $\nu$. In other words, given a potential, we have that the weak* accumulation points of equilibrium states as $\beta\rightarrow +\infty$ are necessarily minimizing measures for the potential $\varphi$. A more detailed study on the limit when the system freezes and how it is related with the configurations with minimal energy can be found in \cite{vEFS}.

Chazottes and Hochman~\cite{CH} showed in the one-dimensional case an example of a Lipschitz potential $\varphi$ (but long-range) where the sequence $\mu_{\beta\varphi}$ does not converge when $\beta\rightarrow +\infty$. Here $\mu_{\beta\varphi}$ is the unique shift-invariant Gibbs measure (or the unique Gibbs measure) at the inverse temperature $\beta>0$ (which is also the unique equilibrium measure). On the other hand, \cite{Bremont, CGU, GT, Leplaideur} showed that an interaction of finite-range in the one-dimensional case over a finite alphabet implies the convergence of $\mu_{\beta \varphi}$. The case when $\mathcal{A}$ is a countable set was also studied in~\cite{Kempton}. The breakthrough for the construction of examples of the non-convergence was given by van Enter and W. Ruszel~\cite{vER}, where an example of finite range potential on a continuous state space and chaotic behavior was constructed. Recently the argument of van Enter and Ruszel was implemented for the case where $\mathcal{A}$ is a finite set in~\cite{BGT, BLL, CR-L}.

Chazottes and Hochman~\cite{CH} also showed that the same kind of non-convergence may occur when the dimension is $d\geq 3$ even for a locally constant potential. The construction of their example is possible only for $d\geq 3$ because they rely heavily on the theory of multidimensional subshifts of finite type and Turing Machines, developed by Hochman~\cite{Hochman} that provides a method to transfer a one-dimensional construction to a higher-dimensional subshift of finite type. Thanks to Hochman's theorem, Chazottes and Hochman could construct an example for $d=3$ with a potential $\varphi$ locally constant on a finite state space. Their construction can be easily extended to any dimension $d\geq3$. These results led us to believe that the statement is also true for $d=2$. Our main result is two-fold: we extend Chazottes-Hochman's theorem of chaotic behavior to dimension 2 using a different approach involving the space-time diagram of a Turing machine developed by Aubrun-Sablik and we clarify the role of the reconstruction and relative complexity function of the extension by a subshift of finite type that is missing in Chazottes-Hochman's arguments.

The main result of Aubrun and Sablik~\cite{AS}, called simulation theorem, asserts that any $d$-dimensional subshift defined by a set of forbidden patterns that is enumerated by a Turing machine is a subaction of a $(d+1)$-dimensional subshift of finite type. There are other works in which the simulation results obtained so far in this theory have been improved~\cite{DRS,DRS2}. In these works they improve the results obtained so far by decreasing the dimension of the subshift of finite type which generates the effective subshift, but they are based on Kleene's fixed point theorem and they do not uses geometric arguments.

The construction of Aubrun and Sablik~\cite{AS} improves the method of Hochman~\cite{Hochman}, because they increase the dimension by 1 and this leads us to improve the Chazottes and Hochman~\cite{CH} construction for the dimension 2.

In the second chapter we present the main definitions of thermodynamic formalism and computability, classical results and standard notations. We begin with the definition of subshifts and define a special class of subshifts based on the concatenation of blocks of the same size in order to form each possible configuration. In the second section of this chapter we provide a brief review of entropy dealing with partitions, entropy of a partition, metric and topological entropy and the concepts of pressure, equilibrium measure and Gibbs measure. In the third section we give a general idea of operations transforming a subshift into another one based on \cite{Aubrun} in order to comprehend the notion of simulating a subshift by another one. Finally, we  present a formal definition of a Turing machine, how to represent the work of a Turing machine in a space-time diagram and also an idea of the construction of Aubrun and Sablik~\cite{AS}.

The third chapter is dedicated to define and construct our example that is inspired by the construction presented in the work of Chazottes and Hochman~\cite{CH}. First we define a one-dimensional subshift based on an iteration process that gives us at each step blocks of the same length that are concatenated to form a subshift as defined in Chapter 2. We prove that the control we have obtained over the set of forbidden words of this subshift, implies there exists a Turing machine that lists all of the forbidden words, that is, our subshift is an effectively closed subshift. From there we are able to use the simulation theorem of Aubrun-Sablik~\cite{AS} and obtain a bidimensional subshift of finite type that simulates our previous one-dimensional effectively closed subshift. Also in the second section of this chapter, we prove some important results that explain how to deconstruct a configuration in the $2$-dimensional subshift as concatenated patterns in a given dictionary. In the third and last part of this chapter, we define a new coloring for the bidimensional subshift, as in Chazottes and Hochman~\cite{CH}, that consists in duplicating a distinguished symbol, in order to transfer the entropy of the initial effective subshift to the simulated subshift of finite type.

After all these constructions, we end up with a bidimensional SFT $X$ defined over a finite alphabet $\mathcal{A}$, an integer $D\geq 1$ and a finite set of forbidden patterns $\mathcal{F}\subset\mathcal{A}^{\llbracket1,D\rrbracket^2}$. We then define the following locally constant per site potential
\[
\begin{array}{rcl}
\varphi:\mathcal{A}^{\ZZ^2}=\Sigma^2(\mathcal{A}) & \to & \RR \\
x & \mapsto & \varphi(x)=\mathds{1}_F(x) \\
\end{array}
\]
where $F$ is the clopen set equal to the union of cylinders generated by every pattern in $\mathcal{F}$.

The last chapter is dedicated to prove the main result which is the following.

\begin{theorem}
There exists a locally constant potential $\varphi:\Sigma^2(\mathcal{A})\to \RR$, there exists a subsequence $(\beta_k)_{k\geq0}$ going to infinity and two disjoint non-empty compact invariant sets $X_A,X_B$ of $\Sigma^2(\mathcal{A})$, such that if $\mu_{\beta_k}$ is an equilibrium measure  at inverse temperature $\beta_k$ associated to the potential $\beta_k\varphi$, the support of any weak${}^*$ accumulation point of $(\mu_{\beta_{2k}})_{k\geq0}$ is included in $X_B$,  the support of any weak${}^*$ accumulation point of $(\mu_{\beta_{2k+1}})_{k\geq0}$ is included in $X_A$.
\end{theorem}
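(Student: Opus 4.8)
\noindent\emph{Proof strategy.} The plan combines two facts. First, $\varphi=\mathds{1}_F$ is non-negative and vanishes on the non-empty SFT $X=\{x:\text{no pattern of }\mathcal F\text{ occurs in }x\}$, so $\min_{\mu}\int\varphi\,d\mu=0$ and is attained exactly by the measures supported in $X$; hence $P(\beta\varphi)\ge h_{\mathrm{top}}(X)=:h^\ast$ for all $\beta$, so for an equilibrium measure $\mu_\beta$ one has $\int\varphi\,d\mu_\beta=(h(\mu_\beta)-P(\beta\varphi))/\beta\le(\log|\mathcal A|-h^\ast)/\beta\to 0$, and since $F$ is clopen, every weak${}^\ast$ accumulation point of $(\mu_{\beta_k})$ satisfies $\mu(F)=0$, hence is supported in $X$ (indeed, by upper semicontinuity of the entropy, it is a measure of maximal entropy of $X$). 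Second, $X$ carries the hierarchical structure built in Chapter~3 — a $\ZZ$-subaction simulated by an Aubrun--Sablik SFT, together with the extra coloring — which lets one isolate, at each scale $N_k$, a distinguished family of near-$X$ configurations of a definite phase. The real task is then to show that along even (resp.\ odd) indices these accumulation points are carried by a distinguished subsystem $X_B$ (resp.\ $X_A$).

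\medskip
\noindent\emph{The two phases and lower bounds for the pressure.} Using the deconstruction lemmas of Chapter~3, I would first read off, for each $k$, a ``phase'' $\tau(k)\in\{A,B\}$ alternating with the parity of $k$, and the subshift $S_k$ of configurations obtained by periodising at scale $N_k$ a single generation-$k$ block of phase $\tau(k)$, the coloring being left free. By construction $S_k\not\subseteq X$, with forbidden-pattern density $c_k>0$ and entropy $\eta_k>h^\ast$ (the surplus coming from the coloring freedom that the suppressed higher-scale structure would otherwise consume). The uniform-coloring measure on $S_k$ gives $P(\beta\varphi)\ge \eta_k-\beta c_k$. I would take $X_A$ (resp.\ $X_B$) to be the sub-SFT of $X$ consisting of the configurations all of whose generation blocks are of phase $A$ (resp.\ $B$): these are non-empty (the coloring is still free), compact, $\sigma$-invariant, contained in $X$, and $X_A\cap X_B=\varnothing$ because the two phases are told apart by a fixed finite pattern, a property stable under restriction to finite windows. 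The parameters of Chapter~3 can be chosen so that $c_k\downarrow 0$, $h^\ast<\eta_k\downarrow h^\ast$, $c_k=o(\eta_k-h^\ast)$, and the lower envelope $\beta\mapsto\sup_k(\eta_k-\beta c_k)$ of $P(\beta\varphi)$ is piecewise affine, its $k$-th piece realised on an interval $I_k$ with $\min I_k\to\infty$. I would then take $\beta_k$ well inside $I_k$, so that for a fixed $\delta>0$
\[
  \eta_k-\beta_k c_k \;>\; \max\bigl\{\, h^\ast,\ \sup_{j\neq k}(\eta_j-\beta_k c_j) \,\bigr\}+3\delta .
\]

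\medskip
\noindent\emph{Concentration and passage to the limit.} The heart is the matching upper bound. Given an equilibrium measure $\mu=\mu_{\beta_k}$, I would decompose a $\mu$-typical configuration along its skeleton — this is exactly where the Chapter~3 deconstruction lemmas and the appendix bounds on the relative complexity and reconstruction function of the Aubrun--Sablik SFT are used, to guarantee that the entropy of $X$ and of its strata lives only in the one-dimensional input slices and in the coloring, never in the deterministic space-time computation — and keep track of the generations at which the hierarchy is broken. This should produce an essentially additive bound $h(\mu)-\beta_k\int\varphi\,d\mu\le\sum_j p_j(\eta_j-\beta_k c_j)+h^\ast(1-\sum_j p_j)+o(1)$ with $p_j\ge 0$, $\sum_j p_j\le 1$, $p_j$ the portion of $\mu$ governed by a break at generation $j$, and the error negligible compared with $\delta$. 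Comparing with the displayed inequality forces $p_k\ge 1-\varepsilon_k$ with $\varepsilon_k\to 0$: mass on any other stratum, or on the generic maximal-entropy part of $X$, would push $h(\mu)-\beta_k\int\varphi\,d\mu$ strictly below $P(\beta_k\varphi)$. Hence $\mu$ gives mass $\ge 1-\varepsilon_k$ to $S_k$. Since $\bigcap_K\overline{\bigcup_{k\ge K}S_k}\subseteq X_A$ when $k$ runs over odd values (and $\subseteq X_B$ over even values) — on a fixed finite window, a configuration of $S_k$ with $N_k$ large looks like a window of a phase-$\tau(k)$ configuration of $X$ — the $\varepsilon_k$-concentration with $\varepsilon_k\to 0$ gives that every weak${}^\ast$ accumulation point of $(\mu_{\beta_{2k}})_{k\ge 0}$ is supported in $X_B$ and every one of $(\mu_{\beta_{2k+1}})_{k\ge 0}$ is supported in $X_A$, which is the theorem.

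\medskip
\noindent\emph{Main obstacle.} Everything except the concentration estimate is bookkeeping; the difficulty is proving that near-maximisers of $h(\mu)-\beta_k\int\varphi\,d\mu$ are essentially forced onto the single stratum $S_k$. Three points need care. First, the entropy bound must be carried through the Aubrun--Sablik simulation so that no entropy leaks into the computation directions; this is precisely what the appendix's upper bounds on the relative complexity and reconstruction function are for, and it is also why the Chapter~3 coloring is needed — to reinstate the destroyed entropy in a controlled way, in particular to make $\eta_k>h^\ast$ (so the strata compete at positive temperature) while keeping $\eta_k\to h^\ast$ (so the zero-temperature picture of the first paragraph stays consistent). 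Second, the crude observable $\int\varphi\,d\mu=\mu(F)$, which records only a density of forbidden patterns, must be tied quantitatively to the finer breaking profile $(p_j)$. Third — and this is why the points $(c_k,\eta_k)$ are placed in convex position and $\beta_k$ is taken deep inside $I_k$ — one must rule out that a mixture of breaks at several generations, or the generic measure of maximal entropy of $X$, comes within $\delta$ of the scale-$N_k$ value.
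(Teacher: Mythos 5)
Your overall architecture matches the paper's — accumulation points live in $X$, two competing families of almost-ground-state configurations give lower bounds on the pressure (your $\eta_k-\beta c_k$ is the paper's $f_k^B\ln 2-2D\beta_k/\ell_k$ from Corollary~\ref{cor.bound.below.pressure}), and $\beta_k$ is tuned so that exactly one family wins at each stage — and you have correctly located the difficulty in the concentration step. But that step is precisely where your proposal stops being a proof. The additive bound $h(\mu)-\beta_k\int\varphi\,d\mu\le\sum_j p_j(\eta_j-\beta_k c_j)+h^\ast\bigl(1-\sum_j p_j\bigr)+o(1)$ is asserted as what a ``skeleton decomposition'' \emph{should} produce, but no such decomposition is constructed, and it is not automatic that the entropy of an arbitrary equilibrium measure splits additively over strata indexed by the generation at which the hierarchy breaks: one must control mixtures of strata, the boundary terms between them, and the entropy cost of the breaking profile itself. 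Your scheme also silently requires the points $(c_j,\eta_j)$ to be in convex position so that each stratum owns an interval $I_k$ of the Legendre-type envelope; nothing in the construction guarantees this, and the paper never needs it (Definition~\ref{notation} fixes $\beta_k$ first from the small scale $\ell_k'$ and then takes $\ell_k$ so large that $2D\beta_k/\ell_k\ll f_{k-1}^B$).

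The paper avoids the stratification entirely and runs a single-scale dichotomy. For $k$ even, suppose $\mu_{\beta_k}([1])>\tfrac14$. Lemma~\ref{lemma:sizeNeighborhoodSubshift} shows all but a fraction $\epsilon_k=R_k'^2\ln(\Card(\mathcal A))/\beta_k$ of translates are locally admissible on windows of size $R_k'$, and the reconstruction function converts these into globally admissible central windows of size $2\ell_k'$; Lemmas~\ref{lemma:AdmissibilityIntermediateScale} and~\ref{lemma:FrequencyOf0} then tile a generic pattern by $A$- and $B$-columns at the single scale $\ell_k'$ and count the $0$'s in each; since the only volume-order source of entropy is the duplication of the symbol $0$ (Lemma~\ref{lemma.upper.bound.entropy}, Lemma~\ref{lemma:CoveringComplexity} and the appendix bounds on $C^{\hat X}$ and $R^{\hat X}$), the hypothesis forces a quarter of the volume into $A$-columns whose $0$-frequency has just been cut by $2/N_k'$, so $P(\beta_k\varphi)\lesssim\tfrac34 f_{k-1}^B\ln 2$, contradicting the lower bound $\sim f_{k-1}^B\ln 2$. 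Thus the observable $\int\varphi\,d\mu$ is never tied to a breaking profile $(p_j)$ — its only quantitative role is the single estimate $\epsilon_k$ — and the dichotomy is run on the frequency of the letters $1$ and $2$, not on a family of strata. To complete your route you would have to actually prove your additive stratified bound; as it stands this is a genuine gap, and the paper's single-scale frequency argument is the missing substitute.
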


The previous theorem asserts that there exists a subsequence $(\beta_k)_{k\in\NN}$ with $\beta_k\to+\infty$ such that any choice of equilibrium measure associated with the potential $\beta_k\varphi$ alternates between two disjoint compact sets of probability measures. That is there exists a locally constant per site potential that exhibits a zero-temperature chaotic convergence.

We compute in the appendix an upper bound of the relative complexity and reconstruction functions of the SFT given in \cite{AS}; we thank S.B. for many discussions on this topic.

\chapter{Subshifts}

\section{Forbidden words}

In this chapter we establish the basic definitions, notations and main results of the objects that we use in this work. We begin by two definitions of a subshift: one topological and one combinatorial. These two definitions coincide.

We will always work with a finite set of letters that we call {\it alphabet} and we will denote it with a cursive letter $\mathcal{A}$. With this alphabet we construct the set of configurations defined over $\ZZ^d$ where $d\geq 1$ is the dimension.

\begin{definition}
\label{def.pattern}
Let $\mathcal{A}$ be a finite alphabet, and $d\geq1$. Let  $S \subseteq \mathbb{Z}^d$ be a subset. A {\it pattern with support  $S$ } is an element of $p$ of $\mathcal{A}^{S}$. We write $S=\Supp(p)$ for the {\it support} of the pattern $p$. If $S' \subseteq S$, the pattern $p'=p|_{S'}$ denotes the restriction of $p$ to $S'$.  A {\it configuration} is a pattern with full support $S=\mathbb{Z}^d$. 

When $d=1$ a one-dimensional finite pattern is called a {\it word}.
\end{definition}

The set of all possible $\ZZ^d$-configurations defined over an alphabet $\mathcal{A}$ is denoted by $\Sigma^d(\mathcal{A}) := \mathcal{A}^{\ZZ^d}$. On this set we define the shift action as follows.

\begin{definition}
\label{def.shift-action}
The shift action on a configuration space $\Sigma^d(\mathcal{A})$ is a collection $\sigma = (\sigma^u)_{u\in \ZZ^d}$ such that
\begin{displaymath}
\begin{array}{rcl}
\sigma^u : \Sigma^d(\mathcal{A}) & \to & \Sigma^d(\mathcal{A}) \\
x & \mapsto & \sigma^u(x)=y, \text{where} \  \ \forall\, v \in\mathbb{Z}^d,\  y_v = x_{u+v}.
\end{array}
\end{displaymath}
\end{definition}

We will use the same notation for the shift acting on a finite pattern, that is, if $S\subset \ZZ^d$ is a finite set and $p\in\mathcal{A}^S$ is a pattern, then we can write for all $u\in\ZZ^d$ the shift acting on the pattern $p$ as
\begin{displaymath}
\sigma^u (p)=w\in\mathcal{A}^{S-u}\mbox{ where } w_v=u_{v+u}, \, \forall v\in S-u
\end{displaymath}

\begin{remark}
Sometimes we will use the term shift invariant patterns for a class of patterns $p \sim q$ if and only if $q=\sigma^u(p)$, for some $u\in\ZZ^d$. In that sense, the {\bf shape} of the support of the pattern is fixed, but the form can be located in any translate of this support.
\end{remark}

Let $S,T \subset \mathbb{Z}^d$ are two subsets, and $p,q$ be two patterns with support $S$ and $T$, respectively. We say that $p$ is {\it a sub-pattern} of $q$, if $S \subseteq T$ and $p=q|_S$. Similarly we say that $p$ is a sub-pattern of a configuration  $x \in \mathcal{A}^{\mathbb{Z}^d}$, if $p = x|_S$. We can also say that a pattern $p\in\mathcal{A}^S$ {\it appears} in another pattern $q\in \mathcal{A}^T$ (respectively, in a configuration $x\in\mathcal{A}^{\ZZ^d}$) if there exists a vector $u\in\ZZ^d$ such that $\sigma^u(p)$ is a sub-pattern of $q$ (respectively, $\sigma^u(p)$ is a sub-pattern of $x$). In that case we write $p\sqsubset q$ (respectively, $p \sqsubset x$).

\begin{definition}
\label{def.cylinder}
If $p\in\mathcal{A}^{S}$ is a pattern with support $S$, the {\it cylinder generated} by $p$, denoted by $[p]$, is the subset of configurations defined by
\begin{displaymath}
\dis [p] := \{x \in \Sigma^d(\mathcal{A}) : x|_S = p \}.
\end{displaymath}
For $a\in\mathcal{A}$ and $i\in\ZZ^d$ we denote the cylinder
\[
[a]_i=\{x\in\Sigma^d(\mathcal{A}): x_i=a\}.
\]
\end{definition}

\begin{definition}
\label{def.cylinder-set}
Let $P \subseteq \mathcal{A}^{S}$ be a subset of patterns of support $S$. The cylinder generated by $P$ is the subset,
\begin{displaymath}
\dis [P] := \bigcup_{p\in P}[p].
\end{displaymath}
\end{definition}

The following is the topological definition of one of the most important objects that we work with.

\begin{definition}
\label{def.subshift}
A subshift $X$ is a closed subset of $\Sigma^d(\mathcal{A})$ which is invariant under $\sigma^u : \Sigma^d(\mathcal{A}) \to \Sigma^d(\mathcal{A})$ for all $u\in \ZZ^d$, that is, $\sigma^u(X)=X$.
\end{definition}

As said before, there is a combinatorial definition of a subshift, which is given by the set of forbidden patterns as presented below.

\begin{definition}
\label{def.subshiftcomb}
Let $X$ be a subset of $\Sigma^d(\mathcal{A})$. We say that $X$ is a {\it subshift generated by a set $\mathcal{F}$ of forbidden patterns} if  $\mathcal{F} \subseteq \bigsqcup_{R\geq1} \mathcal{A}^{\llbracket 1, R \rrbracket^d}$ is a subset of patterns with finite support and
\begin{displaymath}
X=\Sigma^d(\mathcal{A},\mathcal{F}):=\{ x \in \Sigma^d(\mathcal{A}) : \forall\, p \in \mathcal{F}, \ p \not\sqsubset x \}. 
\end{displaymath}
\end{definition}

The following proposition assures that every subshift is generated by a set of forbidden patterns.

\begin{proposition}
The two definitions of subshift (Definition~\ref{def.subshift} and Definition~\ref{def.subshiftcomb}) coincide. 
\end{proposition}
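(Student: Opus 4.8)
\emph{Proof plan.} The plan is to establish the two inclusions separately, the easy direction being that a combinatorially defined set is a subshift in the topological sense, and the substantive direction being the converse. Throughout, the key point to keep track of is the interplay between cylinders and the relation $\sqsubset$, which is by design insensitive to translations.

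First I would check that for any $\mathcal{F}\subseteq\bigsqcup_{R\geq1}\mathcal{A}^{\llbracket1,R\rrbracket^d}$ the set $X=\Sigma^d(\mathcal{A},\mathcal{F})$ satisfies Definition~\ref{def.subshift}. Shift-invariance is immediate: if $p\not\sqsubset x$ for all $p\in\mathcal{F}$, then the same holds for $\sigma^u(x)$ since $p\sqsubset\sigma^u(x)$ iff $p\sqsubset x$, so $\sigma^u(X)\subseteq X$, and applying this to $\sigma^{-u}$ gives $\sigma^u(X)=X$. For closedness I would write $X=\bigcap_{p\in\mathcal{F}}\{x:p\not\sqsubset x\}$ and observe that each set in this intersection is closed because its complement $\{x:p\sqsubset x\}=\bigcup_{u\in\ZZ^d}[\sigma^u(p)]$ is a union of cylinders on finite supports, hence open in the product topology. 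Thus $X$ is an intersection of closed sets.

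For the converse, let $X\subseteq\Sigma^d(\mathcal{A})$ be closed and shift-invariant and set
\[
\mathcal{F}:=\Bigl\{\,p\in\bigsqcup_{R\geq1}\mathcal{A}^{\llbracket1,R\rrbracket^d}\ :\ p\not\sqsubset x\ \text{for every}\ x\in X\,\Bigr\},
\]
the collection of hypercube-supported patterns occurring in no configuration of $X$. By construction $X\subseteq\Sigma^d(\mathcal{A},\mathcal{F})$. For the reverse inclusion I would argue contrapositively: if $x\notin X$, then since $X$ is closed and the cylinders on supports of the form $i+\llbracket1,R\rrbracket^d$ form a neighborhood basis at $x$, there are $i\in\ZZ^d$ and $R\geq1$ with $[x|_{i+\llbracket1,R\rrbracket^d}]\cap X=\varnothing$. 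Put $q=\sigma^{-i}\bigl(x|_{i+\llbracket1,R\rrbracket^d}\bigr)\in\mathcal{A}^{\llbracket1,R\rrbracket^d}$, so that $[q]\cap X=\varnothing$. I claim $q\in\mathcal{F}$: if instead $q\sqsubset y$ for some $y\in X$, say $\sigma^u(q)$ is a sub-pattern of $y$, then $\sigma^{-u}(y)\in X$ (shift-invariance) restricts to $q$ on $\llbracket1,R\rrbracket^d$, i.e.\ lies in $[q]$, contradicting $[q]\cap X=\varnothing$. Hence $q\in\mathcal{F}$, while $q\sqsubset x$ by construction, so $x\notin\Sigma^d(\mathcal{A},\mathcal{F})$; this proves $\Sigma^d(\mathcal{A},\mathcal{F})\subseteq X$, and combining with the above, $X=\Sigma^d(\mathcal{A},\mathcal{F})$.

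The only real care needed is the bookkeeping with translations — namely passing between the statements ``$[q]$ meets $X$'' and ``$q$ appears in some element of $X$'', which is precisely where shift-invariance of $X$ is used — together with the elementary remark that every finite subset of $\ZZ^d$ is contained in a translate of some $\llbracket1,R\rrbracket^d$, so that restricting forbidden patterns to hypercube supports costs nothing. I do not expect a genuine obstacle here: once the product topology and the translation conventions are unwound, the proposition is essentially a reformulation of what it means for a subset to be closed and shift-invariant.
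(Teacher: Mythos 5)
Your proof is correct and is the standard argument for this classical fact; the paper itself states the proposition without proof, so there is no authorial argument to compare against. Both directions are handled properly — in particular you correctly isolate the two points where care is needed: that cylinders over translated hypercubes form a neighbourhood basis (so restricting forbidden patterns to supports $\llbracket 1,R\rrbracket^d$ loses nothing), and that shift-invariance of $X$ is what lets you pass between ``$[q]$ meets $X$'' and ``$q$ appears in some element of $X$''.
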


The entire configuration space $\Sigma^d(\mathcal{A})=\mathcal{A}^{\ZZ^d}$ is a subshift, and we call it the {\it full shift}. We will denote by $(\Sigma^d(\mathcal{A}),\mathcal{B})$ the measurable space where $\mathcal{B}$ is the Borel $\sigma$-algebra generated by the cylinder sets in $\Sigma^d(\mathcal{A})$. We will describe a classification for the subshifts based on the set of forbidden patterns. For the full shift the set of forbidden patterns is empty. If the set of forbidden patterns is finite we will say that subshift is a {\it subshift of finite type} or SFT. When the set of forbidden patterns can be enumerated by a Turing machine, then we say that the subshift is an {\it effectively closed subshift} (we explain what we are considering as a set enumerated by a Turing machine in Section~\ref{sec.Turing.machine}).

Another way of describing a subshift is by its language, that we define next.

\begin{definition}
\label{def.language}
Let $\mathcal{A}$ be a finite alphabet, and $d\geq1$. Let $X$ be a subshift of $\mathcal{A}^{\mathbb{Z}^d}$. The {\it language of $X$}, denoted $\mathcal{L}(X)$, is the set of square patterns that appear in $X$, or more formally,
\begin{equation}
\label{language}
\dis \mathcal{L}(X) := \bigsqcup_{\ell\geq1} \Big\{ p \in \mathcal{A}^{\llbracket 1,\ell \rrbracket^d} :  \exists x \in X, \ \text{s.t.} \ p \sqsubset x \Big\}.
\end{equation}
	
We will denote the set of square patterns of a fixed length $\ell$ as
\begin{equation}
\label{languagelk}
\dis \mathcal{L}(X, \ell) :=  \Big\{ p \in \mathcal{A}^{\llbracket 1,\ell \rrbracket^d} :  \exists x \in X, \ \text{s.t.} \ p = x|_{\llbracket 1, \ell \rrbracket^2} \Big\}.
\end{equation}
\end{definition}

A {\it dictionary $L$ of size $\ell$ and dimension $d$} over the alphabet $\mathcal{A}$ is a subset of $\mathcal{A}^{\llbracket 1,\ell \rrbracket^d}$. A dictionary is a specialized subset of patterns. We say that a dictionary $L$ of size $\ell$ is a {\it sub-dictionary} of $L'$ of size $\ell'$ (where both have the same dimension $d$), if every pattern of $L$ is a sub-pattern of a pattern of $L'$. Given a dictionary we can define the set of all configurations obtained by the infinite concatenation of patterns of this dictionary. In fact, this subset is a subshift as described below.

\begin{definition}
\label{def.concatenatedsubshift}
The {\it concatenated subshift} of a dictionary $L$ of size $\ell$ and dimension $d$ is the subshift of the form
\begin{displaymath}
\begin{array}{rcl}
\langle L \rangle & = & \dis \bigcup_{u \in \llbracket 1,\ell \rrbracket^d} \bigcap_{v \in \mathbb{Z}^d} \sigma^{-(u+v \ell)} [L], \\
   &   &  \\
& = & \dis \Big\{ x \in \Sigma^d(\mathcal{A}) : \exists u \in \llbracket 1, \ell \rrbracket^d, \ \forall\, v \in \mathbb{Z}^d, \ (\sigma^{u+\ell v}(x) )|_{\llbracket 1,\ell \rrbracket^d} \in L \Big\}. \\
\end{array}
\end{displaymath}
\end{definition}

Another important concept concerns the admissibility of a pattern. Given a set of forbidden patterns, we define local and global admissibility.

\begin{definition}
\label{def.local/globaladmissible}
Let $\mathcal{F}\subseteq \mathcal{A}^{\llbracket1,D\rrbracket^d}$ for a fixed $D\geq 2$. We say that a pattern $w\in\mathcal{A}^{\llbracket1,R\rrbracket^d}$ where $R\geq D$ is {\it locally $\mathcal{F}$-admissible} if
\[
\sigma^u(x)|_{\llbracket 1, D \rrbracket^d} \not \in \mathcal{F}, \ \forall\, u\in\llbracket0,R-D\rrbracket^d,
\]
that is, we do not find a pattern of $\mathcal{F}$ inside the pattern $w$. We say that a pattern $w\in\mathcal{A}^{\llbracket1,R\rrbracket^d}$ is {\it globally $\mathcal{F}$-admissible} if there exists $x\in\Sigma^d(\mathcal{A},\mathcal{F})$ such that
\begin{displaymath}
\dis x|_{\llbracket1,R\rrbracket^d}=w.
\end{displaymath}
\end{definition}

It is clear that if a pattern is globally admissible, then it is locally admissible, but the reverse it not always true. The next proposition assures that for every $d$-dimensional subshift, every really large pattern that is locally admissible has a central block that is globally admissible.

\begin{proposition}
\label{proposition.Sebastian}
Let $X = \Sigma^d(\mathcal{A},\mathcal{F})$ be a subshift given by a set of forbidden patterns $\mathcal{F}$. There exists a function $R \colon \NN \to \NN$ so that if $q \in \mathcal{A}^{\llbracket-R(n),R(n)\rrbracket^d}$ is locally admissible, then $p = q|_{\llbracket -n,n\rrbracket^d}$, the restriction of $q$ to $\mathcal{A}^{\llbracket -n,n\rrbracket^d}$, is globally admissible.
\end{proposition}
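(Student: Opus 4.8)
\emph{Proof plan.} The plan is to prove this by a compactness argument (K\"onig's lemma for $\mathcal{A}^{\ZZ^d}$), run by contradiction; we only seek the existence of $R$, not an explicit or computable bound. A preliminary remark: the notion of local $\mathcal{F}$-admissibility used here (Definition~\ref{def.local/globaladmissible}) presupposes that the patterns of $\mathcal{F}$ share a common support $\llbracket 1,D\rrbracket^d$, so throughout we assume $\mathcal{F}\subseteq\mathcal{A}^{\llbracket 1,D\rrbracket^d}$ for a fixed $D$ (this holds in particular for every SFT after padding the forbidden patterns to a common window, which is the case relevant later). Since the statement requires a single function $R$ that works simultaneously for all $n$, it suffices to \emph{fix} $n\in\NN$ and exhibit one integer $M=M(n)\geq\max(n,D)$ such that every locally $\mathcal{F}$-admissible $q\in\mathcal{A}^{\llbracket -M,M\rrbracket^d}$ has $q|_{\llbracket -n,n\rrbracket^d}$ globally admissible; one then sets $R(n):=M(n)$.

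So fix $n$ and suppose, for contradiction, that no such $M$ exists. Then for each $M\geq\max(n,D)$ we may pick a locally $\mathcal{F}$-admissible $q_M\in\mathcal{A}^{\llbracket -M,M\rrbracket^d}$ whose central block $q_M|_{\llbracket -n,n\rrbracket^d}$ is \emph{not} globally admissible. Extend each $q_M$ to a full configuration $\tilde q_M\in\Sigma^d(\mathcal{A})$ arbitrarily. By compactness of $\Sigma^d(\mathcal{A})=\mathcal{A}^{\ZZ^d}$, some subsequence $\tilde q_{M_j}$ converges to a configuration $x\in\Sigma^d(\mathcal{A})$; by definition of the product topology, for every finite $S\subset\ZZ^d$ one has $x|_S=q_{M_j}|_S$ for all sufficiently large $j$ (once $S\subseteq\llbracket -M_j,M_j\rrbracket^d$). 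In particular $x|_{\llbracket -n,n\rrbracket^d}=q_{M_j}|_{\llbracket -n,n\rrbracket^d}$ for all large $j$.

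The next step is to check that $x\in X$. Fix any $u\in\ZZ^d$; the window $\sigma^u(x)|_{\llbracket 1,D\rrbracket^d}$ depends only on the values of $x$ on the finite set $u+\llbracket 1,D\rrbracket^d$, so for $j$ large enough this set lies inside $\llbracket -M_j,M_j\rrbracket^d$ and the window coincides with a $D$-subwindow of $q_{M_j}$, which is not in $\mathcal{F}$ because $q_{M_j}$ is locally $\mathcal{F}$-admissible. Hence no pattern of $\mathcal{F}$ appears in $x$, i.e. $x\in\Sigma^d(\mathcal{A},\mathcal{F})=X$. But then $x$ is a point of $X$ whose restriction to $\llbracket -n,n\rrbracket^d$ equals $q_{M_j}|_{\llbracket -n,n\rrbracket^d}$, so that central block \emph{is} globally admissible --- contradicting the choice of $q_{M_j}$. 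This contradiction yields the required $M(n)$, and letting $n$ range over $\NN$ produces $R$.

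The argument is routine; the one genuine point of care is the standing assumption that $\mathcal{F}$ has a uniformly bounded window size $D$, for it is precisely this that makes ``$x$ contains no forbidden pattern'' a condition verifiable on finitely many coordinates at a time, and therefore preserved under the limit $\tilde q_{M_j}\to x$. Without it --- as for a general effectively closed subshift, where forbidden patterns have unbounded support --- a limit of locally admissible patterns may well contain an arbitrarily large forbidden pattern, and no such $R$ need exist; this qualitative difference is part of what forces the use of the simulation theorem rather than a naive combinatorial bound in later chapters.
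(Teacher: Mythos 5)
Your proof is correct and is essentially the paper's own argument: the same compactness-and-contradiction scheme, extracting a convergent subsequence of arbitrary extensions of the $q_M$ and verifying that the limit configuration lies in $X$ because membership in a forbidden window is a finitely-determined, hence limit-stable, condition. One caveat on your closing aside: with the natural notion of local admissibility (no pattern of $\mathcal{F}$ occurs in the finite pattern), the identical compactness argument still yields the \emph{existence} of $R$ even when the patterns of $\mathcal{F}$ have unbounded supports --- as in the more general Lemma~4.3 of~\cite{Barbieri} that the paper cites --- so what genuinely fails for effectively closed subshifts is any computable or effective bound on $R$, not its existence.
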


\begin{proof}
The proof follows from a standard compactness argument as described in Lemma 4.3 of~\cite{Barbieri} in a more general setting.

Suppose such a function does not exist, then there exists $n \in \NN$ such that for every $m \geq n$ there exists a locally admissible pattern $q_m$ of size $m$ such that $p_m = q_m|_{\llbracket -n,n\rrbracket^d}$ is not globally admissible. Let $x_m \in \Sigma^d(\mathcal{A})$ be a configuration such that $x_m|_{\llbracket -m,m\rrbracket^d} = q_m$. By compactness of $\Sigma^d(\mathcal{A})$, we may extract a converging subsequence $x_{m(k)}$ which converges to some $\bar{x} \in A^{\ZZ^d}$.
	
We claim $\bar{x} \in X$. Indeed, if not, there is a forbidden pattern which occurs somewhere in $\bar{x}$. In particular, there is $k \in \NN$ such that the pattern is completely contained in ${\llbracket -m(k),m(k)\rrbracket^d}$. It follows by convergence of the sequence $\{x_{m(k)}\}_{k \in \NN}$ that eventually every pattern $q_{m(k)}$ contains the forbidden pattern. This is a contradiction because $q_m$ is locally admissible. Hence $\bar{x} \in X$.
	
As $\bar{x} \in X$, then $\bar{x}|_{\llbracket -n,n\rrbracket^d}$ is globally admissible, but this is equal to $p_m$ for some $m \in \NN$ and thus not globally admissible. This is again a contradiction. Therefore the function $R$ must exist. It is non-decreasing as subpatterns of globally admissible patterns are themselves globally admissible.
\end{proof}

\section{Entropy and variational principle}

We establish here some of the most important results about entropy of subshifts. The results here were developed by several authors in different approaches and they were able to generalize these results even for amenable group actions and non-compact configuration spaces. Here we focus on the $\ZZ^d$-action over a compact configuration space $\Sigma^d(\mathcal{A}) = \mathcal{A}^{\ZZ^d}$.

We always consider $\Sigma^d(\mathcal{A}) = \mathcal{A}^{\ZZ^d}$ and $\sigma = (\sigma^u)_{u\in\ZZ^d}$ the shift action. We will denote by $\mathcal{M}_1(\Sigma^d(\mathcal{A}))$ the set of all probability measures defined on $\Sigma^d(\mathcal{A})$ and by $\mathcal{M}_\sigma(\Sigma^d(\mathcal{A}))$ the set of shift-invariant probability measures. Here we always consider $(\Sigma^d(\mathcal{A}), \mathcal{B}, \mu)$ as a {\it probability space} where $\mathcal{B}$ is the sigma algebra generated by the cylinder sets and $\mu\in\mathcal{M}_{\sigma}(\Sigma^d(\mathcal{A}))$. 

\begin{definition}
A collection $\mathcal{P}=\{P_1,P_2,...,P_n\}$ of measurable sets is a {\it finite partition} of $\Sigma^d(\mathcal{A})$ if
\begin{itemize}
	\item $P_i\cap P_j = \varnothing$ for $i\neq j$; and
	
	\item $\bigcup_{i} P_i = \Sigma^d(\mathcal{A})$.
\end{itemize}
For a probability space $(\Sigma^d(\mathcal{A}),\mathcal{B},\mu)$ we call a collection of measurable sets $\mathcal{P}=\{P_1,P_2,...,P_n\}$ a {\it $\mu$-partition} if
\begin{itemize}
	\item $\mu(P_i)>0$, $\forall i$;
	
	\item $\mu(P_i\cap P_j)=0$, for $i\neq j$; and
	
	\item $\dis \mu\left(\Sigma^d(\mathcal{A})\setminus\bigcup_{i=1}^{n}P_i\right) = 0$.
\end{itemize}
\end{definition}

One of the most important concepts in thermodynamics is the entropy of a system. Here we present the definition of Shannon entropy and some useful properties that we use in this text. The definitions and results can be found in Keller~\cite{Keller} and Kerr-Li~\cite{KL}.

\begin{definition}
\label{def.entropy-of-a-partition}
The {\it information} of a $\mu$-partition $\mathcal{P}=\{P_1,P_2,...,P_n\}$ is the function $I_\mathcal{P}:\Sigma^d(\mathcal{A})\to\RR$ defined as
\begin{displaymath}
\dis I_\mathcal{P}(x):= -\sum_{P\in\mathcal{P}}\log(\mu(P))\cdot \mathds{1}_P(x).
\end{displaymath}
The {\it entropy of a partition} with respect a measure $\mu$ is given by
\begin{displaymath}
H(\mathcal{P},\mu):=\int I_\mathcal{P}(x)d\mu = -\sum_{i=1}^{n}\mu(P_i)\log(\mu(P_i))
\end{displaymath}
\end{definition}

We will use the notation $H(\mathcal{P})=H(\mathcal{P},\mu)$ when there is no confusion over which measure we are considering in order to not overload the notation.

Given two $\mu$-partitions $\mathcal{P}=\{P_1,P_2,...,P_n\}$ and $\mathcal{Q}=\{Q_1,...,Q_m\}$ of a configuration space $\Sigma^d(\mathcal{A})$, we can define the {\it conditional information of $\mathcal{P}$ given $\mathcal{Q}$} as the function $I_{\mathcal{P}|\mathcal{Q}}:\Sigma^d(\mathcal{A})\to\RR$ defined as
\begin{displaymath}
\dis I_{\mathcal{P}|\mathcal{Q}}(x):= -\sum_{i=1}^{n}\sum_{j=1}^{m}\log\left(\frac{\mu(P_i\cap Q_j)}{\mu(Q_j)} \right)\cdot \mathds{1}_{P_i\cap Q_j}(x).
\end{displaymath}
In the same fashion we can define the {\it conditional entropy of $\mathcal{P}$ given $\mathcal{Q}$} with respect to a measure $\mu$ as the value
\begin{equation}
\label{def.conditional-entropy}
H(\mathcal{P}|\mathcal{Q},\mu):= \int I_{\mathcal{P}|\mathcal{Q}}d\mu = \int H(\mathcal{P},\mu_x^\mathcal{Q})d\mu(x)
\end{equation}
where $(\mu_x^{\mathcal{Q}})_{x\in\Sigma^d(\mathcal{A})}$ is the family of conditional probabilities with respect to $\mathcal{Q}$. We can also express the conditional entropy as the sum
\begin{displaymath}
H(\mathcal{P}|\mathcal{Q},\mu) = -\sum_{i=1}^{n}\sum_{j=1}^{m}\mu(P_i\cap Q_j)\log\left(\frac{\mu(P_i\cap Q_j)}{\mu(Q_j)}\right).
\end{displaymath}

As before we will use the notation $H(\mathcal{P}|\mathcal{Q})=H(\mathcal{P}|\mathcal{Q},\mu)$ when there is no confusion over which measure we are considering in order to not overload the notation.

We say that a partition $\mathcal{P}'$ is a {\it refinement} of another partition $\mathcal{P}$ if every element of $\mathcal{P}'$ is contained in an element of $\mathcal{P}$. We denote as $\mathcal{P}'\succeq\mathcal{P}$.

We denote the {\it common refinement} of two partitions denoted by $\mathcal{P}\vee \mathcal{Q}$ as the partition generated by
\begin{displaymath}
\mathcal{P}\vee\mathcal{Q}:=\{P_i\cap Q_j: P_i\in \mathcal{P}, \, Q_j\in\mathcal{Q} \}.
\end{displaymath}
For a subset $S\subseteq \ZZ^d$ we denote by
\begin{displaymath}
\mathcal{P}^{S}:= \bigvee_{u\in S} \sigma^{-u}\mathcal{P}
\end{displaymath}
the common refinement of the partitions $\sigma^{-u}\mathcal{P}$ where $u\in S$. A partition $\mathcal{P}$ is a {\it $\mu$-generated partition} of $(\Sigma^d(\mathcal{A}),\mathcal{B},\mu)$ if the sigma algebra generated by $\mathcal{P}^{S}$ for every finite subset $S \subset \mathbb{Z}^d$ is equal to $\mathcal{B}  \mod \mu$.

The next lemma gives us the Jensen inequality that will be used many times.

\begin{lemma}
[Jensen's Inequality]
\label{lemma.Jensen}
Consider $I\subset\RR$ an open interval and $\psi:I\to\RR$ a concave function. If $f:\Sigma^d(\mathcal{A})\to I$ a $\mu$-integrable function, then the integral of $\psi\circ f$ is well defined and
\begin{displaymath}
\dis \psi\left(\int fd\mu\right) \geq \int \psi\circ fd\mu.
\end{displaymath}
\end{lemma}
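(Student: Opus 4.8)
The plan is to reduce the statement to the geometric heart of Jensen's inequality, namely the existence of a supporting line for a concave function at an interior point of its domain. First I would check that $t_0 := \int f\,d\mu$ actually lies in $I$, so that $\psi(t_0)$ is defined: writing $I=(a,b)$ with $-\infty\le a<b\le+\infty$, monotonicity of the integral together with $\mu$ being a probability measure gives $a\le\int f\,d\mu\le b$; moreover the value $a$ could be attained only if $f=a$ $\mu$-almost everywhere, which is impossible since $f(x)\in I$ for every $x$, and symmetrically at $b$. Hence $t_0\in I$. (When $a=-\infty$ one also needs $\int f\,d\mu>-\infty$, which is part of the hypothesis that $f$ is $\mu$-integrable, and likewise at $b$.)

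Next I would invoke the standard fact that a concave function on an open interval is continuous and admits one-sided derivatives at every point, with $\psi'_{+}(t_0)\le\psi'_{-}(t_0)$, and that for any slope $m$ with $\psi'_{+}(t_0)\le m\le\psi'_{-}(t_0)$ the affine function $\ell(t):=\psi(t_0)+m(t-t_0)$ satisfies $\psi(t)\le\ell(t)$ for all $t\in I$. This supporting-line property is the only non-elementary input, and it follows from the monotonicity of the difference quotients of a concave function; isolating and stating it cleanly is the step I expect to demand the most care, though it involves no estimate. Fix such an $m$ and the corresponding $\ell$.

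Then I would substitute $t=f(x)$: for every $x\in\Sigma^d(\mathcal{A})$ one has
\[
(\psi\circ f)(x)\ \le\ \psi(t_0)+m\,(f(x)-t_0).
\]
The composition $\psi\circ f$ is measurable because $\psi$ is continuous, and the right-hand side is $\mu$-integrable because $f$ is; hence its positive part dominates $(\psi\circ f)^{+}$, so $(\psi\circ f)^{+}\in L^{1}(\mu)$ and $\int\psi\circ f\,d\mu$ is well defined in $[-\infty,+\infty)$ — which is precisely the assertion that ``the integral of $\psi\circ f$ is well defined.'' Integrating the displayed inequality and using $\int 1\,d\mu=1$ and $\int(f-t_0)\,d\mu=0$ yields
\[
\int\psi\circ f\,d\mu\ \le\ \psi(t_0)\ =\ \psi\!\left(\int f\,d\mu\right),
\]
which is the claimed inequality; note it holds trivially even if $\int\psi\circ f\,d\mu=-\infty$. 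The only real obstacle is bookkeeping: ensuring $t_0\in I$ in the cases where $I$ is unbounded, and recording the supporting-line property of concave functions as a cited fact.
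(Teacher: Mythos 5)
Your proof is correct. The paper states Lemma~\ref{lemma.Jensen} without any proof, treating it as a standard fact (the surrounding references, e.g.\ Keller, are cited for this kind of background material), so there is no in-paper argument to compare against; your supporting-line argument is the standard one, and it correctly handles the two points that genuinely need care, namely that $\int f\,d\mu$ lies in the open interval $I$ and that the affine majorant makes $\int \psi\circ f\,d\mu$ well defined in $[-\infty,+\infty)$.
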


If we consider $\psi:[0,1]\to\RR$ defined as
\begin{equation}
\label{def.psi}
\psi(x)=
\left\{
\begin{array}{ll}
-x\log(x), & 0<x\leq 1 \\
0, & x=0,
\end{array}
\right.
\end{equation}
then $\psi$ is a strictly concave function and therefore we obtain
\begin{equation}
\label{eq.psi}
\psi\left(\sum_{i=1}^n \lambda_i x_i\right)\geq \sum_{i=1}^n \lambda_i \psi(x_i),
\end{equation}
where $x_i\in[0,1]$ and $\lambda_i>0$ for each $i\in\llbracket1,n \rrbracket$ with $\sum_{i=1}^n \lambda_i = 1$. We will use this inequality for the proof of the next lemma which presents some important properties of the entropy.

\begin{lemma}
\label{lemma.prop.entropy}
Consider $\mathcal{P}=\{P_1,...,P_n\}$ and $\mathcal{Q}=\{Q_1,Q_2,...,Q_m\}$ two $\mu$-partitions of $\Sigma^d(\mathcal{A})$. Then
\begin{description}
	\item[$(i)$] $0\leq H(\mathcal{P}|\mathcal{Q})\leq H(\mathcal{P})\leq \log|\mathcal{P}|$;
	
	\item[$(ii)$] $H(\mathcal{P}\vee\mathcal{Q})=H(\mathcal{P})+H(\mathcal{Q}|\mathcal{P})$;
	
	\item[$(iii)$] $H(\mathcal{P})\leq H(\mathcal{Q})+H(\mathcal{P}|\mathcal{Q})$;
	
	\item[$(iv)$] if $\mathcal{Q}\succeq \mathcal{P}$, then $H(\mathcal{P}|\mathcal{Q})=0$.
	
	\item[$(v)$] if $\mathcal{Q}\succeq\mathcal{P}$, then $H(\mathcal{P}\vee\mathcal{Q})=H(\mathcal{Q})\geq H(\mathcal{P})$;
\end{description}
\end{lemma}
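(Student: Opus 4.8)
The plan is to derive all five items directly from the defining formulas in Definition~\ref{def.entropy-of-a-partition} and~\eqref{def.conditional-entropy}, using only the concavity inequality~\eqref{eq.psi} and Jensen's inequality (Lemma~\ref{lemma.Jensen}), and to order the proof so that the later items can quote the earlier ones. Throughout I would use that, since $\mathcal{P}$ and $\mathcal{Q}$ are $\mu$-partitions, all the masses $\mu(P_i)$, $\mu(Q_j)$ are strictly positive, so every logarithm and every ratio $\mu(P_i\cap Q_j)/\mu(Q_j)$ below makes sense; ratios that vanish are handled by the convention $\psi(0)=0$.

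I would start with the chain rule $(ii)$, which is a purely algebraic manipulation: beginning from $H(\mathcal{P}\vee\mathcal{Q})=-\sum_{i,j}\mu(P_i\cap Q_j)\log\mu(P_i\cap Q_j)$, split the logarithm as $\log\mu(P_i\cap Q_j)=\log\mu(P_i)+\log\big(\mu(P_i\cap Q_j)/\mu(P_i)\big)$, use $\sum_j\mu(P_i\cap Q_j)=\mu(P_i)$ to collapse the first part into $-\sum_i\mu(P_i)\log\mu(P_i)=H(\mathcal{P})$, and recognize the second part as $H(\mathcal{Q}|\mathcal{P})$. Swapping the roles of $\mathcal{P}$ and $\mathcal{Q}$ and using $\mathcal{P}\vee\mathcal{Q}=\mathcal{Q}\vee\mathcal{P}$ gives the symmetric identity $H(\mathcal{P}\vee\mathcal{Q})=H(\mathcal{Q})+H(\mathcal{P}|\mathcal{Q})$, which I will reuse for $(iii)$.

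Next I would prove $(i)$. Nonnegativity of $H(\mathcal{P}|\mathcal{Q})$ is immediate, since each ratio $\mu(P_i\cap Q_j)/\mu(Q_j)$ lies in $[0,1]$, so every summand $\mu(Q_j)\,\psi\big(\mu(P_i\cap Q_j)/\mu(Q_j)\big)$ is $\geq 0$. For $H(\mathcal{P}|\mathcal{Q})\leq H(\mathcal{P})$ — the one genuinely substantive point — I would fix $i$, write $\mu(P_i)=\sum_j\mu(Q_j)\cdot\frac{\mu(P_i\cap Q_j)}{\mu(Q_j)}$ as a convex combination with weights $\lambda_j=\mu(Q_j)$, apply~\eqref{eq.psi} to obtain $\psi(\mu(P_i))\geq\sum_j\mu(Q_j)\,\psi\big(\mu(P_i\cap Q_j)/\mu(Q_j)\big)$, and sum over $i$; the left side is $H(\mathcal{P})$ and the right side is $H(\mathcal{P}|\mathcal{Q})$. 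Finally, $H(\mathcal{P})\leq\log|\mathcal{P}|$ follows from Lemma~\ref{lemma.Jensen} applied to the concave function $\log$: $H(\mathcal{P})=\sum_i\mu(P_i)\log\frac{1}{\mu(P_i)}\leq\log\big(\sum_i\mu(P_i)\frac{1}{\mu(P_i)}\big)=\log n$.

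The remaining items are short consequences. For $(iii)$, the symmetric form of $(ii)$ gives $H(\mathcal{P}\vee\mathcal{Q})=H(\mathcal{Q})+H(\mathcal{P}|\mathcal{Q})$, while $(ii)$ together with $H(\mathcal{Q}|\mathcal{P})\geq 0$ from $(i)$ gives $H(\mathcal{P}\vee\mathcal{Q})\geq H(\mathcal{P})$; combining the two yields $H(\mathcal{P})\leq H(\mathcal{Q})+H(\mathcal{P}|\mathcal{Q})$. For $(iv)$, if $\mathcal{Q}\succeq\mathcal{P}$ then each $Q_j$ lies mod $\mu$ in a single $P_i$, so every ratio $\mu(P_i\cap Q_j)/\mu(Q_j)$ is $0$ or $1$ and $\psi$ vanishes on it, giving $H(\mathcal{P}|\mathcal{Q})=0$. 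For $(v)$, under $\mathcal{Q}\succeq\mathcal{P}$ one has $P_i\cap Q_j\in\{\varnothing,Q_j\}$ mod $\mu$, hence $\mathcal{P}\vee\mathcal{Q}=\mathcal{Q}$ mod $\mu$ and $H(\mathcal{P}\vee\mathcal{Q})=H(\mathcal{Q})$, and the inequality $H(\mathcal{Q})\geq H(\mathcal{P})$ then follows from $(iii)$ and $(iv)$. I do not expect a serious obstacle here: the only delicate step is the concavity argument for $H(\mathcal{P}|\mathcal{Q})\leq H(\mathcal{P})$, and the only routine care needed is checking that the "mod $\mu$" identifications used in $(iv)$ and $(v)$ are legitimate for $\mu$-partitions.
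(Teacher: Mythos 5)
Your proposal is correct and follows essentially the same route as the paper: the chain rule by splitting the logarithm, the bound $H(\mathcal{P}|\mathcal{Q})\leq H(\mathcal{P})$ via the concavity inequality \eqref{eq.psi} applied to the convex combination $\mu(P_i)=\sum_j\mu(Q_j)\,\mu(P_i\cap Q_j)/\mu(Q_j)$, and the $0$-or-$1$ ratio argument for refinements. The only cosmetic differences are that the paper proves the more general monotonicity $H(\mathcal{P}|\mathcal{Q})\leq H(\mathcal{P}|\mathcal{R})$ for $\mathcal{Q}\succeq\mathcal{R}$ and specializes $\mathcal{R}$ to the trivial partition, whereas you specialize from the start, and that you bound $H(\mathcal{P})\leq\log|\mathcal{P}|$ with Jensen for $\log$ rather than with $\psi$ and uniform weights; both variants are valid.
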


\begin{proof}
\begin{description}
	\item[$(i)$] The inequality $0\leq H(\mathcal{P}|\mathcal{Q})$ follows from the definition of the entropy of a partition. Now we will prove that if $\mathcal{R}=\{C_1,...,C_l\}$ is a partition such that $\mathcal{Q}\succeq\mathcal{R}$ we have that
	\begin{equation}
	\label{eq.entropy}H(\mathcal{P}|\mathcal{Q})\leq H(\mathcal{P}|\mathcal{R}).
	\end{equation}
	Denote
	\[
	\lambda_{k,j}:=\frac{\mu(B_j\cap C_k)}{\mu(C_k)} \quad\mbox{and}\quad x_{j,i}=\frac{\mu(A_i\cap B_j)}{\mu(B_j)}.
	\]
	As we are considering $\mathcal{Q}\succeq\mathcal{R}$, $\mu(B_j\cap C_k)$ is equal to $\mu(B_j)$ or $0$, because either $B_j\subseteq C_k$ or $B_j\cap C_k=\varnothing$. Thus for a fixed $i$ and $k$
	\[
	\sum_{j=1}^{m} \lambda_{k,j}x_{j,i}=\sum_{\substack{j\in\llbracket1,m\rrbracket \\B_j\subseteq C_k}}\frac{\mu(A_i\cap B_j)}{\mu(C_k)}=\frac{\mu(A_i\cap C_k)}{\mu(C_k)}.
	\]
	\[
	\begin{array}{rcl}
	\dis H(\mathcal{P}|\mathcal{Q}) & = & \dis \sum_{i=1}^{n}\sum_{j=1}^{m} -\mu(P_i\cap Q_j)\log\left(\frac{\mu(P_i\cap Q_j)}{\mu(Q_j)} \right) \\
	   & = & \dis \sum_{i=1}^{n}\sum_{j=1}^{m} \mu(Q_j)\psi(x_{j,i}) \\
	   & = & \dis \sum_{i=1}^{n}\sum_{j=1}^{m} \left(\sum_{k=1}^l\mu(C_k)\lambda_{k,j} \right)\psi(x_{j,i}) \\
	   & = & \dis \sum_{i=1}^{n}\sum_{k=1}^{l} \mu(C_k)\sum_{j=1}^m\lambda_{k,j}\psi(x_{j,i}) \\
	   & \leq & \dis \sum_{i=1}^n\sum_{k=1}^l\mu(C_k)\psi\left(\sum_{j=1}^m\lambda_{k,j}x_{j,i} \right) \\
	   & = & \dis H(\mathcal{P}|\mathcal{R}).
	\end{array}
	\]
	If we take $\mathcal{R}=\{\Sigma^d(\mathcal{A}) \}$ the trivial partition, we obtain $H(\mathcal{P}|\mathcal{Q})\leq H(\mathcal{P})$.

	In (\ref{eq.psi}) if we consider $x_i=\mu(P_i)$ and $\lambda_i = 1/n$ we obtain that
	\begin{displaymath}
	\begin{array}{rcl}
	\dis -\frac{1}{n} \log\left(\frac{1}{n}\right)  & = & \dis \psi\left(\frac{1}{n}\right) \\
	   & = & \dis \psi\left(\frac{1}{n}\sum_{i=1}^{n}\mu(P_i)\right) \\
	   & \geq & \dis \frac{1}{n}\sum_{i=1}^{n}\psi(\mu(P_i)) \\
	   & = & \dis \frac{1}{n} H(\mathcal{P}),
	\end{array}
	\end{displaymath}
	and therefore $H(\mathcal{P})\leq \log(n) = \log|\mathcal{P}|$.
	
	\item[$(ii)$] Each element of the partition $\mathcal{P}\vee\mathcal{Q}$ is of the form $P\cap Q$ where $P\in\mathcal{P}$ and $Q\in\mathcal{Q}$. Then
	\begin{displaymath}
	\begin{array}{rcl}
	\dis I_{\mathcal{P}\vee\mathcal{Q}}(x) & = & \dis -\sum_{P\in\mathcal{P}}\sum_{Q\in\mathcal{Q}} \log(\mu(P\cap Q))\cdot \mathds{1}_{P\cap Q}(x) \\
	 & = & \dis -\sum_{P\in\mathcal{P}}\sum_{Q\in\mathcal{Q}} \log\left(\frac{\mu(P\cap Q)}{\mu(P)}\cdot \mu(P)\right)\cdot \mathds{1}_{P\cap Q}(x) \\
	 & = & \dis -\sum_{P\in\mathcal{P}}\sum_{Q\in\mathcal{Q}} \log\left(\frac{\mu(P\cap Q)}{\mu(P)}\right)\cdot \mathds{1}_{P\cap Q}(x)-\sum_{P\in\mathcal{P}}\sum_{Q\in\mathcal{Q}} \log(\mu(P))\cdot \mathds{1}_{P\cap Q}(x) \\
	 & = & \dis -\sum_{P\in\mathcal{P}}\sum_{Q\in\mathcal{Q}} \log\left(\frac{\mu(P\cap Q)}{\mu(P)}\right)\cdot \mathds{1}_{P\cap Q}(x)-\sum_{P\in\mathcal{P}}\log(\mu(P))\cdot \mathds{1}_{P}(x) \\
	 & = & I_{\mathcal{P}|\mathcal{Q}}(x)+I_{\mathcal{P}}(x).
	\end{array}
	\end{displaymath}
	By integrating with respect to a measure $\mu$ we obtain that
	\begin{displaymath}
	\dis H(\mathcal{P}\vee\mathcal{Q})=H(\mathcal{P})+H(\mathcal{Q}|\mathcal{P}).
	\end{displaymath}
	
	\item[$(iii)$] By the previous items we obtain that
	\begin{displaymath}
	\begin{array}{rcl}
	H(\mathcal{P}) & = & H(\mathcal{P}\vee\mathcal{Q})-H(\mathcal{Q|\mathcal{P}}) \\
	   & \leq & H(\mathcal{P}\vee\mathcal{Q}) \\
	   & = & H(\mathcal{Q}) + H(\mathcal{P}|\mathcal{Q}).
	\end{array}
	\end{displaymath}
	
	\item[$(iv)$] For any two partitions $\mathcal{P}$ and $\mathcal{Q}$, we have
	\begin{displaymath}
	\begin{array}{rcl}
	H(\mathcal{P}|\mathcal{Q}) & = & \dis \sum_{P\in\mathcal{P}}\sum_{Q\in\mathcal{Q}} -\mu(P\cap Q)\log\left(\frac{\mu(p\cap Q)}{\mu(Q)}\right) \\
	   & = & \dis \sum_{P\in\mathcal{P}}\sum_{Q\in\mathcal{Q}} \mu(Q)\cdot \psi\left(\frac{\mu(P\cap Q)}{\mu(Q)}\right). \\
	\end{array}
	\end{displaymath}
	
	If we consider that $\mathcal{Q}\succeq \mathcal{P}$ each $Q\in\mathcal{Q}$ is completely contained in an element $P\in\mathcal{P}$. Hence each term of the sum above is equal to zero because either $\frac{\mu(P\cap Q)}{\mu(Q)}=0$ or $\frac{\mu(P\cap Q)}{\mu(Q)}=1$, and in both cases we have that
	\begin{displaymath}
	H(\mathcal{P}|\mathcal{Q})=\sum_{P\in\mathcal{P}}\sum_{Q\in\mathcal{Q}} \mu(Q)\cdot\psi\left(\frac{\mu(P\cap Q)}{\mu(Q)}\right) = 0.
	\end{displaymath}
	
	\item[$(v)$] It follows from the items $(iii)$ and $(iv)$.
\end{description}
\end{proof}

\begin{lemma}
Consider $(\Sigma^d(\mathcal{A}),\mathcal{B},\mu)$ a shift-invariant probability space and $\mathcal{P}$ a finite partition of $\Sigma^d(\mathcal{A})$. The {\it dynamical entropy relative to the partition $\mathcal{P}$} is given by
\begin{displaymath}
\dis h(\mathcal{P},\mu) := \inf_{n\geq 0}\frac{1}{|\Lambda_n|}H(\mathcal{P}^{\Lambda_n})=\lim_{n\to+\infty}\frac{1}{|\Lambda_n|}H(\mathcal{P}^{\Lambda_n})
\end{displaymath}
which is well defined, where $\Lambda_n:=\llbracket-n,n\rrbracket^d$ for $n \geq 1$.
\end{lemma}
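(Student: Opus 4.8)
The plan is to show that the function $n\mapsto H(\mathcal{P}^{\Lambda_n})$ is, up to translation invariance, subadditive with respect to disjoint unions of its support, and then to run a cube-tiling (multidimensional Fekete) argument to get simultaneously that the limit exists and that it equals the infimum. Everything needed is already contained in Lemma~\ref{lemma.prop.entropy} together with the shift invariance of $\mu$.

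First I would record two elementary facts. \emph{Finiteness:} by Lemma~\ref{lemma.prop.entropy}$(i)$ we have $0\leq H(\mathcal{P}^{\Lambda_n})\leq \log|\mathcal{P}^{\Lambda_n}|\leq |\Lambda_n|\log|\mathcal{P}|$, so every term $\tfrac{1}{|\Lambda_n|}H(\mathcal{P}^{\Lambda_n})$ lies in $[0,\log|\mathcal{P}|]$ and the infimum is a well-defined real number; atoms of zero measure appearing in the refined partitions are harmless since $\psi(0)=0$, so they may simply be discarded. \emph{Subadditivity and invariance:} for any finite $S,T\subseteq\ZZ^d$ one has $\mathcal{P}^{S\cup T}=\mathcal{P}^{S}\vee\mathcal{P}^{T}$, and combining Lemma~\ref{lemma.prop.entropy}$(ii)$ with $(i)$ gives
\[
H(\mathcal{P}^{S}\vee\mathcal{P}^{T})=H(\mathcal{P}^{S})+H(\mathcal{P}^{T}\mid\mathcal{P}^{S})\leq H(\mathcal{P}^{S})+H(\mathcal{P}^{T}),
\]
hence by iteration $H(\mathcal{P}^{\bigcup_i S_i})\leq\sum_i H(\mathcal{P}^{S_i})$. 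Moreover $\sigma^{-u}\mathcal{P}^{S}=\mathcal{P}^{S+u}$ and, since $\mu$ is shift invariant, $H(\mathcal{P}^{S+u})=H(\mathcal{P}^{S})$ for every $u\in\ZZ^d$.

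Next I would compare a large cube with a fixed one. Fix $n\geq 1$ and let $m$ be large; write $2m+1=q(2n+1)+r$ with $0\leq r<2n+1$. The sub-cube of side $q(2n+1)$ sitting inside $\Lambda_m$ splits into $q^d$ pairwise disjoint translates of $\Lambda_n$, and the complementary region $R_m:=\Lambda_m\setminus(\text{that sub-cube})$ has $|R_m|=(2m+1)^d-\bigl(q(2n+1)\bigr)^d$. By subadditivity, translation invariance, and $(i)$,
\[
H(\mathcal{P}^{\Lambda_m})\leq q^d\,H(\mathcal{P}^{\Lambda_n})+H(\mathcal{P}^{R_m})\leq q^d\,H(\mathcal{P}^{\Lambda_n})+|R_m|\log|\mathcal{P}|.
\]
Dividing by $|\Lambda_m|=(2m+1)^d$ and letting $m\to\infty$, one has $q(2n+1)/(2m+1)\to 1$, whence $q^d/(2m+1)^d\to 1/|\Lambda_n|$ and $|R_m|/(2m+1)^d\to 0$, so that
\[
\limsup_{m\to\infty}\frac{1}{|\Lambda_m|}H(\mathcal{P}^{\Lambda_m})\leq \frac{1}{|\Lambda_n|}H(\mathcal{P}^{\Lambda_n}).
\]
Taking the infimum over $n$ on the right and noting that each term dominates the infimum, I get
\[
\limsup_{m\to\infty}\frac{1}{|\Lambda_m|}H(\mathcal{P}^{\Lambda_m})\leq\inf_{n\geq 0}\frac{1}{|\Lambda_n|}H(\mathcal{P}^{\Lambda_n})\leq\liminf_{m\to\infty}\frac{1}{|\Lambda_m|}H(\mathcal{P}^{\Lambda_m}),
\]
so all three quantities coincide; the limit exists and equals the infimum.

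The only real subtlety I anticipate is the $\ZZ^d$ bookkeeping in the middle step: choosing the tiling of the big cube by translates of $\Lambda_n$, controlling the boundary layer $R_m$, and checking that the ratios $q^d/(2m+1)^d$ and $|R_m|/(2m+1)^d$ converge as claimed. In dimension one this is exactly Fekete's subadditive lemma; in higher dimension the cube-tiling version is still elementary, and the crude bound $H(\mathcal{P}^{R_m})\leq|R_m|\log|\mathcal{P}|$ on the leftover region is all that is required. The rest is a direct application of the entropy inequalities already proved.
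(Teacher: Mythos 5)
Your proof is correct and follows essentially the same route as the paper: both arguments rest on the subadditivity $H(\mathcal{P}^{S\cup T})\leq H(\mathcal{P}^S)+H(\mathcal{P}^T)$ from Lemma~\ref{lemma.prop.entropy} together with shift invariance, and then run the multidimensional Fekete cube-tiling argument to identify the limit with the infimum. The only (immaterial) difference is that the paper over-covers the large cube by lattice translates of the fixed cube and invokes monotonicity of $H$, whereas you tile a maximal subcube exactly and absorb the boundary layer with the crude bound $H(\mathcal{P}^{R_m})\leq|R_m|\log|\mathcal{P}|$.
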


\begin{proof}
For each $n\geq 1$ we will consider $\Lambda_n:=\llbracket-n,n\rrbracket^d\subset\ZZ^d$. For a fixed $m\geq 1$ we denote $\Lambda_m=\llbracket-m,m\rrbracket^d$ and $l_m=2m+1$. Consider the set
\begin{displaymath}
V_n:=\left\{p\in (l_m\ZZ)^2: (p+\Lambda_m)\cap \Lambda_n\neq \varnothing \right\}
\end{displaymath}
Then
\begin{displaymath}
\dis \Lambda_n \subseteq \tilde{\Lambda}_n:=\bigcup_{u\in V_n}\left(\Lambda_m+u \right).
\end{displaymath}

Note that $|\tilde{\Lambda}_n|=|V_n|\cdot|\Lambda_m|\leq|\Lambda_{n+m}|$. We obtain that
\begin{displaymath}
\begin{array}{rcl}
\dis H(\mathcal{P}^{\Lambda_n}) & \leq & \dis H(\mathcal{P}^{\tilde{\Lambda}_n}) \\
   & \leq & \dis \sum_{u\in V_n} H(\sigma^{-u}\mathcal{P}^{\Lambda_m}) \\
   & = & \dis |V_n|H(\mathcal{P}^{\Lambda_m}) \\
   & \leq & \dis \frac{|\Lambda_{n+m}|}{|\Lambda_m|}H(\mathcal{P}^{\Lambda_m}),
\end{array}
\end{displaymath}
and therefore
\begin{displaymath}
\dis \limsup_{n\to+\infty}\frac{1}{|\Lambda_n|}H(\mathcal{P}^{\Lambda_n}) \leq \limsup_{n\to+\infty}\frac{|\Lambda_{n+m}|}{|\Lambda_m|}\frac{1}{|\Lambda_m|}H(\mathcal{P}^{\Lambda_m})=\frac{1}{|\Lambda_m|}H(\mathcal{P}^{\Lambda_m}).
\end{displaymath}

The last estimate holds for every fixed $m$, thus we conclude that
\begin{displaymath}
\dis \limsup_{n\to+\infty}\frac{1}{|\Lambda_n|}H(\mathcal{P}^{\Lambda_n}) \leq \inf_{m>0}\frac{1}{|\Lambda_m|}H(\mathcal{P}^{\Lambda_m}) \leq \liminf_{m\to+\infty}\frac{1}{|\Lambda_m|}H(\mathcal{P}^{\Lambda_m}).
\end{displaymath}
\end{proof}

\begin{theorem}[Shannon-McMillan-Breiman]
\label{teo.SMB}
Let $(\Sigma^d(\mathcal{A}), \mathcal{B},\mu)$ a shift-invariant probability space and $\mathcal{P}$ a finite partition of $\Sigma^d(\mathcal{A})$. Then
\begin{displaymath}
\dis \lim_{n\rightarrow+\infty}-\frac{1}{|\Lambda_n|}\log(\mu(\mathcal{P}^{\Lambda_n})) = h(\mathcal{P},\mu)
\end{displaymath}
pointwise a.e. and in $L_1$.
\end{theorem}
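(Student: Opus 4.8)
\noindent\emph{Proof strategy.} The plan is to study the functions
\[
f_n(x)\;:=\;-\tfrac{1}{|\Lambda_n|}\log\mu\bigl(\mathcal{P}^{\Lambda_n}(x)\bigr)\;=\;\tfrac{1}{|\Lambda_n|}\,I_{\mathcal{P}^{\Lambda_n}}(x),
\]
where $\mathcal{P}^{\Lambda_n}(x)$ is the atom of $\mathcal{P}^{\Lambda_n}$ containing $x$ and $I$ is the information function of Definition~\ref{def.entropy-of-a-partition}, and to proceed in three stages: establish $\mu$-a.e.\ convergence of $(f_n)$, identify the limit, and upgrade to $L^1$. First I would fix the lexicographic (hence translation-invariant) total order $\prec$ on $\ZZ^d$ and set $\Pi:=\{\,v\in\ZZ^d:v\prec 0\,\}$, the associated ``past''; for $S\subseteq\ZZ^d$ let $\mathcal{P}^{S}$ be the $\sigma$-algebra generated by $\bigvee_{u\in S}\sigma^{-u}\mathcal{P}$ and $g_S:=I_{\mathcal{P}\mid\mathcal{P}^{S}}$ the associated conditional information. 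Enumerating the sites of $\Lambda_n$ in $\prec$-increasing order and iterating the pointwise chain rule $I_{\mathcal{P}\vee\mathcal{Q}}=I_{\mathcal{P}}+I_{\mathcal{Q}\mid\mathcal{P}}$ (the identity appearing inside the proof of Lemma~\ref{lemma.prop.entropy}) together with the equivariance $I_{\sigma^{-u}\mathcal{P}\mid\mathcal{G}}(x)=I_{\mathcal{P}\mid\sigma^{u}\mathcal{G}}(\sigma^{u}x)$, valid because $\mu$ is shift invariant, yields the telescoped decomposition
\[
I_{\mathcal{P}^{\Lambda_n}}(x)\;=\;\sum_{u\in\Lambda_n}g_{(\Lambda_n-u)\cap\Pi}\bigl(\sigma^{u}x\bigr),
\]
since the $\prec$-predecessors of $u$ inside $\Lambda_n$, recentred at $u$, are exactly the sites of $(\Lambda_n-u)\cap\Pi$.

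The second ingredient is martingale convergence: along any increasing exhaustion of $\Pi$ by finite sets, $\mathcal{P}^{S}\nearrow\mathcal{P}^{\Pi}$, so $g_S\to g_\Pi:=I_{\mathcal{P}\mid\mathcal{P}^{\Pi}}$ both $\mu$-a.e.\ and in $L^1$. Since $\mathcal{P}$ is finite, Chung's maximal inequality for conditional information supplies an $L^1$ envelope $g^{*}$ dominating every $g_S$ that occurs in the decomposition and, for each $N$, an $L^1$ envelope $G_N$ dominating $|g_S-g_\Pi|$ for all ``partial pasts'' $S=B\cap\Pi$ with $B$ a box and $\Lambda_N\cap\Pi\subseteq S\subseteq\Pi$; moreover $G_N\downarrow 0$ $\mu$-a.e., hence $\|G_N\|_1\to 0$ by dominated convergence.

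Now I would split
\[
f_n(x)\;=\;\frac{1}{|\Lambda_n|}\sum_{u\in\Lambda_n}g_\Pi(\sigma^u x)\;+\;\frac{1}{|\Lambda_n|}\sum_{u\in\Lambda_n}\bigl(g_{(\Lambda_n-u)\cap\Pi}-g_\Pi\bigr)(\sigma^u x).
\]
As $(\Lambda_n)$ is a F{\o}lner sequence in $\ZZ^d$, the first average converges $\mu$-a.e.\ and in $L^1$ to $\mathbb{E}[\,g_\Pi\mid\mathcal{I}\,]$ by the multidimensional pointwise ergodic theorem, $\mathcal{I}$ being the $\sigma$-algebra of shift-invariant sets. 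For the error term, fix $N$: when $u\in\Lambda_{n-N}$ the box $\Lambda_n-u$ contains $\Lambda_N$, so $(\Lambda_n-u)\cap\Pi\supseteq\Lambda_N\cap\Pi$ and the summand is $\le G_N(\sigma^u x)$; the remaining indices $u\in\Lambda_n\setminus\Lambda_{n-N}$ form a boundary layer with $|\Lambda_n\setminus\Lambda_{n-N}|/|\Lambda_n|\to 0$, where the summand is $\le(g^{*}+|g_\Pi|)(\sigma^u x)$. Applying the ergodic theorem over $\Lambda_n$ and over $\Lambda_{n-N}$ makes the boundary contribution vanish and bounds the interior contribution by $\mathbb{E}[G_N\mid\mathcal{I}]$ (and by $\|G_N\|_1$ in $L^1$-mean); letting $N\to\infty$ and using $G_N\downarrow 0$ drives the error term to $0$ in both senses, so $f_n\to\mathbb{E}[g_\Pi\mid\mathcal{I}]$. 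To identify this as a constant, integrate the telescoped decomposition: $\tfrac{1}{|\Lambda_n|}H(\mathcal{P}^{\Lambda_n})$ is then an average of the numbers $\int g_{(\Lambda_n-u)\cap\Pi}\,d\mu\to\int g_\Pi\,d\mu$, and since $\tfrac{1}{|\Lambda_n|}H(\mathcal{P}^{\Lambda_n})\to h(\mathcal{P},\mu)$ by the preceding lemma we get $\int\mathbb{E}[g_\Pi\mid\mathcal{I}]\,d\mu=\int g_\Pi\,d\mu=h(\mathcal{P},\mu)$; when $\mu$ is ergodic, $\mathbb{E}[g_\Pi\mid\mathcal{I}]=h(\mathcal{P},\mu)$ $\mu$-a.e., which is the stated equality (in general the a.e.\ and $L^1$ limit is the conditional expectation $\mathbb{E}[g_\Pi\mid\mathcal{I}]$, of integral $h(\mathcal{P},\mu)$).

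For $d=1$ this is complete and is exactly the classical Breiman argument. The step I expect to be the main obstacle is the $\ZZ^d$-geometry hidden in the error term: the conditioning sets $(\Lambda_n-u)\cap\Pi$ depend on both $u$ and $n$ and exhaust the full past $\Pi$ only for $u$ deep inside the cube, so one genuinely needs the conditional informations $g_S$ to converge to $g_\Pi$ \emph{uniformly over the whole family of partial pasts} $B\cap\Pi$ — which is precisely what makes $G_N\downarrow 0$ and the envelope $g^{*}$ legitimate. That is a martingale-convergence statement along a directed family rather than a single increasing sequence and is delicate in general; one makes it work here either by first reducing to dimension one through a coordinate-by-coordinate (column) decomposition of $\Lambda_n$, or by invoking the Ornstein--Weiss quasi-tiling machinery for amenable actions. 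The other ingredients — the $\ZZ^d$ pointwise ergodic theorem along cubes, the vanishing of the boundary layer, Chung's maximal inequality, and the passage from a.e.\ to $L^1$ via the integrable envelopes — are comparatively routine.
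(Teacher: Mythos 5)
The paper does not actually prove Theorem~\ref{teo.SMB}: it only records the statement and refers to Krengel for the proof, so there is nothing internal to compare your argument against line by line. What you have written is a faithful reconstruction of the classical Breiman-style proof as it is adapted to $\ZZ^d$ in those references: the telescoping of $I_{\mathcal{P}^{\Lambda_n}}$ along the lexicographic order into conditional informations $g_{(\Lambda_n-u)\cap\Pi}\circ\sigma^u$ is correct (and correctly uses shift invariance for the recentering identity), the split into an ergodic average of $g_\Pi$ plus an error term is the right decomposition, and the identification of the limit — $\mathbb{E}[g_\Pi\mid\mathcal{I}]$ in general, with integral $h(\mathcal{P},\mu)$, constant only in the ergodic case — is more precise than the theorem as stated in the paper, which tacitly assumes ergodicity or should be read with the conditional expectation on the right-hand side. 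You have also correctly isolated the one step that is not routine: the conditioning sets $(\Lambda_n-u)\cap\Pi$ form a directed family of partial pasts rather than a single increasing chain, so Chung's maximal inequality and the a.e.\ convergence $g_S\to g_\Pi$ cannot be invoked verbatim, and the existence of the integrable envelopes $g^*$ and $G_N\downarrow 0$ is exactly where the $\ZZ^d$ geometry has to be paid for (via a column-by-column reduction to $d=1$, or the Ornstein--Weiss machinery). Since you name that gap and the standard ways to close it rather than glossing over it, the proposal is an acceptable proof outline; to make it a complete proof you would need to actually carry out one of those two closures, which is precisely the content of the cited source.
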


The previous theorem has already been proved for a larger class of group actions only with the assumptions that the group is amenable \cite{Lindenstrauss, KL, TEMP}. The proof for Theorem~\ref{teo.SMB} as stated here can be found in Krengel~\cite{Krengel}.

Now we define the Kolmogorov-Sinai entropy also called dynamical entropy of a measure.

\begin{definition}
\label{def.dynamical.entropy}
The {\it entropy} of the space $(\Sigma^d(\mathcal{A}),\mathcal{B},\mu)$, also known as the {\it dynamical entropy of $\mu$} is given by
\begin{displaymath}
h(\mu) = \sup_{\mathcal{P}}\left\{h(\mathcal{P},\mu): \mathcal{P} \mbox{ is a finite partition} \right\}.
\end{displaymath}
\end{definition}

\begin{definition}
\label{def.topological.entropy}
The {\it topological entropy} of a subshift $X\subseteq\Sigma^d(\mathcal{A})$ is given by
\begin{displaymath}
h_{top}(\Sigma^d(\mathcal{A})) = \lim_{n \to + \infty} \frac{1}{|\Lambda_n|} \log (|\mathcal{L}(X, 2n+1)|).
\end{displaymath}
\end{definition}

In Chazottes-Meyerovitch~\cite{HM} they establish important results about the characterization of the entropy for multidimensional SFT. Next we present the variational principle for the entropy.

\begin{theorem}[Variational Principle]
\label{teo.variational-principle}
Let $X\subseteq\Sigma^d(\mathcal{A})$ be a subshift, then
\begin{displaymath}
h_{top}(X) = \sup_{\mu} h(\mu)
\end{displaymath}
where the supremum is taken over the set of shift-invariant probability measures $\mathcal{M}_\sigma(\Sigma^d(\mathcal{A}))$.
\end{theorem}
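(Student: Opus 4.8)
I will prove the identity as two inequalities; as usual, the supremum is understood over $\mathcal{M}_\sigma(X)$, the shift-invariant probability measures carried by $X$ (equivalently, those $\mu\in\mathcal{M}_\sigma(\Sigma^d(\mathcal{A}))$ with $\mu(X)=1$). Write $\mathcal{P}_0=\{[a]_0:a\in\mathcal{A}\}$ for the clopen state partition, so that $\mathcal{P}_0^{\Lambda_n}=\bigvee_{u\in\Lambda_n}\sigma^{-u}\mathcal{P}_0$ is exactly the partition into the cylinders $[p]$, $p\in\mathcal{A}^{\Lambda_n}$; since the cylinders generate $\mathcal{B}$, $\mathcal{P}_0$ is a $\mu$-generated partition for every $\mu$. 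I will use three soft facts repeatedly: for any measure, $H(\mathcal{R}_1\vee\dots\vee\mathcal{R}_k)\le\sum_i H(\mathcal{R}_i)$ and $H(\mathcal{P}_0^{S})\le|S|\log|\mathcal{A}|$ for finite $S\subseteq\ZZ^d$ (iterate items $(i)$ and $(ii)$ of Lemma~\ref{lemma.prop.entropy}); the map $\nu\mapsto H_\nu(\mathcal{P}_0^{\Lambda_q})$ is concave (inequality~(\ref{eq.psi})) and, the atoms being clopen, weak${}^*$-continuous; and, $\mathcal{P}_0$ being generating, the Kolmogorov--Sinai theorem gives $h(\mu)=h(\mathcal{P}_0,\mu)$ (see Keller~\cite{Keller} or Krengel~\cite{Krengel}). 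For the upper bound, fix $\mu$ with $\mu(X)=1$; then $h(\mu)=h(\mathcal{P}_0,\mu)=\inf_n\frac{1}{|\Lambda_n|}H(\mathcal{P}_0^{\Lambda_n},\mu)$, and the only atoms of $\mathcal{P}_0^{\Lambda_n}$ carrying positive $\mu$-mass are the cylinders $[p]$ with $p$ globally admissible of shape $\Lambda_n$ (such $[p]$ meets $\Supp(\mu)\subseteq X$), and there are at most $|\mathcal{L}(X,2n+1)|$ of these by shift-invariance; hence item $(i)$ of Lemma~\ref{lemma.prop.entropy} gives $H(\mathcal{P}_0^{\Lambda_n},\mu)\le\log|\mathcal{L}(X,2n+1)|$, and dividing by $|\Lambda_n|$ and letting $n\to\infty$ (Definition~\ref{def.topological.entropy}) yields $h(\mu)\le h_{top}(X)$.

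For the lower bound I will produce an invariant measure attaining $h_{top}(X)$ by the Misiurewicz averaging method, which uses only the soft facts above and is therefore insensitive to the failure of free concatenation of admissible patterns in dimension $d\ge2$. For each $n$, let $D_n$ be the set of globally admissible patterns of shape $\Lambda_n$, so $|D_n|=|\mathcal{L}(X,2n+1)|$ and $a_n:=\frac{1}{|\Lambda_n|}\log|D_n|\to h_{top}(X)$; for each $p\in D_n$ fix $x_p\in X$ with $x_p|_{\Lambda_n}=p$ and set $\nu_n=\frac{1}{|D_n|}\sum_{p\in D_n}\delta_{x_p}$, a probability measure on $X$ (not shift-invariant) with $H_{\nu_n}(\mathcal{P}_0^{\Lambda_n})=\log|D_n|$ because the $x_p$ lie in distinct atoms. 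Put $\mu_n=\frac{1}{|\Lambda_n|}\sum_{v\in\Lambda_n}\sigma^v_*\nu_n$, a probability measure on $X$ since $X$ is shift-invariant, and let $\mu$ be a weak${}^*$ accumulation point of $(\mu_{n(i)})_i$; because $|\Lambda_n\triangle(\Lambda_n+u)|=o(|\Lambda_n|)$ for every $u$, $\mu$ is shift-invariant, and as $X$ is closed, $\mu\in\mathcal{M}_\sigma(X)$.

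Fix $q\ge1$ and set $\ell_q=2q+1$. For a residue $c\in\llbracket0,\ell_q-1\rrbracket^d$ the translates $v+\Lambda_q$, $v\in c+\ell_q\ZZ^d$, tile $\ZZ^d$; decomposing $\Lambda_n$ into the full such tiles it contains plus a boundary layer $B_c$ with $|B_c|=O(q\,n^{d-1})$ and applying subadditivity,
\[
\log|D_n|=H_{\nu_n}(\mathcal{P}_0^{\Lambda_n})\;\le\;\sum_{\substack{v\in c+\ell_q\ZZ^d\\ v+\Lambda_q\subseteq\Lambda_n}}H_{\sigma^v_*\nu_n}(\mathcal{P}_0^{\Lambda_q})\;+\;O\big(q\,n^{d-1}\log|\mathcal{A}|\big).
\]
Summing over the $|\Lambda_q|$ residues $c$ (each grid point $v$ occurring for a unique $c$), then relaxing the constraint to let $v$ range over all of $\Lambda_n$ and applying concavity of $\nu\mapsto H_\nu(\mathcal{P}_0^{\Lambda_q})$ with $\mu_n=\frac{1}{|\Lambda_n|}\sum_{v\in\Lambda_n}\sigma^v_*\nu_n$, one gets $|\Lambda_q|\log|D_n|\le|\Lambda_n|\,H_{\mu_n}(\mathcal{P}_0^{\Lambda_q})+|\Lambda_q|\,O(q\,n^{d-1}\log|\mathcal{A}|)$, that is, $a_n\le\frac{1}{|\Lambda_q|}H_{\mu_n}(\mathcal{P}_0^{\Lambda_q})+O(q/n)$. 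Letting $i\to\infty$ along $n=n(i)$ and using weak${}^*$-continuity, $h_{top}(X)=\lim_n a_n\le\frac{1}{|\Lambda_q|}H_\mu(\mathcal{P}_0^{\Lambda_q})$; taking the infimum over $q$ gives $h_{top}(X)\le h(\mathcal{P}_0,\mu)\le h(\mu)\le\sup_\nu h(\nu)$, completing the proof.

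The substantive half is the lower bound, and the one point requiring care — the reason the one-dimensional textbook proof does not transcribe verbatim — is that for $d\ge2$ a concatenation of globally admissible patterns need not be globally admissible (cf.\ Proposition~\ref{proposition.Sebastian}), so one cannot build a measure on $X$ by independently filling a tiling of $\ZZ^d$ with blocks of $D_n$. The construction above circumvents this: it only places mass on genuine points $x_p\in X$, and then invokes subadditivity of $H$ over disjoint supports and concavity of $H$ in the measure, both unconditional. What is then left is the routine verification that the boundary term $O(q\,n^{d-1})$ is $o(|\Lambda_n|)$ — immediate since $|\Lambda_n|=(2n+1)^d$ — together with the two appeals to zero-dimensionality of $\Sigma^d(\mathcal{A})$: weak${}^*$-continuity of $\nu\mapsto H_\nu(\mathcal{P}_0^{\Lambda_q})$ (clopen atoms), and the Kolmogorov--Sinai reduction $h(\mu)=h(\mathcal{P}_0,\mu)$ (cylinders generate $\mathcal{B}$).
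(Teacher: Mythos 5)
Your proof is correct, but it follows a genuinely different route from the paper, which does not prove Theorem~\ref{teo.variational-principle} at all: the text simply records the statement and refers to the literature (the amenable-group version in~\cite{KL}, cf.~\cite{Keller}). What you supply is a complete, self-contained Misiurewicz-type argument adapted to $\ZZ^d$: the counting upper bound $h(\mu)\le h_{top}(X)$ via the generating clopen partition and item $(i)$ of Lemma~\ref{lemma.prop.entropy}, and the lower bound by averaging empirical point masses over $\Lambda_n$, tiling $\Lambda_n$ by translates of $\Lambda_q$ over all $|\Lambda_q|$ residue classes, and using only subadditivity of $H$, concavity of $\nu\mapsto H_\nu(\cdot)$, weak${}^*$ continuity on clopen atoms, and the Kolmogorov--Sinai reduction (Theorem~\ref{teo.Kolmogorov.Sinai}) --- all of which are already available in the chapter. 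Your remark that one cannot freely concatenate globally admissible patterns when $d\ge2$, and that the construction sidesteps this by only weighting genuine points of $X$, is exactly the right point of care. You also silently repair an imprecision in the statement itself: as written the supremum ranges over all of $\mathcal{M}_\sigma(\Sigma^d(\mathcal{A}))$, which would force the right-hand side to equal $\log\Card(\mathcal{A})$ regardless of $X$; your reading (measures with $\mu(X)=1$, i.e.\ $\mathcal{M}_\sigma(X)$) is the intended one and is what your upper bound actually uses. The trade-off between the two routes is the usual one: the citation buys generality (arbitrary amenable group actions, general topological dynamics), while your argument buys transparency and keeps the thesis self-contained for the only case it needs, $\ZZ^d$-subshifts over a finite alphabet, where zero-dimensionality trivializes the continuity and generating-partition issues.
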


The Variational Principle as stated above has already been proved for amenable group actions in \cite{KL}. One important result for the characterization of the dynamical entropy of a measure is given by the following theorem.

\begin{theorem}[Kolmogorov-Sinai]
\label{teo.Kolmogorov.Sinai}
If $\mathcal{P}$ is $\mu$-generated partition for $(\Sigma^d(\mathcal{A}), \mathcal{B},\mu)$ and $H(\mathcal{P})<+\infty$, then
\begin{displaymath}
\dis h(\mu) = h(\mathcal{P},\mu).
\end{displaymath}
\end{theorem}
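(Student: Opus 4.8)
The plan is to prove the two inequalities $h(\mathcal{P},\mu)\le h(\mu)$ and $h(\mu)\le h(\mathcal{P},\mu)$ separately. The first is immediate from Definition~\ref{def.dynamical.entropy}, since $\mathcal{P}$ is one of the finite partitions over which the supremum defining $h(\mu)$ is taken; it uses neither the generating hypothesis nor $H(\mathcal{P})<+\infty$. For the reverse inequality it suffices to show $h(\mathcal{Q},\mu)\le h(\mathcal{P},\mu)$ for an arbitrary finite partition $\mathcal{Q}$ and then take the supremum over $\mathcal{Q}$.

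So fix a finite partition $\mathcal{Q}$. First I would reduce to controlling $h(\mathcal{Q}\vee\mathcal{P},\mu)$: since $(\mathcal{Q}\vee\mathcal{P})^{\Lambda_n}=\mathcal{Q}^{\Lambda_n}\vee\mathcal{P}^{\Lambda_n}$ refines $\mathcal{Q}^{\Lambda_n}$, item~$(v)$ of Lemma~\ref{lemma.prop.entropy} gives $H((\mathcal{Q}\vee\mathcal{P})^{\Lambda_n})\ge H(\mathcal{Q}^{\Lambda_n})$ for every $n$, hence $h(\mathcal{Q},\mu)\le h(\mathcal{Q}\vee\mathcal{P},\mu)$. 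Then I would expand, using item~$(ii)$,
\begin{displaymath}
H\bigl((\mathcal{Q}\vee\mathcal{P})^{\Lambda_n}\bigr)=H(\mathcal{P}^{\Lambda_n})+H\bigl(\mathcal{Q}^{\Lambda_n}\,\big|\,\mathcal{P}^{\Lambda_n}\bigr),
\end{displaymath}
and bound the conditional term by subadditivity over the join $\mathcal{Q}^{\Lambda_n}=\bigvee_{u\in\Lambda_n}\sigma^{-u}\mathcal{Q}$, i.e.
\begin{displaymath}
H\bigl(\mathcal{Q}^{\Lambda_n}\,\big|\,\mathcal{P}^{\Lambda_n}\bigr)\le\sum_{u\in\Lambda_n}H\bigl(\sigma^{-u}\mathcal{Q}\,\big|\,\mathcal{P}^{\Lambda_n}\bigr),
\end{displaymath}
which follows by induction from the relativized form of item~$(ii)$ together with the monotonicity $H(\cdot\mid\mathcal{R}')\le H(\cdot\mid\mathcal{R})$ for $\mathcal{R}'\succeq\mathcal{R}$ established inside the proof of item~$(i)$ (inequality~(\ref{eq.entropy})). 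For $u\in\Lambda_n$ one has $u+\Lambda_{n-\|u\|_\infty}\subseteq\Lambda_n$, so $\mathcal{P}^{\Lambda_n}$ refines $\sigma^{-u}\mathcal{P}^{\Lambda_{n-\|u\|_\infty}}$; combining the same monotonicity with the shift-invariance of $\mu$ yields $H(\sigma^{-u}\mathcal{Q}\mid\mathcal{P}^{\Lambda_n})\le H(\mathcal{Q}\mid\mathcal{P}^{\Lambda_{n-\|u\|_\infty}})=:a_{n-\|u\|_\infty}$.

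Here $(a_m)_{m\ge0}$, $a_m:=H(\mathcal{Q}\mid\mathcal{P}^{\Lambda_m})$, is non-increasing because the partitions $\mathcal{P}^{\Lambda_m}$ get finer with $m$, and is bounded by $H(\mathcal{Q})<+\infty$ by item~$(i)$. The \emph{crucial point}, and the only genuinely analytic input beyond Lemma~\ref{lemma.prop.entropy}, is that the $\mu$-generating hypothesis forces $a_m\to0$: the $\sigma$-algebras generated by $\mathcal{P}^{\Lambda_m}$ increase to $\mathcal{B}\bmod\mu$, so by the increasing martingale convergence theorem $\mathbb{E}[\mathds{1}_Q\mid\mathcal{P}^{\Lambda_m}]\to\mathds{1}_Q$ almost everywhere and in $L^1$ for each $Q\in\mathcal{Q}$, and then dominated convergence applied to the integral formula $H(\mathcal{Q}\mid\mathcal{R})=\int H(\mathcal{Q},\mu_x^{\mathcal{R}})\,d\mu(x)$ of~(\ref{def.conditional-entropy}) (the integrand is bounded since $t\mapsto-t\log t$ is bounded on $[0,1]$) gives $a_m\to H(\mathcal{Q}\mid\mathcal{B})=0$.

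Finally I would carry out the averaging. Dividing the accumulated estimate by $|\Lambda_n|$,
\begin{displaymath}
\frac{1}{|\Lambda_n|}H\bigl((\mathcal{Q}\vee\mathcal{P})^{\Lambda_n}\bigr)\le\frac{1}{|\Lambda_n|}H(\mathcal{P}^{\Lambda_n})+\frac{1}{|\Lambda_n|}\sum_{u\in\Lambda_n}a_{n-\|u\|_\infty},
\end{displaymath}
the first term tends to $h(\mathcal{P},\mu)$. For the second, given $\epsilon>0$ pick $M$ with $a_m<\epsilon$ for $m\ge M$ and split the sum according to whether $\|u\|_\infty\le n-M$ or not: the first part is at most $\epsilon\,|\Lambda_{n-M}|/|\Lambda_n|\le\epsilon$, and the second is at most $H(\mathcal{Q})\bigl(1-|\Lambda_{n-M}|/|\Lambda_n|\bigr)$, which tends to $0$ as $n\to\infty$ for fixed $M$. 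Hence the second term has $\limsup\le\epsilon$, and since $\epsilon$ is arbitrary it vanishes; therefore $h(\mathcal{Q}\vee\mathcal{P},\mu)\le h(\mathcal{P},\mu)$, whence $h(\mathcal{Q},\mu)\le h(\mathcal{P},\mu)$. Taking the supremum over finite $\mathcal{Q}$ gives $h(\mu)\le h(\mathcal{P},\mu)$, completing the proof. I expect the main obstacle to be the convergence $a_m\to0$ (the martingale/generating step); the $\ZZ^d$ bookkeeping — the containment $u+\Lambda_{n-\|u\|_\infty}\subseteq\Lambda_n$ and the Ces\`aro splitting — is routine but must be written out with care.
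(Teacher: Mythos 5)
Your argument is correct, but it takes a different route from the paper's. The paper first establishes the identity $h(\mathcal{P}^{\Lambda},\mu)=h(\mathcal{P},\mu)$ for every finite $\Lambda$ (its equation (\ref{eq.KS})) and then, for each fixed $n$, invokes the dynamical inequality $h(\mathcal{Q},\mu)\le h(\mathcal{P}^{\Lambda_n},\mu)+H(\mathcal{Q}\mid\mathcal{P}^{\Lambda_n})$ and lets $n\to\infty$, asserting without proof that $H(\mathcal{Q}\mid\mathcal{P}^{\Lambda_n})\to H(\mathcal{Q}\mid\mathcal{B})=0$. You never use (\ref{eq.KS}): you work inside a single box $\Lambda_n$, split $H\bigl((\mathcal{Q}\vee\mathcal{P})^{\Lambda_n}\bigr)$ by the chain rule, bound each $H(\sigma^{-u}\mathcal{Q}\mid\mathcal{P}^{\Lambda_n})$ by $a_{n-\|u\|_\infty}$ via the containment $u+\Lambda_{n-\|u\|_\infty}\subseteq\Lambda_n$ and shift-invariance, and finish with a Ces\`aro split; all these steps check out. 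Both proofs pivot on the same analytic fact, $H(\mathcal{Q}\mid\mathcal{P}^{\Lambda_m})\to0$, which the paper merely asserts and you actually derive from increasing martingale convergence plus dominated convergence in (\ref{def.conditional-entropy}) — a genuine gain in completeness. What the paper's route buys is brevity: given (\ref{eq.KS}) and the inequality $h(\mathcal{Q},\mu)\le h(\mathcal{R},\mu)+H(\mathcal{Q}\mid\mathcal{R})$, no Ces\`aro bookkeeping is needed; but note the paper cites only Lemma~\ref{lemma.prop.entropy} for that inequality, whose proof requires exactly the subadditivity over joins and shift-invariance you write out, so your version is essentially the unrolled, self-contained form of the same argument. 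The only loose end, which you flag yourself, is that the conditional chain rule $H(\mathcal{A}\vee\mathcal{B}\mid\mathcal{R})=H(\mathcal{A}\mid\mathcal{R})+H(\mathcal{B}\mid\mathcal{A}\vee\mathcal{R})$ appears in Lemma~\ref{lemma.prop.entropy} only in its unconditioned form (item (ii)) together with the monotonicity (\ref{eq.entropy}); it is proved by the same computation as item (ii), so this is a routine gloss rather than a gap.
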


\begin{proof}
For any finite subset we have that
\begin{equation}
\label{eq.KS}
h(\mathcal{P}^\Lambda,\mu)=h(\mathcal{P},\mu).
\end{equation}
Indeed, consider a fixed $N>0$ such that $\Lambda\subset \Lambda_N$, then we have that
\begin{displaymath}
\begin{array}{rcl}
\dis h(\mathcal{P}^\Lambda,\mu) & = & \dis \lim_{n \to +\infty}\frac{1}{|\Lambda_n|}H\big((\mathcal{P}^\Lambda)^{\Lambda_n}\big) \\
   & \leq & \dis \lim_{n \to +\infty}\frac{1}{|\Lambda_n|}H\big(\mathcal{P}^{\Lambda_{n+N}}\big) \\
   & \leq & \dis \lim_{n \to +\infty}\frac{|\Lambda_{n+N}|}{|\Lambda_n|}\frac{1}{|\Lambda_{n+N}|}H\big(\mathcal{P}^{\Lambda_{n+N}}\big) \\
   & = & \dis h(\mathcal{P},\mu) \\
   & \leq & h(\mathcal{P}^\Lambda,\mu)
\end{array}
\end{displaymath}
since $\mathcal{P}^\Lambda\succeq\mathcal{P}$.

Now consider $\mathcal{P}$ a finite $\mu$-generated partition with finite entropy and $\mathcal{Q}$ a finite partition. From \ref{eq.KS} and Lemma~\ref{lemma.prop.entropy} we obtain that
\begin{displaymath}
\begin{array}{rcl}
\dis h(\mathcal{Q},\mu) & \leq & \dis h(\mathcal{P}^{\Lambda_n},\mu) + H(\mathcal{Q}|\mathcal{P}^{\Lambda_n}) \\
   & = & \dis h(\mathcal{P},\mu) + H(\mathcal{Q}|\mathcal{P}^{\Lambda_n}). \\
\end{array}
\end{displaymath}
As $\lim_{n \to +\infty}H(\mathcal{Q}|\mathcal{P}^{\Lambda_n})=H(\mathcal{Q}|\mathcal{B})=0$, it follows that for an arbitrary partition $\mathcal{Q}$, is true that $h(\mathcal{Q},\mu)\leq h(\mathcal{P},\mu)$, and therefore the result follows.
\end{proof}

\section{Potential}

A function $f:\Sigma^d(\mathcal{A})\to\RR$ is {\it upper semi-continuous} if the set $\{x\in\Sigma^d(\mathcal{A}): f(x)<c \}$ is an open set for every $c\in\RR$.

\begin{definition}
A potential $\varphi:\Sigma^d(\mathcal{A})\to\RR$ is {\it regular} if
\begin{displaymath}
\dis \sum_{n=1}^{+\infty}n^{d-1}\delta_n(\varphi) < +\infty,
\end{displaymath}
where $\delta_n(\varphi):= \sup\{|\varphi(w)-\varphi(v)|:w,v\in\Sigma^d(\mathcal{A}), \, w|_{\Lambda_n}=v|_{\Lambda_n} \}$.
\end{definition}

We say that a potential $\psi$ has {\it finite range} if there exists $n_0\in\NN$ such that $\delta_n(\psi)=0$, for all $n\geq n_0$. If a potential has finite range, then it is regular.

Next we define the pressure of an upper semi-continuous potential, the notion of an equilibrium measure and recall several results that characterize the equilibrium measures for a certain class of potentials.

\begin{definition}
The {\it pressure} of a upper semi-continuous potential $\varphi:\Sigma^d(\mathcal{A})\to\RR$ at inverse temperature $\beta$ is the value
\begin{displaymath}
P(\beta\varphi):=\sup_{\mu\in\mathcal{M}_\sigma(\Sigma^d(\mathcal{A}))}\left\{ h(\mu)-\int\beta\varphi d\mu \right\}.
\end{displaymath}
\end{definition}

\begin{definition}
\label{def.equilibrium.measure}
An {\it equilibrium measure} for a potential $\varphi$ at inverse temperature $\beta$ is a measure $\mu_{\beta\varphi}\in\mathcal{M}_\sigma(\Sigma^d(\mathcal{A}))$ such that
\begin{displaymath}
P(\beta\varphi) = h(\mu_{\beta\varphi})- \int\beta\varphi d\mu_{\beta\varphi}.
\end{displaymath}
\end{definition}

An important characterization for the set of equilibrium measures for a regular local potential is that it is exactly the set of invariant Gibbs measures. In order to state this result, we present one possible definition of Gibbs measures based on \cite{Keller}.

\begin{remark}
Here we will define all these notions and results for the full shift over a finite alphabet, but these definitions and results are also valid for a more general class of subshifts, for instance Muir~\cite{Muir} works with a countable alphabet in multidimensional subshifts and Israel~\cite{Israel} extended to general compact spin spaces and quantum systems for the full shift.
\end{remark}

Consider $\varphi$ a regular potential on $\Sigma^d(\mathcal{A})$ and denote
\[
\varphi_n:=\sum_{g\in\Lambda_n}\varphi\circ \sigma^g
\]
where $\Lambda_n=\llbracket-n,n\rrbracket^d$. We are interested in how $\psi_n(w)$ will change if we alter finitely many sites. For that, we will introduce, as in Keller~\cite{Keller}, a class of local homeomorphisms on $\Sigma^d(\mathcal{A})$.

\begin{definition}
Let $\varphi$ be a regular potential defined over $\Sigma^d(\mathcal{A})$. We denote by $\epsilon_n$ the set of all maps $\tau:\Sigma^d(\mathcal{A})\to\Sigma^d(\mathcal{A})$ such that
\[
(\tau(w))_i=
\left\{
\begin{array}{ll}
\tau_i(w_i), & i\in\Lambda_n \\
w_i, & i\notin\Lambda_n \\
\end{array}
\right.
\]
where $\tau_i:\mathcal{A}\to\mathcal{A}$ are permutations in the state space. We denote by $\epsilon:=\bigcup_{n>0}\epsilon_n$ the set of all homeomorphisms in $\Sigma^d(\mathcal{A})$ that change only finitely many coordinates.
\end{definition}

\begin{lemma}(Keller~\cite{Keller})
Let $\varphi$ be a regular potential and $\tau\in\epsilon$. For $n>0$ define
\[
\Psi_\tau^n:\Sigma^d(\mathcal{A})\to\RR, \quad \Psi_\tau^n:=\varphi_n\circ\tau^{-1}-\varphi_n.
\]
Then the limit
\[
\Psi_\tau:=\lim_{n \to + \infty}\Psi_\tau^n
\]
exists uniformly on $\Sigma^d(\mathcal{A})$.
\end{lemma}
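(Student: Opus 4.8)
The plan is to prove that the sequence $(\Psi_\tau^n)_{n\ge 1}$ is uniformly Cauchy in the Banach space $C(\Sigma^d(\mathcal{A}))$ of continuous functions with the supremum norm; since each $\Psi_\tau^n$ is continuous ($\varphi$ is regular, hence continuous, and $\sigma^g$ and $\tau^{-1}$ are continuous) and this space is complete, a uniformly Cauchy sequence converges uniformly to a continuous function, which is exactly the assertion. Fix $m\ge 1$ with $\tau\in\epsilon_m$, so that $\tau^{-1}(w)$ agrees with $w$ off $\Lambda_m$ for every $w$. Expanding $\varphi_n=\sum_{g\in\Lambda_n}\varphi\circ\sigma^g$, for $n'\ge n\ge m$ we obtain
\[
\Psi_\tau^{n'}-\Psi_\tau^n=\sum_{g\in\Lambda_{n'}\setminus\Lambda_n}\bigl(\varphi\circ\sigma^g\circ\tau^{-1}-\varphi\circ\sigma^g\bigr),
\]
so the terms with $\|g\|_\infty\le m$ -- which need not be small -- cancel exactly, and only translates $g$ with $\|g\|_\infty\ge n+1>m$ survive.

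Next I would bound a single surviving term. If $\|g\|_\infty=\ell$ then for every $v\in\Lambda_{\ell-m-1}$ one has $\|v+g\|_\infty\ge \ell-(\ell-m-1)=m+1>m$, hence $(\sigma^g\tau^{-1}w)_v=(\tau^{-1}w)_{v+g}=w_{v+g}=(\sigma^g w)_v$; thus $\sigma^g\tau^{-1}(w)$ and $\sigma^g(w)$ coincide on $\Lambda_{\ell-m-1}$ for every $w$, and the definition of $\delta_k(\varphi)$ gives
\[
\bigl\|\varphi\circ\sigma^g\circ\tau^{-1}-\varphi\circ\sigma^g\bigr\|_\infty\le\delta_{\ell-m-1}(\varphi),\qquad \ell=\|g\|_\infty\ge m+1.
\]
Combining this with the shell count $\#\{g\in\ZZ^d:\|g\|_\infty=\ell\}=(2\ell+1)^d-(2\ell-1)^d\le C_d\,\ell^{d-1}$ for a dimensional constant $C_d$, we arrive at
\[
\bigl\|\Psi_\tau^{n'}-\Psi_\tau^n\bigr\|_\infty\le C_d\sum_{\ell=n+1}^{n'}\ell^{d-1}\,\delta_{\ell-m-1}(\varphi).
\]

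Finally I would check that the right-hand side is the tail of a convergent series. Setting $j=\ell-m-1\ge 0$ and using $\ell=j+m+1\le(m+1)(j+1)$, we get $\ell^{d-1}\le(m+1)^{d-1}(j+1)^{d-1}$, and therefore
\[
\sum_{\ell\ge m+1}\ell^{d-1}\delta_{\ell-m-1}(\varphi)\le(m+1)^{d-1}\Bigl(\delta_0(\varphi)+2^{d-1}\sum_{j\ge1}j^{d-1}\delta_j(\varphi)\Bigr)<\infty,
\]
the finiteness coming from the regularity of $\varphi$ together with $\delta_0(\varphi)\le 2\|\varphi\|_\infty$. Hence $\bigl\|\Psi_\tau^{n'}-\Psi_\tau^n\bigr\|_\infty\to 0$ as $n\to\infty$ uniformly in $n'\ge n$, so $(\Psi_\tau^n)$ is uniformly Cauchy and $\Psi_\tau:=\lim_n\Psi_\tau^n$ exists uniformly on $\Sigma^d(\mathcal{A})$. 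I expect no genuine obstacle here: this is the standard estimate behind the well-posedness of relative Hamiltonians for summable potentials, and the only care needed is in the exact cancellation of the $\|g\|_\infty\le m$ terms and in matching the geometric growth $\ell^{d-1}$ of the shell at radius $\ell$ with the weight $n^{d-1}$ in the definition of a regular potential.
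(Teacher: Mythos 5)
Your proof is correct. The paper itself gives no proof of this lemma --- it is stated as a citation to Keller's book --- so there is nothing to compare against; your argument (telescoping $\Psi_\tau^{n'}-\Psi_\tau^n$ over the shells $\Lambda_{n'}\setminus\Lambda_n$, noting that for $\|g\|_\infty=\ell>m$ the configurations $\sigma^g\tau^{-1}(w)$ and $\sigma^g(w)$ agree on $\Lambda_{\ell-m-1}$, and then summing $\ell^{d-1}\delta_{\ell-m-1}(\varphi)$ using regularity) is exactly the standard estimate, and all the steps, including the finiteness of $\delta_0(\varphi)$ and the shell count, check out.
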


\begin{definition}
Let $\varphi$ be a regular local potential. We say that a probability measure $\mu\in\mathcal{M}_1(\Sigma^d(\mathcal{A}))$ is a {\it Gibbs measure for the potential $\varphi$} if
\[
\tau_{\ast} \mu = \mu \cdot e^{\Psi_\tau}
\]
for each $\tau\in\epsilon$.
\end{definition}

The previous definition goes back to Capocaccia~\cite{Capocaccia} and does not involve conditional measures as in a more classical definition of Gibbs measure \cite{Georgii,Ruelle2004}.

As said before, there are several characterizations for a Gibbs measure (see Georgii~\cite{Georgii} and Ruelle~\cite{Ruelle2004}) and several results for the equivalence between these definitions (see Kimura~\cite{Kimura} and Keller~\cite{Keller}) even for potentials defined over more general subshifts.

The next theorem from Keller~\cite{Keller} gives a important characterization of the set of invariant Gibbs measures for a regular local potential.

\begin{theorem}
\label{teo.equilibrium=gibbs.invariant}
Let $\Sigma^d(\mathcal{A})=\mathcal{A}^{\ZZ^d}$ be the full shift and $\varphi:\Sigma^d(\mathcal{A})\to\RR$ be a regular local potential. The set of equilibrium measures for $\varphi$ is nonempty, compact, convex subset of $\mathcal{M}_\sigma(\Sigma^d(\mathcal{A}))$ and every equilibrium measure is also a Gibbs invariant probability measure.
\end{theorem}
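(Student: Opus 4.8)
The plan is to dispatch the four assertions in order: weak${}^*$-compactness of $\mathcal{M}_\sigma(\Sigma^2(\mathcal{A}))$ together with upper semi-continuity of the pressure functional gives nonemptiness; closedness of the maximizing set gives compactness; concavity of the entropy gives convexity; and a one-sided conditional-entropy argument combined with the Gibbs (Jensen) inequality, in the spirit of Keller~\cite{Keller}, gives the Gibbs property, which is the only genuinely substantial point. First I would fix the topology. The set $\mathcal{M}_\sigma(\Sigma^d(\mathcal{A}))$ is nonempty (by a Krylov--Bogolyubov averaging argument), convex, and weak${}^*$-compact as a closed subset of the weak${}^*$-compact space $\mathcal{M}_1(\Sigma^d(\mathcal{A}))$. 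Since $\varphi$ is regular, $\delta_n(\varphi)\to 0$, so $\varphi$ is continuous and $\mu\mapsto\int\beta\varphi\,d\mu$ is weak${}^*$-continuous. For the entropy, the cylinder partition $\mathcal{P}=\{[a]_0:a\in\mathcal{A}\}$ is $\mu$-generating with $H(\mathcal{P})\le\log|\mathcal{A}|<\infty$, so Theorem~\ref{teo.Kolmogorov.Sinai} gives $h(\mu)=h(\mathcal{P},\mu)=\inf_{n\ge1}|\Lambda_n|^{-1}H_\mu(\mathcal{P}^{\Lambda_n})$; each $\mu\mapsto H_\mu(\mathcal{P}^{\Lambda_n})=\sum_{w}\psi(\mu([w]))$ is a finite sum over clopen cylinders with $\psi$ continuous, hence weak${}^*$-continuous, so $h$ is an infimum of continuous functions and therefore weak${}^*$-upper semi-continuous on $\mathcal{M}_\sigma$. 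Consequently $F_\beta(\mu):=h(\mu)-\int\beta\varphi\,d\mu$ is upper semi-continuous on the compact set $\mathcal{M}_\sigma$, so $P(\beta\varphi)=\sup F_\beta<\infty$ is attained and the set $\mathcal{E}_\beta$ of equilibrium measures is nonempty.

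Compactness and convexity then follow with little extra effort. The set $\mathcal{E}_\beta=\{\mu\in\mathcal{M}_\sigma: F_\beta(\mu)\ge P(\beta\varphi)\}$ is weak${}^*$-closed, since for a convergent net $\mu_k\to\mu$ with $\mu_k\in\mathcal{E}_\beta$ one has $F_\beta(\mu)\ge\limsup_k F_\beta(\mu_k)=P(\beta\varphi)$ by upper semi-continuity of $h$ and continuity of $\mu\mapsto\int\beta\varphi\,d\mu$, while $F_\beta(\mu)\le P(\beta\varphi)$ always; a weak${}^*$-closed subset of the compact set $\mathcal{M}_\sigma$ is compact. For convexity I would use that $h$ is concave on $\mathcal{M}_\sigma$: applying the concavity inequality for $\psi$ (as in~(\ref{eq.psi})) termwise to $\mu\mapsto H_\mu(\mathcal{P}^{\Lambda_n})$ makes $h$ concave (and in fact affine, on adding the elementary bound $H_{t\mu_1+(1-t)\mu_2}(\mathcal{P}^{\Lambda_n})\le tH_{\mu_1}(\mathcal{P}^{\Lambda_n})+(1-t)H_{\mu_2}(\mathcal{P}^{\Lambda_n})+\log2$ and dividing by $|\Lambda_n|\to\infty$). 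Since $\mu\mapsto\int\beta\varphi\,d\mu$ is linear, $F_\beta$ is concave, so its maximizing set $\mathcal{E}_\beta$ is convex.

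It remains to show every $\mu\in\mathcal{E}_\beta$ is a Gibbs measure, i.e.\ $\tau_*\mu=\mu\cdot e^{\Psi_\tau}$ for all $\tau\in\epsilon$ (invariance is built into the notion of an equilibrium measure). The route I would take is the one-sided conditional-entropy form of the pressure. Fixing the lexicographic order on $\ZZ^d$, let $\Pi$ be the associated past half-space and $\mathcal{P}_0,\mathcal{P}_\Pi$ the partitions by the coordinate at the origin and by the coordinates in $\Pi$; for shift-invariant $\mu$ one has the classical identity $h(\mu)=H_\mu(\mathcal{P}_0\mid\mathcal{P}_\Pi)=-\int\log\mu(x_0\mid x_\Pi)\,d\mu(x)$. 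Using shift-invariance together with the regularity condition $\sum_n n^{d-1}\delta_n(\varphi)<\infty$ to reorganize $\int\varphi\,d\mu$ into an absolutely convergent one-sided sum, the pressure functional rewrites as $F_\beta(\mu)=\int\log\bigl(g_\beta(x)/\mu(x_0\mid x_\Pi)\bigr)\,d\mu(x)$ for an explicit Gibbsian weight $g_\beta$ built from $\beta\varphi$ and suitably normalized. The Gibbs inequality then gives $F_\beta(\mu)\le\log\int\bigl(g_\beta(x)/\mu(x_0\mid x_\Pi)\bigr)\,d\mu(x)=P(\beta\varphi)$, with equality — by strict concavity of $\log$ — exactly when $\mu(x_0\mid x_\Pi)$ agrees $\mu$-a.e.\ with the prescribed Gibbsian conditional probability. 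Hence any equilibrium measure has the prescribed one-sided conditional probabilities; sweeping over all finite windows (and all orderings of $\ZZ^d$) yields a full DLR-type specification, which, compared against the finite modifications $\tau\in\epsilon_n$, converts into the identity $\tau_*\mu=\mu\cdot e^{\Psi_\tau}$, the defining property of a Gibbs measure. I would carry out the detailed reorganizations following Keller~\cite{Keller}.

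The main obstacle is precisely this last step. The bookkeeping turning $h(\mu)-\int\beta\varphi\,d\mu$ into a single one-sided relative-entropy expression uses the summability $\sum_n n^{d-1}\delta_n(\varphi)<\infty$ exactly to guarantee uniform convergence of the reorganized potential, and one must treat the equality case of the Gibbs inequality carefully (null sets, mutual absolute continuity of $\mu$ and $\tau_*\mu$) and translate ``prescribed conditionals along a fixed order'' into the order-free homeomorphism-pushforward statement $\tau_*\mu=\mu\cdot e^{\Psi_\tau}$. By contrast, the first three assertions are soft, resting only on weak${}^*$-compactness of $\mathcal{M}_\sigma$, upper semi-continuity and concavity of $h(\cdot)$, and continuity and linearity of $\mu\mapsto\int\varphi\,d\mu$.
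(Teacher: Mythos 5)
The paper does not actually prove this theorem: it is quoted from Keller~\cite{Keller} (it is the Lanford--Ruelle-type statement that translation-invariant equilibrium states for a regular potential are invariant Gibbs measures in the sense of Capocaccia), so there is no internal proof to compare against. Your treatment of the three soft assertions is correct and standard: weak${}^*$-compactness of $\mathcal{M}_\sigma(\Sigma^d(\mathcal{A}))$, upper semi-continuity of $h$ (as an infimum of the continuous functions $\mu\mapsto|\Lambda_n|^{-1}H_\mu(\mathcal{P}^{\Lambda_n})$, cylinders being clopen), continuity and linearity of $\mu\mapsto\int\beta\varphi\,d\mu$, and affinity of the entropy give nonemptiness, compactness and convexity of the set of equilibrium measures.

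The genuine gap is in the only substantial step, equilibrium $\Rightarrow$ Gibbs. Your route hinges on rewriting $h(\mu)-\int\beta\varphi\,d\mu$ as $\int\log\bigl(g_\beta(x)/\mu(x_0\mid x_\Pi)\bigr)\,d\mu(x)$ for an ``explicit Gibbsian weight $g_\beta$ built from $\beta\varphi$ and suitably normalized''. In dimension $d\geq2$ no such object exists: in $d=1$ the normalization requires the Ruelle--Perron--Frobenius eigenfunction and eigenvalue, and there is no analogue of that operator theory for $\ZZ^d$, $d\geq2$; the half-space conditional probabilities of a Gibbs measure are not local, explicit functions of the potential, and the pressure cannot simply be absorbed into a normalized kernel. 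Moreover, even granting prescribed lexicographic-past conditionals, the final step ``sweeping over all finite windows yields a full DLR-type specification'' is not automatic: one-sided and two-sided conditioning are genuinely different (the classes of $g$-measures and of Gibbs measures are incomparable in general), so this passage needs an argument that your sketch does not supply. The standard proof --- and the one behind the theorem as cited from Keller --- is a finite-volume perturbation/relative-entropy argument: one compares $\mu$ with its modifications inside a finite box (equivalently, estimates the relative entropy of $\mu$ with respect to the finite-volume Gibbs kernels) and uses maximality of $h(\mu)-\int\beta\varphi\,d\mu$ to force the conditional distributions in finite boxes to be Gibbsian, which then yields $\tau_{\ast}\mu=\mu\cdot e^{\Psi_\tau}$ for every $\tau\in\epsilon$. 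As written, your last paragraph records the difficulty honestly but does not close it, so the Gibbs part of the statement remains unproved.
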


Given a potential $\beta\varphi$ at inverse temperature $\beta$ and $\varphi$ a regular local potential, the set of equilibrium measures is exactly the set of Gibbs invariant measures for $\beta\varphi$.

\section{Turing Machines and the Simulation Theorem}
\label{sec.Turing.machine}

We present here the basic concepts of a Turing machine and how we can characterize a language based on its computability. The automaton that we call Turing machine was first introduced by Alan Turing in 1936 and is similar to a finite automaton but with unlimited and unrestricted memory. This model works on an infinite tape and therefore has unlimited memory. There is a head of calculation which can read and write symbols on the tape and move over the tape, both forward and backward. We will introduce a formal definition of a Turing machine as in Sipser~\cite{Sipser}.

\begin{definition}
A Turing machine $\mathcal{M}$ is a $7-$tuple $(Q,\mathcal{A},\mathcal{T}, \delta, q_0, q_a, q_r)$, where
\begin{itemize}
	\item $Q$ is a finite set of states of the head of calculation;
		
	\item $\mathcal{A}$ is the input alphabet which does not contain the blank symbol $\sharp$;
		
	\item $\mathcal{T}$ is the tape alphabet which contains the blank symbol $\sharp$ and $\mathcal{A}\subseteq \mathcal{T}$;
	
	\item $\delta:Q\times\mathcal{T}\to Q\times\mathcal{T}\times\{-1,+1\}$ is the transition function; 
		
	\item $q_0$ is the initial state of the head of calculation;
		
	\item $q_a\in Q$ is the accept state; and 
	
	\item $q_r\in Q$ is the reject state. 
\end{itemize}
\end{definition}

The machine works on an infinite tape divided into discrete boxes on which the head will act. If we think of $\ZZ$ as a bi-infinite tape filled with symbols of $\mathcal{T}$, we can express the Turing machine $\mathcal{M}$ by describing the state of the head and in which box the head is.

We always start the calculation over a word defined on the alphabet $\mathcal{A}$ that will be written on the tape of the machine. The other boxes of the infinite tape are filled with the blank symbols $\sharp$. The head will start on the leftmost symbol of the word with the initial state $q_0$. At each step of its calculation the head acts (read/write) only on the box where the head is located. Based on the symbol that the head reads and the state of the head, the transition function will give us which symbol the head must write in the box, the new state of the head and in which direction the head should move, $-1$ if it should move for the left box or $+1$ if it should move for the right box. It is possible to define the transition function with the possibility of the head staying in the same box after a calculation, but the definitions are equivalent.

One way of representing the transition function is by a {\it directed graph} where each node represents a state of the head of calculation and the arrows are tagged with the rules of the transition function. See the transition represented below.

\begin{figure}[!htpb]
	\centering
	\psfrag{0}{$q_m$}
	\psfrag{1}{$q_n$}
	\psfrag{a}{$x\to y, +1$}
	\psfrag{c}{$y\to y, +1$}
	\includegraphics[width=0.3\linewidth]{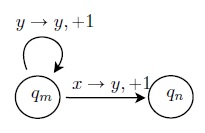}
	\caption{Directed graph representing two rules of some transition function $\delta$.}
	\label{fig:diagrama3}
\end{figure}

If the head of calculation is in the state $q_m$ and it reads the symbol $x$, then the head replaces this symbol by $y$, change of state to $q_n$ and move to the box to the right. If instead the head is in the state $q_m$ and reads the symbol $y$, then the head keeps the symbol $y$ in that box, does not change the state and moves to the box on the right.

The calculation of a Turing machine stops when the head reaches the accept state $q_a$ or the reject state $q_r$. If the machine never reaches one of these states the calculation will never stop. As said before, the calculation of a Turing machine starts over a finite word $w$ defined over the alphabet $\mathcal{A}$ that is written over the tape. If the machine reaches the accept state after a number of valid transitions, we say that the initial word is accepted by this Turing machine. A set of words $L$, also called {\it language}, is recognized by a Turing machine if the machine reaches the accept state for each word in this set and never reaches the accept state if the word is not in $L$ (the machine can reach a reject state or go into a infinite loop).

{\color{red}
\begin{definition}
\label{def.recursive-enumerable}
A set $L$ of words over an alphabet $\mathcal{A}$ is called {\it recursive} if there is a Turing machine that recognizes it. A set $L$ of words over an alphabet $\mathcal{A}$ is called {\it recursively enumerable} if there is a Turing machine that stops its calculation only on words of $L$.
\end{definition}
}

As said before the machine can also reach the reject state or enter in an infinite loop that never stops. There is a special classification for the set of words for which it is possible to define a Turing machine that never enters in a infinite loop, that is, for each finite initial word the machine always reaches $q_a$ or $q_r$. {\color{red} In this case we say that this Turing machine {\it decides} or, most popularly found in the literature, {\it recognizes} the language $L$.}

{\color{red}  These two concepts of recognizability and recursive enumerability, although seemingly equivalent, are two different notions. There are certain languages that only can be enumerate by a Turing machine. Now we present an example presented in \cite{Barbieri_thesis} of a Turing machine that recognizes (and also enumerates) a language defined over the alphabet $\mathcal{A}=\{a,b\}$.}

\begin{example}
This machine stops for every word that we write on the tape and it tells us whether such word belongs or not to the language $\mathcal{L}=\{a^nb^n; n\in \NN\}$. The input alphabet is $\mathcal{A}=\{a,b\}$ and the tape alphabet is $\mathcal{T}=\{a,b,\sharp\}$, where $\sharp$ is the blank symbol. We start with the word to be evaluated written on a bi-infinite tape filled with black symbols $\sharp$ and we set the head of calculation on the state $q_0$ on the leftmost symbol of the word. This Turing machine has 9 states $Q=\{q_0,q_1,q_2,q_3,q_4,q_5,q_6,q_a,q_r\}$ and the transition function $\delta:Q\times \mathcal{T}\rightarrow Q\times \mathcal{T}\times\{-1,+1\}$ is represented by the directed graph in Figure~\ref{fig:diagrama2b}.

\begin{figure}[!htpb]
	\centering
	\psfrag{0}{$q_0$}
	\psfrag{1}{$q_1$}
	\psfrag{2}{$q_2$}
	\psfrag{3}{$q_3$}
	\psfrag{4}{$q_4$}
	\psfrag{5}{$q_5$}
	\psfrag{6}{$q_6$}
	\psfrag{7}{$q_a$}
	\psfrag{8}{$q_r$}
	\psfrag{a}{$a\rightarrow \sharp,+1$}
	\psfrag{b}{$a\rightarrow a,+1$}
	\psfrag{c}{$b\rightarrow b,1$}
	\psfrag{d}{$b\rightarrow b,1$}
	\psfrag{e}{$\sharp\rightarrow\sharp,-1$}
	\psfrag{f}{$b\rightarrow\sharp,-1$}
	\psfrag{g}{$b\rightarrow b,-1$}
	\psfrag{h}{$b\rightarrow b,-1$}
	\psfrag{i}{$a\rightarrow a,-1$}
	\psfrag{j}{$a\rightarrow a,-1$}
	\psfrag{k}{$\sharp\rightarrow\sharp,+1$}
	\psfrag{l}{$\sharp$}
	\psfrag{m}{$b$}
	\psfrag{n}{$\sharp$}
	\psfrag{o}{$a$}
	\psfrag{p}{$a$}
	\psfrag{q}{$\sharp$}
	\includegraphics[width=\linewidth]{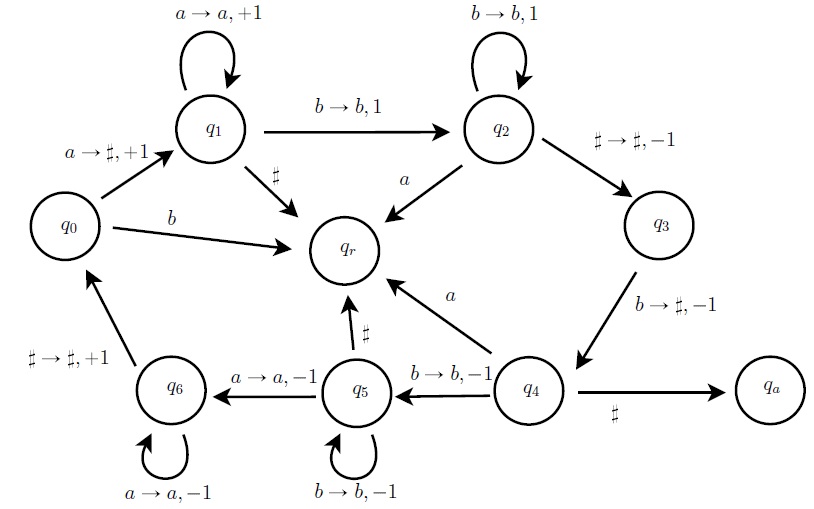}
	\caption{Directed graph representing the transition function for the Turing machine that decides the language $a^n b^n$.}
	\label{fig:diagrama2b}
\end{figure}

We are representing the accept state by $q_a$ and the reject state by $q_r$. Note that the transition function is not defined for every possible pair in $Q\times\mathcal{T}$ because this configuration never occurs in the calculation process. Another important aspect is that when the transition function goes to $q_a$ or $q_r$, we are not defining the symbol substitution or the move that the head should do, because it is irrelevant since the calculation will stop after this iteration.
	
Now we give a summary of the role played by each of the eight states that the machine can reach:
	
\begin{description}
	\item[$q_0$:] This state marks the beginning of the calculation. The head of the machine begins the calculation on the leftmost letter of the  word written on the tape. If the head reads the symbol $a$ then the head replaces the symbol by a blank symbol, moves to the right and also changes the state. If the head reads a symbol $b$ then the head of the machine goes to the reject state and the computation stops, which means that the word written on the tape does not belongs to the language.
		
	\item[$q_1$:] In this state the head of the machine goes to the rightmost symbol $a$ of the word without changing the symbols or the state of the machine. When the machine finds the first symbol $b$ the head of the machine does not change the letter, but changes the state and moves to the right. In this state the machine goes to the reject state if the head reads the blank symbol, which means that the word written on the tape has only the symbol $a$.
		
	\item[$q_2$:] This state makes the head of the machine goes to the end of the word without changing the symbols $b$'s that are written on the tape. The head goes to the last symbol $b$ and then when it finds the first blank symbol this state makes the head go to the left, but not replace the blank symbol. If the head is in this state and finds a symbol $a$, it means that in the word written on the tape exists the subword $ba$ which is forbidden in the language $\mathcal{L}$, so the head goes to the reject state and the calculation stops.
		
	\item[$q_3$:] This state always appears on the head when it is on the last symbol $b$ of the finite word written on the tape of calculation. The symbol $b$ is replaced by a blank symbol and the head of calculation moves to the box on the left. The symbol $b$ is the only possibility for the head to read.
		
	\item[$q_4$:] In this state if the head of the machine reads the symbol $b$ it means that there exists still symbols written on the tape of calculation that are different from the blank symbol, then the head of the machine does not replace the symbol $b$, but moves to the left and changes the state. If the head of the machine in this state reads the blank symbol it means that now, on the tape of calculation, there are only blank symbols, which means that the machine has replaced all of the symbols $a$'s and $b$'s in the initial finite word written and the number of $a$'s and $b$'s are the same. In this case the machine changes to the accept state which means that the initial word written on the tape belongs to the language $\mathcal{L}$. The other possibility is that the head of the machine in this state reads the symbol $a$ which means that the number of symbol $a$'s is bigger than the number of symbol $b$'s and then the machine changes to the reject state.
		
	\item[$q_5$:] This state makes the head of the machine reach the symbol $a$ most to the right on the word written on the tape. The head on this state when placed on the symbol $b$, does not replace the symbol $b$ and only moves to the left without changing the state. When the head reaches one symbol $a$ the machine still moves to the left without replacing the letter, but it changes the state. If the head in this state reaches a blank symbol this means that on the tape of calculation there are only letters $b$'s which means that the number of symbol $b$'s on the initial word is bigger than the number of letters $a$'s. In this case the machine changes to the reject state which means that the machine recognizes that the initial word written on the tape does not belong to the language $\mathcal{L}$.
		
	\item[$q_6$:] This state makes that the head of the calculation go to the leftmost symbol not blank on the tape. If the head in this state reads the letter $a$, the head does not change the state but moves to the left. When the head reaches a blank symbol this means that the head reaches the beginning of the word that is now written on the tape. In this case the head does not replace the blank symbol, changes the state and moves to the right leaving the head on the leftmost symbol on the word that is written on the tape. In this state it is not possible that the head reads the letter $b$ because of the construction and the way that the previous calculations occur.
		
	\item[$q_a$:] This is the accept state, which means that if the head of the machine reaches this state then the initial word written on the tape belongs to the language $\mathcal{L}$.
		
	\item[$q_r$:] This is the reject state, which means that if the head of the machine reaches this state then the initial word written on the tape does not belong to the language $\mathcal{L}$
\end{description}
	
\end{example}

The name 'recursively enumerable' comes from a variation of the Turing machine presented that is called {\it enumerator}. We can think of it as a general Turing machine attached to a printer that prints some output words that the machine has written on its tape. An enumerator starts with a infinite tape filled with blank symbols. Each word that this machine prints belongs to a language, that is why we say that this machine enumerates.

\begin{proposition}
Given a set of words $L$ defined over an alphabet $\mathcal{A}$. The set $L$ is recursively enumerable if and only if there is a Turing machine that enumerates it.
\end{proposition}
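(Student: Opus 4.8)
The plan is to prove the two implications separately; the direction ``enumerator $\Rightarrow$ recursively enumerable'' is immediate, while the converse needs a dovetailing argument. For the easy direction, suppose an enumerator $E$ prints exactly the words of $L$. I would build a Turing machine $\mathcal{M}$ recognizing $L$ as follows: on input $w$, the machine $\mathcal{M}$ simulates $E$ on an auxiliary portion of its tape and, each time $E$ prints a word, compares it with $w$; if they coincide, $\mathcal{M}$ enters $q_a$. Then $\mathcal{M}$ reaches $q_a$ if and only if $w$ is eventually printed by $E$, i.e.\ if and only if $w\in L$, so $L$ is recognized by $\mathcal{M}$ and hence recursively enumerable. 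This step only involves the routine simulation of one Turing machine inside another together with the comparison bookkeeping.

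For the converse, let $\mathcal{M}$ be a Turing machine that reaches $q_a$ exactly on the words of $L$. Since $\mathcal{A}$ is finite, the set $\mathcal{A}^{\ast}$ of finite words over $\mathcal{A}$ is countable and can be listed effectively as $w_1,w_2,w_3,\dots$ (for instance ordered by length and then lexicographically), this listing itself being producible by a Turing machine. The naive idea of running $\mathcal{M}$ in turn on $w_1,w_2,\dots$ and printing a word whenever $\mathcal{M}$ accepts fails, because $\mathcal{M}$ may loop forever on some $w_j\notin L$ and thereby block every later word. The remedy is to \emph{dovetail} the simulations: the enumerator $E$ runs in stages $n=1,2,3,\dots$, and at stage $n$ it simulates $\mathcal{M}$ for at most $n$ steps on each of $w_1,\dots,w_n$, printing $w_j$ whenever the corresponding simulation has reached $q_a$ within the allotted steps (one may keep a record of the words already printed if repetitions are to be avoided). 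If $w\in L$, then $w=w_j$ for some $j$ and $\mathcal{M}$ accepts $w$ after some finite number $t$ of steps, so $w$ is printed at stage $n=\max(j,t)$; conversely $E$ prints only words that $\mathcal{M}$ has actually accepted, i.e.\ words of $L$. Hence $E$ enumerates exactly $L$.

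The only genuinely non-obvious ingredient is the dovetailing in the converse direction, forced by the fact that $\mathcal{M}$ need not halt on inputs outside $L$: the simulations on all candidate words must be interleaved so that non-termination on one word never prevents a word of $L$ from eventually being printed. All the remaining ingredients --- an effective enumeration of $\mathcal{A}^{\ast}$, the step-bounded simulation of $\mathcal{M}$, and the string comparisons --- are entirely standard.
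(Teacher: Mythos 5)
Your proof is correct and is the standard dovetailing argument (as in Sipser); the thesis actually states this proposition without proof, treating it as a classical fact, so there is no in-paper argument to compare against. Both directions are handled properly: the simulate-and-compare reduction for ``enumerator $\Rightarrow$ r.e.'' and the stage-$n$, $n$-step interleaving for the converse, which is exactly the ingredient needed to get around non-halting computations. The only point worth flagging is definitional: Definition~\ref{def.recursive-enumerable} declares $L$ recursively enumerable when some machine \emph{stops its calculation} exactly on the words of $L$, whereas you work with a machine that \emph{reaches $q_a$} exactly on the words of $L$. The two notions are interchangeable (replace rejection by an infinite loop in one direction, and redirect every halting configuration to $q_a$ in the other), but a one-line remark making that conversion explicit would align your proof with the definition as literally stated in the text.
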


The next example from \cite{Aubrun} shows a Turing machine that enumerates the language described in the previous example.

\begin{example}
\label{example-diagram}
We describe an example of a Turing machine that enumerates the language $L=\{a^n,b^n, n\in\NN\}$. The input alphabet is $\mathcal{A}=\{a,b\}$ and the tape alphabet is $\mathcal{T}=\{a,b,\sharp,||\}$. This machine has five possible states $Q=\{q_0,q_{a+},q_{b+},q_{b++},q_{||}\}$ and it never stops its calculation. The symbol $||$ helps the machine to know when it must print the word written on the tape. The transition function will be $\delta:Q\times\mathcal{T}\to Q\times \mathcal{T}\times\{-1,+1\}$ given by Figure~\ref{fig}.
	
\begin{figure}
	\centering
	\psfrag{1}{$q_{b+}$}
	\psfrag{2}{$q_{||}$}
	\psfrag{3}{$q_{a+}$}
	\psfrag{4}{$q_{b++}$}
	\psfrag{0}{$q_{0}$}
	\psfrag{a}{$\sharp\rightarrow a,+1$}
	\psfrag{b}{$\sharp\rightarrow b, +1$}
	\psfrag{c}{$\sharp\rightarrow ||,-1$}
	\psfrag{d}{$b\rightarrow b, -1$}
	\psfrag{e}{$a\rightarrow a, +1$}
	\psfrag{f}{$b\rightarrow a, +1$}
	\psfrag{g}{$b\rightarrow b,+1$}
	\psfrag{h}{$||\rightarrow b,+1$}
	\includegraphics[width=0.8\linewidth]{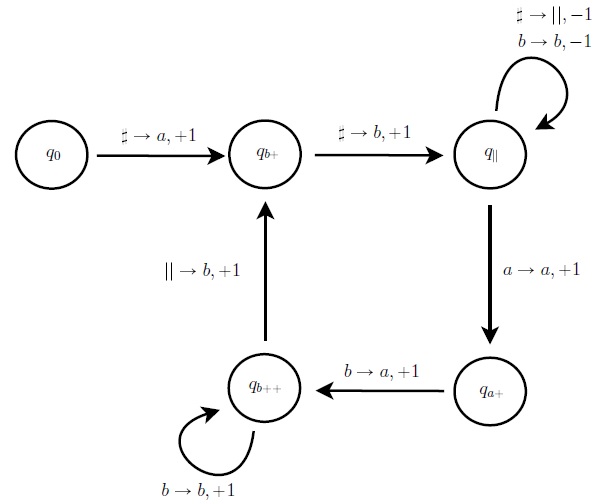}
	\caption{Directed graph of the transition function $\delta$ of the enumerator for the language $a^nb^n$.}
	\label{fig}
\end{figure}
	
The following is a summary of the role played by each of the five states that the machine can reach:
	
\begin{description}
	\item[$q_0$:] This state begins the work of the machine. In our case it always occurs in the bi-infinite tape filled with the blank symbol. It marks the start of the calculation of the machine by replacing the blank symbol by $a$ and moving the head to the right.
		
	\item[$q_{b+}$:] In this state the machine replaces the blank symbol by a letter $b$. This occurs after the head of the machine arrives at the end of the word that is written on the tape of calculation. This symbol $b$ will be the rightmost $b$ required to achieve the same number of letters $b$'s and letters $a$'s in the word written on the tape.
		
	\item[$q_{||}$:] When the machine has this state and reads the blank symbol, that is $(q_{||},\sharp)$, the machine prints the word written on the tape because it will be of the form $a^nb^n$. Besides that, this states is also responsible to return the head of calculation to the rightmost symbol $a$ on the tape. The head changes the blank symbol by a marker $||$ and moves to the left. The head goes to the left without making any changes until it achieves the rightmost symbol $a$ on the tape. The machine does not replace the symbol $a$, but it changes the state and moves to the right, leaving the head over the leftmost symbol $b$ written on the tape.
		
	\item[$q_{a+}$:] This state is responsible for adding a new symbol $a$ into the word written on the tape. It is the beginning of several changes to achieve the next word in the language $a^nb^n$. The head in this state always reads the symbol $b$. It changes to an $a$, it changes the state and it moves to the right.
		
	\item[$q_{b++}$:]  In this state the head of the machine goes to the end of the word written on the tape without making any changes, that is, the head goes to the marker $||$ after all the symbols $b$'s that compose the word on the tape. The head replaces it by a symbol $b$, it moves to the right and it changes the state.
\end{description}
	
The action of this Turing machine can also be described by a space-time diagram. The horizontal direction stands for the tape on which the machine works and the vertical direction for the time evolution of the machine.

\begin{table}
\begin{adjustbox}{width=\columnwidth,center}
\begin{tabular}{c|c|c|c|c|c|c|c|c|c|c|c}
$\cdots$ & $\cdots$ &$\cdots$ &$\cdots$ &$\cdots$ &$\cdots$ &$\cdots$ &$\cdots$ &$\cdots$ &$\cdots$ &$\cdots$ &$\cdots$ \\ \hline
$\cdots$ & $\sharp$ &$a$ &$a$ &$a$ &$a$ &$b$ &$(q_{b++},b)$ &$||$ &$\sharp$ &$\sharp$ &$\cdots$ \\ \hline
$\cdots$ & $\sharp$ &$a$ &$a$ &$a$ &$a$ &$(q_{b++},b)$ &$b$ &$||$ &$\sharp$ &$\sharp$ &$\cdots$ \\ \hline
$\cdots$ & $\sharp$ &$a$ &$a$ &$a$ &$(q_{a+},b)$ &$b$ &$b$ &$||$ &$\sharp$ &$\sharp$ &$\cdots$ \\ \hline
$\cdots$ & $\sharp$ &$a$ &$a$ &$(q_{||},a)$ &$b$ &$b$ &$b$ &$||$ &$\sharp$ &$\sharp$ &$\cdots$ \\ \hline
$\cdots$ & $\sharp$ &$a$ &$a$ &$a$ &$(q_{||},b)$ &$b$ &$b$ &$||$ &$\sharp$ &$\sharp$ &$\cdots$ \\ \hline
$\cdots$ & $\sharp$ &$a$ &$a$ &$a$ &$b$ &$(q_{||},b)$ &$b$ &$||$ &$\sharp$ &$\sharp$ &$\cdots$ \\ \hline
$\cdots$ & $\sharp$ &$a$ &$a$ &$a$ &$b$ &$b$ &$(q_{||},b)$ &$||$ &$\sharp$ &$\sharp$ &$\cdots$ \\ \hline
$\cdots$ & $\sharp$ &$a$ &$a$ &$a$ &$b$ &$b$ &$b$ &$(q_{||},\sharp)$ &$\sharp$ &$\sharp$ &$\cdots$ \\ \hline
$\cdots$ & $\sharp$ &$a$ &$a$ &$a$ &$b$ &$b$ &$(q_{b+},\sharp)$ &$\sharp$ &$\sharp$ &$\sharp$ &$\cdots$ \\ \hline
$\cdots$ & $\sharp$ &$a$ &$a$ &$a$ &$b$ &$(q_{b++},||)$ &$\sharp$ &$\sharp$ &$\sharp$ &$\sharp$ &$\cdots$ \\ \hline
$\cdots$ & $\sharp$ &$a$ &$a$ &$a$ &$(q_{b++},b)$ &$||$ &$\sharp$ &$\sharp$ &$\sharp$ &$\sharp$ &$\cdots$ \\ \hline
$\cdots$ & $\sharp$ &$a$ &$a$ &$(q_{a+},b)$ &$b$ &$||$ &$\sharp$ &$\sharp$ &$\sharp$ &$\sharp$ &$\cdots$ \\ \hline
$\cdots$ & $\sharp$ &$a$ &$(q_{||},a)$ &$b$ &$b$ &$||$ &$\sharp$ &$\sharp$ &$\sharp$ &$\sharp$ &$\cdots$ \\ \hline
$\cdots$ & $\sharp$ &$a$ &$a$ &$(q_{||},b)$ &$b$ &$||$ &$\sharp$ &$\sharp$ &$\sharp$ &$\sharp$ &$\cdots$ \\ \hline
$\cdots$ & $\sharp$ &$a$ &$a$ &$b$ &$(q_{||},b)$ &$||$ &$\sharp$ &$\sharp$ &$\sharp$ &$\sharp$ &$\cdots$ \\ \hline
$\cdots$ & $\sharp$ &$a$ &$a$ &$b$ &$b$ &$(q_{||},\sharp)$ &$\sharp$ &$\sharp$ &$\sharp$ &$\sharp$ &$\cdots$ \\ \hline
$\cdots$ & $\sharp$ &$a$ &$a$ &$b$ &$(q_{b+},\sharp)$ &$\sharp$ &$\sharp$ &$\sharp$ &$\sharp$ &$\sharp$ &$\cdots$ \\ \hline
$\cdots$ & $\sharp$ &$a$ &$a$ &$(q_{b++},||)$ &$\sharp$ &$\sharp$ &$\sharp$ &$\sharp$ &$\sharp$ &$\sharp$ &$\cdots$ \\ \hline
$\cdots$ & $\sharp$ &$a$ &$(q_{a+},b)$ &$||$ &$\sharp$ &$\sharp$ &$\sharp$ &$\sharp$ &$\sharp$ &$\sharp$ &$\cdots$ \\ \hline
$\cdots$ & $\sharp$ &$(q_{||},a)$ &$b$ &$||$ &$\sharp$ &$\sharp$ &$\sharp$ &$\sharp$ &$\sharp$ &$\sharp$ &$\cdots$ \\ \hline
$\cdots$ & $\sharp$ &$a$ &$(q_{||},b)$ &$||$ &$\sharp$ &$\sharp$ &$\sharp$ &$\sharp$ &$\sharp$ &$\sharp$ &$\cdots$ \\ \hline
$\cdots$ & $\sharp$ &$a$ &$b$ &$(q_{||},\sharp)$ &$\sharp$ &$\sharp$ &$\sharp$ &$\sharp$ &$\sharp$ &$\sharp$ &$\cdots$ \\ \hline
$\cdots$ & $\sharp$ &$a$ &$(q_{b+},\sharp)$ &$\sharp$ &$\sharp$ &$\sharp$ &$\sharp$ &$\sharp$ &$\sharp$ &$\sharp$ &$\cdots$ \\ \hline
$\cdots$ & $\sharp$ &$(q_0,\sharp)$ &$\sharp$ &$\sharp$ &$\sharp$ &$\sharp$ &$\sharp$ &$\sharp$ &$\sharp$ &$\sharp$ &$\cdots$ \\ \hline
\end{tabular}
\end{adjustbox}
\end{table}
\end{example}

The calculation of a Turing machine, that is, the set of rules defined by the transition function can be represented by a set of bidimensional patterns as proposed in \cite{Berger}. For example, consider the Turing machine presented in the last example and the transition function when the head of the machine is in the state $q_{||}$ and reads the symbol $a$. In this case the head of the machine does not change the symbol $a$ written on the tape, it changes its state to $q_{a+}$ and it moves to the right. This action can be represented by the following set of $3\times 2$ blocks or tiles described as below
\begin{displaymath}
\begin{tabular}{|c|c|c|}
\hline
$s_1$ & $a$ & $(q_{a+},s_3)$ \\ \hline
$s_1$ & $(q_{||},a)$ & $s_3$ \\ \hline
\end{tabular}
\quad
\begin{tabular}{|c|c|c|}
\hline
$a$ & $(q_{a+},s_2)$ & $s_3$ \\ \hline
$(q_{||},a)$ & $s_2$ & $s_3$ \\ \hline
\end{tabular}
\end{displaymath}
\begin{displaymath}
\begin{tabular}{|c|c|c|}
\hline
$(q_{a+},s_1)$ & $s_2$ & $s_3$ \\ \hline
$s_1$ & $s_2$ & $s_3$ \\ \hline
\end{tabular}
\quad
\begin{tabular}{|c|c|c|}
\hline
$s_1$ & $s_2$ & $a$ \\ \hline
$s_1$ & $s_2$ & $(q_{||},a)$ \\ \hline
\end{tabular}
\end{displaymath}
where $s_1,s_2,s_3\in\mathcal{T}$ are the symbols that have previously been written on the tape. These four patterns describe all the possible $3\times 2$ patterns that can be found in a bidimensional representation of this Turing machine for the rule $\delta(q_{||},a)=(q_{a+},a,+1)$. We can do this representation for each rule of the transition function. Since there is a finite number of rules, the set that describes all the possible $3\times 2$ patterns is also finite. Note that we have to include the pattern
\begin{displaymath}
\begin{tabular}{|c|c|c|}
\hline
$s_1$ & $s_2$ & $s_3$ \\ \hline
$s_1$ & $s_2$ & $s_3$ \\ \hline
\end{tabular}
\end{displaymath}
where the head of the Turing machine does not appear in this window that we are considering.

The set of all possible patterns $3\times 2$ in the alphabet
\begin{displaymath}
\dis \mathcal{T}\cup \left(Q\times \mathcal{T} \right)\cup \left(Q\times \mathcal{T}\times \{-1,+1\} \right)
\end{displaymath}
is finite. Since we are able to describe the language with patterns of the form $3\times 2$, we can take the complementary set from all the possible $3\times 2$ patterns and denote it as the set of forbidden patterns. Therefore, it is always possible to describe the calculation of a Turing machine by a SFT.

Based on the computability of a set of forbidden words, we can define another important class of subshifts.

\begin{definition}
\label{def.effectively-closed-subshift}
We say that a subshift $X\subset \mathcal{A}^\ZZ$ is an {\it effectively closed subshift} if there exists a recursively enumerable set of words $\mathcal{F}$ such that $X = \Sigma^d(\mathcal{A},\mathcal{F})$, that is, the set of forbidden words for the subshift $X$ can be recognized by a Turing machine.
\end{definition}

Here we define this class of subshifts only for one-dimensional subshifts, but it is possible to define the same class for multidimensional subshifts. In our main construction we describe a one-dimensional effectively closed subshift by an iteration process that builds the language of the subshift.

\section{The Aubrun-Sablik simulation theorem}

The simulation theorem in Aubrun-Sablik~\cite{AS} allows us to represent a one-dimensional effectively closed subshift as a subaction of a bidimensional SFT. We introduce some operations in subshifts as defined in \cite{Aubrun} so that we can give an idea of the construction proposed by Aubrun-Sablik~\cite{AS}.

Let $\mathcal{A}$ and $\mathcal{B}$ be two finite alphabets and $X_1\subseteq\Sigma^d(\mathcal{A})$ and $X_2\subseteq\Sigma^d(\mathcal{B})$ be two subshifts of the same dimension $d$. If we consider $x_1\in X_1$ and $x_2\in X_2$ two configurations in each subshift we define
\begin{displaymath}
x_1\times x_2=y\in \Sigma^d(\mathcal{A}\times\mathcal{B})
\end{displaymath}
such that
\begin{displaymath}
y=(y_j)_{j\in\ZZ^d} \mbox{ where } y_j=\left((x_1)_j,(x_2)_j\right)\in\mathcal{A}\times\mathcal{B}.
\end{displaymath}

\begin{definition}
Let be $X_1\subseteq\Sigma^d(\mathcal{A})$ and $X_2\subseteq\Sigma^d(\mathcal{B})$. We define the {\it product of $X_1$ and $X_2$} as the subshift $(X_1\times X_2)\subseteq\Sigma^d(\mathcal{A}\times\mathcal{B})$
\[
X_1\times X_2 = \left\{x_1\times x_2: x_i\in X_i, i=1,2 \right\}.
\]
\end{definition}
Note that the new alphabet is a product alphabet $\mathcal{A}\times\mathcal{B}$ of the two previous alphabets but the dimension of the subshift remains the same.

\begin{definition}
A {\it morphism} $\pi:\Sigma^d(\mathcal{A})\rightarrow\Sigma^d(\mathcal{B})$ is a continuous function which commutes with the shift action, that is,
\[\sigma^u\circ\pi=\pi\circ\sigma^u, \quad \forall u\in\ZZ^d.\]
\end{definition}

Hedlund~\cite{Hedlund} proved that such morphisms are block factors, that is, there exists a finite $U\subset \ZZ^d$ that we call {\it neighborhood} and there exists a function $\overline{\pi}$ such that
\[
\begin{array}{rcl}
\overline{\pi}:\mathcal{A}^U & \rightarrow & \mathcal{B} \\
(w_i)_{i\in\ZZ^d} & \mapsto & \dis \overline{\pi}(w)_i=\overline{\pi}(\sigma^i(x)|_U), \quad \forall i\in\ZZ^d. \\
\end{array}
\]

\begin{definition}
Let $\pi:\Sigma^d(\mathcal{A})\rightarrow\Sigma^d(\mathcal{B})$ be a morphism and $X\subseteq\Sigma^d(\mathcal{A})$ be a subshift. We define the {\it topological factor of the subshift $X$ by $\pi$} as the subshift $X_\pi\subseteq\Sigma^d(\mathcal{B})$ such that
\begin{displaymath}
X_\pi=\left\{ y\in \Sigma^d(\mathcal{B}): \exists x\in X \mbox{ such that } \pi(x)=y \right\}.
\end{displaymath}
\end{definition}

\begin{example}
Consider two alphabets $\mathcal{A}=\{0,1,2\}$ and $\mathcal{B}=\{0,2\}$ and define $X=\Sigma^1(\mathcal{A},\mathcal{F})$ where $\mathcal{F}={\{00,11,02,21\}}$. Let $\overline{\pi}:\mathcal{A} \rightarrow \mathcal{B}$ be a one-to-one block defined as
\[
\overline{\pi}(0)=\overline{\pi}(1)=0 \quad\mbox{and}\quad \overline{\pi}(2)=2.
\]

We can define a morphism $\pi$ as
\[
\begin{array}{rcl}
\pi:\Sigma^1(\mathcal{A}) & \to & \Sigma^1(\mathcal{B}) \\
(x_i)_{i\in\ZZ} & \mapsto & (y_i)_{i\in\ZZ}=(\overline{\pi}(x_i))_{i\in\ZZ}.\\
\end{array}
\]

Thus the topological factor of the subshift $X$ by $\pi$ is
\begin{displaymath}
X_\pi=\left\{x\in\Sigma^1(\mathcal{B}): \mbox{ finite blocks of consecutive 0's are of even length } \right\}
\end{displaymath}
which is called the even shift. This subshift is not a subshift of finite type because we cannot represent the set of forbidden patterns by a finite number of patterns, since one needs to exclude all arbitrarily large blocks of consecutive 0's of odd lengths to describe it.
\end{example}

\begin{remark}
A {\it sofic subshift} is a factor of a subshift of finite type. The class of sofic subshifts is bigger than the class of subshifts of finite type and there exists several representations for a sofic subshift, see \cite{LM}.
\end{remark}

The following definitions of a projective subaction and extension can be generalized for any subgroup as in \cite{Aubrun, Hochman}, but for the purpose of our construction the projective $\ZZ$-subaction and extension by duplication are enough.

\begin{definition}
Let $X\subseteq\Sigma^2(\mathcal{A})$ be a bidimensional subshift defined over the alphabet $\mathcal{A}$. We define the {\it projective $\ZZ$-subaction} as the one-dimensional subshift $Y$ given by
\[
Y=\{y\in\Sigma^1(\mathcal{A}):\exists x\in X, s.t.\, x|_{\ZZ\times\{0\}}=y \},
\]
that is, we are only considering the $e_1=(1,0)$-action on the subshift $X$.
\end{definition}

\begin{definition}
\label{def.extensionbyduplication}
Let $X\subseteq\Sigma^1(\mathcal{A})$ be a subshift. We define the {\it extension by duplication} of the subshift $X$ to be the bidimensional subshift $\overline{X}\subseteq\Sigma^2(\mathcal{A})$ given as
\[
\overline{X}:=\left\{\overline{x}\in\Sigma^2(\mathcal{A}): x|_{\ZZ\times\{0\}}\in X \, \mbox{and}\, \overline{x}|_{(i,j)}=\overline{x}_{(i,j+1)}, \forall(i,j)\in\ZZ^2 \right\}.
\]
\end{definition}

\begin{theorem}[Aubrun and Sablik~\cite{AS}, Durand Romaschenko and Shen~\cite{DRS}]
\label{tAS}
For every effectively closed $\ZZ$-subshift $Z\subseteq\Sigma^1(\mathcal{A})$ there exists an alphabet $\mathcal{B}$, a $\ZZ^2$-subshift of finite type $X\subseteq\Sigma^2(\mathcal{B})$ and a morphism $\pi: \Sigma^2(\mathcal{B}) \to \Sigma^2(\mathcal{A})$ so that
\begin{enumerate}
	\item The topological entropy of $X$ is zero.
	\item The action of $e_2 = (0,1)$ on $X_\pi\subseteq \Sigma^2(\mathcal{A})$ is trivial, that is, the restriction of the action of the subgroup $\{0\} \times \mathbb{Z}$ is the identity on $X_\pi$.
	\item The projective $\ZZ$-subaction of $X_\pi$ is equal to $Z$, that is, the one-dimensional effectively closed subshift $Z$ can be seen as a $\ZZ$-subaction of the topological projection of a bidimensional SFT $X$.
\end{enumerate}
\end{theorem}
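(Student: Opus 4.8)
The plan is to realize $X$ as a $\ZZ^2$ subshift of finite type on a product alphabet $\mathcal B$ obtained by stacking several layers, following the self-similar construction of Aubrun--Sablik~\cite{AS}. Since $Z$ is effectively closed I may assume $Z=\Sigma^1(\mathcal A,\mathcal F)$ for some recursively enumerable set of forbidden words $\mathcal F$, and I fix once and for all a Turing machine $M$ that halts on exactly the words of $\mathcal F$. The lowest layer will be a \emph{content layer}: it carries one letter of $\mathcal A$ in each column, and a single local rule forces it to be constant along every vertical line. Taking $\pi$ to be the block map that keeps only this layer, item~(2) is then automatic, since every configuration of $X_\pi$ is vertically constant and hence fixed by $\sigma^{e_2}$; and item~(3) will follow once I show that the row $\ZZ\times\{0\}$ of the content layer ranges over exactly $Z$.

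The remaining layers are there to make the SFT verify, locally, that the content row avoids $\mathcal F$. First I would add a \emph{grid layer}: a Robinson-type aperiodic tileset producing, at every scale $n\ge 1$, a tiling of $\ZZ^2$ into $N_n\times N_n$ squares with $N_n\to\infty$, whose boundaries carry ``wires''. Inside each scale-$n$ square I reserve a \emph{computation zone} of size comparable to $N_n$, and I use the wires to transmit, bit by bit and across scales, the factor of the content layer lying below the square, so that the computation zone has read access to a factor of the row of length $\sim N_n$. On top of this I superimpose a \emph{computation layer} forced to display a space--time diagram of $M$ run on the factor fed through the wires; this is enforced by the finite set of $3\times 2$ tiles encoding the transition function of $M$ described in Section~\ref{sec.Turing.machine}. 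The crucial forbidden pattern is the one exhibiting a \emph{halting configuration} of $M$: any $\ZZ^2$ configuration containing a completed (accepting) computation is declared illegal. Because $N_n\to\infty$, every factor of the content row will be tested, on computation windows of unbounded length, by squares of all sufficiently large scales.

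With these rules in place, the inclusion ``(content row)$\subseteq Z$'' is immediate: if some $x\in X$ had a word $w\in\mathcal F$ in its content row, then $M$ halts on $w$ after finitely many steps, so at a scale $n$ large enough that $N_n$ exceeds both the length of $w$ and the halting time of $M$ on $w$, a scale-$n$ square over an occurrence of $w$ would be forced to contain a halting computation, a contradiction; hence the content row lies in $\Sigma^1(\mathcal A,\mathcal F)=Z$. For the reverse inclusion I fix $y\in Z$, place a valid instance of the grid layer (the Robinson tileset being non-empty), put the vertical duplication of $y$ in the content layer, which then forces the wires and the computation zones, and check that the resulting configuration is legal precisely because no factor of $y$ belongs to $\mathcal F$, so $M$ never halts on any factor of $y$ and no halting pattern ever appears; any patching between windows is handled by a compactness argument in the spirit of Proposition~\ref{proposition.Sebastian}. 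This gives item~(3).

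For item~(1) I would count globally admissible $n\times n$ patterns of $X$: the content layer, being vertically constant, contributes at most $|\mathcal A|^{n}$ patterns (one per horizontal word of length $n$), that is $O(n)$ bits; the grid layer of a self-similar Robinson-type tileset is rigid up to a global shift and so contributes only polynomially many patterns in $n$; and once the grid and the content are fixed, the wires and the computation layer are completely determined. Hence $|\mathcal L(X,2n+1)|\le |\mathcal A|^{n}\cdot\mathrm{poly}(n)$ and
\[
h_{top}(X)=\lim_{n\to\infty}\frac{1}{|\Lambda_n|}\log|\mathcal L(X,2n+1)|=0 .
\]
I expect the genuine difficulty to lie in the geometric bookkeeping of the second step: designing the hierarchical grid so that computation zones of unbounded size appear inside a genuine SFT, and routing a factor of length $\sim N_n$ through finitely-many-state wires into the scale-$n$ computation zone without enlarging the alphabet. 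This inter-scale transmission protocol is the technical heart of Aubrun--Sablik~\cite{AS}, later streamlined by Durand--Romashchenko--Shen~\cite{DRS}; once the grid's rigidity is established, the zero-entropy count of item~(1) is routine and items~(2)--(3) reduce to the bookkeeping just described.
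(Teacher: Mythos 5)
You are proposing a proof of a theorem that the paper itself does not prove: Theorem~\ref{tAS} is imported from \cite{AS} (and \cite{DRS}), and the text only describes the architecture of the Aubrun--Sablik SFT --- Layer 1 the vertical duplication of the content, Layer 2 a substitution-generated grid producing computation strips of width $2^n$ with bounded gaps together with a clock, Layer 3 a Wang-tile simulation of a Turing machine enumerating the forbidden words, Layer 4 a search machine comparing those words with the content of its responsibility zone. Your sketch follows essentially this same architecture (vertically constant content layer, hierarchical grid, computation layer on top, projection $\pi$ onto the content), so it is not a different route; the question is whether it amounts to a proof.

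It does not, and the gap is exactly the part you defer: the sentence about ``designing the hierarchical grid so that computation zones of unbounded size appear inside a genuine SFT, and routing a factor of length $\sim N_n$ through finitely-many-state wires'' is not bookkeeping, it is the content of the theorem. A Robinson-type aperiodic tileset by itself gives neither wires transporting the letters of the content row into the computation zones, nor the clock that forces every computation to restart on a controlled input and to sweep its whole responsibility zone; these are what guarantee that \emph{every} factor of the content row is tested in \emph{every} sufficiently large strip, and in \cite{AS} they require Mozes' realization of the substitutive grid by an SFT \cite{Mozes} plus a delicate synchronization of machine time with the hierarchical structure. Two further points are off as stated. First, you run a single recognizer $M$ on ``the factor lying below the square'' and forbid halting configurations; but each zone must test \emph{all} subwords of its responsibility zone against the forbidden words (this is why \cite{AS} separates an enumerating machine from a search machine and why the step-counting estimates, reused in the appendix here, matter); with a single machine you must at least specify how all subwords are scheduled within the available time. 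Second, the claim that once grid and content are fixed ``the wires and the computation layer are completely determined'' is false: a finite window does not determine the phase of the computation. The correct weaker statement --- the window is determined by its U-shaped boundary given the grid, hence at most exponentially many patterns in the side length $n$, cf.\ the proof of Proposition~\ref{prop.Sebastian2} --- still yields $h_{top}(X)=0$ after dividing by $n^2$, so item (1) survives, but the argument as written is not right. Items (2) and (3) are fine modulo the construction, which is precisely what remains to be done.
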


The proof of this theorem is constructive and it uses several different elements to construct the final subshift. Among the techniques that they use are the representation of Turing machines via a space-time diagram as in the Example~\ref{example-diagram} as proposed by Berger~\cite{Berger} and the substitution theorem by Mozes~\cite{Mozes}. The final subshift is built as four different layers with four different alphabets that are combined in order to form a really large alphabet in which it is possible to describe a finite set of forbidden patterns that defines a subshift that simulates our first subshift.

As said before, the subshift of finite type $X$ in the Aubrun-Sablik construction~\cite{AS} is composed of four layers, that is, it is a subshift of a product of four subshifts of finite type given by a finite number of forbidden patterns which impose conditions on how the layers superpose. See Figure 14 of~\cite{AS}. The layers are:

\begin{enumerate}
	\item \texttt{Layer} 1: The set of all configurations $x\in \mathcal{A}^{\ZZ^2}$ obtained by the extension by duplication as in Definition~\ref{def.extensionbyduplication}.
	\item \texttt{Layer} 2: $\textbf{T}_{\texttt{Grid}}$ A subshift of finite type extension of a sofic subshift which is generated by the substitution given in Figure 3 of~\cite{AS}. The sofic subshift induces infinite vertical ``strips'' of computation which are of width $2^n$ for every $n \in \NN$ and occur with bounded gaps (horizontally) in any configuration.
	\item \texttt{Layer} 3: $\mathcal{M}_{\texttt{Forbid}}$ A subshift of finite type given by Wang tiles which replicates the space-time diagram of a Turing machine which enumerates all forbidden patterns of $X$ and communicates this information to the fourth layer.
	\item \texttt{Layer} 4: $\mathcal{M}_{\texttt{Search}}$ A subshift of finite type given by Wang tiles which simulates a Turing machine which serves the purpose of checking whether the patterns enumerated by the third layer appear in the first layer.  ``responsibility zone'' which is determined by the hierarchical structure of \texttt{Layer} 2.
\end{enumerate}

The rules between the four layers described in~\cite{AS} force the Turing machine space-time diagrams to occur in every strip, and to restart their computation after an exponential number of steps. This ensures that every configuration restarts the computation everywhere, and that every forbidden pattern is written on the tape by the Turing machine in every large enough strip. The fourth layer searches for occurrences of the forbidden patterns in the first layer and thus discards any configuration in the first layer where one of these patterns occurs.

Based on their construction and the objects that we will define later, it will be possible to have some important estimates.

\chapter{Main Construction}

In this chapter we present the main construction that allows us to define our locally constant potential. First we define a one-dimensional effectively closed subshift generated by an iteration process that defines the language of this subshift. We prove that this subshift is in fact effectively closed. We prove also some important properties. Next we apply the simulation theorem of Aubrun-Sablik~\cite{AS} in order to get a bidimensional SFT that simulates our initial subshift. We also prove some properties for this subshift and define a new coloring of this subshift.

\section{One-dimensional effectively closed subshift}

Now we present a general lemma that we use in our construction. It gives us certain properties based on how we define the iteration process that defines our one-dimensional subshift. See Definition~\ref{def.concatenatedsubshift} for concatenated subshifts.

\begin{lemma}
	\label{lemma.onedimensional}
	Let $\mathcal{A}$ be a finite alphabet. Let $(\ell_k)_{k\geq0}$ be a strictly increasing sequence of integers, and $(L_k)_{k\geq0}$ be a sequence of dictionaries of size $(\ell_k)_{k\geq0}$ over the alphabet $\mathcal{A}$, say $L_k \subseteq  \mathcal{A}^{\llbracket 1, \ell_k \rrbracket}$. We assume that, for every $k\geq0$, every word in $L_{k+1}$ is the concatenation of words of $L_k$. Then 
	\begin{enumerate}
		\item$ \forall\,  k\geq0, \ \langle L_{k+1} \rangle \subseteq \langle L_{k} \rangle$,
		\item $X  := \bigcap_{k\geq0} \langle L_k \rangle  = \Sigma^1(\mathcal{A},\mathcal{F})$ where $\mathcal{F} = \bigsqcup_{k\geq0} \mathcal{F}_k$  and $\mathcal{F}_k$ is the set of words of length $\ell_k$  that are not subwords of the concatenation of two words of $L_{k}$.
	\end{enumerate}
	If we assume in addition that  every concatenation of two words in $L_k$ is a subword of the concatenation of two words of $L_{k+1}$, then
	\begin{enumerate}
		\addtocounter{enumi}{2}
		\item for every $n\geq0$, the concatenation of two words of $L_n$ is a word of the language of $X$.
	\end{enumerate}
\end{lemma}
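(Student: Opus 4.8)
The first two items are essentially bookkeeping about concatenated subshifts, and I would dispatch them first. For item (1): take $x\in\langle L_{k+1}\rangle$, so there is an offset $u$ with every $\ell_{k+1}$-block (aligned to the $\ell_{k+1}$-grid shifted by $u$) lying in $L_{k+1}$. Since each word of $L_{k+1}$ is a concatenation of words of $L_k$ and $\ell_k \mid \ell_{k+1}$ (this divisibility must be extracted from the hypothesis: a length-$\ell_{k+1}$ word being a concatenation of length-$\ell_k$ words forces $\ell_k\mid\ell_{k+1}$), the same $x$ is then grid-aligned into $L_k$-blocks with offset $u$, hence $x\in\langle L_k\rangle$. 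For item (2): the inclusion $X\subseteq\Sigma^1(\mathcal{A},\mathcal{F})$ follows because if $x\in\langle L_k\rangle$ then no word of $\mathcal{F}_k$ can appear in $x$ — any length-$\ell_k$ window of $x$ sits inside the concatenation of two consecutive $L_k$-blocks. Conversely, if $x$ avoids all of $\mathcal{F}$, then for each $k$ every length-$\ell_k$ window is a subword of a concatenation of two $L_k$-words; a pigeonhole/compactness argument on the finitely many grid-offsets modulo $\ell_k$ produces a consistent offset, giving $x\in\langle L_k\rangle$, and intersecting over $k$ gives $x\in X$.

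The substantive item is (3): under the extra hypothesis that every concatenation $uv$ of two words of $L_n$ is a subword of a concatenation $u'v'$ of two words of $L_{n+1}$, I must show $uv\in\mathcal{L}(X)$, i.e. $uv$ occurs in some genuine configuration of $X=\bigcap_k\langle L_k\rangle$. The plan is to build, by induction, an increasing chain of "good" finite words $w_n = u_n v_n$ (concatenation of two $L_n$-words) with $w_n \sqsubset w_{n+1}$, all containing our target $uv$, and then pass to a limit. Concretely: start with $w_n := uv$. Given $w_k = u_k v_k$ with $u_k,v_k\in L_k$, apply the hypothesis to get $u_{k+1},v_{k+1}\in L_{k+1}$ with $u_k v_k$ a subword of $u_{k+1}v_{k+1} =: w_{k+1}$. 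This gives a nested sequence of longer and longer admissible "double blocks", each extending the previous, and all containing $uv$. The lengths $2\ell_k\to\infty$, so I can center these words and extract (by compactness of $\Sigma^1(\mathcal{A})$) a configuration $x$ in which $uv$ appears.

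The point requiring care — and the main obstacle — is verifying that the limit configuration $x$ actually lies in $X$, i.e. in every $\langle L_m\rangle$. The subtlety is the grid-alignment: knowing a long word is a concatenation of $L_m$-words does not by itself pin down where the $\ell_m$-grid sits, and when I extend $w_k$ to $w_{k+1}$ the natural $L_m$-grid (for $m\le k$) inside $w_{k+1}$ must be compatible with the one inside $w_k$. I would handle this by being careful in the induction to record not just the word $w_k$ but its decomposition, and to choose the embedding $w_k \sqsubset w_{k+1}$ so that it respects the $L_k$-block boundaries of $u_{k+1},v_{k+1}$ — this is exactly what "concatenation of two words of $L_k$ is a subword of the concatenation of two words of $L_{k+1}$" should be taken to mean (the embedding is block-aligned). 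Then for any fixed $m$ and all large $k$, the window of $x$ of any given size is a subword of $w_k$, which is $L_m$-grid-aligned in a consistent way; passing to the limit and using item (1) to go from $\langle L_k\rangle$ down to $\langle L_m\rangle$ yields $x\in\bigcap_m\langle L_m\rangle = X$. Hence $uv\in\mathcal{L}(X)$, which is item (3).
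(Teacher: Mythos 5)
Your items (1) and (3) follow essentially the paper's route: item (1) is the same grid-refinement observation (together with the divisibility $\ell_k\mid\ell_{k+1}$ you correctly extract), and for item (3) the paper likewise builds a chain $u_kv_k\sqsubset u_{k+1}v_{k+1}\sqsubset\cdots$, extends each $u_jv_j$ to a full configuration $x^j\in\langle L_j\rangle$, and takes a subsequential limit, using $\langle L_j\rangle\subseteq\langle L_m\rangle$ for $j\geq m$ together with closedness to place the limit in $X$. Your worry about keeping the embeddings block-aligned across all stages of the chain is unnecessary once you re-extend to a fresh configuration $x^j\in\langle L_j\rangle$ at every stage: membership of the limit in each $\langle L_m\rangle$ then comes for free from nesting and closedness, not from any global consistency of the chain. (Do recenter each $x^j$ on an occurrence of $uv$ before passing to the limit, as you indicate, so that the occurrence does not drift to infinity.)

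The genuine gap is in your converse inclusion for item (2). You claim that if every length-$\ell_k$ window of $x$ is a subword of a concatenation of two words of $L_k$, then a pigeonhole over the finitely many offsets modulo $\ell_k$ yields $x\in\langle L_k\rangle$. This single-scale implication is false: take $L=\{ab,ba\}$ of length $2$; the concatenations $abab$, $abba$, $baab$, $baba$ contain every length-$2$ word as a subword, so the window condition is satisfied by $x=\cdots aaaa\cdots$, yet $x\notin\langle L\rangle$ because every aligned $2$-block of $x$ is $aa$. Local embeddability of each window into \emph{some} two-block concatenation does not produce one offset valid for all windows simultaneously: the offsets witnessing different windows need not agree, and their common refinement can be empty. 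The paper avoids arguing at a single scale. For each $k$ it produces a configuration $y^k\in\langle L_k\rangle$ agreeing with $x$ on the centered interval $I_k$ of length $\ell_k$; since $y^j\in\langle L_j\rangle\subseteq\langle L_k\rangle$ for all $j\geq k$, the sets $\langle L_k\rangle$ are closed, and $I_j$ increases to $\ZZ$, a subsequential limit of $(y^j)$ equals $x$ and lies in every $\langle L_k\rangle$, hence in $X$. You should replace your pigeonhole step by this approximation argument (or otherwise bring the scales $j>k$ into play); with that repair the rest of your plan goes through.
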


\begin{proof}
	For this proof we use the following notation: for each $k\geq 0$ and $i\in\ZZ$ we denote $E_k(i)\subset \ZZ$ as the set
	\begin{displaymath}
	\dis E_k(i):=\llbracket i, i+\ell_k-1\rrbracket\subset\ZZ.
	\end{displaymath}
	
	Consider $x\in\langle L_{k+1}\rangle$. By definition there exists $j\in[1,\ell_{k+1}]$ such that
	\begin{displaymath}
	x|_{E_{k+1}(j+1+i\ell_{k+1})}\in L_{k+1}, \quad \forall i\in\ZZ
	\end{displaymath}
	that is, $x$ can be seen as an infinite concatenation of words in $L_{k+1}$. By our assumptions every word in $L_{k+1}$ is a concatenation of words in $L_k$. Then $x\in\langle L_k \rangle$ and that means $\langle L_{k+1} \rangle\subseteq\langle L_k \rangle$.
	
	Now we prove that $X=\Sigma^1(\mathcal{A},\mathcal{F})$ where $\mathcal{F}$ is the set of words of length $\ell_k$, $k\geq0$, that are not subwords of the concatenation of two words of $L_{k}$. For a fixed $k\geq 0$, denote $\mathcal{F}_k$ the set of words of length $\ell_k$ that are not subwords of the concatenation of two words of $L_{k}$. In this case the set $\mathcal{F}_k$ is finite and if $\Sigma^1(\mathcal{A},\mathcal{F}_k)$ is the SFT generated by the set of forbidden words $\mathcal{F}_k$ it is clear that $\langle L_k\rangle\subseteq \Sigma^1(\mathcal{A},\mathcal{F}_k)$. By our assumptions $\langle L_{k+1} \rangle \subseteq \langle L_k \rangle$ for every $k\geq 0$, thus
	\begin{displaymath}
	\dis \bigcap_{i\geq k} \langle L_i\rangle \subset \Sigma^d(\mathcal{A},\mathcal{F}_k).
	\end{displaymath}
	Therefore
	\begin{displaymath}
	\dis X = \bigcap_{k\geq0} \langle L_k \rangle \subseteq \bigcap_{k\geq 0} \Sigma^1(\mathcal{A},\mathcal{F}_k)=\Sigma^1(\mathcal{A},\mathcal{F}).
	\end{displaymath}
	
	For every $k\geq0$, define the interval
	\begin{displaymath}
	\dis I_k := \Big\llbracket 1 - \Big\lfloor \frac{\ell_k}{2} \Big\rfloor,\ell_k- \Big\lfloor \frac{\ell_k}{2} \Big\rfloor  \Big\rrbracket.
	\end{displaymath}
	If we consider $x \in \Sigma^1(\mathcal{A},\mathcal{F})$, then $x|_{I_k}$ is a subword of length $\ell_k$ of the concatenation of two words of $L_k$. For every $k\in\NN$ we can assure that there exists a configuration $y^{k} \in \langle L_k \rangle$ such that $x|_{I_k} = y^{k}|_{I_k}$. We may take a subsequence of indices $k$ such that $(y^k)_{k\geq0}$ converges to some $y \in \mathcal{A}^{\mathbb{Z}}$. Since $y^k \in \langle L_j \rangle$ for every $k \geq j$, by taking the limit in $k$ we obtain $y \in \langle L_j \rangle$, for every $j\geq0$,  thus $y \in X$. For every $k\geq j$, as $I_j \subseteq I_k$, we have  $x|_{I_j}= y^k|_{I_j}$. Since $(y^k)_{k\geq0}$ converges to $y$, $x|_{I_j}=y|_{I_j}$ for every $j\geq0$, thus $x=y \in X$. Therefore $X=\Sigma^1(\mathcal{A},\mathcal{F})$.
	
	Consider two words $u_k,v_k\in L_k$. There exists a configuration $x^k\in\langle L_k\rangle$ such that
	\begin{displaymath}
	\dis x^k|_{\llbracket -\ell_k,\ell_k-1 \rrbracket}=u_kv_k.
	\end{displaymath}
	If the concatenation $u_kv_k$ can be found in a word of $u_{k+1}\in L_{k+1}$, then it is enough to assure there exists a configuration $x\in X$ that $x|_{\llbracket -\ell_k,\ell_k-1 \rrbracket}=u_kv_k$ and therefore $u_kv_k\in\mathcal{L}(X)$.
	
	If $u_kv_k$ is not a subword of a word in $L_{k+1}$, then by our assumptions the concatenation $u_kv_k$ can be seen as a subword of a concatenation of two words in $L_{k+1}$, that is, there exists $u_{k+1},v_{k+1}\in L_{k+1}$ such that $u_kv_k\sqsubset u_{k+1}v_{k+1}$. We can assure again there exists $x^{k+1}\in\langle L_{k+1}\rangle$ such that 
	\begin{displaymath}
	\dis x^{k+1}|_{\llbracket -\ell_{k+1},\ell_{k+1}-1 \rrbracket}=u_{k+1}v_{k+1},
	\end{displaymath}
	and therefore the word $u_kv_k$ appears in the configuration $x^{k+1}$. Hence we assure that for every $j\geq k$ we can find a configuration $x^j\in\langle L_j\rangle$ and two words $u_j,v_j\in L_j$ such that $x^{j}|_{\llbracket -\ell_{j},\ell_{j}-1 \rrbracket}=u_{j}v_{j}$ and $u_kv_k\sqsubset u_jv_j$. We may take a subsequence of indexes $j$ such that $x^j$ converges to some $x\in X$. As we have $\dis \lim_{k\rightarrow+\infty}\ell_k=+\infty$ we obtain a configuration $x\in X$ such that $u_kv_k\sqsubset x\in X$ and therefore $u_kv_k\in \mathcal{L}(X)$.
	
\end{proof}



First we describe a one-dimensional construction that satisfies all of our previous hypotheses and from there we describe our bidimensional elements. We use the notation with a marker $\sim$ for the one-dimensional elements. Consider an alphabet $\tilde{A}=\{0,1,2\}$, a sequence of integers $\ell_k$, sets of blocks $\tilde{A}_k,\tilde{B}_k\subset\tilde{\mathcal{A}}^{\ell_k}$ (or $\tilde{\mathcal{A}}^{\llbracket1,\ell_k\rrbracket}$) and two auxiliary sequences of integers $(N_k)_{k\geq 0}$ and $(N_k')_{k\geq 0}$. We impose assumptions on these sequences in order to properly build our example. We assume that $N'_k \geq 4$ and $N_k$ is a multiple of $N'_k$ for each $k\geq0$. 

\begin{notation}
	\label{notation:BaseLanguageBis}
	For each $k\geq 0$ the sets $\tilde{A}_k$ and $\tilde{B}_k$ will be
	\begin{displaymath}
	\dis \tilde{A}_k=\{a_k,1^{\ell_k}\} \quad\tilde{B}_k=\{b_k,2^{\ell_k}\},
	\end{displaymath}
	where $a_k,b_k\in\tilde{\mathcal{A}}^{\llbracket1,\ell_k\rrbracket}$. We define these blocks by an iteration process described below.
	
	Start with $\ell_0=2$, $a_0=01$ and $b_0=02$, then we have
	\begin{displaymath}
	\tilde{A}_0=\{01,11\} \quad \mbox{and} \quad \tilde{B}_0=\{02,22\}.
	\end{displaymath}
	
	If $k\geq 1$ is odd we define
	\begin{equation}
	\label{rule.odd}
	\begin{array}{l}
	\dis a_k=\underbrace{a_{k-1}a_{k-1}\cdots a_{k-1}}_{N_k\mbox{-times}} \mbox{ and } \\
	\\
	\dis b_k=b_{k-1}2^{(N_k-2)\ell_{k-1}}b_{k-1};
	\end{array}
	\end{equation}
	and if $k\geq 2$ is even we define
	\begin{equation}
	\label{rule.even}
	\begin{array}{l}
	\dis a_k=a_{k-1}1^{(N_k-2)\ell_{k-1}}a_{k-1} \mbox{ and } \\
	\\
	\dis b_k=\underbrace{b_{k-1}b_{k-1}\cdots b_{k-1}}_{N_k\mbox{-times}}.
	\end{array}
	\end{equation}
\end{notation}

In our iteration process, for every $k\geq 0$, the sets $\tilde{A}_k$ and $\tilde{B}_k$ are formed by two blocks of length $\ell_k$ and we always have $1^{\ell_k}\in \tilde{A}_k$ and $2^{\ell_k}\in \tilde{B}_k$. The length of the blocks at each stage is given by
\begin{displaymath}
\dis \ell_k=N_k\ell_{k-1}.
\end{displaymath}

\begin{notation}
	\label{notation:BaseLanguageTer}
	Now we define the sub-dictionaries $\tilde A'_k$ and $\tilde B'_k$ which are made of subwords of length $\ell'_k = N_k'\cdot \ell_{k-1}$ that are either initial or terminal words of a word in $\tilde A_k$ and $\tilde B_k$. Formally,
	\begin{enumerate}
		\item if $k$ is odd, $\tilde A'_k = \{ a'_k, 1^{\ell'_k}\}$, $\tilde B'_k = \{b'_k,b''_k,2^{\ell'_k}\}$,
		\begin{equation}
		\label{rule.odd.prime}
		\begin{array}{l}
		\dis a'_k := a_{k-1}a_{k-1} \cdots a_{k-1}, \quad \text{$N'_k$ times}, \\ 
		\dis b'_k := b_{k-1} 2^{(N'_k-1)\ell_{k-1}} \mbox{ and } \\
		\dis b''_k := 2^{(N'_k-1)\ell_{k-1}}b_{k-1}; \\
		\end{array}
		\end{equation}
		
		\item if $k$ is even, $\tilde A'_k = \{ a'_k,a''_k,1^{\ell'_k}\}$, $\tilde B'_k = \{ b'_k,2^{\ell'_k}\}$,
		\begin{equation}
		\label{rule.even.prime}
		\begin{array}{l}
		\dis a'_k := a_{k-1} 1^{(N'_k-1)\ell_{k-1}}, \\
		\dis a''_k = 1^{(N'_k-1)\ell_{k-1}}a_{k-1} \mbox{ and }\\
		\dis b_k := b_{k-1}b_{k-1} \cdots b_{k-1}, \quad \text{$N'_k$ times}. \\
		\end{array}
		\end{equation}
	\end{enumerate}
	Notice that, as $N_k$ is a multiple of $N'_k$, we  have $\langle \tilde A_k \rangle \subset \langle \tilde A'_k \rangle$ and $\langle \tilde B_k \rangle \subset \langle \tilde B'_k \rangle$.
\end{notation}

\begin{remark}
For each $k\in\NN$, we denote the block of $\ell_k$ consecutive $1$'s by $1_k:=1^{\ell_k}$ and, in a similar fashion $2_k:=2^{\ell_k}$.
\end{remark}

The frequency of the symbol $0$ in any word $\tilde w \in \tilde{\mathcal{A}}^{\llbracket 1, \ell_k \rrbracket}$ of length $\ell_k$ is denoted by
\begin{equation}
\label{def.frequence}
f_k(\tilde w ) := \frac{1}{\ell_k} \Card \big(\{ i\in \llbracket 1, \ell_k \rrbracket : \tilde w(i) = 0 \} \big).
\end{equation}
We denote in the same fashion the frequency of the symbol $0$ in words $\tilde w \in \tilde{\mathcal{A}}^{\llbracket 1, \ell_k' \rrbracket}$ as
\begin{displaymath}
f_k'(\tilde w ) := \frac{1}{\ell_k'} \Card \big(\{ i\in \llbracket 1, \ell_k' \rrbracket : \tilde w(i) = 0 \} \big).
\end{displaymath}

Let $f_k^A$, $f_k^B$ (resp. $f'_k{}^A$, $f'_k{}^B$) be the largest frequency of the symbol $0$ in the words of $\tilde A_k$, $\tilde B_k$ (resp. $\tilde A'_k$, $\tilde B'_k$).

\begin{lemma}
	\label{lemma.frequence.max}
	Let $\tilde A_k$ and $\tilde B_k$ be the two languages defined in Notation~\ref{notation:BaseLanguageBis}, $\tilde A'_k$ and $\tilde B'_k$ those defined in Notation~\ref{notation:BaseLanguageTer}. Then
	\begin{enumerate}
		\item if $k \geq1$ is odd,  then
		\begin{gather*}
		\renewcommand{\arraystretch}{1.2}
		\left\{\begin{array}{l}
		\dis f'_k{}^A = f_k^A = f^A_{k-1}, \ \ f^B_k = \frac{2}{N_k}f^B_{k-1}, \ \ f'_k{}^B = \frac{1}{N'_k}f^B_{k-1}, \\
		\dis f_k^A = \prod_{i=1}^{(k+1)/2} \left( \frac{2}{N_{2i-2}} \right) f_0^A,  \quad f_k^B = \prod_{i=1}^{(k+1)/2} \left( \frac{2}{N_{2i-1}} \right) f_0^B,
		\end{array}\right.
		\end{gather*}
		with $N_0 = 2$;
		\item if $k\geq$ is even, then
		\begin{gather*}
		\renewcommand{\arraystretch}{1.2}
		\left\{\begin{array}{l}
		\dis f^A_k = \frac{2}{N_k}f_{k-1}^A, \ \ f'_k{}^A = \frac{1}{N'_k} f_{k-1}^A, \ \ f'_k{}^B =  f^B_k = f^B_{k-1}, \\
		\dis f_k^A = \prod_{i=1}^{k/2} \left( \frac{2}{N_{2i}} \right) f_0^A, \quad f_k^B = \prod_{i=1}^{k/2} \left( \frac{2}{N_{2i-1}} \right) f_0^B.
		\end{array}\right.
		\end{gather*}
	\end{enumerate}
\end{lemma}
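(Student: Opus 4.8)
The plan is to prove the two cases by a direct induction on $k$, using the recursive definitions in Notation~\ref{notation:BaseLanguageBis} and Notation~\ref{notation:BaseLanguageTer}. First I would record the base case: $\tilde A_0 = \{01,11\}$ has words with $0$-frequencies $\tfrac12$ and $0$, so $f_0^A = \tfrac12$, and similarly $f_0^B = \tfrac12$; the stated products with $N_0=2$ then read correctly for $k=0$. I would also note the elementary fact that the $0$-frequency is additive under concatenation in the weighted sense: if $w = w_1 w_2 \cdots w_m$ is a concatenation of blocks each of length $\ell$, then the number of $0$'s in $w$ is the sum of the numbers of $0$'s in the $w_i$, and in particular the all-ones (resp. all-twos) block contributes zero $0$'s. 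This is the only combinatorial input needed.

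Next I would carry out the inductive step for $\tilde A_k$ and $\tilde B_k$. Suppose $k\geq 1$ is odd. By \eqref{rule.odd}, $a_k = a_{k-1}\cdots a_{k-1}$ ($N_k$ times), so $\Card\{i : a_k(i)=0\} = N_k \Card\{i: a_{k-1}(i)=0\}$, and dividing by $\ell_k = N_k \ell_{k-1}$ gives that the $0$-frequency of $a_k$ equals that of $a_{k-1}$; since $1^{\ell_k}$ has frequency $0$ and $1^{\ell_{k-1}}\in\tilde A_{k-1}$ has frequency $0$, the largest frequency is unchanged: $f_k^A = f_{k-1}^A$. For $b_k = b_{k-1} 2^{(N_k-2)\ell_{k-1}} b_{k-1}$, only the two copies of $b_{k-1}$ contribute $0$'s, so the number of $0$'s is $2\Card\{i: b_{k-1}(i)=0\}$; comparing $b_{k-1}$ (frequency measured over $\ell_{k-1}$) we get $f_k^B = \tfrac{2\,\ell_{k-1}}{\ell_k} f_{k-1}^B = \tfrac{2}{N_k} f_{k-1}^B$, where I must also check that $2^{\ell_k}$ (frequency $0$) does not beat this — it does not, since $f_{k-1}^B>0$. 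The even case $k\geq 2$ is symmetric: swapping the roles of $A$/$B$ and of the symbols $1$/$2$ in \eqref{rule.even} gives $f_k^B = f_{k-1}^B$ and $f_k^A = \tfrac{2}{N_k} f_{k-1}^A$.

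Then I would unwind the recursions to the closed product forms. Iterating $f_k^A = \tfrac{2}{N_k} f_{k-1}^A$ on even steps and $f_k^A = f_{k-1}^A$ on odd steps (and vice versa for $B$), and bookkeeping the parity of the indices that actually contribute a factor, yields $f_k^A = \prod_{i} (2/N_{2i}) f_0^A$ over the even indices $\leq k$ and $f_k^B = \prod_i (2/N_{2i-1}) f_0^B$ over the odd indices $\leq k$; for $k$ odd the number of even-index factors is $(k+1)/2$ (counting the convention $N_0=2$, which contributes the factor $2/N_0 = 1$, matching $f_k^A = f_0^A$ on the $A$-side when nothing has changed), and for $k$ even it is $k/2$, exactly as displayed. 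The slightly delicate point here is getting the index ranges and the $N_0=2$ convention to line up with the displayed products; I would do this carefully by checking $k=1$ and $k=2$ explicitly against the general formula.

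Finally, for the primed quantities I would repeat the same concatenation count using Notation~\ref{notation:BaseLanguageTer}. For $k$ odd, $\tilde A'_k = \{a'_k, 1^{\ell'_k}\}$ with $a'_k = a_{k-1}\cdots a_{k-1}$ ($N'_k$ times), so again the $0$-frequency of $a'_k$ equals that of $a_{k-1}$, hence $f'_k{}^A = f_{k-1}^A = f_k^A$; and $\tilde B'_k = \{b'_k, b''_k, 2^{\ell'_k}\}$ where $b'_k$ and $b''_k$ each contain exactly one copy of $b_{k-1}$ among $\ell'_k = N'_k\ell_{k-1}$ sites, giving $f'_k{}^B = \tfrac{\ell_{k-1}}{\ell'_k} f_{k-1}^B = \tfrac{1}{N'_k} f_{k-1}^B$. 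The even case is again symmetric, giving $f'_k{}^A = \tfrac{1}{N'_k} f_{k-1}^A$ and $f'_k{}^B = f_{k-1}^B = f_k^B$. I do not expect any real obstacle: everything reduces to counting $0$'s in concatenations, and the only thing requiring care is the indexing of the telescoped products and verifying in each case that the all-$1$'s or all-$2$'s word never realizes the maximal $0$-frequency (which holds because $f_0^A, f_0^B > 0$ and all the recursions keep them strictly positive).
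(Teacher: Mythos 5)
Your proof is correct and is exactly the direct counting argument the paper intends: the lemma is stated without proof precisely because it reduces to counting occurrences of the symbol $0$ under the concatenation rules (\ref{rule.odd})--(\ref{rule.even.prime}) and unwinding the resulting recursion, which you do carefully, including the $N_0=2$ convention and the check that the all-$1$'s and all-$2$'s words never attain the maximum. No gaps.
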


Consider $\tilde{L}_k:=\tilde{A}_k\bigsqcup \tilde{B}_k$ (resp. $\tilde{L}_k':=\tilde{A}_k'\bigsqcup \tilde{B}_k'$). We will say that two words $a,b \in \tilde{\mathcal{A}}^\ell$ overlap if there exists a non-trivial shift $0 <s < \ell$ such that the terminal segment of  length $s$ of the word $a$ coincides with the initial segment of the word $b$ of the same length, or vice-versa by permuting $a$ and $b$. Note that we exclude the overlapping where $a$ and $b$ coincide.

The next three lemmas are technical lemmas that concern some important properties about the possible types of overlapping in the objects that we described before. The first one ensures that there is no possible overlapping between two words one of $\tilde A_k$ and the other one from $\tilde B_k$ (resp. $\tilde A_k'$ and $\tilde B_k'$). The next two lemmas characterize the possible overlaps between any two words at each stage $k$ of the iteration process.

\begin{lemma}
	\label{lemma:NoOverlappingAB}
	In our construction described above, a word from $\tilde A_k'$ and a word from $\tilde B_k'$ never overlap, neither can a word from $\tilde A_k$ and a word from $\tilde B_k$ overlap.
\end{lemma}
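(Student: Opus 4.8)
The plan is to reduce the lemma to two elementary observations about the letters occurring in these blocks, each of which follows by an immediate induction on $k$ from the recursive rules in Notation~\ref{notation:BaseLanguageBis} and Notation~\ref{notation:BaseLanguageTer}. The first observation is that every word of $\tilde A_k$ and of $\tilde A'_k$ uses only the letters $0$ and $1$, while every word of $\tilde B_k$ and of $\tilde B'_k$ uses only the letters $0$ and $2$; in other words, an $A$-word never contains a $2$ and a $B$-word never contains a $1$. The second observation is that every word of $\tilde A_k$ and of $\tilde A'_k$ has last letter $1$, while every word of $\tilde B_k$ and of $\tilde B'_k$ has last letter $2$. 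For these inductions the base step is $a_0=01$, $b_0=02$ together with the pure blocks $1^{\ell_0},2^{\ell_0}$; in the inductive step one simply notes that each of $a_k,a'_k,a''_k$ is a concatenation of copies of $a_{k-1}$ and of blocks of $1$'s (so it inherits the letter set $\{0,1\}$, and since its last factor is either a copy of $a_{k-1}$ or a block $1^m$ it ends in $1$), and symmetrically each of $b_k,b'_k,b''_k$ is a concatenation of copies of $b_{k-1}$ and of blocks of $2$'s ending in $2$; the pure blocks are trivial.

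Granting these, I would finish by contradiction. Suppose a word $a\in\tilde A'_k$ and a word $b\in\tilde B'_k$ overlap (the case of $\tilde A_k$ and $\tilde B_k$ is literally the same). Since all words of $\tilde A'_k\cup\tilde B'_k$ have the same length $\ell'_k$, there is some $s$ with $1\le s<\ell'_k$ such that either the terminal segment $w$ of $a$ of length $s$ equals the initial segment of $b$ of length $s$, or the terminal segment $w$ of $b$ of length $s$ equals the initial segment of $a$ of length $s$. In the first situation $w$ is a subword of $b$, hence contains no letter $1$; but $w$ is also a suffix of $a$, so its last letter is the last letter of $a$, which equals $1$, a contradiction. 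In the second situation $w$ is a subword of $a$, hence contains no letter $2$; but $w$ is a suffix of $b$, so its last letter equals $2$, again a contradiction. Therefore no such $s$ exists, i.e.\ $a$ and $b$ do not overlap.

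I do not expect any genuine difficulty here. The whole point is that a hypothetical overlapping word, being a common factor of an $A$-block and a $B$-block, could only be a string of $0$'s, yet it is forced to end with the distinguished letter of whichever of the two blocks it is a suffix of. The only thing that needs a little attention is keeping the two inductions consistent through the alternating odd/even rules and through the primed variants, but that is routine bookkeeping rather than a real obstacle.
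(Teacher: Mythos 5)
Your proof is correct and is essentially the paper's own argument: the paper likewise observes that every word of $\tilde A_k$ (resp.\ $\tilde A'_k$) ends with the symbol $1$, which never occurs in a word of $\tilde B_k$ (resp.\ $\tilde B'_k$), and symmetrically with $2$, so no overlap is possible. Your version merely makes explicit the two inductions and the suffix/prefix case analysis that the paper leaves implicit.
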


\begin{proof}
	Every word in $\tilde A_k'$ ends with the symbol $1$ which does not appear in any word in $\tilde B_k'$. Conversely, every word in $\tilde B_k'$ ends with the symbol $2$ that does not appear in any word in $\tilde A_k'$. The same argument is valid for the words in $\tilde A_k$ and $\tilde B_k$.
\end{proof}

The next lemma is formulated for the case $k$ even, but a similar lemma holds for the case $k$ odd. First we need to fix some notations. Consider $k\geq 1$ an even integer and the even rules described in (\ref{rule.even}) and (\ref{rule.even.prime}). We denote the {\bf initial segment} of length $\ell_{k-1}$ of $a_k$ and $a'_k$ by $a_{k-1}^I$; the {\bf terminal segment} of length $\ell_{k-1}$ of $a_k$ and $a''_k$ by $a_{k-1}^T$; and the remaining segment $1^{(N'_k-1)\ell_{k-1}}$ that we call {\bf marker}. We can represent
\begin{displaymath}
a_k = \underbrace{a_{k-1}}_{a_{k-1}^I} 1^{(N_k-2)\ell_{k-1}} \underbrace{a_{k-1}}_{a_{k-1}^T},
\end{displaymath}
\begin{displaymath}
a_k' = \underbrace{a_{k-1}}_{a_{k-1}^I} \underbrace{1^{(N_k'-1)\ell_{k-1}}}_{\mbox{marker}} \quad \mbox{ and }\quad a_k'' = \underbrace{1^{(N_k'-1)\ell_{k-1}}}_{\mbox{marker}} \underbrace{a_{k-1}}_{a_{k-1}^T}.
\end{displaymath}

We define similarly the initial and terminal segments of $b'_k$ and denoted as $b^I_{k-1}$ and $b^T_{k-1}$, respectively, as shown below
\begin{displaymath}
b'_k = \underbrace{b_{k-1}}_{b_{k-1}^I} b_{k-1}^{(N'_k-2)}\underbrace{b_{k-1}}_{b_{k-1}^T}.
\end{displaymath}
Note that $a_{k-1}^I=a_{k-1}^T=a_{k-1}$ and $b_{k-1}^I=b_{k-1}^T=b_{k-1}$.

\begin{lemma}
	\label{lemma:OverlappingWords}
	Let $k\geq1$ be even, $a_k\in \tilde A_k$ and $b_k\in \tilde{B}_k$ as described in (\ref{rule.even}). Then
	\begin{enumerate}
		\item two words of the same type $a_k$ can only overlap on their initial and terminal segment, that is, $a_{k-1}^I$ of one of the two words overlaps $a_{k-1}^T$ of the other word $a_k$;
		
		\item on the other hand, two words of the same type $b_k$ can overlap exactly on a multiple of $b_{k-1}$ or they have an overlap of length $\ell_{k-2}$ between $b_{k-1}^I$ and $b_{k-1}^T$.
	\end{enumerate}
\end{lemma}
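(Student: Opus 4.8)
The plan is to establish Lemma~\ref{lemma:OverlappingWords} together with its (stated but unwritten) odd-$k$ analogue by a single induction on $k$: the even-$k$ argument for $b_k$ will invoke the odd-$(k-1)$ statement applied to $b_{k-1}$, and symmetrically the odd-$k$ argument for $a_k$ will invoke the even-$(k-1)$ statement applied to $a_{k-1}$; the base cases $k=0$ (where $a_0=01$ and $b_0=02$ visibly have no non-trivial prefix/suffix coincidence) and $k=1$ are checked directly. Everything rests on one elementary observation, obtained by an immediate induction from the rules (\ref{rule.odd})--(\ref{rule.even}): for every $j$, the word $a_j$ begins with $0$ and ends with exactly one $1$ (so its penultimate letter is $0$), and likewise $b_j$ begins with $0$ and ends with exactly one $2$. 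Consequently a maximal run of consecutive $1$'s lying entirely inside one copy of $a_{k-1}$ has length at most $\ell_{k-1}-1$, and a maximal run of consecutive $2$'s lying entirely inside one copy of $b_{k-2}$ has length at most $\ell_{k-2}-1$; since $N_k\ge N'_k\ge 4$ these bounds are strictly smaller than the length of the relevant ``marker''.

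For item (1) I would analyse $a_k=a_{k-1}\,1^{(N_k-2)\ell_{k-1}}\,a_{k-1}$. Because the first $a_{k-1}$ ends with one $1$ and the second begins with $0$, the word $a_k$ contains exactly one maximal run of $1$'s of length $>\ell_{k-1}-1$, namely the run $R$ occupying positions $\llbracket \ell_{k-1},\ell_k-\ell_{k-1}\rrbracket$, of length $(N_k-2)\ell_{k-1}+1$; every other run of $1$'s in $a_k$ is shorter than $\ell_{k-1}$. Suppose two copies of $a_k$ overlap, the right-hand copy displaced by $t$ positions with $0<t<\ell_k$, so the common window is $\llbracket 1+t,\ell_k\rrbracket$. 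If $t<\ell_{k-1}$ then $R$ lies entirely inside this window, so the right-hand copy also carries $1$'s along all of $R$; being a run of $1$'s of length exceeding $\ell_{k-1}-1$ it must lie inside the unique long run of the right-hand copy, and having the same length $(N_k-2)\ell_{k-1}+1$ it must coincide with it, forcing $t=0$ — a contradiction. If $\ell_{k-1}\le t<\ell_k-\ell_{k-1}$, compare the two copies at position $1+t$: the right-hand copy carries there the first letter of $a_k$, which is $0$, while in the left-hand copy $1+t\in\llbracket \ell_{k-1}+1,\ell_k-\ell_{k-1}\rrbracket$ lies inside the pure $1$-block and carries $1$ — again impossible. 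Hence $t\ge \ell_k-\ell_{k-1}$, which is exactly the assertion that the common window lies inside the terminal $a_{k-1}$ of the left-hand copy and inside the initial $a_{k-1}$ of the right-hand copy.

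For item (2) I would analyse $b_k=b_{k-1}b_{k-1}\cdots b_{k-1}$ ($N_k$ factors), with $k-1$ odd, so $b_{k-1}=b_{k-2}\,2^{(N_{k-1}-2)\ell_{k-2}}\,b_{k-2}$ is of the marker form. By the observation above, the only maximal runs of $2$'s in $b_k$ of length $\ge \ell_{k-2}$ are the $N_k$ translates of the marker run of $b_{k-1}$, each of length $m:=(N_{k-1}-2)\ell_{k-2}+1$ (junctions between consecutive $b_{k-1}$-factors contribute only runs of length $\le 1$, since $b_{k-1}$ ends with one $2$ and begins with $0$). Now suppose two copies of $b_k$ overlap, the right-hand copy displaced by $t$, $0<t<\ell_k$. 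If $t\le (N_k-1)\ell_{k-1}$, the marker run inside the last $b_{k-1}$-factor of the left-hand copy lies inside the common window; the right-hand copy then carries a run of $\ge m\ge \ell_{k-2}$ consecutive $2$'s there, so that run is contained in one of the right-hand copy's marker-run translates, and equality of lengths forces the two marker runs to coincide as position sets; since both are translates of the marker run of $b_{k-1}$ by multiples of $\ell_{k-1}$, this pins $t$ to a positive multiple of $\ell_{k-1}$, and the common window is then a concatenation $(b_{k-1})^{N_k-t/\ell_{k-1}}$ — the first alternative. Otherwise $(N_k-1)\ell_{k-1}<t<\ell_k$, the common window has length $\ell_k-t<\ell_{k-1}$ and lies inside the last $b_{k-1}$-factor of the left-hand copy and the first $b_{k-1}$-factor of the right-hand copy; it is therefore a non-trivial prefix/suffix coincidence of the word $b_{k-1}$ with a translate of itself, and since $b_{k-1}$ has the marker form with $k-1$ odd, the odd-$(k-1)$ analogue of item (1) shows this coincidence is confined to the terminal $b_{k-2}$ of one and the initial $b_{k-2}$ of the other, hence has length at most $\ell_{k-2}$ and matches a suffix of $b_{k-2}$ against a prefix of $b_{k-2}$ — the second alternative. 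The odd-$k$ statement is proved the same way, exchanging the roles of $a,b$ and of the letters $1,2$ and invoking the even-$(k-1)$ statement, which closes the induction.

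The step I expect to cause the most trouble is the bookkeeping in the ``the long run of one copy must coincide with the long run of the other'' argument: one has to verify that the particular marker-run translate under consideration really does fall inside the (possibly very short) common window, and handle separately the extreme displacements $t$ near $0$ and near $\ell_k$. All the quantitative inputs there — the marker being strictly longer than any run inside a lower-level block, and the relevant lengths being comparable across the recursion — reduce to the standing hypotheses $N_k\ge N'_k\ge 4$ and $\ell_k=N_k\ell_{k-1}$, so the genuine difficulty is in the word combinatorics rather than in any delicate estimate; the one structural subtlety is organising the even and odd statements so that each calls the other exactly one level down and the induction actually terminates at $k=0$.
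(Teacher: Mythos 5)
Your proof is correct and follows essentially the same route as the paper's: item (1) by playing a $0$ of one copy against the all-$1$ marker of the other, and item (2) by reducing an overlap of $b_k=(b_{k-1})^{N_k}$ to the overlap structure of $b_{k-1}$, which has the marker form because $k-1$ is odd. The differences are presentational rather than substantive: you make explicit, as a joint even/odd induction, the appeal to the unstated odd-$k$ analogue that the paper invokes only via ``$b_{k-1}$ has the same structure as $a_k$'', and your maximal-run-of-$2$'s argument pins the shift to a multiple of $\ell_{k-1}$ more carefully than the paper's terse ``$b_{k-1}$ cannot be a subword of $b_{k-1}b_{k-1}$ except at the ends'' (and correctly yields ``overlap of length at most $\ell_{k-2}$'' in the second alternative, which is what is actually used later in Lemma~\ref{lemma:FrequencyOf0}).
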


\begin{proof}
	\begin{enumerate}
		\item We consider a non-trivial shift $0 < s < \ell_{k}$ and a word $w \in \tilde{\mathcal{A}}^{\llbracket 1, s+\ell_k \rrbracket}$ made of two overlapping $a_k$:
		\[
		a_k = w|_{\llbracket 1, \ell_k \llbracket}, \quad \tilde a_k := w|_{s+\llbracket 1,\ell_k \rrbracket}, \quad \forall\, i \in \llbracket1,\ell_k\rrbracket, \ \tilde a_k(s+i) = a_k(i).
		\]
		
		We assume first that $0 < s < \ell_{k-1}$. Then on the one hand $a_{k-1}^T$ of $a_k$ starts with the symbol $0$ at the index $i=(N_k-1)\ell_{k-1}+1$. On the other hand the symbol $1$ appears in $\tilde a_k$ at the indices in the range $\llbracket \tilde i, \tilde j \rrbracket := \llbracket s+\ell_{k-1}+1,s+(N_k-1)\ell_{k-1} \rrbracket$. Since $i \in \llbracket \tilde i, \tilde j \rrbracket$ we obtain a contradiction. 
		
		We assume next that $\ell_{k-1} \leq s < (N_k-1)\ell_{k-1}$. Then on the one hand the symbol $1$ appears in $a_k$ at the indices in the range $\llbracket \tilde i, \tilde j \rrbracket := \llbracket \ell_{k-1}+1,(N_k-1)\ell_{k-1}\rrbracket$. On the other hand $\tilde a_k$ starts with the symbol $0$ at the index $i=s+1$. We obtain a contradiction.
		
		We conclude that $s$ should satisfy $s \geq (N_k-1)\ell_{k-1}$: two words of the form $a_k$ can only overlap  on their initial and terminal segments.
		
		\item We notice that $k-1$ is odd and $b_{k-1}$ has the same structure as $a_k$ in the first item. Two words of the form $b_{k-1}$ only overlap on their initial and terminal segments.  Then $b_{k-1}$ cannot be a subword of the concatenation $c=b_{k-1}b_{k-1}$ of two words $b_{k-1}$ unless $b_{k-1}$ coincides with the first or the last $b_{k-1}$ in $c$. If $b_k$ and $\tilde b_k$ overlap, either $\tilde b_k$ has been shifted by a multiple of $\ell_{k-1}$, $s \in \{ \ell_{k-1}, 2 \ell_{k-1}, \ldots, (N'_k-1)\ell_{k-1} \}$. Note that $k-1$ is an odd number, then $b_{k-1}$ has the same behavior as $a_k$ described in the previous item. Therefore, it is only possible to have an overlap of a word $b_{k-2}$ of length $\ell_{k-2}$ between $b_{k-1}^T$ and $\tilde b_{k-1}^I$.
	\end{enumerate}
\end{proof}

\begin{lemma}
	\label{lemma:OverlappingWordsPrime}
	Let $k\geq1$ be an even integer and $a_k'$ and $a_k''$ as described in (\ref{rule.even.prime}). Then the following holds:
	\begin{enumerate}	
		\item two words of the same form $a'_k$ never overlap; the same is true for two words of the same form $a_k''$;
		
		\item two words $a'_k$ and $a''_k$ overlap if and only if they overlap either partially on their marker or partially on their initial and terminal segments, respectively. 
	\end{enumerate}
\end{lemma}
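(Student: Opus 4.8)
The plan is to exploit two elementary structural observations about these words. Write $\mu := 1^{(N'_k-1)\ell_{k-1}}$ for the common marker, so that $a'_k = a_{k-1}\mu$ and $a''_k = \mu a_{k-1}$, and note that $|\mu| = (N'_k-1)\ell_{k-1} \geq 3\ell_{k-1} > \ell_{k-1}$ since $N'_k \geq 4$. A one-line induction along the iteration rules shows that every block $a_j$ begins with the symbol $0$ and ends with the block $01$; hence all the $0$'s of $a'_k$ lie among its first $\ell_{k-1}$ symbols (and $a'_k$ starts with a $0$), all the $0$'s of $a''_k$ lie among its last $\ell_{k-1}$ symbols (with a $0$ in position $|\mu|+1$), and $\mu$ contains no $0$ at all. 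I also recall that an overlap of length $t$ between two words means that the suffix of length $t$ of one equals the prefix of length $t$ of the other.

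For part (1), I would place two copies of $a'_k$ at positions $\llbracket 1,\ell'_k\rrbracket$ and $\llbracket s+1,s+\ell'_k\rrbracket$, agreeing on $\llbracket s+1,\ell'_k\rrbracket$ with $0<s<\ell'_k$, and argue by the size of $s$. If $s\geq\ell_{k-1}$, then position $s+1$ lies in the marker of the first copy (a $1$) but is the initial $0$ of the second copy — impossible. If $0<s<\ell_{k-1}$, then the block $a_{k-1}$ of the second copy meets the marker of the first copy along $\llbracket\ell_{k-1}+1,s+\ell_{k-1}\rrbracket$, forcing the last $s$ symbols of $a_{k-1}$ to be $1$'s, while the two copies of $a_{k-1}$ overlap with shift $s$ on $\llbracket s+1,\ell_{k-1}\rrbracket$, forcing $a_{k-1}$ to have period $s$; together these propagate the all-$1$ tail to the whole of $a_{k-1}$, contradicting $a_{k-1}(1)=0$. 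Hence $a'_k$ never self-overlaps. For $a''_k$ the same case analysis applies, using that its $0$'s are confined to its last $\ell_{k-1}$ symbols and that $|\mu|>\ell_{k-1}$: for $s\leq\ell_{k-1}$ it forces the first $s$ symbols of $a_{k-1}$ to be $1$'s; for $\ell_{k-1}<s<|\mu|$ it forces all of $a_{k-1}$ to be $1$'s; for $|\mu|\leq s<\ell'_k$ it forces the last $\ell'_k-s$ symbols of $a_{k-1}$ to be $1$'s, which is impossible for $\ell'_k-s\geq 2$ because $a_{k-1}$ ends with $01$. Only the shift $s=\ell'_k-1$ survives, where the overlap is the single boundary symbol $1$ (the final $1$ of $a_{k-1}$ meeting the first $1$ of $\mu$); this is the degenerate overlap ``where the two words coincide'' that the convention discards.

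For part (2), I would split according to which of $a'_k,a''_k$ supplies the suffix. If a suffix of $a'_k$ of length $t$ equals a prefix of $a''_k$ of length $t$: both $a'_k$ and $a''_k$ carry $\mu$ at the relevant end, so for every $t\leq|\mu|$ both segments equal $1^t$ and the overlap does occur — on the two markers; whereas for $t>|\mu|$ the prefix of $a''_k$ shows a $0$ at position $|\mu|+1$ while the suffix of $a'_k$ (a suffix of $a_{k-1}$ followed by all of $\mu$, since $t<\ell'_k$) shows a $1$ there, so no overlap. If a suffix of $a''_k$ of length $t$ equals a prefix of $a'_k$ of length $t$: both carry $a_{k-1}$ at the relevant end, so for $t\leq\ell_{k-1}$ this says exactly that $a_{k-1}$ has a border of length $t$ (the value $t=\ell_{k-1}$ being the whole block) — an overlap on the initial segment of $a'_k$ and the terminal segment of $a''_k$; whereas for $t>\ell_{k-1}$ the prefix of $a'_k$ starts with its $0$ while the suffix of $a''_k$ equals $1^{t-\ell_{k-1}}a_{k-1}$ and starts with a $1$, so no overlap. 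Combining the two cases yields the stated characterization, and the converse is immediate: the marker overlaps match trivially, and $a_{k-1}=a_{k-2}^{N_{k-1}}$ does have borders, being periodic of period $\ell_{k-2}$.

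The only real difficulty is the combinatorial bookkeeping: running the analysis shift by shift, keeping track of which of the three segments (initial $a_{k-1}$, marker $\mu$, terminal $a_{k-1}$) of each copy sits over which, and correctly disposing of the boundary shifts ($s$ near $\ell_{k-1}$ and $s$ near $\ell'_k$) and of the degenerate one-symbol overlap of $a''_k$. Once the two structural observations are isolated — the $0$'s of each word sit in a single block of length $\ell_{k-1}$ at one end, and the marker, of length strictly larger than $\ell_{k-1}$, has no $0$'s — every case closes either by matching a forced $0$ against a forced $1$, or by the ``period plus all-$1$ tail forces all $1$'s'' principle.
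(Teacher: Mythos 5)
Your overall strategy --- a case analysis on the shift that pins a forced symbol $0$ of one copy against the all-$1$ marker of the other --- is the same as the paper's, and your treatment of item 2 matches the paper's argument almost step for step. Where you genuinely diverge is in the small-shift case of item 1 for $a_k'$: the paper invokes the overlap structure of $a_{k-1}$ from Lemma~\ref{lemma:OverlappingWords} (using that $k-1$ is odd, so $a_{k-1}$ is a concatenation of copies of $a_{k-2}$) to force $s\geq\ell_{k-2}$ and then locates the $0$ of the terminal block $01$ inside the marker, whereas you combine ``the last $s$ symbols of $a_{k-1}$ are $1$'s'' with ``$a_{k-1}$ has period $s$'' to propagate the all-$1$ tail back to $a_{k-1}(1)=0$. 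Your version is self-contained, does not lean on the previous lemma, and is a legitimate --- arguably cleaner --- alternative.

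The one genuine problem is your disposal of the shift $s=\ell_k'-1$ for two copies of $a_k''$. You correctly detect that the last symbol of $a_k''$ (the final $1$ of $a_{k-1}$) matches its first symbol (the first $1$ of the marker), so a one-symbol overlap exists; you then discard it as ``the degenerate overlap where the two words coincide that the convention discards.'' That is not what the convention says: the paper's definition of overlap excludes only the case where the two words coincide entirely, and a one-symbol overlap sits squarely inside the allowed range $0<s<\ell_k'$. As you have left it, item 1 is therefore not proved for $a_k''$ --- indeed, read literally, this one-symbol overlap is a genuine (if harmless) counterexample to the statement. To be fair, the paper's own proof dismisses the $a_k''$ case with ``a similar proof works'' and never confronts this shift either; your case analysis is the more careful of the two, it is only the final dismissal that is wrong. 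The honest fix is either to strengthen the overlap convention so that overlaps shorter than some threshold (say $\ell_{k-2}$) do not count, or to record the one-symbol self-overlap of $a_k''$ as an explicit exception and check that it is innocuous in the only place the lemma is used, namely the clustering argument of Lemma~\ref{lemma:FrequencyOf0}, where a one-column intersection changes nothing. Everything else in your write-up is correct.
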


\begin{proof}
	\begin{enumerate}
		\item We consider a non trivial shift $0 < s < \ell'_k$ and two overlapping words of the form $a'_k$ shifted by $s$. Let be $w \in {\tilde{\mathcal{A}}}^{\llbracket 1, s+\ell'_k \rrbracket}$ such that
		\begin{displaymath}
		\dis a'_k = w|_{\llbracket 1, \ell'_k \rrbracket}, \quad \tilde a'_k := w|_{s+\llbracket 1,\ell'_k \rrbracket}, \quad \forall\, i \in \llbracket1,\ell'_k\rrbracket, \ \tilde a'_k(s+i) = a'_k(i).
		\end{displaymath}
		
		We assume first that $\ell_{k-1} \leq s < \ell'_{k}$. On the one hand, $\tilde a'_k$ starts with the symbol $0$, $w(s+1)=0$; on the other hand, $w|_{\llbracket \ell_{k-1}+1, \ell'_k \rrbracket}$ contains only the symbol $1$. Since $s+1 \in \llbracket \ell_{k-1}+1, \ell'_k \rrbracket$ we obtain a contradiction. 
		
		We assume next that $0 < s < \ell_{k-1}$. We observe that $k-1$ is odd and the two initial segments $a^I_{k-1}$ of $a'_k$ and $\tilde a'_k$ are of the same form as $b_k$ in the second item. They overlap on a multiple of words of the form $a_{k-2}$ or at their initial and terminal segments. Necessarily $s \geq l_{k-2}\geq2$. On the one hand, the initial segment of $\tilde a'_k$ ends with the symbols $01$, $w(s+\ell_{k-1}-1)=0$, on the other hand, $w|_{\llbracket\ell_{k-1}+1,\ell'_k \rrbracket}$ contains only the symbol $1$. Since $s+\ell_{k-1}-1 \in \llbracket \ell_{k-1}+1, \ell'_k \rrbracket$ we obtain a contradiction. 
		
		A similar proof works for $a''_k$ instead of $a'_k$.
		
		\item We divided our discussion in two cases. We consider first the case,
		\begin{displaymath}
		\dis a'_k = w|_{\llbracket 1, \ell'_k \rrbracket}, \quad \tilde a''_k := w|_{s+\llbracket 1,\ell'_k \rrbracket}, \quad \forall\, i \in \llbracket1,\ell'_k\rrbracket, \ \tilde a''_k(s+i) = a''_k(i).
		\end{displaymath}
		We assume that $0 < s < \ell_{k-1}$. The terminal segment of $\tilde a''_k$ is a word like $a_{k-1}$ and then it starts with the symbol $0$ which appears in $w$ at the index $s+(N'_k-1)\ell_{k-1} \in \llbracket \ell_{k-1},\ell'_k\rrbracket$. On the other hand $w|_{\llbracket \ell_{k-1}, \ell'_k \rrbracket}$ contains only the symbol $1$. We obtain a contradiction, then necessarily $\ell_k \leq s$ and the two words $a'_k$ and $a''_k$ overlap (partially or completely) on their markers.
		
		We consider next the case,
		\begin{displaymath}
		\dis a''_k = w|_{\llbracket 1, \ell'_k \rrbracket}, \quad \tilde a'_k := w|_{s+\llbracket 1,\ell'_k \rrbracket}, \quad \forall\, i \in \llbracket1,\ell'_k\rrbracket, \ \tilde a'_k(s+i) = a'_k(i).
		\end{displaymath}
		Assume that $0 < s < (N'_k-1)\ell_{k-1}$. The initial segment of $\tilde a'_k$ starts with the symbol $0$ which is located at the index $s+1 \in \llbracket 1, (N'_k-1)\ell_{k-1} \rrbracket$ in $w$. On the other hand $w|_{\llbracket 1,(N'_k-1)\ell_{k-1}\rrbracket}$ is the marker of $a''_k$ and contains only the symbol $1$. We obtain a contradiction, then it is only possible to have $s \geq  (N'_k-1)\ell_{k-1}$, which means that the terminal segment of $a''_k$ overlaps with the initial segment of $a'_k$. Both segments are copies of $a_{k-1}$ and as we consider $k\geq 2$ even, $k-1$ is odd and $a_{k-1}$ has the same behavior described in Lemma~\ref{lemma:OverlappingWords} item 2. Therefore the possible overlap can occur (partially or completely) on their initial and terminal segments by the rules described as in Lemma~\ref{lemma:OverlappingWords} item 2.
	\end{enumerate}
\end{proof}

As defined in (\ref{eq.concatenated-subshift.Lk}) we consider for each $k\geq 0$ the concatenated subshifts generated by the sets $\tilde{L}_k$, $\tilde A_k$ and $\tilde B_k$ that are denoted as $\langle\tilde{L}_k\rangle$, $\langle\tilde A_k\rangle$ and $\langle\tilde B_k\rangle$, respectively.

By the definition of these subshifts we have that for each $k\geq 0$
\begin{displaymath}
\dis \langle\tilde A_k\rangle\subseteq \langle\tilde A_{k+1}\rangle, \quad \langle\tilde B_k\rangle\subseteq \langle\tilde B_{k+1}\rangle
\end{displaymath}
and
\begin{displaymath}
\dis \langle \tilde{L}_{k+1}\rangle\subseteq \langle \tilde{L}_k\rangle.
\end{displaymath}

\begin{lemma}
\label{lemma.LksubsetLkprime}
Consider the iteration process described in Notation~\ref{notation:BaseLanguageBis} and Notation~\ref{notation:BaseLanguageTer}. If we denote $\tilde L_k=\tilde A_k\bigsqcup \tilde B_k$ and $\tilde L_k'=\tilde A_k'\bigsqcup \tilde B_k'$ for each $k\in\RR$, then
\[ \langle\tilde{L}_k\rangle \subseteq \langle\tilde{L}_k'\rangle. \]
\end{lemma}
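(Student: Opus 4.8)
The plan is to show directly that every configuration of $\langle \tilde L_k\rangle$ admits a decomposition into consecutive blocks of length $\ell_k'$ each lying in $\tilde L_k'$, which by Definition~\ref{def.concatenatedsubshift} is exactly what $\langle \tilde L_k\rangle\subseteq\langle \tilde L_k'\rangle$ means. The one numerical input is that $N_k$ is a multiple of $N_k'$, so that $\ell_k=N_k\ell_{k-1}=(N_k/N_k')\,\ell_k'$; in particular $\ell_k'\mid\ell_k$, and I set $m:=N_k/N_k'$.

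First I would reduce the statement to a purely combinatorial claim about the generating words. For a word $w\in\tilde{\mathcal{A}}^{\llbracket1,\ell_k\rrbracket}$ let $w=\rho_1(w)\rho_2(w)\cdots\rho_m(w)$ be its unique factorization into $m$ consecutive words of length $\ell_k'$. I claim: for every $w\in\tilde L_k$ and every $j\in\llbracket1,m\rrbracket$ one has $\rho_j(w)\in\tilde L_k'$; more precisely $\rho_j(w)\in\tilde A_k'$ if $w\in\tilde A_k$ and $\rho_j(w)\in\tilde B_k'$ if $w\in\tilde B_k$. Granting this, take $x\in\langle \tilde L_k\rangle$ and an offset $u\in\llbracket1,\ell_k\rrbracket$ witnessing membership, so $x$ is the concatenation of the length-$\ell_k$ blocks $w_v:=(\sigma^{u+\ell_k v}(x))|_{\llbracket1,\ell_k\rrbracket}\in\tilde L_k$, $v\in\ZZ$. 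Refining each $w_v$ through $\rho$ subdivides each of these blocks into $m$ consecutive length-$\ell_k'$ words, all in $\tilde L_k'$; since $\ell_k'\mid\ell_k$ these refined blocks tile $\ZZ$ and are aligned along the lattice $u'+\ell_k'\ZZ$, where $u'\in\llbracket1,\ell_k'\rrbracket$ is the appropriate residue of $u$ modulo $\ell_k'$. Hence $u'$ witnesses $x\in\langle \tilde L_k'\rangle$.

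It then remains to verify the combinatorial claim, which I would do by inspecting the explicit words of Notations~\ref{notation:BaseLanguageBis} and~\ref{notation:BaseLanguageTer}, separating $k$ odd and $k$ even. The "pure" words are immediate: for $k$ odd, $a_k=a_{k-1}^{N_k}=(a_{k-1}^{N_k'})^{m}=(a_k')^{m}$, $1^{\ell_k}=(1^{\ell_k'})^{m}$ and $2^{\ell_k}=(2^{\ell_k'})^{m}$, so each $\rho_j$ is $a_k'$ or $1^{\ell_k'}\in\tilde A_k'$, resp. $2^{\ell_k'}\in\tilde B_k'$; symmetrically $b_k=(b_k')^{m}$, $2^{\ell_k}=(2^{\ell_k'})^m$ and $1^{\ell_k}=(1^{\ell_k'})^m$ in the case $k$ even. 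The only substantive case is the word carrying a marker: $b_k=b_{k-1}\,2^{(N_k-2)\ell_{k-1}}\,b_{k-1}$ for $k$ odd (and its mirror $a_k=a_{k-1}\,1^{(N_k-2)\ell_{k-1}}\,a_{k-1}$ for $k$ even). Viewing $b_k$ as the concatenation of $N_k$ chunks of length $\ell_{k-1}$ — namely $b_{k-1}$, then $N_k-2$ copies of $2^{\ell_{k-1}}$, then $b_{k-1}$ — and grouping them $N_k'$ at a time, the first group is $b_{k-1}\,2^{(N_k'-1)\ell_{k-1}}=b_k'$, the last group is $2^{(N_k'-1)\ell_{k-1}}\,b_{k-1}=b_k''$, and every intermediate group is $2^{\ell_k'}$, all of which lie in $\tilde B_k'$; the mirror computation puts the analogous factorization of $a_k$ into $\{a_k',a_k'',1^{\ell_k'}\}\subseteq\tilde A_k'$ when $k$ is even. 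This is where the standing assumptions on $(N_k)$ and $(N_k')$ are used, to guarantee the marker is long enough for the initial and terminal $\ell_k'$-blocks to be disjoint and to equal $b_k'$ and $b_k''$ exactly. Incidentally, the same chunk decomposition yields at once the inclusions $\langle\tilde A_k\rangle\subseteq\langle\tilde A_k'\rangle$ and $\langle\tilde B_k\rangle\subseteq\langle\tilde B_k'\rangle$ recorded after Notation~\ref{notation:BaseLanguageTer}.

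The only real difficulty is bookkeeping rather than mathematics: one must be sure that a single global offset $u'$ works simultaneously for every block of $x$ (this is exactly the divisibility $\ell_k'\mid\ell_k$), and that in the marker case the grouping into $\ell_k'$-blocks lands precisely on the words $b_k',b_k''$ of $\tilde B_k'$ (resp. $a_k',a_k''$ of $\tilde A_k'$) rather than on some unintended shift of them. Both points are routine once the decomposition into $\ell_{k-1}$-chunks is in place, and no use is made of the overlap Lemmas~\ref{lemma:NoOverlappingAB}--\ref{lemma:OverlappingWordsPrime} or of the frequency computation of Lemma~\ref{lemma.frequence.max}.
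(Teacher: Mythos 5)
Your proof is correct and follows essentially the same route as the paper's: the paper simply asserts that $N_k'$ divides $N_k$ and that every word of $\tilde A_k$ (resp.\ $\tilde B_k$) is a concatenation of words of $\tilde A_k'$ (resp.\ $\tilde B_k'$), from which the inclusion of concatenated subshifts follows. You have merely filled in the details the paper leaves implicit — the explicit chunking of the marker words into $b_k'$, copies of $2^{\ell_k'}$, and $b_k''$, and the alignment of the offset $u'$ via $\ell_k'\mid\ell_k$ — both of which check out.
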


\begin{proof}
If we consider the iteration process described in Notation~\ref{notation:BaseLanguageBis} and Notation~\ref{notation:BaseLanguageTer}, then $N'_k$ divides $N_k$. More than that, every word of $\tilde A_k$, $\tilde B_k$ is obtained as concatenation of words of $\tilde A'_k$, $\tilde B'_k$ respectively. Therefore, the concatenated subshift $\langle\tilde{L}_k\rangle$ is a subset of $\langle\tilde{L}_k'\rangle$, since every pattern in $\tilde{L}_k'$ is a subpattern in $\tilde{L}_k$.
\end{proof}

We consider
\begin{equation}
\label{eq.Xtilde}
\tilde{X}:=\bigcap_{k\in\NN}\langle \tilde{L}_k\rangle.
\end{equation}
The construction presented here satisfies all the hypotheses of Lemma~\ref{lemma.onedimensional}, therefore $\tilde{X}=\Sigma^1(\tilde{\mathcal{A}},\overline{\mathcal{F}})$ is the subshift generated by the set of forbidden words $\overline{\mathcal{F}}= \bigsqcup_{k\geq0} \tilde{\mathcal{F}}(\ell_k)$, where $\tilde{\mathcal{F}}(\ell_k)$ is the set of words of length $\ell_k$  that are not subwords of the concatenation of two words of $\tilde{L}_{k}$.

From now on we give a specialized algorithm which produces our auxiliary sequences ($N_k$, $\ell_k$, $N_k'$ and $\ell_k'$) and also the choice of $\beta_k$ for each $k$. We introduce two integer numbers $\rho_k^A$ and $\rho_k^B$ that count  the number of symbols $0$ in the words $a_k$ and $b_k$
\[
\rho_k^A := \ell_k f_k^A, \quad \rho_k^B := \ell_k f_k^B.
\]

\begin{definition}[The recursive sequences]
	\label{notation}
	We define the partial recursive function $S : \mathbb{N}^4 \to \mathbb{N}^4$
	\[
	(\ell_k,\beta_k,\rho_k^A,\rho_k^B) = S(\ell_{k-1},\beta_{k-1},\rho_{k-1}^A, \rho_{k-1}^B).
	\] 
	satisfying $\ell_0=2$, $\beta_0=0$, $\rho_0^A = \rho_0^B = 1$ and defined such that the following holds:
	
	In the case $k$ is even:
	\begin{enumerate}
		\item $\displaystyle N'_k :=  \Big\lceil \frac{k \rho_{k-1}^A}{\rho_{k-1}^B} \Big\rceil, \ \ell'_k = N'_k \ell_{k-1}$,
		\item $\displaystyle \beta_k := \Big\lceil\frac{\ell_{k-1}^2 2^{k \ell'_k}}{(\rho_{k-1}^B)^2}\Big\rceil$,
		\item $\displaystyle N_k := N'_k \Big\lceil \frac{k\beta_k}{N'_k \rho_{k-1}^B} \Big\rceil, \ \ell_k = N_k \ell_{k-1}$,
		\item $\displaystyle \rho^A_k =2\rho_{k-1}^A,  \   \rho^B_k =N_k  \rho^B_{k-1}$,
	\end{enumerate}
	
	In the case $k$ is odd: 
	\begin{enumerate}
		\addtocounter{enumi}{4}
		\item $(\ell_k,\beta_k,\rho_k^A,\rho_k^B)$ are computed as before with $A$ and $B$ permuted.
	\end{enumerate}
\end{definition}

The following proposition assures there exists a Turing machine that enumerates all the forbidden patterns of $\tilde{X}$, which means that $\tilde{X}$ is an effectively closed subshift. More than that, this Turing machine can be constructed such that it enumerates the forbidden words in increasing length, it gives an exponential upper bound for the number of steps to enumerate every forbidden word up to a given length and it also gives a trivial reconstruction function ($R(n)=n$) that will be defined later (Definition~\ref{def.reconstructionfunction}).

\begin{proposition}
\label{proposition:algorithm}
Let $\tilde{X}$ be the subshift defined as in (\ref{eq.Xtilde}). Let $\tilde{\mathcal{F}}:=\bigsqcup_{n\in\NN}\tilde{\mathcal{F}}(n)$ where $\tilde{\mathcal{F}}(n)$ is the set of words of length $n$ that are not sub-words of the concatenation of two words of $\tilde L_k$ for some $k\geq0$ such that $\ell_k \geq n$. 
	
Then the following holds:
\begin{enumerate}
	\item $\tilde{X}=\Sigma^1(\tilde{\mathcal{A}},\tilde{\mathcal{F}})$.
	
	\item For every $n\geq0$, there exist unique integers $k\geq1$ and $p\geq2$ satisfying   
	\[
	\ell_{k-1} < n \leq  \ell_k \ \ \text{and} \ \ (p-1)\ell_{k-1} < n \leq p \ell_{k-1}.
	\]
	We denote $\tilde{\mathcal{F}}'(n)$ as the set of words of length $n$ that are not sub-words of any word of the form $\overrightarrow{w_1}\overleftarrow{w_2}$ where $\overrightarrow{w_1}$ is a terminal segment of $w_1$ of length $(p+1)\ell_{k-1}$, $\overleftarrow{w_2}$ is an initial segment of $w_2$  of length $(p+1)\ell_{k-1}$, and $w_1$ or $w_2$ are  either one of the words $a_k,b_k,1_k,2_k$. Then
	\[\tilde{\mathcal{F}}'(n)=\tilde{\mathcal{F}}(n).\]
	
	\item There exists a Turing machine $\mathcal{M}$ that enumerates all patterns of $\tilde{\mathcal{F}}$ in increasing order (words of $\tilde{\mathcal{F}}(n)$ are enumerated before those in $\tilde{\mathcal{F}}(n+1)$). If we denote by $\tau\colon \NN \to \NN$ the function $\tau(n)$ that counts the number of steps that $\mathcal{M}$ takes to enumerate all patterns of $\tilde{\mathcal{F}}$ up to length $n$, then $\tau(n)\leq P(n)|\tilde{\mathcal{A}}|^n$, for some polynomial $P(n)$.
\end{enumerate}
\end{proposition}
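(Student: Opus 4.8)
The plan is to establish the three items in turn: item (1) bridges the combinatorial description of the forbidden set used in Lemma~\ref{lemma.onedimensional} with the one in the statement; item (2) is a direct structural analysis of the canonical words $a_k,b_k,1_k,2_k$; and item (3) packages (1)--(2) into an explicit algorithm whose running time is dominated by a brute-force enumeration. For item (1), recall that our construction satisfies the hypotheses of Lemma~\ref{lemma.onedimensional}, so $\tilde X=\Sigma^1(\tilde{\mathcal A},\overline{\mathcal F})$ with $\overline{\mathcal F}=\bigsqcup_{k}\tilde{\mathcal F}(\ell_k)$ as in that lemma. First I would observe that, for fixed $n$, the property ``$w$ is a subword of the concatenation of two words of $\tilde L_k$'' does not depend on the choice of $k$ with $\ell_k\geq n$: going up one level it is preserved because, by the extra hypothesis of Lemma~\ref{lemma.onedimensional}, every concatenation of two words of $\tilde L_k$ is a subword of a concatenation of two words of $\tilde L_{k+1}$; going down one level it is preserved because each word of $\tilde L_{k+1}$ is a concatenation of words of $\tilde L_k$, hence a concatenation of two words of $\tilde L_{k+1}$ decomposes into $\tilde L_k$-blocks, and a window of length $n\leq\ell_k$ meets at most two consecutive of them. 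This makes $\tilde{\mathcal F}(n)$ well defined, identifies it with $\mathcal F_k$ when $n=\ell_k$, and gives $\overline{\mathcal F}\subseteq\tilde{\mathcal F}$, so $\Sigma^1(\tilde{\mathcal A},\tilde{\mathcal F})\subseteq\tilde X$. For the opposite inclusion, any word $w$ of length $n$ occurring in $\tilde X\subseteq\langle\tilde L_k\rangle$ (with $\ell_k\geq n$) meets at most two consecutive $\tilde L_k$-blocks of a configuration of $\langle\tilde L_k\rangle$, so it is a subword of a concatenation of two words of $\tilde L_k$ and $w\notin\tilde{\mathcal F}(n)$; thus $\tilde X\subseteq\Sigma^1(\tilde{\mathcal A},\tilde{\mathcal F})$.

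For item (2), existence and uniqueness of $k$ and $p$ follow from $\ell_k=N_k\ell_{k-1}$ being strictly increasing with $N_k\geq4$, which also gives $p=\lceil n/\ell_{k-1}\rceil\in\{2,\dots,N_k\}$. The inclusion $\tilde{\mathcal F}(n)\subseteq\tilde{\mathcal F}'(n)$ is immediate since each word $\overrightarrow{w_1}\,\overleftarrow{w_2}$ is a subword of the concatenation $w_1w_2$ of two words of $\tilde L_k$. For the reverse inclusion I would take $w\sqsubset u_kv_k$ with $u_k,v_k\in\tilde L_k$ and examine the position of $w$ in the string $u_kv_k$. If $w$ meets the junction, then since $n\leq p\ell_{k-1}<(p+1)\ell_{k-1}$ the portion of $w$ inside $u_k$ lies in the terminal segment of $u_k$ of length $(p+1)\ell_{k-1}$ and the portion inside $v_k$ in the corresponding initial segment, so $w\sqsubset\overrightarrow{u_k}\,\overleftarrow{v_k}$. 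If $w$ lies inside a single word, say $u_k$, there are three sub-cases according to the structure of $u_k$: if $u_k\in\{1_k,2_k\}$ then $w$ is a run, a subword of $\overrightarrow{1_k}\,\overleftarrow{1_k}$ or $\overrightarrow{2_k}\,\overleftarrow{2_k}$; if $u_k$ is the ``marker'' word ($a_k$ when $k$ is even, $b_k$ when $k$ is odd), then either $w$ meets one of the two bounding copies of $a_{k-1}$ (resp.\ $b_{k-1}$), whence it lies within the initial or terminal segment of $u_k$ of length $(p+1)\ell_{k-1}$, or $w$ lies in the marker and is again a run; and if $u_k$ is the ``repetition'' word $c_{k-1}^{N_k}$ (with $c=b$ for $k$ even, $c=a$ for $k$ odd), then $w$ spans at most $p+1$ consecutive copies of $c_{k-1}$, so $w\sqsubset c_{k-1}^{p+1}\sqsubset\overrightarrow{c_k}\,\overleftarrow{c_k}$. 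In every case $w\notin\tilde{\mathcal F}'(n)$, so $\tilde{\mathcal F}'(n)=\tilde{\mathcal F}(n)$.

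For item (3), the machine $\mathcal M$ runs the outer loop $n=1,2,3,\dots$; at stage $n$ it uses the partial recursive function $S$ of Definition~\ref{notation} to compute the unique $k$ with $\ell_{k-1}<n\leq\ell_k$ and $p=\lceil n/\ell_{k-1}\rceil$, builds recursively the words $a_{k-1},b_{k-1}$ of length $\ell_{k-1}<n$, forms from them and from explicit runs of $1$'s and $2$'s the at most sixteen strings $\overrightarrow{w_1}\,\overleftarrow{w_2}$ (each of length $<6n$), then scans all $|\tilde{\mathcal A}|^n$ words of length $n$ and prints exactly those that are a subword of none of these strings; by item~(2) this prints precisely $\tilde{\mathcal F}(n)$, and lengths are produced in increasing order. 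For the running time one checks that the auxiliary data at stage $n$ (the integers $\ell_j,N_j,N'_j,\beta_j,\rho^A_j,\rho^B_j$ for $j\leq k$ and the strings $\overrightarrow{w_1}\,\overleftarrow{w_2}$) have bit-length polynomial in $n$: indeed $k=O(\log n)$ because $\ell_j\geq 4^{j}\ell_0$, one has $\ell_{k-1}<n$, and the growth rules of Definition~\ref{notation} keep $\ell'_k$, and hence $\log\beta_k$ and $\log N_k$, bounded by a polynomial in $n$; so this data is produced in time $\mathrm{poly}(n)$, and the remaining work at stage $n$ is the enumeration of $|\tilde{\mathcal A}|^n$ words, each tested against strings of length $<6n$, at cost $O(n^2)|\tilde{\mathcal A}|^n$. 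Summing the resulting geometric series over $n$ yields $\tau(n)\leq P(n)|\tilde{\mathcal A}|^n$ for a suitable polynomial $P$.

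I expect the main obstacle to be item (2): correctly enumerating the sub-cases for a window $w$ sitting in the interior of a single block and verifying that each is captured by a junction window of size $(p+1)\ell_{k-1}$ of the canonical words — this is exactly where the rigid ``marker versus repetition'' structure of the $a_k,b_k$, together with the overlap analysis of Lemmas~\ref{lemma:NoOverlappingAB}--\ref{lemma:OverlappingWordsPrime}, must be used with care — and, secondarily, the bookkeeping behind the polynomial bit-length estimate invoked in item (3), which is the content referred to the appendix.
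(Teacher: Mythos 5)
Your proposal is correct and follows essentially the same route as the paper: the same canonical initial/terminal segments $\overrightarrow{w_1},\overleftarrow{w_2}$ of $a_k,b_k,1_k,2_k$ of length $(p+1)\ell_{k-1}$, the same positional analysis of where a window of length $n\leq p\ell_{k-1}$ can sit inside $w_1w_2$ (the paper phrases this as ``dragging'' the window across $w_1w_2$ rather than as your junction/interior case split), and the same brute-force enumerator with polynomial per-word overhead for item (3). The only organizational difference is that you establish item (1) directly from the $k$-independence of the subword condition and then prove item (2), whereas the paper proves $\tilde{\mathcal{F}}(n)=\tilde{\mathcal{F}}'(n)$ first and deduces item (1) from it; your treatment of item (1) is in fact somewhat more explicit than the paper's.
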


The proof for the previous proposition is in Appendix~\ref{appendix}.

The next lemma gives that the sets $\mathcal{L}(\langle\tilde A_k\rangle,\ell_k)$ and $\mathcal{L}(\langle\tilde B_k\rangle,\ell_k)$ can be seen as the set of all possible words of length $\ell_k$ that can be seen as a subword of a concatenation of two words of $\tilde{A}_k$ and $\tilde{B}_k$, respectively.

\begin{lemma}
	Given our construction of $\tilde{A}_k$ and $\tilde{B}_k$ we have that for each $k\geq 0$
	\begin{equation}
	\label{psiA}
	\dis \mathcal{L}(\langle\tilde A_k\rangle,\ell_k)=\left\{w\in\tilde{\mathcal{A}}^{\llbracket1,\ell_k\rrbracket}: \exists a_1,a_2\in \tilde{A}_k \mbox{ such that }  w\sqsubset a_1a_2  \right\}
	\end{equation}
	and
	\begin{equation}
	\label{psiB}
	\dis \mathcal{L}(\langle\tilde B_k\rangle,\ell_k)=\left\{w\in\tilde{\mathcal{A}}^{\llbracket1,\ell_k\rrbracket}: \exists b_1,b_2\in \tilde{B}_k \mbox{ such that }  w\sqsubset b_1b_2  \right\}.
	\end{equation}
\end{lemma}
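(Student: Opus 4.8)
The statement is symmetric under swapping $\tilde A_k\leftrightarrow\tilde B_k$, so the plan is to prove (\ref{psiA}) and then obtain (\ref{psiB}) verbatim with $1^{\ell_k}$ replaced by $2^{\ell_k}$. Throughout I would use that all words of $\tilde A_k$ have length $\ell_k$ and that, by Definition~\ref{def.concatenatedsubshift}, a configuration $x$ lies in $\langle\tilde A_k\rangle$ exactly when there is an offset $u\in\llbracket 1,\ell_k\rrbracket$ such that every block $x|_{\llbracket u+\ell_k v+1,\,u+\ell_k v+\ell_k\rrbracket}$, $v\in\ZZ$, belongs to $\tilde A_k$; these blocks partition $\ZZ$ into consecutive intervals of length $\ell_k$.

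For the inclusion ``$\subseteq$'' I would take $w\in\mathcal{L}(\langle\tilde A_k\rangle,\ell_k)$, write $w=x|_{\llbracket 1,\ell_k\rrbracket}$ for some $x\in\langle\tilde A_k\rangle$, and fix an offset $u$ witnessing membership. Let $v_0$ be the unique index with $1\in\llbracket u+\ell_k v_0+1,\,u+\ell_k v_0+\ell_k\rrbracket$ (so $1-\ell_k\le u+\ell_k v_0\le 0$), and set $a_1:=x|_{\llbracket u+\ell_k v_0+1,\,u+\ell_k v_0+\ell_k\rrbracket}$ and $a_2:=x|_{\llbracket u+\ell_k (v_0+1)+1,\,u+\ell_k (v_0+1)+\ell_k\rrbracket}$, both elements of $\tilde A_k$. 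A short computation with the above inequalities gives $\llbracket 1,\ell_k\rrbracket\subseteq\llbracket u+\ell_k v_0+1,\,u+\ell_k v_0+2\ell_k\rrbracket$, i.e.\ the length-$\ell_k$ window lies inside the concatenation of these two consecutive blocks; hence $w$ is a subword of $a_1 a_2$. (When $u+\ell_k v_0=0$ the window equals the block $a_1$, still a subword of $a_1 a_2$.)

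For the reverse inclusion I would start from $w\in\tilde{\mathcal{A}}^{\llbracket 1,\ell_k\rrbracket}$ with $w\sqsubset a_1 a_2$, $a_1,a_2\in\tilde A_k$; since $|a_1 a_2|=2\ell_k$ and $|w|=\ell_k$, this means $w=(a_1 a_2)|_{\llbracket s+1,\,s+\ell_k\rrbracket}$ for some $s\in\llbracket 0,\ell_k\rrbracket$ (up to the reindexing implicit in $\sqsubset$). I then construct $y\in\tilde{\mathcal{A}}^{\ZZ}$ equal to $a_1$ on $\llbracket 1,\ell_k\rrbracket$, equal to $a_2$ on $\llbracket \ell_k+1,2\ell_k\rrbracket$, and equal to $1^{\ell_k}$ on every remaining block $\llbracket \ell_k v+1,\,\ell_k v+\ell_k\rrbracket$ with $v\in\ZZ\setminus\{0,1\}$. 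Using the offset $u=\ell_k$ and $1^{\ell_k}\in\tilde A_k$ one checks $y\in\langle\tilde A_k\rangle$. Since $\langle\tilde A_k\rangle$ is shift invariant, $\sigma^s(y)\in\langle\tilde A_k\rangle$, and $(\sigma^s(y))|_{\llbracket 1,\ell_k\rrbracket}=y|_{\llbracket s+1,\,s+\ell_k\rrbracket}=(a_1 a_2)|_{\llbracket s+1,\,s+\ell_k\rrbracket}=w$; therefore $w\in\mathcal{L}(\langle\tilde A_k\rangle,\ell_k)$.

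No step is a genuine obstacle: the proof reduces to the elementary observation that a window of length $\ell_k$ inside an infinite concatenation of length-$\ell_k$ blocks meets at most two consecutive blocks, together with the fact that the constant block $1^{\ell_k}\in\tilde A_k$ (resp.\ $2^{\ell_k}\in\tilde B_k$) is always available for padding the rest of the line. The only points requiring care are the index bookkeeping for the offsets $u$ and $v_0$ and the degenerate cases $s\in\{0,\ell_k\}$, where $w$ is a full block of $\tilde A_k$ — handled by noting that a single block is a (trivial) subword of $a_1a_2$ and conversely is realized in $\langle\tilde A_k\rangle$ by any concatenation.
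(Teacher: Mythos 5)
Your proof is correct; the paper in fact states this lemma without supplying a proof, and your argument is exactly the one it implicitly relies on: a window of length $\ell_k$ inside a bi-infinite concatenation of $\ell_k$-blocks meets at most two consecutive blocks (giving ``$\subseteq$''), and any concatenation $a_1a_2$ can be padded on both sides by the constant block $1^{\ell_k}\in\tilde A_k$ (resp.\ $2^{\ell_k}\in\tilde B_k$) to realize each of its length-$\ell_k$ subwords in $\langle\tilde A_k\rangle$ (giving ``$\supseteq$''). Your index bookkeeping for the offset $u$, the block index $v_0$, and the degenerate shifts $s\in\{0,\ell_k\}$ is accurate, so nothing is missing.
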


\section{Bidimensional SFT}

We can apply the construction of Aubrun-Sablik to our one-dimensional effectively closed subshift $\tilde{X}=\Sigma^1(\tilde{\mathcal{A}},\tilde{\mathcal{F}})$ and obtain a bidimensional SFT $\hat{X}\subseteq \Sigma^2(\hat{\mathcal{A}})$ defined over an alphabet $\hat{\mathcal{A}}=\tilde{\mathcal{A}}\times\mathcal{C}$. We are using the symbol $\wedge$ over the objects that are defined for the SFT generated by the Theorem~\ref{tAS}. Let $\hat{\mathcal{F}} \subseteq \mathcal{A}^{\llbracket 1,D \rrbracket^2}$ be a finite set of forbidden patterns such that
\begin{equation}
\dis \hat{X}:= \Sigma^2(\hat{\mathcal{A}},\hat{\mathcal{F}})
\end{equation}
as the corresponding subshift generated by $\hat{\mathcal{F}}$.

\begin{definition}
	\label{definition:BaseLanguages}
	Let ${\mathcal{V}}_*$ be the set of forbidden patterns in $\Sigma^2(\tilde{\mathcal{A}})$ that are not vertically aligned, that is,
	\begin{displaymath}
	{\mathcal{V}}_* := \{ p \in \tilde{\mathcal{A}}^{\{1\} \times \llbracket 1,2\rrbracket} : p(1,1) \not= p(1,2) \}.
	\end{displaymath}
\end{definition}

	Let $\overline{\pi}:\hat{\mathcal{A}}\to\tilde{\mathcal{A}}$ defined as
	\begin{equation}
	\label{eq.overlinepi}
	\left\{
	\begin{array}{rcl}
	\overline{\pi}:\hat{\mathcal{A}}=\tilde{\mathcal{A}}\times\mathcal{C} & \to & \tilde{\mathcal{A}} \\
	(a,c) & \mapsto & \overline{\pi}(a,c)=a; \\
	\end{array}
	\right.
	\end{equation}
	and let $\pi :\Sigma^2(\hat{\mathcal{A}},\hat{\mathcal{F}}) \to \Sigma^2(\tilde{\mathcal{A}})$ be the projection defined as
	\begin{equation}
	\label{eq.pi}
	\left\{
	\begin{array}{rcl}
	\pi:\hat{X}=\Sigma^2(\tilde{\mathcal{A}},\hat{\mathcal{F}}) & \to & \Sigma^2(\tilde{\mathcal{A}}) \\
	x & \mapsto & \pi(x)=\left(\overline{\pi}(x_{(i,j)})\right)_{(i,j)\in\ZZ^2}.
	\end{array}
	\right.
	\end{equation}
	
	We denote
	\[
	\hat{X}_\pi:=\left\{\pi(x): x\in \hat{X}  \right\}.
	\]
	Note that $\tilde X_\pi\subseteq\Sigma^2(\tilde{\mathcal{A}},\mathcal{V}_*)$ since $\hat{\mathcal{F}}$ contains all the patterns that are not vertically aligned.
	
	\begin{remark}
	Here we always use the expression "vertically aligned" to express the vertical alignment over the the first coordinate of $\hat{\mathcal{A}}$, that is, over the one-dimensional alphabet $\tilde{A}$.
	\end{remark}
	
	By Theorem~\ref{tAS}, the projective $\ZZ$-subaction of $\hat{X}_\pi$ is equal to $\tilde X$, which means that
	\begin{displaymath}
	\dis \hat{X}_\pi = \{ x \in \Sigma^2(\tilde{\mathcal{A}},\mathcal{V}_*) : x|_{\ZZ\times\{0\}}  \in \tilde X \}.
	\end{displaymath}
	
	\begin{definition}
	We define $\tilde A'_{k*}\subseteq\tilde{\mathcal{A}}^{\llbracket 1,\ell'_k\rrbracket^2}$ as the bidimensional dictionary of linear size $\ell'_k$ of vertically aligned patterns that project onto $\tilde A'_k$, formally defined as
	\begin{displaymath}
	\dis \tilde A'_{k*} := \big\{p \in \tilde{\mathcal{A}}^{\llbracket 1, \ell'_k \rrbracket^2} : \exists \tilde p \in  \tilde A'_k, \ \text{s.t.} \ \ \forall, (i,j) \in\llbracket 1, \ell'_k \rrbracket^2, \ p(i,j) = \tilde p(i) \big\}.
	\end{displaymath}
	$\tilde B'_{k*}\subseteq\tilde{\mathcal{A}}^{\llbracket 1,\ell'_k\rrbracket^2}$ is defined similarly. We use the notation $\pi_*: \tilde A'_{k*} \to \tilde A'_k$ (resp. $\pi_*: \tilde B'_{k*} \to \tilde B'_k$) to represent the projection of a square pattern $p\in\tilde A'_{k*}$ (resp. $\tilde B'_{k*}$) to its word $\tilde{p}\in\tilde A'_k$ (resp. $\tilde B'_k$) that defines it.
	\end{definition}

We consider a large pattern $p \in \tilde{\mathcal{A}}^{\llbracket 1,n\rrbracket^2}$ and translates $u$ of small squares of size $2\ell'_k$ inside this pattern that are labeled by vertically aligned words of $\tilde A'_k$ or $\tilde B'_k$. Let $k \geq 2$,  $n > 2\ell'_k$, and  $p \in \tilde{\mathcal{A}}^{\llbracket 1,n\rrbracket^2}$. We denote 
\begin{equation}
\label{eq.I}
\dis I(p,\ell'_k):=\left\{u\in\llbracket 0,n-2\ell'_k\rrbracket^2:\sigma^u(p)|_{\llbracket 1,2\ell'_k\rrbracket^2}\in\mathcal{L}(\hat{X}_\pi, 2\ell'_k)\right\},
\end{equation}
\begin{equation}
\label{eq.IA}
\dis I^A(p,\ell'_k):=\left\{u\in\llbracket 0,n-\ell'_k\rrbracket^2:\sigma^u(p)|_{\llbracket 1,\ell'_k \rrbracket^2}\in\tilde A'_{k*}\right\}
\end{equation}
and
\begin{equation}
\label{eq.JA}
\dis J^A(p,\ell'_k):=\bigcup_{u\in I^A(p,\ell'_k)}\left(u+\llbracket 1,\ell'_k \rrbracket^2\right).
\end{equation}
We define $I^B(p,\ell'_k)$ and $J^B(p,\ell'_k)$ similarly with replacing $\tilde A_{k*}'$ for $\tilde B'_{k*}$ in (\ref{eq.IA}) and (\ref{eq.JA}), respectively.

\begin{lemma}
	\label{lemma:AdmissibilityIntermediateScale}
	Let $k \geq 2$,  $n > 2\ell'_k$,  $p \in \tilde{\mathcal{A}}^{\llbracket 1,n\rrbracket^2}$ and the sets defined above. We will denote $\tau'_k=:(\ell'_k,\ell'_k)\in\mathbb{N}^2$. Then $J^A(p,\ell'_k)\cap J^B(p,\ell'_k)=\emptyset$ and for each $u\in I(p,\ell'_k)$
	\begin{displaymath}
	u+ \tau'_k \in J^A(p,\ell'_k) \bigsqcup J^B(p,\ell'_k).
	\end{displaymath}
\end{lemma}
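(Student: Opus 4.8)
The plan is to exploit the structure of $\hat{X}_\pi$ coming from Theorem~\ref{tAS}: every configuration of $\hat{X}_\pi$ is constant along each vertical column and its restriction to $\ZZ\times\{0\}$ lies in $\tilde{X}$, while by Lemma~\ref{lemma.LksubsetLkprime} together with $\tilde{X}=\bigcap_{j}\langle\tilde{L}_j\rangle$ (see~(\ref{eq.Xtilde})) we have $\tilde{X}\subseteq\langle\tilde{L}_k\rangle\subseteq\langle\tilde{L}'_k\rangle$. The combinatorial core is the elementary remark that a window of length $2\ell'_k$ slid over an infinite concatenation of blocks of length $\ell'_k$ always contains at least one full block; consequently every subword of length $2\ell'_k$ of a configuration in $\langle\tilde{L}'_k\rangle$ contains a complete block of $\tilde{L}'_k=\tilde{A}'_k\sqcup\tilde{B}'_k$ occupying positions $\llbracket a',a'+\ell'_k-1\rrbracket$ with $1\le a'\le\ell'_k$ inside the window. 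I will also record that $\tilde{A}'_k\cap\tilde{B}'_k=\emptyset$, since every word of $\tilde{A}'_k$ ends in $1$ and every word of $\tilde{B}'_k$ ends in $2$ (already implicit in the proof of Lemma~\ref{lemma:NoOverlappingAB}).

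For the disjointness $J^A(p,\ell'_k)\cap J^B(p,\ell'_k)=\emptyset$, I would argue by contradiction. A common point $w$ belongs to a square $v^A+\llbracket1,\ell'_k\rrbracket^2$ with $v^A\in I^A(p,\ell'_k)$ and to a square $v^B+\llbracket1,\ell'_k\rrbracket^2$ with $v^B\in I^B(p,\ell'_k)$; write $\tilde{p}^A\in\tilde{A}'_k$ and $\tilde{p}^B\in\tilde{B}'_k$ for the words these vertically aligned patterns project to under $\pi_*$. Since the two squares share $w$, their horizontal extents overlap with shift $t=v^B_1-v^A_1$ satisfying $|t|<\ell'_k$, and $w$ lies in a row common to both squares. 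Reading $p$ along that row on the common horizontal range, and using that both patterns are vertically aligned, forces $\tilde{p}^A$ and $\tilde{p}^B$ to share a segment: if $t\ne0$, a terminal segment of one equals an initial segment of the other of length $\ell'_k-|t|\in\{1,\dots,\ell'_k-1\}$, i.e.\ a non-trivial overlap, contradicting Lemma~\ref{lemma:NoOverlappingAB}; if $t=0$, it forces $\tilde{p}^A=\tilde{p}^B$, contradicting $\tilde{A}'_k\cap\tilde{B}'_k=\emptyset$.

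For the second assertion, fix $u\in I(p,\ell'_k)$ and set $q:=\sigma^u(p)|_{\llbracket1,2\ell'_k\rrbracket^2}$. By definition $q$ occurs in some $x\in\hat{X}_\pi$, and since $\hat{X}_\pi$ is shift-invariant, $q$ is vertically constant and its horizontal word is a subword of length $2\ell'_k$ of some $z\in\tilde{X}\subseteq\langle\tilde{L}'_k\rangle$. By the remark above, that subword contains a full block $\tilde{p}_0\in\tilde{A}'_k\sqcup\tilde{B}'_k$ at horizontal positions $\llbracket a',a'+\ell'_k-1\rrbracket\subseteq\llbracket1,2\ell'_k\rrbracket$ with $1\le a'\le\ell'_k$. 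Put $v:=u+(a'-1,0)$; using $1\le a'\le\ell'_k$ and $u\in\llbracket0,n-2\ell'_k\rrbracket^2$ one checks $v\in\llbracket0,n-\ell'_k\rrbracket^2$, and using the vertical constancy of $q$ on the sub-rectangle $\llbracket a',a'+\ell'_k-1\rrbracket\times\llbracket1,\ell'_k\rrbracket$ one gets $\sigma^v(p)(i,j)=\tilde{p}_0(i)$ for all $(i,j)\in\llbracket1,\ell'_k\rrbracket^2$. Hence $v\in I^A(p,\ell'_k)$ if $\tilde{p}_0\in\tilde{A}'_k$ and $v\in I^B(p,\ell'_k)$ if $\tilde{p}_0\in\tilde{B}'_k$. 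Finally $u+\tau'_k=u+(\ell'_k,\ell'_k)\in v+\llbracket1,\ell'_k\rrbracket^2$ (again because $1\le a'\le\ell'_k$), so $u+\tau'_k\in J^A(p,\ell'_k)\cup J^B(p,\ell'_k)$; combined with the first part, $u+\tau'_k\in J^A(p,\ell'_k)\bigsqcup J^B(p,\ell'_k)$.

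The only genuine difficulty I anticipate is the index bookkeeping: one juggles three coordinate frames — inside $p$, inside the shifted window $\sigma^u(p)$, and inside each $\ell'_k$-square — and must verify that the lifted position $v$ is a legal index, that the sub-rectangle on which vertical constancy is invoked actually sits inside $\llbracket1,2\ell'_k\rrbracket^2$, and that $u+\tau'_k$ lands in $v+\llbracket1,\ell'_k\rrbracket^2$. This is routine but needs care. The two conceptual ingredients, the ``sliding window of length $2\ell'_k$ always contains a full $\ell'_k$-block'' remark and the reduction of a two-dimensional overlap to a one-dimensional one via vertical alignment, are both short.
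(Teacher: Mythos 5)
Your proof is correct and follows essentially the same route as the paper's: disjointness of $J^A$ and $J^B$ is reduced, via vertical alignment, to the non-overlapping of words of $\tilde A'_k$ and $\tilde B'_k$ (Lemma~\ref{lemma:NoOverlappingAB}), and the membership of $u+\tau'_k$ comes from locating a full $\ell'_k$-block of $\tilde L'_k=\tilde A'_k\sqcup\tilde B'_k$ inside the $2\ell'_k$-window, using $\tilde X\subseteq\langle\tilde L_k\rangle\subseteq\langle\tilde L'_k\rangle$. The only cosmetic difference is that you translate by $(a'-1,0)$ where the paper uses a diagonal shift $(s,s)$ with $0\leq s<\ell'_k$; both place $u+\tau'_k$ inside the resulting $\ell'_k$-square, and your write-up merely fills in the index bookkeeping and the one-line disjointness argument that the paper leaves implicit.
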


\begin{proof}
	The fact that $J^A(p,\ell'_k)$ and $J^B(p,\ell'_k)$ do not intersect is a consequence of Lemma \ref{lemma:NoOverlappingAB}. Let be $u \in I(p,\ell'_k)$ and  $w_* = \sigma^u(p)|_{\llbracket 1,2\ell'_k \rrbracket^2}$. There exists $w\in \mathcal{L}(\langle \tilde L_k \rangle,2\ell'_k)$  such that $w_*(i,j)=w(i)$ for all $(i,j) \in \llbracket 1,2\ell'_k \rrbracket^2$. By definition of $\langle \tilde L_k \rangle$, $w \sqsubset w_1w_2$ is a subword of the concatenation of two words of $\tilde L_k$. Note that, by Lemma~\ref{lemma.LksubsetLkprime} $\langle\tilde{L}_k\rangle\subseteq \langle\tilde{L}_k'\rangle$. Hence $\mathcal{L}(\langle\tilde{L}_k\rangle,2\ell_k')\subseteq \mathcal{L}(\langle\tilde{L}_k'\rangle,2\ell_k')$
	
	On the other hand, a word in $\tilde L_k$ is either a word of $\tilde A_k$ or a word of $\tilde B_k$. As $\langle \tilde A_k \rangle \subset \langle \tilde A'_k \rangle$ and $\langle \tilde B_k \rangle \subset \langle \tilde B'_k \rangle$, $w_1$ and $w_2$ are obtained as a concatenation of words of $\tilde A'_k$ or $\tilde B'_k$. There exists $0 \leq s < \ell'_k$ such that 
	\begin{displaymath}
	\dis \sigma^s(w)|_{\llbracket 1, \ell'_k \rrbracket} \in \tilde A'_k \bigsqcup \tilde B'_k.
	\end{displaymath}
	
	Then
	\begin{displaymath}
	u+(s,s) \in I^A(p,\ell'_k) \bigsqcup I^B(p,\ell'_k),
	\end{displaymath}
	and therefore
	\begin{displaymath}
	\dis u+ \tau'_k  \in J^A(p,\ell'_k) \bigsqcup J^B(p,\ell'_k).
	\end{displaymath}
\end{proof}

\begin{figure}[h]
	\centering
	\psfrag{n}{$n$}
	\psfrag{u}{$u$}
	\psfrag{2}{$2\ell_k$}
	\psfrag{1}{$u+(s,s)$}
	\includegraphics[width=0.7\linewidth]{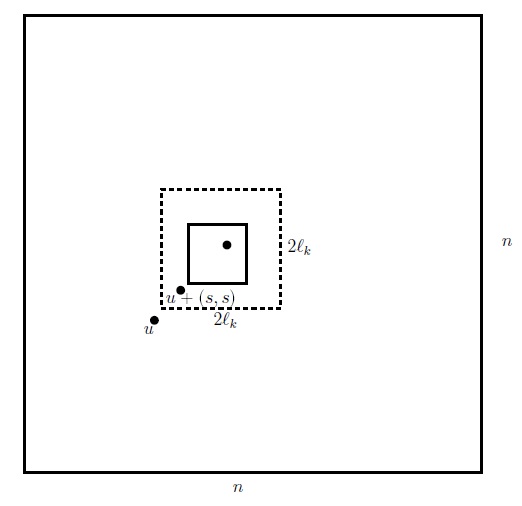}
	\caption{In the figure we are taking a square pattern $p\in \tilde{\mathcal{A}}^{\llbracket 0,n\rrbracket^2}$ shown as the biggest square. We are considering that $u\in I(p,\ell'_k)$ and therefore the patterns located in the dashed square of size $2\ell'_k$ belong to $\mathcal{L}(\hat{X}_\pi, 2\ell'_k)$. We know that the pattern located in the most inner box of size $\ell'_k$ belongs to $\tilde A'_k \bigsqcup \tilde B'_k$. The most inner dot represents $u+\tau'_k$.}
	\label{fig:diagram1008}
\end{figure}

\begin{lemma}
	\label{lemma:FrequencyOf0}
	Let $k\geq2$ be an even integer, $n>2\ell'_k$, and $ p \in \tilde{\mathcal{A}}^{\llbracket 1,n \rrbracket^2}$. Let $I^A(p,\ell'_k)$, $J^A(p,\ell'_k)$, $I^B(p,\ell'_k)$, $J^B(p,\ell'_k)$ be the sets defined in Lemma \ref{lemma:AdmissibilityIntermediateScale}. Define
	\begin{equation}
	\label{eq.KA}
	K^A(p,\ell'_k) = \{ v \in J^A(p,\ell'_k) : p(v) = 0 \}, \ \ K^B(p,\ell'_k) = \{ v \in J^B(p,\ell'_k) : p(v) = 0 \}.
	\end{equation}
	Then
	\begin{enumerate}
		\item $\displaystyle \Card(K^B(p,\ell'_k)) \leq \big(1-N_{k-1}^{-1}\big)^{-1} \Card(J^B(p,\ell'_k)) f_{k-1}^B$,
		\item $\displaystyle\Card(K^A(p,\ell'_k)) \leq \frac{2}{N'_k}\Card(J^A(p,\ell'_k))f_{k-1}^A$.
	\end{enumerate}
\end{lemma}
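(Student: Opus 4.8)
\emph{Approach.} The plan is to reduce both counts to one dimension by slicing along horizontal rows, using that every pattern of $\tilde A'_{k*}$ and $\tilde B'_{k*}$ is vertically aligned. Fix a row index $r$. Then $J^A(p,\ell'_k)$ meets this row in a union of integer intervals of length $\ell'_k$, one for each horizontal coordinate $u_1$ admitting some $u=(u_1,u_2)\in I^A(p,\ell'_k)$ with $u_2+1\le r\le u_2+\ell'_k$; since $p$ is one fixed pattern, each such interval is read off consistently as a word of $\tilde A'_k$, and $K^A(p,\ell'_k)$ meets the row precisely in the sites of this slice where $p$ takes the value $0$. As $\#J^A=\sum_r \#\big(J^A\cap(\ZZ\times\{r\})\big)$, and similarly for $K^A$, $J^B$, $K^B$, it suffices to prove each inequality row by row; the disjointness $J^A\cap J^B=\emptyset$ is immediate from Lemma~\ref{lemma:NoOverlappingAB}.

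\emph{Part (2), the set $J^A$.} Since $k$ is even, $\tilde A'_k=\{a'_k,a''_k,1^{\ell'_k}\}$; the word $1^{\ell'_k}$ contains no $0$, while $a'_k$ (resp. $a''_k$) carries its $\rho^A_{k-1}$ symbols $0$ inside an initial (resp. terminal) copy of $a_{k-1}$, of length $\ell_{k-1}$. On the fixed row, Lemma~\ref{lemma:OverlappingWordsPrime}(1) forces the intervals labelled $a'_k$ to be pairwise disjoint, and likewise those labelled $a''_k$; let $m'$ and $m''$ be their numbers. Every $0$ of the slice lies in one of the $m'+m''$ copies of $a_{k-1}$ so $\#\big(K^A\cap(\ZZ\times\{r\})\big)\le (m'+m'')\rho^A_{k-1}$, whereas the $m'$ disjoint $a'_k$-intervals (and, separately, the $m''$ disjoint $a''_k$-intervals) all lie in $J^A$, so $\#\big(J^A\cap(\ZZ\times\{r\})\big)\ge \max(m',m'')\,\ell'_k\ge\tfrac12(m'+m'')\ell'_k$. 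Combining and substituting $\rho^A_{k-1}=\ell_{k-1}f^A_{k-1}$, $\ell'_k=N'_k\ell_{k-1}$ gives the row-wise bound, and summing over $r$ proves (2). The factor $2$ is exactly what is needed to split the family of all $a_{k-1}$-copies into the two disjoint sub-families, and cannot be avoided because an $a'_k$-interval and an $a''_k$-interval may genuinely overlap, on their markers or on their $a_{k-1}$-segments, by Lemma~\ref{lemma:OverlappingWordsPrime}(2).

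\emph{Part (1), the set $J^B$.} With $k$ even one has $\tilde B'_k=\{b'_k,2^{\ell'_k}\}$ where $b'_k=b_{k-1}\cdots b_{k-1}$ ($N'_k$ times) and, by the odd rule~(\ref{rule.odd}), $b_{k-1}=b_{k-2}\,2^{(N_{k-1}-2)\ell_{k-2}}\,b_{k-2}$. Here $2^{\ell'_k}$ contains no $0$, and the $b'_k$-intervals on a row need not be pairwise disjoint (two copies of $b'_k$ shifted by a multiple of $\ell_{k-1}$ overlap consistently), so instead I would cover the zeros by the copies of $b_{k-1}$ they contain: every $0$ of $J^B$ lies in one of the length-$\ell_{k-1}$ sub-blocks of a $b'_k$-interval equal to $b_{k-1}$, and each such block has exactly $\rho^B_{k-1}$ symbols $0$. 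The crucial structural fact is that two distinct such $b_{k-1}$-blocks, if they overlap, overlap in at most $\ell_{k-2}$ sites; this is the odd-index counterpart of Lemma~\ref{lemma:OverlappingWords}(1) (the long central run of $2$'s of $b_{k-1}$ cannot be matched near its ends), so consecutive distinct $b_{k-1}$-blocks in row $r$ have left endpoints at distance at least $\ell_{k-1}-\ell_{k-2}=(N_{k-1}-1)\ell_{k-2}$, and (since $N_{k-1}\ge2$) non-consecutive ones are disjoint. Writing $\mathcal D_r$ for the set of distinct $b_{k-1}$-blocks of the row, this gives $\#\big(K^B\cap(\ZZ\times\{r\})\big)\le \#\mathcal D_r\,\rho^B_{k-1}$ and, the union of these blocks being contained in $J^B$, $\#\big(J^B\cap(\ZZ\times\{r\})\big)\ge \#\mathcal D_r\,(N_{k-1}-1)\ell_{k-2}$; dividing and substituting $\rho^B_{k-1}=N_{k-1}\ell_{k-2}f^B_{k-1}$ yields the row-wise estimate $\#\big(K^B\cap(\ZZ\times\{r\})\big)\le (1-N_{k-1}^{-1})^{-1}f^B_{k-1}\,\#\big(J^B\cap(\ZZ\times\{r\})\big)$, and summing over $r$ gives (1).

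\emph{Main obstacle.} The delicate point is the $J^B$ overlap analysis: no disjoint-family decomposition is available, so one must check that overlapping copies of $b_{k-1}$ cannot pack the symbols $0$ more densely than a single copy, beyond the loss of one $b_{k-2}$-block per joint — and it is precisely this loss that the constant $(1-N_{k-1}^{-1})^{-1}$ records. This requires descending one level in the hierarchy, to the overlap structure of $b_{k-1}$ (hence of $b_{k-2}$), and carefully invoking the odd-index analogues of Lemmas~\ref{lemma:OverlappingWords} and~\ref{lemma:OverlappingWordsPrime}; the rest is bookkeeping with the identities $\rho^\bullet_k=\ell_k f^\bullet_k$, $\ell_k=N_k\ell_{k-1}$, $\ell'_k=N'_k\ell_{k-1}$ and the frequency formulas of Lemma~\ref{lemma.frequence.max}.
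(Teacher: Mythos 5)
Your proof is correct and both inequalities come out with the right constants, but you organize the count differently from the paper. For item 1 the paper works directly in two dimensions: it enumerates the squares $J_h$ whose projection is $b'_k$, forms the disjoint pieces $J_h^* = J_h\setminus(J_1\cup\cdots\cup J_{h-1})$, cuts each piece into $N'_k$ vertical strips of width $\ell_{k-1}$, and shows that every intermediate strip has the exact density $f^B_{k-1}$ of zeros while the initial and terminal strips lose at most a width-$\ell_{k-2}$ column at one end --- which is where $(1-N_{k-1}^{-1})^{-1} = \ell_{k-1}/(\ell_{k-1}-\ell_{k-2})$ enters. You instead slice by rows, cover the zeros of each row by the distinct occurrences of $b_{k-1}$, and bound the union of those occurrences from below using that two distinct copies of $b_{k-1}$ overlap in at most $\ell_{k-2}$ sites (the odd-index analogue of Lemma~\ref{lemma:OverlappingWords}); the same constant appears as $\rho^B_{k-1}/\big((N_{k-1}-1)\ell_{k-2}\big)$. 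For item 2 the paper clusters vertically repeated squares into rectangles and shows each lattice point lies in at most two of them; your $\max(m',m'')\ge\tfrac12(m'+m'')$ is the row-wise shadow of that same multiplicity-two bound, resting on the pairwise disjointness of same-type words from Lemma~\ref{lemma:OverlappingWordsPrime}. What your version buys is that all the bookkeeping is one-dimensional --- no vertical extents $c_{h,i},d_{h,i}$ to track --- at the cost of having to verify the union lower bound for a family of intervals with controlled pairwise overlaps; what the paper's version buys is a genuine partition of $J^B$, so the density estimate is local and additive. Both arguments rest on exactly the same overlap lemmas and yield identical constants.
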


\begin{proof}
	Let $k\geq 2$ even, $n>2\ell_k'$ and a fixed $p\in\tilde{\mathcal{A}}^{\llbracket 1,n \rrbracket^2}$. To simplify the notations, we write $I^A= I^A(p,\ell'_k)$, $J^A=J^A(p,\ell'_k)$ and so on. As the symbol $0$ does not appear in the markers $1^{N'_k\ell_{k-1}}\in \tilde{A}_k'$ and $2^{N'_k\ell_{k-1}}\in\tilde{B}_k'$, we only need to consider the subset of $I^A$ (resp. $I^B$) that corresponds to the translates  $u \in \llbracket 0, n-\ell'_k \rrbracket^2$ and the subwords $w_*=\sigma^u(p)|_{\llbracket 1,\ell'_k\rrbracket^2}$ satisfying $\pi_*(w_*)\in\{a'_k,a''_k\}$ (resp. $\pi_*(w_*)=b'_k$).
	
	{\it Item 1.} We first enumerate $I^B = \{u_1,u_2,\ldots,u_H\}$.  Let be $u_h= (u_h^x,u_h^y) \in\mathbb{Z}^2$.  Let
	\begin{displaymath}
	\dis J^B:=\bigcup_{h=1}^H J_h\quad\text{where}\quad J_h:=u_h+\llbracket 1,\ell'_k\rrbracket^2, \ \  \pi_*(\sigma^{u_h}(p))|_{\llbracket1,\ell'_k\rrbracket^2}=b'_k,
	\end{displaymath}
	that is, we are only considering the $J_h$ elements of $J^B(p,\ell_k')$ such that the one-dimensional projection is the block $b_k'$. For each box $J_h$ we divide into $N_k'$ vertical strips of length $\ell_{k-1}$. Formally we have
	\begin{displaymath}
	\dis J_h=\bigcup_{i=1}^{N'_k}J_{h,i}\quad\text{where}\quad J_{h,i}:=u_h+\llbracket 1+(i-1)\ell_{k-1},i\ell_{k-1}\rrbracket\times\llbracket1,\ell'_k \rrbracket.
	\end{displaymath}
	
	We construct a partition of $J^B$  inductively by,
	\begin{gather*}
	J^B=\bigsqcup_{h=1}^H J_h^*,\quad J_1^*= J_1, \ \ \forall\, h \geq2, \ J_h^*:=J_{h}\setminus\left(J_1\cup\cdots\cup J_{h-1}\right).
	\end{gather*}
	Let
	\begin{displaymath}
	\dis K_h^*:=\{v\in J_h^*:p(v)=0\}.
	\end{displaymath}
	It will be enough to show that for every $h \in \llbracket 1,H \rrbracket$
	\begin{equation}
	\label{eq.Card.Kh.B}
	\dis \Card( K_h^*) \leq \big(1-N_{k-1}^{-1}\big)^{-1}  \Card(J_h^*) f_k^B,
	\end{equation}
	By definition of $u_h$, $\tilde w_h = \pi_*(p|_{(u_h+\llbracket 1, \ell'_k \rrbracket^2)})$ is a translate of $b'_k\in\tilde{\mathcal{A}}^{\ell_k'}$,
	\begin{displaymath}
	\dis \forall\,i,j\in \llbracket 1,\ell_k\rrbracket^2, \ \tilde{w}_h(u_h^x+i)=b'_k(i).
	\end{displaymath}
	Since $b'_k$ is made of $N'_k$ subwords of the form $b_{k-1}$, we denote by $\tilde{w}_{h,i}\in\tilde{\mathcal{A}}^{\ell_{k-1}}$, the successive subwords, $\forall\, 1\leq i\leq N'_k,$
	\begin{displaymath}
	\dis \tilde w_{h,i}:=\tilde w_h|_{(u_h^x+\llbracket 1+(i-1)\ell_{k-1},i\ell_{k-1}\rrbracket)} \mbox{ and } \sigma^{u_h^x+(i-1)\ell_{k-1}}(\tilde w _{h,i}) = b_{k-1}.
	\end{displaymath}
	
	We are considering a fixed $h$ and we show that $J_h^*$ is equal to a disjoint union of $N'_k$ vertical strips $(J_{h,i}^*)_{i=1}^{N'_k}$  of the following forms:
	\begin{itemize}
		\item the initial strip $J_{j,1}^*$,
		\begin{displaymath}
		u_h+\left(\llbracket 1+\ell_{k-2},\ell_{k-1}\rrbracket\times\llbracket c_{h,1}, d_{h,1} \rrbracket\right) \subseteq  J_{j,1}^* \subseteq \left( u_h+\llbracket 1,\ell_{k-1} \rrbracket \right) \times \llbracket a_{h,1}, b_{h,1} \rrbracket;
		\end{displaymath}
		\item  the intermediate strips, $J_{h,i}^*$, $1<i<N'_k$,
		\begin{displaymath}
		J_{h,i}^* = u_h+\left(\llbracket(i-1)\ell_{k-1}+1,i\ell_{k-1}\rrbracket\times\llbracket c_{h,i}, d_{h,i}\rrbracket\right); \mbox{and}
		\end{displaymath}
		\item the terminal strip $J_{h,N'_k}^*$,
		\begin{multline*}
		u_h+\left(\llbracket 1+(N_k-1)\ell_{k-1},\ell_k-\ell_{k-2} \rrbracket \times \llbracket c_{h,N_k}, d_{h,N_k} \rrbracket\right) \subseteq \\
		\subseteq  J_{h,N'_k}^* \subseteq 
		u_h+\left(\llbracket 1+(N'_k-1)\ell_{k-1},\ell'_k \rrbracket\times\llbracket a_{h,N'_k}, b_{h,N'_k} \rrbracket\right).
		\end{multline*}
	\end{itemize}
	Here for each $i\in\llbracket1,N_k'\rrbracket$, the values $1 \leq c_{h,i},d_{h,i}\leq \ell_k$ are integers that represent the vertical extent of each strip and it will be  possible that $c_{h,i} < d_{h,i}$ to denote an empty strip $J_{h,i}^*$.
	
	Indeed, for a fixed $1 \leq i \leq N'_k$, we first consider the previous $J_g$, $1 \leq g < h$, that intersects the strip $J_{h,i}$   so that the word $\tilde w_g$ overlaps $\tilde w_h$ on a multiple of $b_{k-1}$ (see item 2 of Lemma \ref{lemma:OverlappingWords}). Then $c_{h,i}$ is the largest upper level of those $J_g \cap J_{h,i}$, more precisely,
	\begin{equation}
	\label{eq.c_hi}
	\dis c_{h,i} = \max_g \big\{ u_g^y + \ell_k' + 1 :  u_g^y \leq u_h^y, \big(u_h^x+(i-1) \ell_{k-1}+\llbracket 1, \ell_{k-1} \rrbracket \big) \subseteq \big( u_g^x + \llbracket 1, \ell_k' \rrbracket \big) \big\}.
	\end{equation}
	and similarly $d_{h,i}$ is the smallest lower level of those $J_g \cap J_{h,i}$, formally we have
	\begin{equation}
	\label{eq.d_hi}
	\dis d_{h,i} =\min_g \big\{ u_g^y + 1 :  u_g^y \geq u_h^y, \big(u_h^x+(i-1) \ell_{k-1}+\llbracket 1, \ell_{k-1} \rrbracket \big) \subseteq \big( u_g^x + \llbracket 1, \ell_k' \rrbracket \big) \big\}.
	\end{equation}
	We have just constructed the intermediate strips $J_{h,i}^*$ for $1<i<N_k$.
	
	\begin{figure}[!h]
		\centering
		\psfrag{1}{$u_g^y+1$}
		\psfrag{2}{$u_g^y+\ell_k$}
		\psfrag{3}{$u_h^y+1$}
		\psfrag{4}{$u_h^y+\ell_k$}
		\psfrag{5}{$u_p^y+1$}
		\psfrag{6}{$u_p^y+\ell_k$}
		\psfrag{h}{$J_h$}
		\psfrag{i}{$J_{h,i}$}
		\psfrag{g2}{$J_p$}
		\psfrag{g}{$J_g$}
		\includegraphics[width=0.7\linewidth]{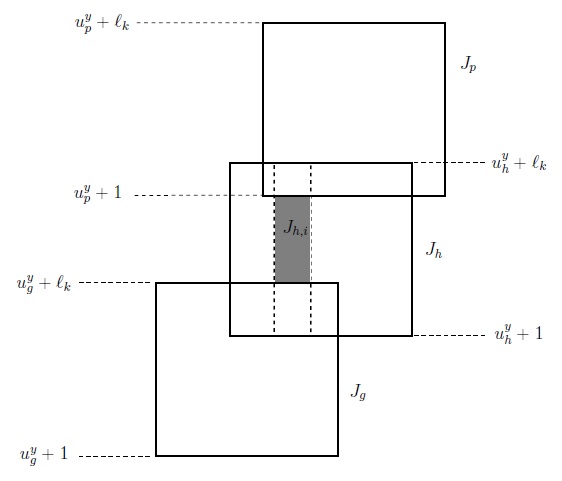}
		\caption{We are representing here the case where there is an intersection but the strip $J_{h,i}$ is not completely covered by the previous squares $J_g$. The squares $J_g$ and $J_p$ are already in the partition, then $J_{h,i}^*$ is only the highlighted gray area.}
		\label{fig:diagram01}
	\end{figure}
	
	We now construct the initial strip (the terminal strip is constructed similarly). We intersect the remaining $J_g$ with $J_{h,1}$. The terminal segment $b_{k-1}^T$ of $\tilde w_g$ overlaps the initial segment $b_{k-1}^I$ of $\tilde w_h$. Thanks to item 1 of Lemma \ref{lemma:OverlappingWords}, as $k-1$ is odd, $b_{k-1}$ has the same structure as $a_k$,  the overlapping can only happen at their end segments of the form $b_{k-2}$. We have just proved that $J_{h,1}^*$ contains a small strip $\big( u_h + \llbracket 1+\ell_{k-2},\ell_{k-1} \rrbracket \big) \times \llbracket c_{h,1}, d_{h,1} \rrbracket$ of base $b_{k-1}^I \setminus b_{k-2}$ and is included in a larger strip $\big( u_h+\llbracket 1,\ell_{k-1} \rrbracket \big) \times \llbracket c_{h,1},d_{h,1} \rrbracket$ of base $b_{k-1}$. For the initial and terminal strip the vertical extension ($\llbracket c_{h,1},d_{h,1}\rrbracket$ and $\llbracket c_{h,{N}_k'},d_{h,{N}_k'}\rrbracket$) of the elements $J_{h,1}^*$ and $J_{h,N_k'}^*$ are defined as in (\ref{eq.c_hi}) and (\ref{eq.d_hi}).
	
	\begin{figure}[!h]
		\centering
		\psfrag{i}{$J_{h,1}$}
		\psfrag{h}{$J_h$}
		\psfrag{g}{$J_g$}
		\includegraphics[width=0.55\linewidth]{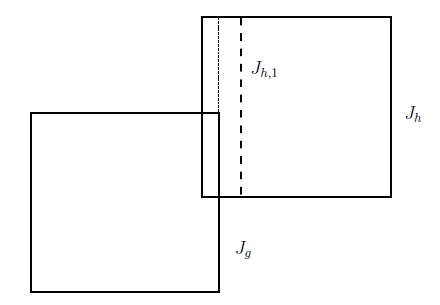}
		\caption{The strip of length $\ell_{k-1}-\ell_{k-2}$ is always contained in $J_{h,1}^*$.}
		\label{fig:diagram03}
	\end{figure}
	
	Let be $K_{h,i}^*:=\{v\in J_{h,i}^*:p_v=0\}$. We show that
	\begin{equation}
	\label{eq.K_hi}
	\forall\, 1 \leq i \leq N_k, \ \Card( K_{h,i}^*) \leq  \big(1-N_{k-1}^{-1}\big)^{-1}  \Card( J_{h,i}^*) f_k^B.
	\end{equation}
	
	For the intermediate strips $ J_{h,i}^*$, where $1 < i < N'_k$, we use the fact that $ J_{h,i}^*$ is a square strip of base $b_{k-1}$, and the fact that the frequency $f_{k-1}^B$ of the symbol $0$ in the word $b_{k-1}$ is identical to the frequency $f_k^B$ of the symbol $0$ in $b_k$. We have,
	\begin{displaymath}
	\Card(K_{h,i}^*) = \ell_{k-1}(d_{h,i}-c_{h,i}+1)  f_k^B =  \Card( J_{h,i}^*) f_k^B.
	\end{displaymath}
	
	For the initial strip $J_{h,1}^*$, we use the fact that $J_{h,1}^*$ resembles largely a square strip of base $b_{k-1}$. We have,
	\begin{align*}
	\Card(\ K_{h,i}^*) &\leq  \ell_{k-1} (d_{h,1}-c_{h,1}+1) f_{k}^B \\
	&\leq \frac{\ell_{k-1}}{\ell_{k-1}-\ell_{k-2}} (\ell_{k-1}-\ell_{k-2}) (d_{h,1}-c_{h,1}+1)f_{k}^B \\
	&\leq \big(1-N_{k-1}^{-1}\big)^{-1} \Card( J_{h,i}^* ) f_k^B.
	\end{align*}
	We have proved (\ref{eq.K_hi}) and by summing over $i \in \llbracket 1, N'_k \rrbracket$ we have proved (\ref{eq.Card.Kh.B}).
	
	{\it Item 2.} As before we will consider $I^A$ (defined in (\ref{eq.IA}), but only consider the translates $u \in \llbracket 0, n-\ell'_k \rrbracket^2$ such that $\pi_*(\sigma^u(p)|_{\llbracket 1,\ell'_k\rrbracket^2})\in\{a'_k,a''_k\}$. If $J_g\cap J_h\neq\emptyset$, the two projected words $\tilde w_g = \pi_*(\sigma^{u_g}(p)|_{\llbracket 1,\ell'_k \rrbracket^2})$ and $\tilde w_h = \pi_*(\sigma^{u_h}(p)|_{\llbracket 1,\ell'_k \rrbracket^2})$ may either coincide in three ways: $\tilde w_g = \tilde w_h$, so $u_g^x=u_h^x$; overlap partially on their markers or overlap on their initial and terminal segments as proved in Lemma~\ref{lemma:OverlappingWordsPrime}.
	
	We redefine again $I^A$ by clustering into a unique rectangle adjacent squares where the overlap occurs in the whole word, that is, we group the squares $J_g$ and $J_h$ that pairwise satisfy $J_g\cap J_h \not= \emptyset$, $u_g^x=u_h^x$, $\tilde w_g = \tilde w_h$,  $|u_g^y-u_h^y|< \ell'_k$. Then, after reindexing $I^A$, one obtains,
	\begin{displaymath}
	\dis J^A = \bigcup_{h=1}^H J_h, \quad J_h =  u_h + \left( \llbracket 1, \ell_{k-1} \rrbracket \times \llbracket 1, d_h \rrbracket \right),
	\end{displaymath}
	where $d_h$ is the final height of each rectangle obtained after the clustering. Thus $w_h^* = \sigma^{u_h}(p)|_{\llbracket 1, \ell'_k \rrbracket \times \llbracket 1, d_h \rrbracket}$ is a vertically aligned pattern whose projection $\tilde w_h = \pi_*(w_h^*)$ is a word of the form  $a'_k$ or $a''_k$, and so that $\tilde w_g$, $\tilde w_h$ never entirely coincide if $J_g \cap J_h \not= \emptyset$.
	
	\begin{figure}[!h]
		\centering
		\psfrag{n}{$n$}
		\includegraphics[width=0.8\linewidth]{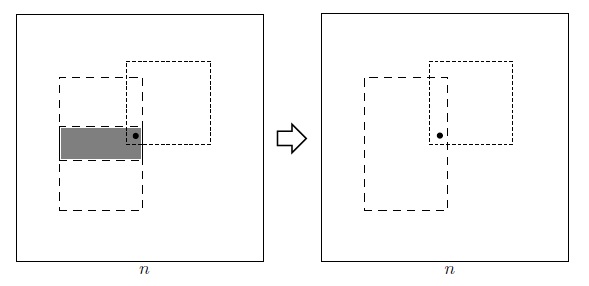}
		\caption{The biggest square has size $n$. On the left side the three squares of size $\ell_k'$ intersect each other and the black dot belongs to each of these squares. The highlighted gray area belongs to both of the vertically aligned squares. After the clustering, on the right side, the two previous dashed squares emerge into one box of size $\ell_k'\times d_h$ and thus the point represented in the figure only belong to two boxes.}
		\label{fig:diagram2408}
	\end{figure}
	
	We now show that an index $v=(v^x,v^y) \in J^A$ may belong to at most two rectangles $J_g$ and $J_h$. Indeed, by construction, $u_g^x \not= u_h^x$, if $v^x$ belongs to two overlapping words of the form $a'_k,a''_k$,  then $v^x$ belongs to either  the intersection of the two markers $1^{(N'_k-1)\ell_{k-1}}$  or the intersection of the terminal segment $a_{k-1}^T$ of $a''_k$ and  the initial segment  $a_{k-1}^I$ of $a'_k$. In both cases described in Lemma~\ref{lemma:OverlappingWordsPrime} we exclude the overlapping of a third word of the form $ a'_k,a''_k$, thus we exclude the fact that $v$ may belong to a third rectangle $J_k$ with $u_k^x \not= u_g^x$ and $u_k^x \not= u_h^x$. Then
	\begin{align*}
	\Card(K^A) &= \sum_{v \in J^A} \mathds{1}_{(p(v)=0)} \\
	&\leq \sum_{h=1}^H \sum_{v \in (u_h+\llbracket 1, \ell'_k \rrbracket \times \llbracket 1, d_h \rrbracket)} \mathds{1}_{(p(v)=0)}  \leq \sum_{h=1}^H  f_{k-1}^A \ell_{k-1} d_h \\
	&\leq \frac{f_{k-1}^A\ell_{k-1}}{\ell'_k} \sum_{h=1}^H  \sum_{v \in J^A} \mathds{1}_{v \in (u_h+\llbracket 1, \ell'_k \rrbracket \times \llbracket 1,d_h \rrbracket)} =  \frac{f_{k-1}^A}{N'_k} \sum_{v \in J^A} \sum_{h=1}^H \mathds{1}_{(v \in J_h)} \\
	&\leq  \frac{2f_{k-1}^A}{N'_k} \Card(J ^A).
	\end{align*}
	
\end{proof}

\section{The new coloring}

Based on our previous construction we define a new coloring for the SFT generated by the Aubrun-Sablik construction. This new subshift will be defined using the alphabet $\mathcal{A}=\mathcal{B}\times \tilde{\mathcal{C}}$, where $\mathcal{B} = \{ 0',0'',1,2 \}$. Consider $\mathcal{A} = \mathcal{B} \times \tilde{\mathcal{C}}$,   $\gamma : \mathcal{A} \to \hat{\mathcal{A}}$ obtained by collapsing the two symbols $0',0''$ to $0$, that is,
\begin{gather*}
\forall\, c \in \mathcal{C}, \ \left\{\begin{array}{ll}
\gamma(0',c) = (0,c), & \gamma(0'',c) = (0,c), \\
\gamma(1,c) = (1,c), &  \gamma(2,c) = (2,c),
\end{array}\right.
\end{gather*}
and
\begin{equation}
\label{eq.Gamma}
\Gamma :  \Sigma^2(\mathcal{A})  \to \Sigma^2(\hat{\mathcal{A}})
\end{equation}
 be the 1-block canonical projection.

Remember that we are denoting $\hat{\mathcal{A}}=\tilde{\mathcal{A}}\times\mathcal{C}$ and $\tilde{\mathcal{A}} = \{0,1,2\}$. Let $\overline{\pi}:\hat{\mathcal{A}}\rightarrow\tilde{\mathcal{A}}$ be the first projection over the alphabet $\tilde{\mathcal{A}}$ as defined in (\ref{eq.overlinepi}). We set $\hat{\Pi} : \Sigma^2(\hat{\mathcal{A}}) \to \Sigma^1(\tilde{\mathcal{A}})$ defined as
\begin{equation}
\label{eq.hatPi}
\left\{
\begin{array}{rcl}
\hat \Pi:\Sigma^2(\hat{\mathcal{A}}) & \to & \Sigma^1(\tilde{\mathcal{A}}) \\
x & \mapsto & y=( \overline{\pi}(x_{(i,0)}) )_{i\in\ZZ}.
\end{array}
\right.
\end{equation}
We will always apply $\hat\Pi$ for configurations that are vertically aligned for the symbols in $\tilde{\mathcal{A}}$ and therefore there is no problem in selecting the zero row with indices $(i,0)$ where $i\in\ZZ$.

Let  $\mathcal{F}$ be the pullback of $\hat{\mathcal{F}}$ by $\Gamma$ and $X$ be the subshift generated by $\mathcal{F}$,
\begin{displaymath}
\dis \mathcal{F}:=\{p\in\mathcal{A}^{\llbracket 1,D\rrbracket^2}:\Gamma(p)\in\hat{\mathcal{F}}\},\quad X:=\Gamma^{-1}(\hat X)= \Sigma^2(\mathcal{A},\mathcal{F}).
\end{displaymath}
Let be
\begin{equation}
\label{eq.piandPi}
\pi = \overline{\pi} \circ \gamma\quad \mbox{and}\quad \Pi = \hat\Pi \circ \Gamma.
\end{equation}

\begin{observation}
	We will also use the projection $\Pi$ as defined before for finite patterns without any distinction. Note that the extended set of forbidden patterns $\mathcal{F}$ forces every locally admissible configuration to be vertically aligned with respect to the initial alphabet $\tilde{\mathcal{A}}$ provided we identify the two  duplicated symbols $0'$ and $0''$.
\end{observation}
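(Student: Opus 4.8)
The plan is to deduce the statement from the vertical-alignment property already recorded for $\hat X$, namely $\hat X_\pi\subseteq\Sigma^2(\tilde{\mathcal{A}},\mathcal{V}_*)$, and transport it through the one-block collapse $\Gamma$. First I would note that a configuration $x\in\Sigma^2(\mathcal{A})$ is locally $\mathcal{F}$-admissible exactly when $x\in X=\Gamma^{-1}(\hat X)$, i.e.\ exactly when $\Gamma(x)\in\hat X$: since $\Gamma$ is a one-block sliding code, every $\llbracket1,D\rrbracket^2$-window of $\Gamma(x)$ is the $\Gamma$-image $\Gamma(p)$ of the corresponding window $p$ of $x$, and by definition of the pullback $\mathcal{F}=\{p:\Gamma(p)\in\hat{\mathcal{F}}\}$ one has $\Gamma(p)\in\hat{\mathcal{F}}$ iff $p\in\mathcal{F}$; so $x$ avoids $\mathcal{F}$ iff $\Gamma(x)$ avoids $\hat{\mathcal{F}}$. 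The same equivalence, read inside a finite window, shows that a locally $\mathcal{F}$-admissible pattern $p\in\mathcal{A}^{\llbracket1,n\rrbracket^2}$ with $n\ge D$ has locally $\hat{\mathcal{F}}$-admissible image $\Gamma(p)$.

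Now I would invoke $\hat X_\pi\subseteq\Sigma^2(\tilde{\mathcal{A}},\mathcal{V}_*)$. Applying the first projection $\overline\pi\colon\hat{\mathcal{A}}\to\tilde{\mathcal{A}}$ coordinatewise to $\Gamma(x)\in\hat X$ yields a configuration in $\Sigma^2(\tilde{\mathcal{A}},\mathcal{V}_*)$, which, by the definition of $\mathcal{V}_*$ as the set of vertically non-constant pairs, means every column of that configuration is constant. But coordinatewise $\overline\pi\circ\Gamma$ is the one-block map induced by $\overline\pi\circ\gamma=\pi\colon\mathcal{A}\to\tilde{\mathcal{A}}$, and $\pi$ maps both $0'$ and $0''$ to $0$ and fixes $1$ and $2$. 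Hence the $\tilde{\mathcal{A}}$-configuration obtained from $x$ by identifying $0'$ and $0''$ has constant columns — this is exactly the claimed vertical alignment of $x$ with respect to $\tilde{\mathcal{A}}$ after the identification. Running the same computation on a locally $\mathcal{F}$-admissible finite pattern $p$ shows its $\Pi$-image $\hat\Pi(\Gamma(p))$, though formally read off the row of index $0$, does not depend on which horizontal row of $p$ one reads; this is what legitimizes using the symbol $\Pi$ for finite patterns without any further comment.

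The point that needs care — and the reason for the clause ``provided we identify the two duplicated symbols $0'$ and $0''$'' — is that $\mathcal{F}$ does \emph{not} forbid a vertically adjacent pair $(0',0'')$ or $(0'',0')$, because such a pair pulls back under $\Gamma$ to the legal pair $(0,0)$; indeed allowing these pairs is the whole point of the duplication, since it is where the additional entropy of the new coloring will come from. Thus $x$ itself, read in the alphabet $\mathcal{A}=\mathcal{B}\times\tilde{\mathcal{C}}$, need not have constant columns; only its image under $\pi$ does. I do not expect any genuine obstacle beyond keeping the overloaded symbols straight: ``$\pi$'' is used for the letter map $\mathcal{A}\to\tilde{\mathcal{A}}$, for the block map $\Sigma^2(\hat{\mathcal{A}})\to\Sigma^2(\tilde{\mathcal{A}})$, and for the induced map $\Sigma^2(\mathcal{A})\to\Sigma^2(\tilde{\mathcal{A}})$, with the analogous convention for $\Pi=\hat\Pi\circ\Gamma$.
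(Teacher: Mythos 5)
Your proposal is essentially correct and matches the reasoning the paper leaves implicit (the observation carries no proof; the paper's justification is the earlier remark that $\hat{\mathcal{F}}$ contains all the non-vertically-aligned patterns). Your identification of locally admissible configurations with $X=\Gamma^{-1}(\hat X)$, the transport through the one-block maps, and the discussion of why $(0',0'')$ vertical pairs must remain legal are all right and are exactly the content the observation is meant to record.

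One step deserves repair. The observation concerns \emph{locally} admissible configurations (and, for the use of $\Pi$ on finite patterns, locally admissible finite windows), whereas the inclusion $\hat X_\pi\subseteq\Sigma^2(\tilde{\mathcal{A}},\mathcal{V}_*)$ that you invoke is a statement about $\hat X$, i.e.\ about globally admissible configurations; a locally $\hat{\mathcal{F}}$-admissible finite pattern need not extend to an element of $\hat X$, so ``running the same computation'' does not literally apply there. The fix is to argue one level lower, which is also what the paper's parenthetical reason suggests: the misaligned vertical dominoes of $\mathcal{V}_*$ (padded to supports $\llbracket1,D\rrbracket^2$ so as to live in $\hat{\mathcal{A}}^{\llbracket1,D\rrbracket^2}$) already belong to $\hat{\mathcal{F}}$, hence their $\Gamma$-pullbacks belong to $\mathcal{F}$; consequently mere avoidance of $\mathcal{F}$ in every $D\times D$ window forces each column of $\pi(x)$ (equivalently of $\overline\pi(\Gamma(x))$) to be constant, for configurations and for finite patterns alike. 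With that substitution your argument is complete, and the remainder — that $x$ itself need not have constant columns because $0'$ and $0''$ may alternate freely above a $0$-column, this being the source of the duplicated entropy — is exactly the point of the clause about identifying $0'$ and $0''$.
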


We can define the bidimensional subshifts generated by each step of the iteration process. Consider $k$ large enough such that we have $\ell_k> D$ where $D\geq 2$ is defined by the set of forbidden patterns $\mathcal{F}\subset \mathcal{A}^{\llbracket 1,D\rrbracket^2}$. We will denote
\begin{equation}
\label{eq.Lk}
\dis L_k:=\mathcal{L}(X,\ell_k)
\end{equation}
that is, the language of $X$ of size $\ell_k$ as defined in (\ref{languagelk}). We say that a pattern $w$ belongs to $L_k$ if and only if it is globally admissible with respect to $X$. Let $\langle L_k \rangle$ be the corresponding concatenated subshift as defined in Definition~\ref{def.concatenatedsubshift}, that is,
\begin{equation}
\label{eq.concatenated-subshift.Lk}
\dis \langle L_k \rangle := \bigcup_{u \in \llbracket 1, \ell_k \rrbracket^2} \bigcap_{v \in \mathbb{Z}^2} \sigma^{-(u+v \ell_k)} [L_k].
\end{equation}
Note that every pattern in $L_{k+1}$ is obtained by concatenating $N_k^2$ patterns of $L_k$ and the subshifts satisfy $\langle L_{k+1} \rangle \subset \langle L_k \rangle$.


We define two intermediate sub-languages of $\hat X$ of size $\ell_k$ by,
\begin{equation}
\label{eq.AkBk.hat}
\dis \forall\, k \geq0,  \quad
\left\{
\begin{array}{l} 
\hat A_k :=\{ w \in\mathcal{L}(\hat X,\ell_k) : \Pi(w) \in \mathcal{L}(\langle \tilde A_k \rangle, \ell_k) \}, \\
\hat  B_k :=\{ w \in \mathcal{L}(\hat X,\ell_k) : \Pi(w) \in  \mathcal{L}(\langle \tilde B_k \rangle, \ell_k ) \},
\end{array}
\right.
\end{equation}
and two sub-languages of $X$,
\begin{equation}
\label{eq.AkBk}
\forall\, k \geq0,  \quad 
\left\{
\begin{array}{l} 
A_k := \{ w \in \mathcal{A}^{\llbracket 1, \ell_k \rrbracket^2} : \Gamma(w) \in \hat A_k \}, \\
B_k := \{ w \in \mathcal{A}^{\llbracket 1, \ell_k \rrbracket^2} : \Gamma(w) \in \hat B_k \}
\end{array}
\right.
\end{equation}
Every pattern of $A_{k+1}$ (respectively $B_{k+1}$) is made of $N_k^2$ patterns of $A_k$ (respectively $B_k)$. In particular $\langle A_{k+1} \rangle \subseteq \langle A_k \rangle$, $\langle B_{k+1} \rangle \subseteq \langle B_k \rangle$.

We recall two definitions. The reconstruction function is associated to a subshift generated by a set of forbidden words {\color{red} which was also described in \cite{CSS, SS} on a different context}. The relative complexity function is associated to a  shift equivariant extension of a dynamical system. The role of the reconstruction function is clearly put forward in Chazottes-Hochman \cite{CH}. The fact that the subshift of finite type obtained in Aubrun-Sablik \cite{AS} or \cite{CH} has zero entropy is relatively easy to prove. We actually need a more precise estimate of the growth of the complexity. An exponential growth proportional to the boundary of a square  (not proportional to the volume of a square) is enough for instance. This issue seems to be missing in \cite{CH}.

\begin{definition}
\label{def.reconstructionfunction}
Let $\hat{\mathcal{A}}$ be a finite alphabet, $D\geq1$, $\hat{\mathcal{F}} \subseteq \hat{\mathcal{A}}^{\llbracket 1,D \rrbracket^2}$, and $\hat{X}=\Sigma^2(\hat{\mathcal{A}},\hat{\mathcal{F}})$ be the subshift generated by the forbidden patterns $\hat{\mathcal{F}}$, as defined before. We define the {\it reconstruction  function} of the subshift $\hat{X}$ as the function $R^{\hat{X}} : \mathbb{N}^* \to \mathbb{N}^*$ which associates to every $\ell$ the smallest $R$ such that every locally $\hat{\mathcal{F}}$-admissible word in $\mathcal{A}^{\llbracket 1, 2R \rrbracket^2}$ admits a globally $\hat{\mathcal{F}}$-admissible restriction in its central block of length $\ell$.
\end{definition}

We will denote by $M(\hat{\mathcal{F}},R)\subseteq\hat{\mathcal{A}}^{\llbracket 1,R\rrbracket^2}$ the set of all square patterns of size $R$ in $\hat{\mathcal{A}}$ such that no pattern of $\hat{\mathcal{F}}$ appears inside, that is,
\begin{equation}
\label{eq.Mk.hat}
\dis M(\hat{\mathcal{F}},R)  \dis :=\{ w\in\hat{\mathcal{A}}^{\llbracket 1,R\rrbracket^2}:\forall\,p \in\hat{\mathcal{F}},\ \forall\,u\in\llbracket 0,R-D\rrbracket^2,\ p \not\sqsubset\sigma^u(w)\} 
\end{equation}

We will use the reconstruction function for the subshift $\hat{X}$ and the sequence $(R_k')_{k\geq 0}$ defined as
\begin{equation}
\label{eq.Rkprime}
\dis R_k':=R^{\hat{X}}(2\ell_k')=\inf\{R>2\ell_k' :\forall\,w\in M(\hat{\mathcal{F}}, R),\ \exists\, x\in X,\ w|_{Q(2\ell_k',R)}=x|_{Q(2\ell_k',R)}\},
\end{equation}
where $Q(2\ell_k',R)$ is the central block of length $2\ell_k'$, formally defined as
\begin{equation}
\label{eq.Q}
\dis Q(2\ell_k',R):=T(2\ell_k',R)+\llbracket 1,2\ell_k'\rrbracket^2, 
\end{equation}
where $T(2\ell_k',R)=\left(\Big\lfloor\frac{R}{2}-\ell_k'\Big\rfloor,\Big\lfloor\frac{R}{2}-\ell_k'\Big\rfloor \right)\in\ZZ^2$.

\begin{remark}
The reconstruction function exists for every subshift as stated in Proposition~\ref{proposition.Sebastian}, but establishing its growth or computability is not always possible.
\end{remark}

\begin{definition}
\label{def.Cn}
Let $\tilde{X}\subset \Sigma^1(\tilde{\mathcal{A}})$ be the effectively closed subshift described before and $\hat{X}\subset \Sigma^2(\hat{\mathcal{A}})$ be the SFT given by the simulation Theorem~\ref{tAS} that simulates $\tilde{X}$. The {\it relative complexity function} of the simulation is the function $C^{\hat{X}} : \mathbb{N}^* \to \mathbb{N}^*$ defined by
\begin{displaymath}
\dis C^{\hat{X}}(\ell):=\sup_{\tilde{w}\in\mathcal{L}(\tilde{X},\ell)} \Card\big(\{ \hat{w}\in\mathcal{L}(\hat{X},\ell):\hat{\Pi}(\hat{w})=\tilde{w}\}\big).
\end{displaymath}
\end{definition}

The two following propositions give us an idea of the growth of each of the functions (reconstruction and relative complexity). The proofs of these two results are in Appendix~\ref{appendix}. They are very technical proofs that are based on the construction described by Aubrun-Sablik~\cite{AS} and the iteration process described previously.

\begin{proposition}
\label{prop.Sebastian3}
Let $\tilde{X}$ be the one-dimensional effectively closed subshift defined before and $\hat{X}$ be the bidimensional SFT from the Aubrun-Sablik theorem. There is a constant $K >0$ and a polynomial $P(n)$ such that \[ R^{\hat{X}}(n) = P(n)K^n.  \]
\end{proposition}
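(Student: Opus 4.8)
The plan is to bound $R^{\hat X}(n)$ by tracing, quantitatively, through the four-layer Aubrun--Sablik construction recalled in Section~2.5, because the reconstruction function of $\hat X$ is governed by how large a locally admissible square must be before the bookkeeping of all four layers is forced to be consistent with a genuine configuration. First I would recall that a locally $\hat{\mathcal F}$-admissible square of side $2R$ carries, on \texttt{Layer}~2, a piece of the self-similar grid coming from Mozes's substitution; a standard property of such substitutive tilings is that a locally admissible patch of side $m$ already appears in the substitutive subshift once $m$ exceeds a fixed multiple of the size of the largest relevant supertile, and the supertile hierarchy visible in a window of size $m$ has depth $O(\log m)$. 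So the ``reconstruction'' cost of \texttt{Layer}~2 alone is polynomial in $n$.

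The genuine exponential factor comes from \texttt{Layer}s~3 and~4: inside a strip of width $2^t$ the Turing machine $\mathcal M$ of Proposition~\ref{proposition:algorithm} must have had enough time to enumerate every forbidden word of $\tilde{\mathcal F}$ up to the length relevant for that strip, and the fourth-layer search machine must have had time to scan \texttt{Layer}~1 for occurrences of those words. By Proposition~\ref{proposition:algorithm}(3) the enumeration of all forbidden words up to length $m$ costs at most $\tau(m)\le P(m)|\tilde{\mathcal A}|^{m}$ steps, and a space-time diagram realising that many computation steps needs a strip of height comparable to $\tau(m)$; since the strip width available in a window of side $2R$ is at most $2R$, one gets a relation of the shape $2R \gtrsim \tau(c\,n) = P'(n)\,K^{n}$ with $K=|\tilde{\mathcal A}|$ (up to adjusting the polynomial and the constant $c$ by the fixed geometric ratios of the construction). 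Conversely, once $R$ is this large, every strip meeting the central block of side $n$ has completed a full computation cycle, so the third and fourth layers certify that no forbidden word of $\tilde{\mathcal F}$ occurs in the projection of the central block; combined with the \texttt{Layer}~1 vertical-duplication constraint and Lemma~\ref{lemma.onedimensional}, this means the central block is globally $\hat{\mathcal F}$-admissible. This gives the matching upper bound $R^{\hat X}(n)\le P''(n)K^{n}$.

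Assembling the two inequalities yields $R^{\hat X}(n)=P(n)K^{n}$ for a suitable polynomial $P$ and constant $K$. The main obstacle is the careful accounting in the previous paragraph: one must make precise how the strip widths $2^t$ that occur with bounded gaps interact with the ``responsibility zones'' of \texttt{Layer}~4, how the exponential timing of the forced restart of the computation (stated in the layer description in Section~2.5 as ``restart their computation after an exponential number of steps'') lines up with $\tau(n)$, and how a \emph{locally} admissible square --- which need not see a full strip --- nevertheless sees enough of the hierarchy to be corrected after blowing up by the reconstruction factor of \texttt{Layer}~2. All of this is geometric but delicate, which is why the detailed argument is deferred to Appendix~\ref{appendix}; here I would only state the proposition and indicate that its proof rests on Proposition~\ref{proposition:algorithm} and on the explicit combinatorics of the Aubrun--Sablik layers.
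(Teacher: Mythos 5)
Your overall plan is the same as the paper's: split the analysis into a ``structural'' part (forcing layers 2--4 to be globally admissible) and a ``recursive'' part driven by the enumeration time $\tau$ of Proposition~\ref{proposition:algorithm} and the linear reconstruction function of the one-dimensional subshift. But two points in your sketch are genuine gaps rather than deferred routine work. First, your claim that the reconstruction cost of \texttt{Layer}~2 is polynomial in $n$ is quantitatively wrong: the Mozes substitution part is indeed cheap, but to guarantee that the clock starts correctly on every strip meeting the central block one must witness the pattern inside roughly $2^{2^k}+2$ vertically stacked level-$k$ macrotiles, where $4^{k-1}<N\leq 4^k$, i.e.\ $2^k\asymp\sqrt{N}$; this gives a structural contribution of order $\sqrt{n}\,C_0^{\sqrt{n}}$, not a polynomial. (It happens to be dominated by the recursive term, so the final bound survives, but the claim as stated would not.)

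Second, and more importantly, the step ``a space-time diagram realising $\tau(cn)$ steps needs a strip of height comparable to $\tau(cn)$, hence $2R\gtrsim\tau(cn)$'' skips the actual mechanism that determines the answer. In the Aubrun--Sablik construction the computation in a zone of level $m$ has $2^{2^{m}}+2$ available steps before it is forced to restart, and the number of steps needed to enumerate and search for all forbidden words of length $R^{\tilde X}(N)\leq C_1N$ in a level-$m$ responsibility zone is bounded (via Fact 4.3 of \cite{AS}) by $\tau(R^{\tilde X}(N))+|\tilde{\mathcal A}|^{R^{\tilde X}(N)+1}R^{\tilde X}(N)m^2 2^{3m+1}\leq C_2m^22^{3m+C_3N}$. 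One must therefore solve $C_2\bar m^22^{3\bar m+C_3N}\leq 2^{2^{\bar m}}+2$ for the smallest admissible level $\bar m$, which yields $\bar m\geq C_5+\log_2 N$ and hence a zone of size $\mathcal{O}(N2^{C_5N})=\mathcal{O}(n4^{C_5n})$. Without this computation you cannot conclude that the exponent is linear in $n$; your heuristic identifies the right inputs but not the inequality that produces $K^n$. Finally, note that the ``matching lower bound'' you sketch is not needed (and not proved in the paper): despite the equality sign in the statement, only the upper bound $R^{\hat X}(n)=\mathcal{O}(nK^n)$ is established and used.
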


\begin{proposition}
\label{prop.Sebastian2}
Let $\hat{X}$ be the $\ZZ^2$-SFT in the Aubrun-Sablik construction. There is a constant $K >0$ and a polynomial $P(n)$ such that
\begin{displaymath}
\dis C^{\hat{X}}(n) = P(n) K^n.
\end{displaymath}
\end{proposition}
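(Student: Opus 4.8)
The plan is to exploit the layered structure of the Aubrun--Sablik SFT $\hat X$ recalled after Theorem~\ref{tAS}: $\hat X$ is a subshift of the product of four SFTs — Layer~1 (the extension by duplication, carried by the $\tilde{\mathcal A}$-coordinate), Layer~2 ($\textbf{T}_{\texttt{Grid}}$), Layer~3 ($\mathcal M_{\texttt{Forbid}}$) and Layer~4 ($\mathcal M_{\texttt{Search}}$) — cut out by a finite set $\hat{\mathcal F}$ of forbidden patterns which only impose local compatibility between the layers. Write $\hat{\mathcal A}=\tilde{\mathcal A}\times\mathcal C$ with $\mathcal C=\mathcal C_2\times\mathcal C_3\times\mathcal C_4$ the alphabets of Layers $2,3,4$, and fix $\tilde w\in\mathcal L(\tilde X,n)$. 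First I would observe that the first layer is pinned down: by Definition~\ref{def.extensionbyduplication} and because $\hat{\mathcal F}$ contains every two-site pattern of $\tilde{\mathcal A}$ that is not vertically aligned, the $\tilde{\mathcal A}$-component of any $\hat w\in\mathcal L(\hat X,n)$ is constant along columns, so imposing $\hat\Pi(\hat w)=\tilde w$ forces it to be the vertical stack of $n$ copies of $\tilde w$. Hence, in the sense of Definition~\ref{def.Cn}, $C^{\hat X}(n)$ is at most the number of admissible Layer-$2$ patterns on $\llbracket 1,n\rrbracket^2$, times the maximum over these of the number of compatible Layer-$3$ patterns, times the maximum number of compatible Layer-$4$ patterns.

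Next I would bound Layers $3$ and $4$. By Berger's encoding~\cite{Berger} used in~\cite{AS}, each is a block presentation of the space--time diagram of a \emph{deterministic} Turing machine (vertical axis $=$ time, horizontal axis $=$ tape), so its one-step transition is row-deterministic: the restriction of row $t+1$ to a horizontal segment is a function of the restriction of row $t$ to that segment together with the two cells flanking it (the head may enter from the side). Consequently, once Layer~2 is fixed — it is Layer~2 that localizes the strips in which the computations run and triggers their resets — a Layer-$3$ pattern on $\llbracket 1,n\rrbracket^2$ is uniquely determined by its restriction to the bottom row and the two lateral columns, i.e. by at most $3n$ symbols, so there are at most $|\mathcal C_3|^{3n}=:K_3^{\,n}$ of them for each Layer-$2$ pattern, and likewise at most $K_4^{\,n}$ Layer-$4$ patterns.

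The grid layer requires more care. $\textbf{T}_{\texttt{Grid}}$ is an SFT extension of the substitutive sofic subshift of Figure~3 of~\cite{AS}, built via Mozes' theorem~\cite{Mozes}, and carries the self-similar hierarchy of vertical strips of width $2^m$ occurring with bounded horizontal gaps. An $n\times n$ window meets only the scales $m$ with $2^m\lesssim n$, i.e. $O(\log n)$ of them; the portion of the window lying at scale $m$ is determined once one knows the horizontal offset of the window relative to the (periodic-like) scale-$m$ skeleton — an element of a set of size $O(2^{2m})$ — together with the $O(n\,2^{-m})$ scale-$m$ strip markers crossing the window. Summing over the $O(\log n)$ scales, an $n\times n$ Layer-$2$ pattern is pinned down by $O(n)$ bits of combinatorial data, so there are at most $P_2(n)K_2^{\,n}$ of them for some polynomial $P_2$ and constant $K_2$; the same conclusion follows from any weaker reading of the hierarchy, e.g. that a Layer-$2$ pattern is determined up to polynomially many choices by the content on the window's boundary.

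Multiplying the three bounds, and using that $\hat{\mathcal F}$ only removes patterns,
\[
C^{\hat X}(n)\ \le\ P_2(n)K_2^{\,n}\cdot K_3^{\,n}\cdot K_4^{\,n}\ =\ P(n)K^{\,n},\qquad K:=K_2K_3K_4,\ P:=P_2,
\]
which is the upper bound. For the matching lower bound I would exhibit one word $\tilde w$ over which the freedom in the horizontal placement of the grid's width-$2^m$ strips already produces $\ge c\kappa^{\,n}$ distinct $n\times n$ patterns for some $\kappa>1$; combined with the upper bound this yields $C^{\hat X}(n)=P(n)K^{\,n}$ in the sense of a $\Theta$-estimate, which is the form used for the entropy transfer in the next chapter. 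The hard part is the grid step: turning ``$O(n)$ bits describe an $n\times n$ window of $\textbf{T}_{\texttt{Grid}}$'' into a rigorous count means unwinding the Mozes substitution of Figure~3 of~\cite{AS} scale by scale and bookkeeping the degrees of freedom (strip positions, fillers of the responsibility zones) at each of the $O(\log n)$ relevant scales; by contrast the Turing-machine layers are easy, since determinism of the space--time diagram makes them boundary-rigid, and the first-layer step is immediate from the definition of the extension by duplication.
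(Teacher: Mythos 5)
Your overall strategy is exactly the one the paper uses: factor the relative complexity over the four Aubrun--Sablik layers, note that the first layer is (essentially) fixed once $\tilde w$ is prescribed, and use the determinism of the space--time diagrams to show that, conditionally on Layer~2, an $n\times n$ pattern of Layer~3 (resp.\ Layer~4) is determined by its bottom row and two lateral columns, hence contributes at most $K_3^n$ (resp.\ $K_4^n$). Where you diverge is the grid layer, and that is precisely the point where your argument stops being a proof: you assert that an $n\times n$ window of $\textbf{T}_{\texttt{Grid}}$ is ``pinned down by $O(n)$ bits of combinatorial data,'' defer the bookkeeping over the $O(\log n)$ scales, and yourself flag it as the hard part. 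The paper avoids this entirely with the standard complexity bound for substitutive subshifts: every pattern of size $n$ occurs inside a concatenation of at most four $k$-th powers of the substitution (with $\min\{n_1,n_2\}^{k-1}\le n$), so the number of such patterns is at most $|\mathcal{A}_2|^4(n\max\{n_1,n_2\})^2=\mathcal{O}(n^2)$ --- a two-line argument giving a \emph{polynomial}, not merely exponential, count for Layer~2, and yielding $C^{\hat X}(n)=\mathcal{O}(n^2K^n)$. If you want to keep your route, you must actually carry out the scale-by-scale count; as written it is a plausibility sketch, not a proof.

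A second point: your closing paragraph about a matching exponential lower bound is both unnecessary and incorrect as sketched. The proposition is only ever used through item~1 of Lemma~\ref{lemma:bounds}, i.e.\ through $\limsup_n \frac1n\ln C^{\hat X}(n)<+\infty$; no lower bound on $C^{\hat X}$ enters the entropy transfer. Moreover, the specific mechanism you propose --- exponentially many $n\times n$ patterns coming from the horizontal placement of the width-$2^m$ strips of the grid layer --- cannot work, since the substitutive Layer~2 admits only $\mathcal{O}(n^2)$ patterns of size $n$ by the argument above; any genuine exponential growth would have to come from the computation layers, and establishing it is neither needed nor attempted in the paper. Drop that part and repair the Layer~2 count (or replace it by the substitutive bound) and your proof coincides with the paper's.
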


As a result of these two propositions, we have the next lemma that gives us important bounds for the reconstruction function and the relative complexity function that will be necessary in our final proof.

\begin{lemma}[A priori estimates]
\label{lemma:bounds}
Let $R^{\hat{X}}$ and $C^{\hat{X}}$ be the reconstruction and relative complexity function of the SFT given by Aubrun-Sablik, then
\begin{enumerate}
	\item $\displaystyle \limsup_{n \to + \infty} \frac{1}{n} \ln( C^{\hat{X}}(n) ) < +\infty$,
	\item $\displaystyle \limsup_{n\to+\infty} \frac{1}{n} \ln(R^{\hat{X}}(n)) < +\infty$.
\end{enumerate}
\end{lemma}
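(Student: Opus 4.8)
The plan is that this lemma is an essentially immediate corollary of Propositions~\ref{prop.Sebastian3} and~\ref{prop.Sebastian2}; the genuine work has already been carried out (in the appendix) in establishing the shape $P(n)K^n$ for the reconstruction and relative complexity functions, and what remains here is only an elementary asymptotic estimate.

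First I would treat item~(1). By Proposition~\ref{prop.Sebastian2} there exist a constant $K_C>0$ and a polynomial $P_C$ such that $C^{\hat X}(n)=P_C(n)K_C^n$ for every $n$. Since $C^{\hat X}(n)$ is a positive integer for each $n$ (it counts preimages under $\hat\Pi$ of a word of $\mathcal L(\tilde X,\ell)$, and such a preimage always exists), the right-hand side is positive for all large $n$, so $\ln C^{\hat X}(n)$ is well defined and
\[
\frac1n\ln C^{\hat X}(n)=\frac1n\ln P_C(n)+\ln K_C.
\]
Because $|\ln P_C(n)|=O(\ln n)=o(n)$, the first summand tends to $0$, hence $\tfrac1n\ln C^{\hat X}(n)\to\ln K_C$. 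In particular the $\limsup$ exists and equals $\ln K_C<+\infty$, which is item~(1).

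Item~(2) is obtained in exactly the same manner, using Proposition~\ref{prop.Sebastian3} in place of Proposition~\ref{prop.Sebastian2}: writing $R^{\hat X}(n)=P_R(n)K_R^n$ with $K_R>0$ and $P_R$ a polynomial, the same computation gives $\tfrac1n\ln R^{\hat X}(n)=\tfrac1n\ln P_R(n)+\ln K_R\to\ln K_R<+\infty$. I do not anticipate any real obstacle here; the only points worth noting are that the positivity of $C^{\hat X}$ and $R^{\hat X}$ legitimises taking logarithms, and that in fact the full limit (not merely the $\limsup$) exists and is finite, which is marginally stronger than the stated conclusion.
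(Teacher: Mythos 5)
Your proposal is correct and is exactly the paper's route: the lemma is stated as an immediate consequence of Propositions~\ref{prop.Sebastian3} and~\ref{prop.Sebastian2}, and the computation $\tfrac1n\ln(P(n)K^n)=\tfrac1n\ln P(n)+\ln K\to\ln K$ is all that is needed. The only caveat is that the appendix actually establishes $\mathcal{O}(P(n)K^n)$ upper bounds rather than exact equalities, so your final remark that the full limit exists is not justified — but the $\limsup$ bound, which is all the lemma claims, follows just the same.
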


The demonstration of these properties is more technical and uses computability theory and Turing machines. These proofs can be found in Appendix~\ref{appendix} but for now on we will assume that they are true.

To simplify the notations, we write
\begin{equation}
\label{eq.Rk.Ck.Qk.Tk.Mkprime}
\begin{array}{ll}
\dis R_k' := R^{\hat{X}}(2\ell_k'),  & \dis C_k' := C^{\hat{X}}(\ell_k'),   \\
&   \\
\dis Q_k' := \mathcal{Q}(2\ell_k' 2,R_k')\subset\ZZ^2,  & \dis T_k' := T(2\ell_k',R_k')\in\ZZ^2,  \\
&   \\
\dis \hat M_k' = M(\hat{\mathcal{F}},R_k')\subset \hat{\mathcal{A}}^{\llbracket1,R_k'\rrbracket^2}, & \dis M_k' = \Gamma^{-1}(\hat M_k')\subset \mathcal{A}^{\llbracket1,R_k'\rrbracket^2}.  \\
\end{array}
\end{equation}

We denote by $[M_k']$ the cylinder generated by the set $M_k'$, which consists of the configurations that are $\mathcal{F}$-locally admissible in $\llbracket 1,R_k'\rrbracket^2$. We compute the topological entropy of patterns that are most of the time (in terms of translations of $\mathbb{Z}^2$) globally admissible with respect to $\hat{\mathcal{F}}$. We naturally  point out the relative complexity function. Notice that the relative entropy is computed using the volume of the square.

\begin{lemma}
	\label{lemma:CoveringComplexity}
	Let  $n > 2 \ell >2$ be some integers, $\epsilon \in(0,1)$ be some real number, and $S \subseteq \llbracket 0,n-2 \ell \rrbracket^2$ be a subset satisfying $\Card(S) \geq n^2(1-\epsilon)$. Let $\hat E$ be the set
	\begin{gather*}
	\hat E := \big\{ w \in \hat{\mathcal{A}}^{\llbracket 1, n \rrbracket^2} : 
	\forall\,  u \in S , \ \sigma^u(w)|_{\llbracket 1, 2 \ell\, \rrbracket^2} \in \mathcal{L}(\hat X, 2\ell)  \big\}.
	\end{gather*}
	Then
	\begin{displaymath}
	\dis \frac{1}{n^2} \ln(\Card(\hat E)) \leq \frac{1}{\ell} \ln(\Card(\tilde{\mathcal{A}})) +  \frac{1}{\ell^2}\ln(C^{\hat{X}}(\ell)) + \epsilon \ln (\Card(\hat{\mathcal{A}})).
	\end{displaymath}
\end{lemma}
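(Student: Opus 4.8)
The plan is to cover the large square $\llbracket 1,n\rrbracket^2$ by a sparse grid of non-overlapping windows of size $\ell$, to count the number of admissible $\ell$-blocks that can be placed on that grid using the relative complexity function $C^{\hat X}$, and to pay a crude ``full alphabet'' price for the remaining sites. Concretely, first I would fix the sublattice $\ell\mathbb{Z}^2$ and consider the family of disjoint windows $Q_v := v + \llbracket 1,\ell\rrbracket^2$ for $v \in \ell\mathbb{Z}^2 \cap \llbracket 0, n-\ell\rrbracket^2$; there are at most $(n/\ell)^2$ of them, and they cover all but a boundary strip of $\llbracket 1,n\rrbracket^2$. Call a window $Q_v$ \emph{good} if its lower-left corner $v$ lies in $S$ (so that $\sigma^v(w)|_{\llbracket 1,2\ell\rrbracket^2}\in\mathcal{L}(\hat X,2\ell)$, hence in particular $\sigma^v(w)|_{\llbracket 1,\ell\rrbracket^2}\in\mathcal{L}(\hat X,\ell)$), and \emph{bad} otherwise. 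Since $\Card(S)\geq n^2(1-\epsilon)$, the number of bad windows is at most $\epsilon n^2/\ell^2$ — this is where the hypothesis on $S$ is used.

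Next I would estimate $\Card(\hat E)$ by specifying a configuration $w\in\hat E$ in three stages. (1) On each good window, the restriction $\sigma^v(w)|_{\llbracket 1,\ell\rrbracket^2}$ is an element of $\mathcal{L}(\hat X,\ell)$; grouping these by their common one-dimensional projection $\hat\Pi$, there are at most $\Card(\mathcal{L}(\tilde X,\ell))\le \Card(\tilde{\mathcal A})^{\ell}$ choices for the projected word and then at most $C^{\hat X}(\ell)$ choices of the $\hat X$-block lying above each fixed projection, by Definition~\ref{def.Cn}. Since a vertically aligned $\hat X$-pattern is essentially determined by one horizontal row, it is the projection count and the complexity count that matter, and this gives a bound of $\Bigl(\Card(\tilde{\mathcal A})^{\ell}\,C^{\hat X}(\ell)\Bigr)$ per good window — strictly speaking one should be slightly careful and bound the number of \emph{all} good-window fillings by $\Card(\tilde{\mathcal A})^{\ell}\cdot C^{\hat X}(\ell)$ raised to the number of good windows, absorbing the fact that distinct windows are chosen independently. (2) On each bad window, simply use the trivial bound $\Card(\hat{\mathcal A})^{\ell^2}$. (3) On the leftover boundary strip, of cardinality at most $2n\ell \le \epsilon n^2$ for the relevant range (or, more honestly, a lower-order term that one folds into the $\epsilon$ term by choosing the grid to start at a corner), again use $\Card(\hat{\mathcal A})$ per site.

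Multiplying these contributions:
\[
\Card(\hat E) \;\le\; \Bigl(\Card(\tilde{\mathcal A})^{\ell} C^{\hat X}(\ell)\Bigr)^{(n/\ell)^2} \cdot \Card(\hat{\mathcal A})^{\epsilon n^2} \cdot \Card(\hat{\mathcal A})^{\,o(n^2)}.
\]
Taking $\tfrac{1}{n^2}\ln(\cdot)$ yields $\tfrac{1}{\ell}\ln\Card(\tilde{\mathcal A}) + \tfrac{1}{\ell^2}\ln C^{\hat X}(\ell) + \epsilon\ln\Card(\hat{\mathcal A})$, plus a term tending to $0$; absorbing that term (or, better, arranging the grid so that the boundary strip is genuinely paid for within the $\epsilon$-budget) gives exactly the claimed inequality.

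The main obstacle I anticipate is the bookkeeping of the boundary strip and the precise justification that the good windows can be filled ``independently'': one must check that the only constraint used is membership of each window-restriction in $\mathcal{L}(\hat X,\ell)$, so that the product bound is legitimate, and that the overlap windows of size $2\ell$ appearing in the definition of $\hat E$ and $\mathcal{L}(\hat X,2\ell)$ are only needed to \emph{certify} admissibility, not to constrain the counting. Handling the $o(n^2)$ boundary term cleanly — ideally by noting it is dominated by $\epsilon n^2$ or by a harmless adjustment of constants — is the one routine-but-fiddly point; everything else is a direct application of Definition~\ref{def.Cn} and the pigeonhole bound on $\Card(S)$.
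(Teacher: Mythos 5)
There is a genuine gap in your counting of bad windows. You declare a window $Q_v$ good when its lower-left corner $v$ (a point of the sublattice $\ell\mathbb{Z}^2$) lies in $S$, and you claim that $\Card(S)\geq n^2(1-\epsilon)$ forces the number of bad windows to be at most $\epsilon n^2/\ell^2$. That does not follow: the complement of $S$ in $\llbracket 0,n-2\ell\rrbracket^2$ has cardinality at most $\epsilon n^2$, but nothing prevents it from being concentrated exactly on the sublattice $\ell\mathbb{Z}^2$, which contains only about $(n/\ell)^2$ points. As soon as $\epsilon\geq 1/\ell^2$, the set $S$ may avoid \emph{every} grid corner, so that all $(n/\ell)^2$ windows are bad; your bad-window contribution then becomes $\Card(\hat{\mathcal{A}})^{\ell^2\cdot(n/\ell)^2}=\Card(\hat{\mathcal{A}})^{n^2}$ and the estimate collapses. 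The factor $1/\ell^2$ in your bad-window count is an unjustified equidistribution assumption on $\llbracket 0,n-2\ell\rrbracket^2\setminus S$.

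The fix is precisely where you were too quick to dismiss the role of the $2\ell$-windows. The paper calls a grid cell $v$ good when the \emph{entire} block $\ell v+\llbracket 0,\ell-1\rrbracket^2$ meets $S$ in at least one point $u$ (not necessarily the corner); then, because the hypothesis on $\hat E$ gives admissibility of the full $2\ell\times 2\ell$ window at $u$, the grid-aligned square $\ell v+\llbracket 1+\ell,2\ell\rrbracket^2$ is contained in $u+\llbracket 1,2\ell\rrbracket^2$ and is therefore globally admissible. With this definition the union of the good blocks contains $S$, so the uncontrolled region has cardinality at most $\epsilon n^2$ \emph{directly}, with no division by $\ell^2$ and no distributional assumption. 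So the $2\ell$ overlap is not merely a certificate of admissibility: it is what lets you transfer admissibility from an arbitrary point of $S$ to a square aligned on your grid, and it is the step your argument is missing. The remainder of your proposal (the bound $\Card(\mathcal{L}(\hat X,\ell))\leq\Card(\tilde{\mathcal{A}})^{\ell}\,C^{\hat X}(\ell)$ per good square, the product bound over disjoint squares, and the trivial bound on the leftover region) matches the paper and is fine.
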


\begin{proof}
	Here we consider $n$ as a multiple of $\ell$ in order to simplify the notations since we are interested in the limit when $n\rightarrow+\infty$ there is no problem. We decompose the square $\llbracket 1,n \rrbracket^2$ into a disjoint union of squares of size $\ell$,
	\begin{displaymath}
	\dis \llbracket1,n\rrbracket^2=\bigcup_{v \in \llbracket 0, \frac{n}{\ell} -1 \rrbracket^2} \left( \ell v+ \llbracket 1 ,\ell \rrbracket^2 \right).
	\end{displaymath}
	
	We define the set of indices $v$ that intersect $S$, more precisely, we have
	\begin{displaymath}
	\dis V:=\left\{v\in\Big\llbracket 0,\frac{n}{\ell}-2\Big\rrbracket^2:\left(\ell v+\llbracket 0,\ell -1\rrbracket^2\right)\bigcap S\neq \varnothing \right\}.
	\end{displaymath}
	
	Then for every  $w \in \hat E$, $v \in V$, and $u \in \big( \ell v+ \llbracket 0 ,\ell-1 \rrbracket^2 \big) \bigcap S$, therefore
	\begin{displaymath}
	\dis \left(\ell v + \llbracket 1 + \ell , 2\ell \rrbracket^2 \right) \subseteq \left( u+\llbracket 1, 2\ell \rrbracket^2 \right).
	\end{displaymath}
	Since we are taking $u\in S$ we have that
	\begin{displaymath}
	\dis \sigma^u(w) |_{\llbracket 1,2\ell \rrbracket^2} \in \mathcal{L}(\hat X, 2\ell),
	\end{displaymath}
	and then
	\begin{displaymath}
	\dis \sigma ^{\ell v+(\ell,\ell)}(w)|_{\llbracket 1 , \ell \rrbracket^2} \in \mathcal{L}(\hat X, \ell).
	\end{displaymath}
	
	The restriction of $w$ on every square $\big(\ell v + \llbracket 1 + \ell , 2\ell \rrbracket^2 \big)$ is globally admissible with respect to $\hat{\mathcal{F}}$. Note that these squares are pairwise disjoint and the cardinality of their union  is at least $n^2(1-\epsilon)$, since
	\begin{displaymath}
	\dis \Card\left(\bigcup_{v\in V}\left(\ell v+\llbracket 1+\ell,2\ell\rrbracket^2\right)\right)=\Card\left(\bigcup_{v\in V}\left(\ell v+\llbracket 0 ,\ell- 1\rrbracket^2\right)\right)\geq\Card(S).
	\end{displaymath}
	
	Hence we have proved that $\hat E$ is a subset of the set of patterns $w$ made of independent and disjoint words $(w_v)_{v \in V}$, with $w_v \in \mathcal{L}(\hat X,\ell)$, and  of arbitrary symbols on  $\llbracket 0, n-2\ell \rrbracket^2 \setminus S$ of size at most $\epsilon n^2$. Using the trivial bound $\Card(\mathcal{L}(\tilde X, \ell)) \leq \Card(\tilde{\mathcal{A}})^{\ell}$, we have
	\begin{displaymath}
	\dis \Card(\hat E) \leq \left( \Card(\tilde{\mathcal{A}})^{\ell} \cdot C^{\hat{X}}(\ell) \right)^{\left(n/\ell\right)^2}\cdot \Card(\hat{\mathcal{A}})^{\epsilon n^2}
	\end{displaymath}
	and therefore
	\begin{displaymath}
	\dis \frac{1}{n^2} \ln(\Card(\hat E)) \leq \frac{1}{\ell} \ln(\Card(\tilde{\mathcal{A}})) +  \frac{1}{\ell^2}\ln(C^{\hat{X}}(\ell)) + \epsilon \ln (\Card(\hat{\mathcal{A}})).
	\end{displaymath}
\end{proof}

\chapter{Analysis of the zero-temperature limit}

Consider the full shift $\Sigma^2(\mathcal{A})$ and the finite set of forbidden patterns $\mathcal{F}$ for the subshift $X$. We denote by $F$ the cylinder defined by
\begin{equation}
\label{eq.F}
F:=[\mathcal{F}].
\end{equation}

We consider
\begin{equation}
\label{eq.varphi}
\left\{ 
\begin{array}{rcl}
\varphi : \Sigma^2(\mathcal{A}) & \to & \RR \\
x & \mapsto & \varphi(x)= \mathds{1}_F(x). 
\end{array}
\right.
\end{equation}
We consider $(\beta_k)_{k\geq0}$ as in Definition~\ref{notation}. We denote by $\mathcal{G}(\beta_k\varphi)\subset \mathcal{M}_1(\Sigma^2(\mathcal{A}))$ the set of the equilibrium measures for the potential $\varphi$ at inverse temperature $\beta_k$.

Since our potential $\varphi$ has finite range, it is regular and as in Theorem~\ref{teo.equilibrium=gibbs.invariant} the set of equilibrium measures for $\beta\varphi$ is equal to the set of shift invariant Gibbs measures. Our main goal is to prove that for such a sequence $(\beta_k)_{k\geq 0}$ when $\beta_k\rightarrow+\infty$ any sequence of equilibrium measures $\mu_{\beta_k}$ does not converge when $k\rightarrow+\infty$.

An invariant measure that has support inside $X$ gives zero mass to $F$. We quantify in the following lemma this estimate when the support of the measure is close to $X$, that is inside $\langle L_k \rangle$.

\begin{lemma}
	\label{lemma:BoundFromAboveObservable}
	Let be $k \geq0$ and $\nu$ be  an ergodic probability measure on $\Sigma^2(\mathcal{A})$ such that $\Supp(\nu) \subseteq \langle L_k \rangle$. Then
	\begin{displaymath}
	\nu(F) \leq \frac{2D}{\ell_k}
	\end{displaymath}
\end{lemma}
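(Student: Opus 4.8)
The plan is to use the fact that every configuration in $\langle L_k\rangle$ is tiled by globally $X$-admissible $\ell_k\times\ell_k$ blocks, so that a forbidden pattern of $\mathcal{F}$ (which has support $\llbracket1,D\rrbracket^2$) can only occur where the tiling lines pass, and such positions have density at most $2D/\ell_k$ in $\mathbb{Z}^2$. Since this density bound will be uniform over all configurations, integrating it against $\nu$ yields the estimate without even appealing to ergodicity.

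First I would record that $x\in F=[\mathcal{F}]$ is equivalent to $x|_{\llbracket1,D\rrbracket^2}\in\mathcal{F}$, hence $\sigma^w(x)\in F$ is equivalent to the restriction of $x$ to the box $w+\llbracket1,D\rrbracket^2$ being a forbidden pattern. Next, fix $x\in\langle L_k\rangle$ and choose $u\in\llbracket1,\ell_k\rrbracket^2$ with $(\sigma^{u+\ell_k v}(x))|_{\llbracket1,\ell_k\rrbracket^2}\in L_k$ for every $v\in\mathbb{Z}^2$; this partitions $\mathbb{Z}^2$ into translated $\ell_k$-blocks on each of which $x$ agrees (up to translation) with some pattern of $L_k=\mathcal{L}(X,\ell_k)$. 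Every such pattern is globally admissible with respect to $X=\Sigma^2(\mathcal{A},\mathcal{F})$, hence also locally $\mathcal{F}$-admissible, i.e.\ contains no pattern of $\mathcal{F}$. Consequently, if $\sigma^w(x)\in F$, the box $w+\llbracket1,D\rrbracket^2$ cannot lie inside a single tile: it must cross a horizontal or a vertical tiling line.

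The combinatorial step is then a direct count. For each coordinate direction, the set of first coordinates $w_1$ for which the interval $\llbracket w_1+1,w_1+D\rrbracket$ straddles a tiling line is periodic of period $\ell_k$ and meets each period in at most $D$ integers, hence has upper density at most $D/\ell_k$; adding the two directions, the set $G_x:=\{w\in\mathbb{Z}^2:\sigma^w(x)\in F\}$ satisfies $\Card\big(G_x\cap\llbracket-N,N\rrbracket^2\big)\le \tfrac{2D}{\ell_k}(2N+1)^2+CN$ for a constant $C$ depending only on $D$, uniformly in $x$ and $N$ (the configuration-dependent offset $u$ is harmless precisely because this bound does not see $u$). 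Finally, integrating the identity $\mathds{1}_F(\sigma^w x)=\mathds{1}_{G_x}(w)$ over $w\in\llbracket-N,N\rrbracket^2$ and over $x$ against $\nu$, and using shift-invariance of $\nu$ to replace each $\nu(\{x:\sigma^w x\in F\})$ by $\nu(F)$, gives $(2N+1)^2\,\nu(F)\le \tfrac{2D}{\ell_k}(2N+1)^2+CN$, whence $\nu(F)\le \tfrac{2D}{\ell_k}+\tfrac{C}{2N+1}$; letting $N\to\infty$ concludes. (Alternatively one applies Birkhoff's theorem to $\mathds{1}_F$ and uses the pointwise density bound on $G_x$ directly.) There is no serious obstacle here; the only point needing a little care is the dependence of the tiling offset on $x$, which is dispatched by the uniformity just noted.
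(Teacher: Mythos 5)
Your argument is correct and rests on the same core combinatorial observation as the paper's proof: every configuration of $\langle L_k\rangle$ is tiled by translates of patterns of $L_k=\mathcal{L}(X,\ell_k)$, these are globally (hence locally) $\mathcal{F}$-admissible, so a window $w+\llbracket1,D\rrbracket^2$ carrying a forbidden pattern must straddle a tiling line, and such $w$ have density at most $2D/\ell_k$. The only real difference is the last step. The paper applies Birkhoff's ergodic theorem to $\mathds{1}_F$ at a generic point of $\Supp(\nu)$ and reads off $\nu(F)$ as the limiting frequency of visits to $F$, which is where the ergodicity hypothesis is actually used. You instead average the uniform pointwise bound on $\Card(G_x\cap\llbracket-N,N\rrbracket^2)$ against $\nu$ and use only shift-invariance to identify $\sum_{w}\nu(\sigma^{-w}F)$ with $(2N+1)^2\nu(F)$, then let $N\to\infty$. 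This buys you a slightly stronger statement (the conclusion holds for any shift-invariant measure supported in $\langle L_k\rangle$, ergodic or not) at no extra cost; your remark that the $x$-dependent offset $u$ is harmless because the density bound is uniform in $u$ is exactly the point that makes the integration legitimate. Both routes give the same constant $2D/\ell_k$, the paper bounding the exceptional set per tile by $\Card(\llbracket0,\ell_k-1\rrbracket^2\setminus\llbracket0,\ell_k-D\rrbracket^2)\le 2D\ell_k$ and you by a union bound over the two coordinate directions, which is the same count.
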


\begin{proof}
	We assume that $\Supp(\nu) \subseteq \langle L_k \rangle$ where $L_k = \mathcal{L}(X,\ell_k)$ the language of size $\ell_k$ of the subshift $X=\Sigma^2(\mathcal{A},\mathcal{F})$. By Birkhoff's ergodic theorem, for $\nu$-almost every point $x$
	\begin{displaymath}
	\dis \nu(F) =\lim_{n\to+\infty} \frac{ \Card(\{  u \in \Lambda_n : \sigma^u(x) \in F\})}{\Card(\Lambda_n)}.
	\end{displaymath}
	We choose such a point $x \in \langle L_k \rangle$ and $s\in \llbracket 1,\ell_k \rrbracket^2$ such that $\sigma^{s}(x)$ and all its translates  $y_t = \sigma^{s+t \ell_k}(x)$, $t \in \mathbb{Z}^2$,  satisfy $y_t|_{\llbracket 1, \ell_k \rrbracket^2} \in L_k$. We take a sub-sequence $\tilde{\Lambda}_n$ of $\Lambda_n$ with size a multiple of $\ell_k$ defined as
	\begin{displaymath}
	\dis \tilde\Lambda_n := \llbracket -n \ell_k, n \ell_k-1 \rrbracket.
	\end{displaymath}
	
	Note that
	\begin{displaymath}
	\begin{array}{rcl}
	\nu(F)  & = & \dis \lim_{n\to+\infty} \frac{ \Card(\{  u \in \tilde\Lambda_n-s : \sigma^u(x) \in F\})}{\Card(\tilde\Lambda_n)}, \quad y = \sigma^s(x). \\
	& = & \dis \lim_{n\to+\infty} \frac{ \Card(\{  u \in \tilde\Lambda_n : \sigma^u(y) \in F\})}{\Card(\tilde\Lambda_n)}.
	\end{array}
	\end{displaymath}
	By definition of $L_k$ as described in (\ref{eq.Lk}) we have that
	\begin{displaymath}
	x\in\langle L_k\rangle \Rightarrow \forall\, w \in\mathbb{Z}^2, \ \sigma^{s+w \ell_k}(x)=\sigma^{w \ell_k}(y)|_{\llbracket 1,\ell_k \rrbracket^2} \in L_k
	\end{displaymath}
	and
	\begin{displaymath}
	\forall v \in \llbracket 0,\ell_k-D\rrbracket^2,\forall w\in\mathbb{Z}^2, \sigma^{v+w \ell_k}(y)|_{\llbracket 1,D\rrbracket^2}\notin\mathcal{F}.
	\end{displaymath}
	
	Thus for a fixed $w\in\ZZ^2$ we have that the number of possible $v\in \llbracket 0, \ell_k-1 \rrbracket^2$ such that $\sigma(v+w \ell_k)(y)\in\mathcal{F}$ is bounded by
	\begin{displaymath}
	\dis \Card\left( \llbracket 0, \ell_k-1 \rrbracket^2 \setminus \llbracket 0, \ell_k-D \rrbracket ^2\right) \leq  2D \ell_k.
	\end{displaymath}
	
	Therefore if we calculate this bound in the box $\tilde{\Lambda}_n$ we obtain that
	\begin{displaymath}
	\dis \Card(\{  u \in \tilde\Lambda_n : \sigma^u(y) \in F\}) \leq  (2n)^2 2D \ell_k.
	\end{displaymath}
	Since $\Card(\tilde\Lambda_n)=(2n)^2\ell_k^2$, we take the quotient on each side and take the limit with $n\to+\infty$ we obtain $\nu(F) \leq 2D/ \ell_k$.
\end{proof}

We show in the following lemma that an equilibrium measure at low temperature have most of its support close to the largest compact invariant set on which the potential is zero. We quantify more precisely the speed of convergence of the measure on the set of locally admissible patterns as the size of the box goes to infinity.

\begin{lemma}
	\label{lemma:sizeNeighborhoodSubshift}
	For every $k$ and every equilibrium measure $\mu_{\beta_k}$,
	\begin{equation}
	\label{eq.epsilon_k}
	\dis \mu_{\beta_k}\left(\Sigma^2(\mathcal{A}) \setminus [M_k']\right) \leq \frac{R_k'^2}{\beta_k}\ln( \Card(\mathcal{A})) =: \epsilon_k
	\end{equation}
	where $R_k'$ as defined in (\ref{eq.Rkprime}) and $M_k'$ as defined in (\ref{eq.Rk.Ck.Qk.Tk.Mkprime}).
\end{lemma}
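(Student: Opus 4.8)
The plan is to exploit the variational characterization of equilibrium measures together with a crude lower bound for the pressure coming from a single configuration inside $X$. First I would produce a lower bound on $P(\beta_k\varphi)$: since $X=\Sigma^2(\mathcal{A},\mathcal{F})$ is nonempty and invariant, pick any invariant measure $\nu_0$ supported on $X$; then $\varphi=\mathds{1}_F$ vanishes $\nu_0$-almost everywhere (a configuration in $X$ contains no pattern of $\mathcal{F}$, hence never lies in $F=[\mathcal{F}]$), so $\int\beta_k\varphi\,d\nu_0=0$ and $P(\beta_k\varphi)\geq h(\nu_0)\geq 0$. In particular $P(\beta_k\varphi)\geq 0$ for every $k$.

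Next I would use the defining equality for $\mu_{\beta_k}$: $P(\beta_k\varphi)=h(\mu_{\beta_k})-\beta_k\int\varphi\,d\mu_{\beta_k}$. Combining with $P(\beta_k\varphi)\geq 0$ and $h(\mu_{\beta_k})\leq h_{\mathrm{top}}(\Sigma^2(\mathcal{A}))=\ln(\Card(\mathcal{A}))$ (Theorem~\ref{teo.variational-principle}), I get
\begin{displaymath}
\beta_k\int\varphi\,d\mu_{\beta_k}\;\leq\; h(\mu_{\beta_k})\;\leq\;\ln(\Card(\mathcal{A})),
\end{displaymath}
so $\int\varphi\,d\mu_{\beta_k}=\mu_{\beta_k}(F)\leq \ln(\Card(\mathcal{A}))/\beta_k$. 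This is the key quantitative input. The remaining step is purely combinatorial/geometric: relate the event ``a configuration fails to be $\mathcal{F}$-locally admissible somewhere in a box of size $R_k'$'' to the event ``$\sigma^u(x)\in F$ for some translate $u$ in that box''. By definition of $M_k'=\Gamma^{-1}(\hat M_k')$ and \eqref{eq.Mk.hat}, a point $x\notin[M_k']$ means precisely that there is some $u\in\llbracket 0,R_k'-D\rrbracket^2$ with $\sigma^u(x)|_{\llbracket 1,D\rrbracket^2}\in\mathcal{F}$, i.e. $\sigma^u(x)\in F$.

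Therefore, by subadditivity and shift invariance of $\mu_{\beta_k}$,
\begin{displaymath}
\mu_{\beta_k}\big(\Sigma^2(\mathcal{A})\setminus[M_k']\big)\;\leq\;\sum_{u\in\llbracket 0,R_k'-D\rrbracket^2}\mu_{\beta_k}(\sigma^{-u}F)\;=\;\Card\big(\llbracket 0,R_k'-D\rrbracket^2\big)\,\mu_{\beta_k}(F)\;\leq\;R_k'^2\,\frac{\ln(\Card(\mathcal{A}))}{\beta_k},
\end{displaymath}
which is exactly $\epsilon_k$ as claimed. The only point requiring a little care is the identification of the complement of $[M_k']$ with a finite union of translates of $F$, and the fact that $\varphi$ integrates to zero against any measure supported on $X$; both are immediate from the definitions of $\mathcal{F}$, $F$ and $M_k'$. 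I do not expect a serious obstacle here — the content of the lemma is really just ``pressure is nonnegative, entropy is bounded, hence $\mu_{\beta_k}(F)$ is small, hence the measure concentrates on locally admissible boxes'', and the union bound over the $O(R_k'^2)$ translates gives the stated constant.
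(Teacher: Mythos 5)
Your proof is correct and follows essentially the same route as the paper: nonnegativity of the pressure (via an invariant measure supported on $X$), the trivial entropy bound $h(\mu_{\beta_k})\leq\ln(\Card(\mathcal{A}))$ to get $\mu_{\beta_k}(F)\leq\ln(\Card(\mathcal{A}))/\beta_k$, and a union bound over the $O(R_k'^2)$ translates identifying $\Sigma^2(\mathcal{A})\setminus[M_k']$ with $\bigcup_u\sigma^{-u}F$. If anything, your write-up makes the union-bound step more cleanly than the paper's intermediate display (which places the factor $R_k'^2$ on the wrong side of the inequality before arriving at the same stated conclusion).
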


\begin{proof}
	If $x\notin [M_k']$, there exists $u\in\llbracket 1,R_k'-D\rrbracket^2$ such that $\sigma^{u}(x)\in F$ and therefore $\varphi(\sigma^{u}(x))=1$. Thus we obtain
	\begin{displaymath}
	\begin{array}{rcl}
	\dis \int \beta_k\varphi d\mu_{\beta_k} & = & \dis \int \beta_k \mathds{1}_F(y) d\mu_{\beta_k}(y) \\
	& \geq & \dis \beta_k R_k'^2 \cdot \mu_{\beta_k}\left(\Sigma\backslash [M_k']\right),
	\end{array}
	\end{displaymath}
	and therefore
	\begin{displaymath}
	\dis -\beta_k \int \varphi d\mu_{\beta_k} \leq -\beta_k R_k'^2\cdot \mu_{\beta_k}\left(\Sigma\backslash [M_k']\right).
	\end{displaymath}
	
	We have that $P(\beta_k\varphi)\geq 0$ and also by the variational principle we obtain
	\begin{displaymath}
	\dis 0\leq P(\beta_k \varphi) =h(\mu_{\beta_k}) -\beta_k \int \varphi d\mu_{\beta_k} \leq h_{top}(\Sigma)-\beta_k R_k'^2\cdot \mu_{\beta_k}\left(\Sigma\backslash [M_k']\right).
	\end{displaymath}
	Since $h_{top}(\Sigma)\leq \ln(\Card(\mathcal{A}))$ we have
	\begin{displaymath}
	\dis \mu_{\beta_k}\left(\Sigma \setminus [M_k']\right) \leq \frac{R_k'^2}{\beta_k}\ln( \Card(\mathcal{A})).
	\end{displaymath}
\end{proof}

The following lemma shows that the topological entropy of the extension depends on the frequency of the symbol $0$ and not on the topological entropy of the base dynamics. By lifting patterns of the 1D subshift  we can only expect an exponential growth proportional to the size of the boundary of a box. As the Aubrun-Sablik extension has zero entropy, we use, as in Chazottes-Hochman~\cite{CH}, the idea of duplicating the zero symbol in the vertical direction of $\mathbb{Z}^2$ in order to obtain an exponential growth proportional to the size of the volume of a box.

\begin{lemma}
	\label{lemma.topological.entropy.bound}
	For every $k\geq0$,
	\begin{displaymath}
	\dis \ln(2) f_k^B \leq h_{top}\big(\langle B_k\rangle\big)
	\end{displaymath}
	A similar estimate holds for $\langle A_k\rangle$.
\end{lemma}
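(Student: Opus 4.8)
## Proof plan for Lemma (lower bound on $h_{top}(\langle B_k\rangle)$)

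The plan is to build, inside $\langle B_k\rangle$, a large ``free'' sublanguage whose patterns are obtained from a single reference pattern of $B_k$ by toggling the duplicated zero symbols $0'\leftrightarrow 0''$ independently at each site where the underlying $\tilde{\mathcal{A}}$-symbol is $0$. Recall from the construction of $B_k$ in~(\ref{eq.AkBk}) that $w\in B_k$ iff $\Gamma(w)\in\hat B_k$, i.e.\ $\hat\Pi(\Gamma(w))\in\mathcal{L}(\langle\tilde B_k\rangle,\ell_k)$, and that the extended set of forbidden patterns $\mathcal{F}$ only constrains the $\hat{\mathcal{A}}$-content after collapsing $0',0''$ to $0$. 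Consequently, if $w$ is a globally admissible pattern of $B_k$ and $w'$ is obtained from $w$ by replacing some occurrences of $0'$ by $0''$ (or vice versa) without touching any other coordinate, then $\Gamma(w')=\Gamma(w)$, so $w'$ is again globally admissible and lies in $B_k$. This gives at least $2^{m}$ distinct patterns in $B_k$, where $m$ is the number of sites in $\llbracket 1,\ell_k\rrbracket^2$ carrying the symbol $0$ (in the $\tilde{\mathcal{A}}$-projection) in the reference pattern $w$.

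Next I would choose the reference pattern to maximize this count. Pick a word $\tilde b\in\tilde B_k$ whose frequency of $0$ equals $f_k^B$, and use the extension-by-duplication layer (Layer~1 of the Aubrun--Sablik construction) to realize a vertically aligned pattern over $\llbracket 1,\ell_k\rrbracket^2$ whose $\tilde{\mathcal{A}}$-projection is constant in the vertical direction and equal to (a translate of) $\tilde b$ on each row. Since the concatenation $\langle\tilde B_k\rangle$ is nonempty and its patterns of size $\ell_k$ are exactly the subwords of concatenations of two words of $\tilde B_k$, such a globally admissible $w\in B_k$ exists. The number of $0$-sites in this $w$ is exactly $\ell_k\cdot(\ell_k f_k^B)=\ell_k^2 f_k^B$. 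Toggling the duplicated symbols then yields a subset $\mathcal{W}\subseteq B_k$ with $\Card(\mathcal{W})\geq 2^{\ell_k^2 f_k^B}$, and each element of $\mathcal{W}$ is globally admissible, hence belongs to $L_k$, hence $\langle\mathcal{W}\rangle\subseteq\langle B_k\rangle$.

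Finally I would compute the topological entropy of the concatenated subshift $\langle\mathcal{W}\rangle$ from below. Because every configuration in $\langle\mathcal{W}\rangle$ is tiled by $\ell_k\times\ell_k$ blocks each chosen \emph{independently} from $\mathcal{W}$, a box of side $m\ell_k$ contains on the order of $m^2$ such blocks, each contributing a free choice among $\Card(\mathcal{W})$ options; hence
\begin{displaymath}
\Card\big(\mathcal{L}(\langle\mathcal{W}\rangle, m\ell_k)\big)\geq \Card(\mathcal{W})^{(m-1)^2},
\end{displaymath}
and taking $\frac{1}{(2m\ell_k+1)^2}\ln(\cdot)$ and letting $m\to+\infty$ gives $h_{top}(\langle\mathcal{W}\rangle)\geq \frac{1}{\ell_k^2}\ln\Card(\mathcal{W})\geq \frac{\ell_k^2 f_k^B}{\ell_k^2}\ln 2=\ln(2)f_k^B$. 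Monotonicity of topological entropy under inclusion of subshifts ($\langle\mathcal{W}\rangle\subseteq\langle B_k\rangle$) then yields $\ln(2)f_k^B\leq h_{top}(\langle B_k\rangle)$. The estimate for $\langle A_k\rangle$ is identical with $f_k^A$ in place of $f_k^B$ and $\tilde A_k$ in place of $\tilde B_k$.

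The main obstacle is the bookkeeping in the second step: one must be careful that the duplication/toggling operation genuinely preserves membership in $\hat X$, which relies on the precise fact (stated in the construction of the new coloring and in the Observation following~(\ref{eq.piandPi})) that $\mathcal{F}$ is the pullback $\Gamma^{-1}(\hat{\mathcal{F}})$ and thus blind to which of $0',0''$ sits at a $0$-site; and that the reference pattern can indeed be taken with a full row of $0$-positions realizing $f_k^B$, which uses that words of $\tilde B_k$ of maximal $0$-frequency concatenate with themselves inside $\langle\tilde B_k\rangle$. Everything else is a routine block-counting entropy estimate.
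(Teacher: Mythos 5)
Your proposal is correct and follows essentially the same route as the paper: both proofs produce a globally admissible reference pattern in $B_k$ whose $\tilde{\mathcal{A}}$-projection is a vertically duplicated word of $\tilde B_k$ realizing the maximal frequency $f_k^B$ (via item 3 of Lemma~\ref{lemma.onedimensional} and the simulation theorem), and then count the $2^{\ell_k^2 f_k^B}$ patterns obtained by toggling $0'\leftrightarrow 0''$ at the $0$-sites, which all lie in $B_k$ since $\Gamma$ is blind to the duplication. The only cosmetic difference is that you pass through a sub-dictionary $\mathcal{W}$ and monotonicity of $h_{top}$, whereas the paper bounds $\Card(B_k)$ directly and uses $h_{top}(\langle B_k\rangle)=\frac{1}{\ell_k^2}\ln(\Card(B_k))$.
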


\begin{proof}
	Since $\langle B_k \rangle$ is the concatenated subshift generated by the dictionary $B_k$ as defined in (\ref{def.concatenatedsubshift}), we have
	\begin{displaymath}
	h_{top}(\langle B_k\rangle)=\frac{1}{\ell_k^2}\ln(\Card(B_k)).
	\end{displaymath}
	
	Let be $\tilde{w}\in\mathcal{L}(\langle \tilde{B}_k\rangle,\ell_k)$ such that $f_k(\tilde{w})=f_k^B$. $\tilde{w}$ can be seen as a subword of a concatenation $bb'$ of two words of $\tilde B_k$. By Lemma~\ref{lemma.onedimensional}, $bb'$ is a subword of some configuration $\tilde{x}\in\tilde{X}$.
	
	By our construction there exists $\hat{x}\in\hat{X}$ such that $\tilde{x}=\hat\Pi(\hat{x})$ and $\tilde{w}=\hat\Pi(\hat{w})$ where $\hat{w}=\hat{x}|_{\llbracket 1, \ell_k \rrbracket^2}\in \hat B_k$. Thus we obtain
	\begin{displaymath}
	\dis \Card(B_k)\geq\Card(\{w\in\mathcal{A}^{\llbracket 1,\ell_k\rrbracket^2}:\Gamma(w)=\hat w\})=2^{\ell_k^2f_k(\tilde w)},
	\end{displaymath}
	and therefore
	\begin{displaymath}
	\dis h_{top}(\langle B_k \rangle) \geq \ln(2) f_k^B.
	\end{displaymath}

\end{proof}

The following corollary is our first main estimate of the pressure. We bound from below the pressure by taking the pressure of  the maximal entropy measure of the concatenated subshifts $\langle A_k \rangle$ or $\langle B_k \rangle$. We use here the large scale $\ell_k$ because $\beta_k$ has already been defined using the small scale $\ell'_k$ (see Definition \ref{notation}).

\begin{corollary}
	\label{cor.bound.below.pressure}
	For every $k\geq1$,
	\begin{displaymath}
	P(\beta_k \varphi) \geq \max(f_k^A,f_k^B) \ln(2) -2D \frac{\beta_k}{\ell_k}.
	\end{displaymath}
\end{corollary}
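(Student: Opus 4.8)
The plan is to bound the pressure $P(\beta_k\varphi)$ from below by exhibiting a single well-chosen invariant measure and estimating its free energy $h(\mu)-\beta_k\int\varphi\,d\mu$. The natural candidate is the measure of maximal entropy $\mu$ of the concatenated subshift $\langle A_k\rangle$ (or $\langle B_k\rangle$, whichever has the larger frequency $f_k^A$ or $f_k^B$); assume without loss of generality that the maximum is attained by $B_k$, so I work with $\mu$ supported on $\langle B_k\rangle$. First I would recall that since $\langle B_k\rangle\subseteq X=\Sigma^2(\mathcal{A},\mathcal{F})$ only up to the boundary mismatches between concatenated blocks — more precisely $\langle B_k\rangle\subseteq\langle L_k\rangle$ — the measure $\mu$ satisfies the hypothesis of Lemma \ref{lemma:BoundFromAboveObservable} (after passing to an ergodic component, which does not change the entropy bound since entropy is affine and the maximal one is attained at an ergodic measure, or simply because the MME of a concatenated subshift can be taken ergodic). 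Hence $\mu(F)\leq 2D/\ell_k$, which gives $\int\varphi\,d\mu=\mu(F)\leq 2D/\ell_k$ and therefore $-\beta_k\int\varphi\,d\mu\geq -2D\beta_k/\ell_k$.

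Next I would invoke the variational principle (Theorem \ref{teo.variational-principle}) together with Lemma \ref{lemma.topological.entropy.bound}: since $\mu$ is the measure of maximal entropy of $\langle B_k\rangle$ we have $h(\mu)=h_{top}(\langle B_k\rangle)\geq \ln(2)\,f_k^B$. Combining the two estimates,
\begin{displaymath}
P(\beta_k\varphi)\;\geq\; h(\mu)-\beta_k\int\varphi\,d\mu\;\geq\;\ln(2)\,f_k^B-2D\frac{\beta_k}{\ell_k}\;=\;\max(f_k^A,f_k^B)\ln(2)-2D\frac{\beta_k}{\ell_k},
\end{displaymath}
which is exactly the claimed bound. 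The symmetric choice of $\langle A_k\rangle$ handles the case where $f_k^A\geq f_k^B$, so in all cases we obtain the $\max$ on the right-hand side.

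The only subtle point — and the one I would be most careful about — is the compatibility of the two lemmas' hypotheses: Lemma \ref{lemma:BoundFromAboveObservable} wants an \emph{ergodic} measure supported in $\langle L_k\rangle$, while the measure of maximal entropy of $\langle B_k\rangle$ need not a priori be ergodic. The clean fix is to note that an MME of a subshift always has an ergodic component that is itself an MME (by affinity of entropy and of integration against $\varphi$, any ergodic component $\nu$ of $\mu$ satisfies $h(\nu)-\beta_k\int\varphi\,d\nu\geq h(\mu)-\beta_k\int\varphi\,d\mu$ for at least one component, or one simply picks an ergodic $\nu$ with $h(\nu)=h_{top}(\langle B_k\rangle)$ which exists for subshifts), and then apply Lemma \ref{lemma:BoundFromAboveObservable} to that ergodic component, whose support is still contained in $\langle B_k\rangle\subseteq\langle L_k\rangle$. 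With that caveat dispatched the argument is a two-line assembly of previously established results.
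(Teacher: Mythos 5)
Your proposal is correct and is essentially the paper's own argument: the paper's proof is the one-line ``Follows from Lemma~\ref{lemma.topological.entropy.bound} and Lemma~\ref{lemma:BoundFromAboveObservable}'', i.e.\ exactly the assembly you describe, testing the variational principle against the measure of maximal entropy of $\langle A_k\rangle$ or $\langle B_k\rangle$. Your extra care about passing to an ergodic component before applying Lemma~\ref{lemma:BoundFromAboveObservable} is a detail the paper glosses over, and your fix is the right one.
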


\begin{proof}
	Follows from Lemma~\ref{lemma.topological.entropy.bound} and Lemma~\ref{lemma:BoundFromAboveObservable}.
\end{proof}

Next, we will need to define some notations for standard definitions. Consider $\Sigma^2(\mathcal{A})$ and $\mu$ be a $\sigma$-invariant probability measure. The {\it canonical generating partition of $\Sigma^2(\mathcal{A})$} is the partition
\begin{equation}
\label{eq.G}
\mathcal{G}:=\{[a]_0:a\in\mathcal{A}\}.
\end{equation}
We will denote the {\it base generating partition} as the partition
\begin{displaymath}
\dis \mathcal{G}_\ast := \{ G_0^*, G_1^*, G_2^* \} \quad \mbox{where} \quad G_{\tilde{a}}^*:=\big\{x\in \Sigma^2(\mathcal{A}): \pi(x(0))=\tilde{a} \big\}, \, \tilde{a}\in\tilde{\mathcal{A}}.
\end{displaymath}

For each $k\in\NN$, we will denote by $\mathcal{U}_k$ the partition
\begin{equation}
\label{eq.U}
\dis \mathcal{U}_k := \Big\{\, [M_k'],\, \Sigma^2(\mathcal{A}) \setminus [M_k'] \, \Big\}.
\end{equation}

For each $\epsilon \in (0,1)$ we will define
\begin{equation}
\label{eq.H}
\dis H(\epsilon) := -\epsilon \ln(\epsilon) -(1-\epsilon)\ln(1-\epsilon).
\end{equation}

We introduce a notion of relative entropy which measures  the dynamical entropy of a measure conditioned to be close to $X$.

\begin{definition}
	\label{def.RelativeDynamicalEntropy}
	The {\it relative dynamical entropy of size $k$} of an invariant probability measure $\mu$ is the quantity
	\begin{displaymath}
	\dis h_{rel}(\mu):=\sup_{\mathcal{P}} \left\{\lim_{n\to+\infty}\frac{1}{n^2}H\Big(\,\mathcal{P}^{\llbracket 1,n\rrbracket^2}\mid \mathcal{G}_\ast^{\llbracket 1,n\rrbracket^2} \bigvee\mathcal{U}_k^{\llbracket 0,n-R_k \rrbracket^2},\mu\,\Big) \right\}
	\end{displaymath}
	where the supremum is taken over every finite partition $\mathcal{P}$.
\end{definition}

The relative dynamical entropy is well defined for each $k$ and we can use a version of the Kolmogov-Sinai Theorem (Theorem~\ref{teo.Kolmogorov.Sinai}) for $h_{rel}$. This theorem gives us that the supremum of the definition is attained by a generating partition of the $\sigma$-algebra of $\Sigma^2(\mathcal{A})$.

For each $n\in\NN$ consider the set $V_n\subset\Sigma$ defined as
\begin{displaymath}
\dis V_n:=\left\{x\in\Sigma^2(\mathcal{A}): \pi(x_{(i,j_1)})=\pi(x_{(i,j_2)}), \forall i,j_1,j_2\in\llbracket1,n\rrbracket \right\},
\end{displaymath}
that is, the set of configurations that are vertically aligned over the projection $\pi$ on the alphabet $\tilde{\mathcal{A}}$ in the box $\llbracket1,n\rrbracket^2$. If we consider $\mu_{\beta_k}$ some equilibrium measure at inverse temperature $\beta$ we have that
\begin{displaymath}
\dis \lim_{n \to + \infty} \mu_{\beta_k}\left(\Sigma^2(\mathcal{A})\setminus V_n\right)=0.
\end{displaymath}

Note that
\begin{displaymath}
\begin{array}{rcl}
h_{rel}(\mu_{\beta_k}) & = & \dis \sup_{\mathcal{P}}\left\{\lim_{n\to+\infty}\frac{1}{n^2}H\big(\mathcal{P}^{\llbracket 1,n\rrbracket^2}\mid \mathcal{G}_\ast^{\llbracket 1,n\rrbracket^2} \bigvee\mathcal{U}_k^{\llbracket 0,n-R_k \rrbracket^2},\mu_{\beta_k}\big)\right\} \\
&   &  \\
& = & \dis \lim_{n\to+\infty}\frac{1}{n^2}H\big(\mathcal{G}^{\llbracket 1,n\rrbracket^2}\mid\mathcal{G}_\ast^{\llbracket 1,n\rrbracket^2} \bigvee\mathcal{U}_k^{\llbracket 0,n-R_k \rrbracket^2},\mu_{\beta_k}\big) \\
&   & \\
& = & \dis \lim_{n \to+\infty} \left[\int_{V_n} H(\mathcal{G}^{\llbracket 1,n\rrbracket^2},\mu_x) d\mu_{\beta_k}(x) + \int_{\Sigma^2(\mathcal{A})\setminus V_n} H(\mathcal{G}^{\llbracket 1,n\rrbracket^2},\mu_x) d\mu_{\beta_k}(x)\right], \\
\end{array}
\end{displaymath}
where $(\mu_x)_{x\in\Sigma}$ is a family of conditional measures with respect to $\mathcal{G}_\ast^{\llbracket 1,n\rrbracket^2} \bigvee\mathcal{U}_k^{\llbracket 0,n-R_k \rrbracket^2}$.

Hence if we consider a configuration $x\in V_n$, the number of possible configurations in $\mathcal{G}^{\llbracket1,n\rrbracket^2}$ is bounded by $\Card(\mathcal{A})^n$. Therefore
\begin{displaymath}
H(\mathcal{G}^{\llbracket 1,n\rrbracket^2},\mu_x) \leq n\cdot \ln(\Card(\mathcal{A})),
\end{displaymath}
and then $h_{rel}(\mu_{\beta_k})<+\infty$.

The next lemma gives us an upper bound of the entropy of the equilibrium measure $\mu_{\beta_k}$ for each $k\in\NN$.

\begin{lemma}
	\label{lemma.upper.bound.entropy}
	For every $k$ and every equilibrium measure $\mu_{\beta_k}$
	\begin{displaymath}
	\dis h(\mu_{\beta_k}) \leq  h_{rel}(\mu_{\beta_k}) + \left(\frac{8}{R_k'}+ \epsilon_k\right) \ln(\Card(\tilde{\mathcal{A}})) +  H(\epsilon_k).
	\end{displaymath}
\end{lemma}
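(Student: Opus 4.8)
The plan is to bound $h(\mu_{\beta_k})$ by decomposing the entropy using the canonical generating partition $\mathcal{G}$ and inserting the auxiliary partitions $\mathcal{G}_\ast$ and $\mathcal{U}_k$. By the Kolmogorov--Sinai theorem (Theorem~\ref{teo.Kolmogorov.Sinai}), $h(\mu_{\beta_k}) = h(\mathcal{G},\mu_{\beta_k}) = \lim_{n\to+\infty}\frac{1}{n^2}H(\mathcal{G}^{\llbracket 1,n\rrbracket^2},\mu_{\beta_k})$ since $\mathcal{G}$ is a generating partition with finite entropy. The first step is to use item $(iii)$ of Lemma~\ref{lemma.prop.entropy} to write, for each $n$,
\begin{displaymath}
H(\mathcal{G}^{\llbracket 1,n\rrbracket^2}) \leq H\big(\mathcal{G}_\ast^{\llbracket 1,n\rrbracket^2}\bigvee \mathcal{U}_k^{\llbracket 0,n-R_k'\rrbracket^2}\big) + H\big(\mathcal{G}^{\llbracket 1,n\rrbracket^2}\mid \mathcal{G}_\ast^{\llbracket 1,n\rrbracket^2}\bigvee \mathcal{U}_k^{\llbracket 0,n-R_k'\rrbracket^2}\big).
\end{displaymath}
After dividing by $n^2$ and taking $n\to+\infty$, the second term on the right converges to $h_{rel}(\mu_{\beta_k})$ (using that $\mathcal{G}$ is generating, as remarked after Definition~\ref{def.RelativeDynamicalEntropy}), so it remains to control the first term.

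The second step is to bound $\frac{1}{n^2}H(\mathcal{G}_\ast^{\llbracket 1,n\rrbracket^2}\bigvee \mathcal{U}_k^{\llbracket 0,n-R_k'\rrbracket^2})$. By subadditivity (items $(i)$ and $(ii)$ of Lemma~\ref{lemma.prop.entropy}), this is at most $\frac{1}{n^2}H(\mathcal{G}_\ast^{\llbracket 1,n\rrbracket^2}) + \frac{1}{n^2}H(\mathcal{U}_k^{\llbracket 0,n-R_k'\rrbracket^2})$. For the $\mathcal{G}_\ast$ term, I would \emph{not} use the trivial bound $\ln(\Card(\tilde{\mathcal{A}}))$ per site; instead I would exploit that the support of $\mu_{\beta_k}$ lies, up to mass $\epsilon_k$, inside $[M_k']$, and that locally $\mathcal{F}$-admissible configurations are vertically aligned over $\pi$ (the observation after~\eqref{eq.piandPi}): on such configurations the values $\pi(x_{(i,j)})$ in the box $\llbracket 1,n\rrbracket^2$ are determined by the bottom row, giving roughly $n$ free choices rather than $n^2$, hence a contribution $\frac{c\cdot n}{n^2}\ln(\Card(\tilde{\mathcal{A}}))\to 0$ after a careful tiling of $\llbracket 1,n\rrbracket^2$ by blocks of size $R_k'$ (this is where the $8/R_k'$ factor arises — the boundary blocks of the tiling that are not covered by a full translate inside $\llbracket 0,n-R_k'\rrbracket^2$ contribute a fraction $\sim 8 R_k' n / n^2$ of "bad" sites). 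The mass $\epsilon_k$ not in $[M_k']$ contributes at most $\epsilon_k\ln(\Card(\tilde{\mathcal{A}}))$ via Lemma~\ref{lemma.prop.entropy}$(iii)$ applied with the partition into $\{[M_k'], \Sigma^2(\mathcal{A})\setminus[M_k']\}$, combined with the fact that $H$ of a two-set partition with weights $(\epsilon_k,1-\epsilon_k)$ equals $H(\epsilon_k)$. For the $\mathcal{U}_k$ term, since $\mathcal{U}_k$ is a partition into two pieces, $\frac{1}{n^2}H(\mathcal{U}_k^{\llbracket 0,n-R_k'\rrbracket^2})\leq \frac{1}{n^2}\cdot (n-R_k'+1)^2 \ln 2$ is too crude; rather, using $\mu_{\beta_k}(\Sigma^2(\mathcal{A})\setminus[M_k'])\leq\epsilon_k$ from Lemma~\ref{lemma:sizeNeighborhoodSubshift} together with a standard counting-of-atoms argument (the number of atoms of $\mathcal{U}_k^{\llbracket 0,n-R_k'\rrbracket^2}$ with small total measure is controlled), one gets a contribution bounded by $H(\epsilon_k) + \epsilon_k \ln(\text{something})$, which is where the $H(\epsilon_k)$ term in the statement comes from.

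The third step is purely bookkeeping: collect the three error contributions — $\frac{8}{R_k'}\ln(\Card(\tilde{\mathcal{A}}))$ from the uncovered boundary of the $R_k'$-tiling, $\epsilon_k\ln(\Card(\tilde{\mathcal{A}}))$ from the mass escaping $[M_k']$, and $H(\epsilon_k)$ from the entropy of the escaping-mass partition itself — and add them to $h_{rel}(\mu_{\beta_k})$, taking the $\limsup$ in $n$ to obtain the claimed inequality.

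The main obstacle I anticipate is handling the $\mathcal{U}_k$-partition contribution cleanly: one must argue that conditioning (or refining) by $\mathcal{U}_k^{\llbracket 0,n-R_k'\rrbracket^2}$ adds only $O(H(\epsilon_k))$ to the normalized entropy rather than something proportional to the volume, and this requires carefully exploiting that the $\mathcal{U}_k$-labels are "mostly constant equal to $[M_k']$" under $\mu_{\beta_k}$ — i.e., a Fano-type / typical-set estimate on the number of $\mathcal{U}_k^{\llbracket 0,n-R_k'\rrbracket^2}$-atoms carrying the bulk of the measure. The second delicate point is making the tiling argument for $\mathcal{G}_\ast$ precise: one tiles $\llbracket 1,n\rrbracket^2$ by disjoint translates of $\llbracket 1,R_k'\rrbracket^2$, notes that on each translate lying inside the index set $\llbracket 0,n-R_k'\rrbracket^2$ the configuration (conditionally on being in $[M_k']$, hence vertically $\pi$-aligned) has only $\Card(\tilde{\mathcal{A}})^{R_k'}$ possibilities for its $\pi$-values, and that the $\sim 8R_k' n$ sites in the uncovered frame contribute the $8/R_k'$ term; the constant $8$ (rather than $4$) accounts for both the left/bottom offset and the right/top remainder of the tiling.
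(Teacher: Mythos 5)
Your overall skeleton is the paper's: write $H(\mathcal{P}^{\llbracket 1,n\rrbracket^2})$ (for $\mathcal{P}\succeq\mathcal{G}_\ast\vee\mathcal{U}_k$, or for the generator $\mathcal{G}$) as the conditional term that becomes $h_{rel}(\mu_{\beta_k})$ plus $H(\mathcal{G}_\ast^{\llbracket 1,n\rrbracket^2}\mid\mathcal{U}_k^{\llbracket 0,n-R_k'\rrbracket^2})$ plus $H(\mathcal{U}_k^{\llbracket 0,n-R_k'\rrbracket^2})$, and estimate the last two. For the $\mathcal{U}_k$ term you are overcomplicating and slightly misreading the target: the paper simply uses per-site subadditivity, $H(\mathcal{U}_k^{\llbracket 0,n-R_k'\rrbracket^2})\le n^2H(\mathcal{U}_k,\mu_{\beta_k})\le n^2H(\epsilon_k)$, since $\mathcal{U}_k$ is a two-set partition with $\mu_{\beta_k}(\Sigma^2(\mathcal{A})\setminus[M_k'])\le\epsilon_k$. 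This bound \emph{is} proportional to the volume, and after dividing by $n^2$ it is exactly the $H(\epsilon_k)$ in the statement; no Fano-type or typical-set refinement is needed or wanted.

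The genuine gap is in your treatment of the middle term. Conditioning on an atom of $\mathcal{U}_k^{\llbracket 0,n-R_k'\rrbracket^2}$ tells you a set $S\subseteq\llbracket 0,n-R_k'\rrbracket^2$ of ``good'' translates with $\Card(S)\ge n^2(1-\epsilon_k')$, but $S$ is an arbitrary subset of that density, not a sublattice. Your plan to tile $\llbracket 1,n\rrbracket^2$ by \emph{disjoint aligned} translates of $\llbracket 1,R_k'\rrbracket^2$ and use vertical alignment on each tile fails at exactly this point: there are only $(n/R_k')^2$ lattice anchors, and nothing prevents all of them from lying in the exceptional set of size up to $\epsilon_k'n^2$, so you cannot assert that any given lattice tile is locally admissible. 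The paper's proof avoids this by working with the good translates themselves: it extracts a maximal $\tfrac{1}{2}R_k'$-separated subset $J\subseteq S$ (so $\Card(J)\le 4n^2/R_k'^2$), groups the good boxes into clusters $K_u$ of horizontal width less than $2R_k'$ on which the $\tilde{\mathcal{A}}$-symbols are vertically aligned, and bounds the number of $\mathcal{G}_\ast$-patterns by $\Card(\tilde{\mathcal{A}})^{2R_k'}$ per cluster, i.e.\ $\exp\bigl(\tfrac{8n^2}{R_k'}\ln\Card(\tilde{\mathcal{A}})\bigr)$ in total, with the uncovered $\le\epsilon_k'n^2$ sites contributing the $\epsilon_k\ln\Card(\tilde{\mathcal{A}})$ term. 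In particular your attribution of the $8/R_k'$ to ``boundary blocks of the tiling'' contributing a fraction $\sim 8R_k'n/n^2$ is not right: that quantity equals $8R_k'/n$ and vanishes as $n\to+\infty$, whereas the $8/R_k'$ in the statement is a non-vanishing per-site cost whose $8=4\times 2$ comes from the $4n^2/R_k'^2$ clusters times the width $2R_k'$ of each.
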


\begin{proof}
	We take the supremum over all finite partitions of $\Sigma^2(\mathcal{A})$, so we can always consider that we are taking $\mathcal{P}\succeq \mathcal{G}_\ast$ and $\mathcal{P}\succeq \mathcal{U}_k$ and therefore $\mathcal{P}\succeq \tilde{\mathcal{G}}\bigvee \mathcal{U}_k$. For consequence we obtain
	\begin{displaymath}
	\dis \mathcal{P}^{\llbracket 1,n \rrbracket^2} \succeq \mathcal{G}_\ast^{\llbracket 1,n \rrbracket^2} \bigvee \mathcal{U}_k^{\llbracket 0,n-R_k' \rrbracket^2}.
	\end{displaymath}
	
	By the definition of relative entropy
	\begin{displaymath}
	\begin{array}{rcl}
	\dis H\Big(\mathcal{P}^{\llbracket 1,n \rrbracket^2},\mu_{\beta_k}\Big) & = & \dis H\Big(\mathcal{P}^{\llbracket 1,n \rrbracket^2}\bigvee \mathcal{G}_\ast^{\llbracket 1,n \rrbracket^2} \bigvee \mathcal{U}_k^{\llbracket 0,n-R_k' \rrbracket^2},\mu_{\beta_k}\Big)\\
	&   & \\
	& = & \dis H\Big(\mathcal{P}^{\llbracket 1,n\rrbracket^2}\mid\tilde{\mathcal{G}}^{\llbracket 1,n \rrbracket^2}\bigvee\mathcal{U}_k^{\llbracket 0,n-R_k' \rrbracket^2},\mu_{\beta_k}\Big) + \\
	&   &  \dis + H \Big( \mathcal{G}_\ast^{\llbracket 1,n \rrbracket^2}\bigvee\mathcal{U}_k^{\llbracket 0,n-R_k' \rrbracket^2},\mu_{\beta_k}\Big) \\
	&   & \\
	& = & \dis H\Big(\mathcal{P}^{\llbracket 1,n\rrbracket^2}\mid \mathcal{G}_\ast^{\llbracket 1,n \rrbracket^2}\bigvee\mathcal{U}_k^{\llbracket 0,n-R_k' \rrbracket^2},\mu_{\beta_k}\Big) + \\
	&   & \dis + H \Big( \mathcal{G}_\ast^{\llbracket 1,n \rrbracket^2} \mid \mathcal{U}_k^{\llbracket 0,n-R_k' \rrbracket^2}, \mu_{\beta_k} \Big)  + H \Big( \mathcal{U}_k^{\llbracket 0,n-R_k' \rrbracket^2}, \mu_{\beta_k} \Big).
	\end{array}
	\end{displaymath}
	
	The first term of the right hand side is computed using the relative dynamical entropy (Definition \ref{def.RelativeDynamicalEntropy}). The third term is bounded from above using Lemma \ref{lemma:sizeNeighborhoodSubshift} (provided $\epsilon_k < e^{-1}$),
	\begin{displaymath}
	\begin{array}{rcl}
	\dis H \big( \mathcal{U}_k^{\llbracket 0,n-R_k' \rrbracket^2}, \mu_{\beta_k} \big) & = &\dis \sum_{P\in \mathcal{U}_k^{\llbracket 0,n-R_k' \rrbracket^2}}-\mu_{\beta_k}(P)\ln(\mu_{\beta_k}(P)) \\
	& \leq & \dis  n^2 H(\mathcal{U}_k, \mu_{\beta_k}) \\
	& \leq & n^2 H(\epsilon_k).
	\end{array}
	\end{displaymath}
	
	We now compute the term in the middle. We choose $\epsilon'_k > \epsilon_k$ and define
	\begin{displaymath}
	\dis U_n:=\Big\{ x \in \Sigma^2(\mathcal{A}):\Card\left\{u\in\llbracket0,n-R_k'\rrbracket^2:\sigma^u(x)\in[M_k']\right\}\geq n^2(1-\epsilon'_k )\Big\}.
	\end{displaymath}
	
	By Birkhoff ergodic theorem we have that
	\begin{displaymath}
	\lim_{n \to +\infty} \mu_{\beta_k}(U_n) = 1.
	\end{displaymath}
	
	Note that
	\begin{displaymath}
	\begin{array}{rcl}
	\dis H\left(\mathcal{G}_\ast^{\llbracket 1,n\rrbracket^2}\mid\mathcal{U}_k^{\llbracket 0,n-R_k\rrbracket^2},\mu_{\beta_k}\right) & = & \dis \int H\Big(\mathcal{G}_\ast^{\llbracket1,n\rrbracket^2},\mu_x \Big) d\mu_{\beta_k}(x) \\
	&   &  \\
	& = & \dis \int_{U_n} H\Big(\mathcal{G}_\ast^{\llbracket1,n\rrbracket^2},\mu_x \Big) d\mu_{\beta_k}(x) + \\
	&   & \dis +\int_{\Sigma^2(\mathcal{A})\setminus U_n} H\Big(\mathcal{G}_\ast^{\llbracket1,n\rrbracket^2},\mu_x \Big) d\mu_{\beta_k}(x) \\
	&   & \\
	& \leq & \dis \int_{U_n} H\Big(\mathcal{G}_\ast^{\llbracket 1,n \rrbracket^2},\mu_x\Big) d\mu_{\beta_k}(x) + \\
	&   & \dis +n^2 \mu_{\beta_k}\left(\Sigma^2(\mathcal{A}) \setminus U_n\right) \ln(\Card(\mathcal{A})), \\
	\end{array}
	\end{displaymath}
	and therefore
	\begin{displaymath}
	\dis \limsup_{n\to+\infty}\frac{1}{n^2}H \Big(\mathcal{G}_\ast^{\llbracket 1,n \rrbracket^2}\mid\mathcal{U}_k^{\llbracket 0,n-R_k' \rrbracket^2}, \mu_{\beta_k}\Big)\leq\limsup_{n\to+\infty}\int_{U_n}\frac{1}{n^2}H\Big(\mathcal{G}_\ast^{\llbracket 1,n \rrbracket^2},\mu_x\Big)d\mu_{\beta_k}(x),
	\end{displaymath} 
	where $(\mu_x)_{x \in \Sigma}$ is the family of conditional measures with respect to $\mathcal{U}_k^{\llbracket 0,n-R_k' \rrbracket}$.

	Now consider a fixed $x \in U_n$. We compute the cardinality of elements in $\mathcal{G}_\ast^{\llbracket 1,n \rrbracket^2}$ that are compatible with the constraint
	\begin{displaymath}
	\dis \Card\{u\in\llbracket 0,n-R_k'\rrbracket^2:\sigma^u(x)\in [M_k']\}\geq n^2 (1-\epsilon'_k ).
	\end{displaymath}
	Note that
	\begin{displaymath}
	\dis \mathcal{G}_\ast^{\llbracket 1,n \rrbracket^2} = \bigvee_{u\in\llbracket1,n\rrbracket^2}\sigma^{-u}\left(\mathcal{G}_\ast\right)
	\end{displaymath}
	where $\mathcal{G}_\ast = \{G_0^\ast, G_1^\ast, G_2^\ast\}$ and here we refer to the elements of this partition as patterns defined in $\tilde{\mathcal{A}}^{\llbracket1,n\rrbracket^2}$ because there is a unique equivalence between these objects.

	We denote by $I(x)=I\subset \llbracket0,n-R_k'\rrbracket^2$ such that 
	\begin{displaymath}
	\dis I :=\left\{u\in \llbracket 0,n-R_k'\rrbracket^2:\sigma^u(x)\in[M_k']\right\}.
	\end{displaymath}
	
	Since $x\in U_n$, then
	\begin{displaymath}
	\frac{\Card(I)}{n^2}\geq 1-\epsilon_k'.
	\end{displaymath}
	
	Let $J \subseteq I$ be a maximal subset satisfying for every  $ u,v \in J$, 
	\begin{displaymath}
	\dis \|u-v\|_\infty\geq\frac{1}{2}R_k'.
	\end{displaymath}
	
	For every $u\in J$, consider
	\begin{displaymath}
	\dis I_u:=\{v\in I:\|u-v\|_\infty<\frac{1}{2}R_k'\}
	\end{displaymath}
	Then $I=\bigcup_{u \in J} I_u$. We first observe that the sets $\Big(u+\big\llbracket 1,\lceil R_k'/2\rceil\big\rrbracket^2\Big)_{u\in J}$ are pairwise disjoint. Then
	\begin{displaymath}
	\dis \Card(J) \leq \frac{4n^2}{R_k'^2}.
	\end{displaymath}
	
	We also observe that for every $v_1,v_2\in I_u$, $\|v_1-v_2\|_\infty<R_k'$ and
	\begin{displaymath}
	\dis \Big(v_1+\llbracket 1,R_k' \rrbracket^2 \Big) \bigcap \Big(v_2 + \llbracket 1,R_k' \rrbracket^2 \Big) \neq \emptyset.
	\end{displaymath}
	For each $u\in I$ let be 
	\begin{displaymath}
	\dis K_u: = \bigcup_{v \in I_u} \big(v+\llbracket 1,R_k' \rrbracket^2 \big)\subset \llbracket1,n\rrbracket^2.
	\end{displaymath}
	
	For $v\in I_u$, we have that 
	\begin{displaymath}
	\dis x|_{v+ \llbracket 1,R_k' \rrbracket^2}\in [M_k']
	\end{displaymath}
	and therefore this pattern is locally $\mathcal{F}$-admissible and also satisfies the constraint that all the $\tilde{\mathcal{A}}$-symbols are vertically aligned in $v+ \llbracket 1,R_k \rrbracket^2$ and also in $K_u$.
	
	The width of $K_u$ is less than $2R_k'$, so the cardinality of possible patterns $p\in\tilde{\mathcal{A}}^{K_u}$ satisfying the constraint of vertically aligning of $\tilde{\mathcal{A}}$-symbols is bounded by $\Card(\tilde{\mathcal{A}})^{2R_k'}$. The cardinality of possible patterns over the support $\dis \bigcup_{u\in J}K_u$ is thus bounded by 
	\begin{displaymath}
	\dis \Big(\Card(\tilde{\mathcal{A}})^{2R_k'}\Big)^{4n^2/R_k'^2}= \exp\left(\left[2R_k' \cdot \frac{4n^2}{R_k'^2}\right] \ln (\Card(\tilde{\mathcal{A}}))\right) = \exp\left( \frac{8n^2}{R_k'} \ln (\Card(\tilde{\mathcal{A}}))\right).
	\end{displaymath}
	Since $\dis \bigcup_{u\in J} K_u$ covers $I$, the cardinality of the set of possible patterns over the support $\dis \llbracket 1,n \rrbracket^2\setminus\bigcup_{u\in J}K_u$ is bounded by $\Card(\tilde{\mathcal{A}})^{n^2\epsilon_k'}$. We have proved that, for every $x \in U_n$,
	\begin{displaymath}
	H\Big(\mathcal{G}_\ast^{\llbracket1,n\rrbracket^2},\mu_x\Big)\leq\Big(2R_k' \cdot \frac{4n^2}{R_k'^2}+n^2\epsilon_k' \Big)\ln(\Card(\tilde{\mathcal{A}})).
	\end{displaymath}
	
	We conclude by letting $n\to+\infty$ and $\epsilon_k' \to \epsilon_k$.
\end{proof}

The following lemma is the second main estimate on the pressure. We bound from above the pressure assuming that the generic patterns of the  equilibrium measure exhibit a positive frequency  (here $1/4$) of the symbol 1. Since the potential is non-negative, it is enough to bound from  above the pressure by the entropy of $\mu_{\beta_k}$.

We denote as $\overline{\Pi}:\Sigma^2(\hat{\mathcal{A}})\to \Sigma^2(\mathcal{\tilde{A}})$ the projection on the first coordinate. Using (\ref{eq.Gamma}) we set
\begin{equation}
\Pi_\ast=\Gamma\circ\overline{\Pi}:\Sigma^2(\mathcal{A})\to \Sigma^2(\mathcal{\tilde{A}})
\end{equation}
the projection on the bidimensional configurations over the alphabet $\tilde{\mathcal{A}}$.

\begin{lemma}
	\label{lemma.bounds}
	Let $k\geq 2$ be an  integer and $\mu_{\beta_k}$ be any  equilibrium measure. Then
	\begin{enumerate}
		\item $\displaystyle \mu_{\beta_k}([0]) \leq  \frac{2}{N'_k}  f_{k-1}^A + (1-N_{k-1}^{-1})^{-1} f^B_{k-1} + \epsilon_k$,
		
		\item if $k$ is even and $\mu_{\beta_k}( [1]) > \frac{1}{4}$,
		\begin{multline*}
		h_{rel}(\mu_{\beta_k})  \leq 
		\Big( \frac{2}{N'_k}  f_{k-1}^A + (1-N_{k-1}^{-1})^{-1} \Big( \frac{3}{4} + \epsilon_k \Big) f^B_{k-1}\Big) \ln(2) \\
		+ \frac{1}{\ell'_k} \ln(\Card(\tilde{\mathcal{A}})) +  \frac{1}{\ell'_k{}^2}\ln(C_k') + \epsilon_k \ln (2\Card(\hat{\mathcal{A}})),
		\end{multline*}
		
		\item if $k$ is odd and $\mu_{\beta_k}( [2]) > \frac{1}{4}$, the previous estimate is valid with $f_{k-1}^A$ and $f_{k-1}^B$ permuted,
	\end{enumerate}
where for each $\tilde{a}\in\tilde{\mathcal{A}}$, $\mu_{\beta_k}([\tilde{a}])$ is the measure $\mu_{\beta_k}$ of the cylinder $\Pi_\ast^{-1}([\tilde a]_{(0,0)})=:\Pi_\ast^{-1}[\tilde a]\subset\Sigma^2(\mathcal{A})$.
\end{lemma}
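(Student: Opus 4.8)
The plan is to prove the three items essentially simultaneously, treating item 1 as a special case of the combinatorial counting that also underlies item 2; item 3 follows from item 2 by the symmetry built into the construction (swapping the roles of the alphabets $1$/$2$ and of $A$/$B$, which is exactly what Definition~\ref{notation} does when passing from $k$ even to $k$ odd), so I would only write the even case in detail.

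\emph{Item 1.} I would fix a large $n$, apply Birkhoff's ergodic theorem to express $\mu_{\beta_k}([0])$ as the asymptotic frequency of the cylinder $\Pi_\ast^{-1}[0]$ along a generic point $x$, and split the box $\llbracket 1,n\rrbracket^2$ according to whether a translate lands in $[M_k']$ or not. On the set $S$ of translates $u$ with $\sigma^u(x)|_{\llbracket 1,R_k'\rrbracket^2}\in M_k'$ — which by Lemma~\ref{lemma:sizeNeighborhoodSubshift} (applied to a generic $x$ for $\mu_{\beta_k}$, after passing to an ergodic component) has density at least $1-\epsilon_k$ — the restriction of $x$ is $\mathcal{F}$-locally admissible on a window of size $R_k'\gg 2\ell'_k$, hence its central block of size $2\ell'_k$ is globally $\hat{\mathcal{F}}$-admissible, i.e. lies in $\mathcal{L}(\hat X,2\ell'_k)$ after projection. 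This puts us in the setting of Lemma~\ref{lemma:AdmissibilityIntermediateScale} and Lemma~\ref{lemma:FrequencyOf0}: the sites carrying the symbol $0$ inside such admissible windows all lie (up to the boundary terms $J^A,J^B$) in $J^A(p,\ell'_k)\sqcup J^B(p,\ell'_k)$, and the two bounds of Lemma~\ref{lemma:FrequencyOf0} give
\[
\Card(K^A)\le \tfrac{2}{N'_k}\Card(J^A) f_{k-1}^A,\qquad
\Card(K^B)\le (1-N_{k-1}^{-1})^{-1}\Card(J^B) f_{k-1}^B .
\]
Since $\Card(J^A)+\Card(J^B)\le n^2$ and the complement of $S$ contributes at most $\epsilon_k n^2$ extra zeros, dividing by $n^2$ and letting $n\to+\infty$ yields item 1.

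\emph{Item 2.} Here I would combine item 1's frequency analysis with the entropy estimate of Lemma~\ref{lemma:CoveringComplexity}. The extra hypothesis $\mu_{\beta_k}([1])>1/4$ is used to improve the bound on $\Card(J^B)$: the symbol $1$ only occurs inside $A$-type windows (the $b_k$-words contain no $1$'s), so a frequency of $1$ exceeding $1/4$ forces the density of $J^B$-sites to be at most $3/4+\epsilon_k$ (this is the point where the $\tfrac{3}{4}$ in the statement appears), while $J^A$-sites are controlled by item 1's first term. Then I would run the covering argument of Lemma~\ref{lemma:CoveringComplexity} at scale $\ell=\ell'_k$ on the set $S$ of globally-admissible translates: the relative dynamical entropy $h_{rel}(\mu_{\beta_k})$ is bounded, via the conditional Shannon--McMillan--Breiman / Kolmogorov--Sinai machinery (Definition~\ref{def.RelativeDynamicalEntropy} and the discussion following it), by the exponential growth rate of the number of patterns that are vertically aligned, $\hat{\mathcal{F}}$-admissible on a density-$(1-\epsilon_k)$ subset, and carry the symbol $0'$ or $0''$ only at the $K^A\sqcup K^B$ sites. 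Counting these: the choice of the underlying $\tilde X$-word costs $\Card(\tilde{\mathcal{A}})^{\ell'_k}$ per $\ell'_k\times\ell'_k$ block, lifting to $\hat X$ costs $C^{\hat X}(\ell'_k)=C_k'$ per block, the duplication $0\mapsto\{0',0''\}$ costs $2$ per zero-site, and the $\epsilon_k n^2$ unconstrained sites cost $\Card(\hat{\mathcal{A}})$ each. Taking logarithms, dividing by $n^2$, using $\Card(K^A\sqcup K^B)\le\big(\tfrac{2}{N'_k}f_{k-1}^A+(1-N_{k-1}^{-1})^{-1}(\tfrac34+\epsilon_k)f^B_{k-1}\big)n^2$, and letting $n\to+\infty$ produces exactly the claimed inequality, with the $\epsilon_k\ln(2\Card(\hat{\mathcal{A}}))$ term absorbing both the duplication factor and the unconstrained sites.

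\emph{The main obstacle.} The delicate point is the bookkeeping in item 2 that the duplication factor $2$ is paid \emph{only} at the sites of $K^A\sqcup K^B$ and not everywhere, which requires carefully tracking that outside those sites the $\tilde{\mathcal{A}}$-symbol is $1$ or $2$ and hence has a unique $\mathcal{B}$-lift, while simultaneously keeping the relative-complexity count (which lives over the $\tilde X$-layer, not the full $\hat X$-layer) separate from the duplication count. Equally, one must be careful that the geometric overlap lemmas (Lemmas~\ref{lemma:OverlappingWords} and~\ref{lemma:OverlappingWordsPrime}) are what guarantee the $A$- and $B$-windows tile with bounded multiplicity, so that $\Card(J^A)+\Card(J^B)$ really is $O(n^2)$ rather than larger; this is already encapsulated in Lemma~\ref{lemma:FrequencyOf0}, so the task here is mainly to assemble these pieces in the right order and verify that all boundary/low-order terms get swept into the stated $\epsilon_k$-dependent remainder.
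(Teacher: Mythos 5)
Your proposal is correct and follows essentially the same route as the paper: Birkhoff plus Lemma~\ref{lemma:sizeNeighborhoodSubshift} to control the density of $[M'_k]$-translates, Lemmas~\ref{lemma:AdmissibilityIntermediateScale} and~\ref{lemma:FrequencyOf0} to bound the zero-frequency via $J^A\sqcup J^B$ for item 1, and for item 2 the same factorization the paper uses ($\ln\Card(E_{p,S})\le\ln\Card(\hat E_{p,S})+\Card(K_p)\ln 2$, with Lemma~\ref{lemma:CoveringComplexity} applied to $\hat E_{p,S}$ and the $3/4$ coming from the absence of the symbol $1$ in $J^B$). The bookkeeping point you flag as the main obstacle is exactly how the paper resolves it.
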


\begin{proof}
	Let be $\Pi_\ast:\Sigma^2(\mathcal{A})\to\Sigma^2(\tilde{\mathcal{A}})$ the projection over the first letter on the $\tilde{A}$-alphabet. By Birkhoff ergodic theorem and Lemma \ref{lemma:sizeNeighborhoodSubshift}, for almost every $x\in\Sigma^2(\mathcal{A})$,
	\begin{displaymath}
	\lim_{n\to+\infty} \frac{1}{n^2} \Card \big(\big\{ u \in \llbracket 0,n-R'_k \rrbracket^2 : \sigma^u(x) \in [M'_k] \big\} \big) = \mu_{\beta_k}([M'_k])
	\end{displaymath}
	and
	\begin{displaymath}
	\lim_{n \to+\infty} \frac{1}{n^2} \Card \big( \big\{ u \in \llbracket 1,n \rrbracket^2 : \pi(x(u))=\tilde a \big\} \big) = \mu_{\beta_k}([\tilde a]), \quad \forall\, \tilde a \in \tilde{\mathcal{A}}.
	\end{displaymath}
	Here we are denoting $\mu_{\beta_k}([\tilde a])$ for the measure $\mu_{\beta_k}$ of the cylinder $\Pi_\ast^{-1}[\tilde a]$, but we suppress the pre-image of the projection $\pi$ to simplify our notation.
	
	We choose $n> R'_k$. An element  of the partition $\mathcal{G}_*^{\llbracket 1, n \rrbracket^2} \bigvee \mathcal{U}^{\llbracket 0,n-R'_k \llbracket^2}$ is of the form $G_p^* \cap U_S$ where $p \in \tilde{\mathcal{A}}^{\llbracket 1, n \rrbracket^2}$ is a pattern and $S \subseteq \llbracket 0,n-R'_k\rrbracket^2$ is a subset, that satisfies
	\[
	U_S := \left\{ x \in \Sigma^2(\mathcal{A}) : \forall\,  u \in S, \ \sigma^u(x) \in [M'_k], \ \forall\, u \in \llbracket 0,n-R'_k\rrbracket^2 \setminus S, \ \sigma^u(x) \not\in [M'_k] \right\},
	\]
	\[
	\dis G_p^* := \left\{ x \in \Sigma^2(\mathcal{A}) : \left(\Pi_\ast(x)\right)|_{\llbracket1,n\rrbracket^2} = p \right\}.
	\]
	
	We set $\epsilon>\epsilon_k$ and $\eta < \mu_{\beta_k}([0])$. 
	By the Lemma~\ref{lemma:sizeNeighborhoodSubshift} we have that $\mu_{\beta_k} \big(\Sigma^2(\mathcal{A}) \setminus [M'_k] \big) \leq \epsilon_k$ and then
	\begin{displaymath}
	\dis \lim_{n\to+\infty} \mu_{\beta_k} \left( \bigcup\nolimits_S \left\{ U_S :  \Card(S) \geq n^2(1-\epsilon) \right\} \right) = 1.
	\end{displaymath}
	For $n$ large enough, we choose $S\subseteq\llbracket 0,n-R_k'\rrbracket^2$ such that $U_S \not= \emptyset$ and $\Card(S)\geq n^2(1-\epsilon)$. By definition of $M'_k$ and $T'_k$, if $x\in U_S$, then for every $u\in S$, $\sigma^u(x)|_{\llbracket 1,R'_k \rrbracket^2}$ is a locally admissible pattern with respect to $\mathcal{F}$ and
	\begin{displaymath}
	\dis \sigma^{u+T'_k}(x)|_{\llbracket 1,2\ell'_k\rrbracket^2} \in \mathcal{L}(X,2\ell'_k).
	\end{displaymath}
	
	Define for every $n>R_k'$ and every pattern $p \in \tilde{\mathcal{A}}^{\llbracket 1, n \rrbracket^2}$ the set
	\begin{displaymath}
	K_n(p):=\{u\in\llbracket 1,n\rrbracket^2:p(u)=0\}.
	\end{displaymath}
	As we are considering $\mu_{\beta_k}([0])>\eta$
	\begin{displaymath}
	\dis \lim_{n\to+\infty}\mu_{\beta_k}\left(\bigcup\nolimits_p\left\{G_p^*:\Card(K_n(p))>n^2\cdot \eta\right\} \right)=1.
	\end{displaymath}
	We may choose $p$ such that $U_S \cap G_p^* \not= \emptyset$ and $\Card(K_p) > n^2 \eta$. Using the objects as defined in (\ref{eq.I}), (\ref{eq.IA}), (\ref{eq.JA}) and (\ref{eq.KA}), one obtains
	\begin{displaymath}
	\dis T'_k+S \subseteq I(p,\ell'_k) \quad \mbox{and} \quad \tau'_k+I(p,\ell'_k)\subseteq J^A(p,\ell'_k)\bigsqcup J^B(p,\ell'_k)=:J^A\bigsqcup J^B,
	\end{displaymath}
	therefore by our choice of $S$ we obtain
	\begin{equation}
	\label{eq.Lemma15.1}
	n^2(1-\epsilon)\leq \Card(S) = \Card(\tau'_k+T'_k+S) \leq \Card \big(J^A \bigsqcup J^B \big) \leq n^2.
	\end{equation}
	
	Besides that we have
	\begin{displaymath}
	\dis n^2\eta \leq \Card(K_n(p)) \leq \Card(K^A \bigsqcup K^B)+n^2 \epsilon
	\end{displaymath}
	and by the Lemma~\ref{lemma:FrequencyOf0} we have
	\begin{displaymath}
	\dis \Card(K_n(p)) \leq  \frac{2}{N'_k}\Card(J^A)f_{k-1}^A +  \big(1-N_{k-1}^{-1}\big)^{-1} \Card(J^B) f_{k-1}^B + n^2 \epsilon.
	\end{displaymath}
	
	We divide each term by $n^2$ and take the limit with $n\to+\infty$, $\epsilon \to \epsilon_k$, and $\eta \to \mu_{\beta_k}([0])$. Thus we proved the first item of this lemma.

	We now assume that $k$ is even and $\mu_{\beta_k}([1]) > \frac{1}{4}$. We choose $p\in\tilde{\mathcal{A}}^{\llbracket1,n\rrbracket^2}$ such that $G_p^* \cap U_S \not= \emptyset$ and
	\begin{equation}
	\label{eq.Lemma15.2}
	\dis \Card \left( \left\{ u \in \llbracket 1,n\rrbracket^2 : p(u)=1 \right\} \right) > \frac{n^2}{4}.
	\end{equation}
	Let be $x \in G_p^* \cap U_S$ and $(\mu_x)_{x \in\Sigma}$ be the family of conditional measures with respect to the partition $\mathcal{G}_*^{\llbracket 1, n \rrbracket^2} \bigvee \mathcal{U}^{\llbracket 0,n-R_k \llbracket^2}$. We use the trivial upper bound of the entropy, so
	\begin{equation}
	\label{eq.Lemma15.6}
	\dis H( \mathcal{G}^{\llbracket 1,n \rrbracket^2}, \mu_x) \leq \ln(\Card(E_{p,S}))
	\end{equation}
	where
	\begin{displaymath}
	\dis E_{p,S} := \big\{ w \in \mathcal{A}^{\llbracket 1,n \rrbracket^2} :  \pi(w)=p \ \text{and} \ \ 
	\forall u \in S, \ \sigma^{u+T'_k}(w)|_{\llbracket 1,2\ell'_k\rrbracket^2} \in \mathcal{L}(X,2\ell'_k) \big\}.
	\end{displaymath}
	Also consider
	\begin{displaymath}
	\dis \hat E_{p,S} := \Gamma(E_{p,S}). 
	\end{displaymath}
	
	Note that every word in $E_{p,S}$ is obtained from a word in $\hat E_{p,S}$ by duplicating twice a symbol $0$ and by Lemma~\ref{lemma:CoveringComplexity} we can conclude that
	\begin{gather*}
	\ln(\Card(E_{p,S}) ) \leq \ln( \Card(\hat E_{p,S}) ) + \Card( K_p) \ln(2) \mbox{ and } \\
	\frac{1}{n^2} \ln(\Card(\hat E_{p,S})) \leq \frac{1}{\ell'_k} \ln(\Card(\tilde{\mathcal{A}})) +  \frac{1}{\ell'_k{}^2}\ln(C'_k) + \epsilon_k \ln (\Card(\hat{\mathcal{A}})),
	\end{gather*}
	thus
	\begin{equation}
	\label{eq.Lemma15.3}
	\begin{array}{rcl}
	\dis \frac{1}{n^2} \ln(\Card(E_{p,S})) & \leq &  \dis \Big( \frac{2}{N'_k}\Card(J^A) f_{k-1}^A + (1-N_{k-1}^{-1})^{-1} \Card(J^B) f^B_{k-1}+ n^2 \epsilon_k \Big) \frac{\ln(2)}{n^2} + \\
	&   & \dis + \frac{1}{\ell'_k} \ln(\Card(\tilde{\mathcal{A}})) +  \frac{1}{\ell'_k{}^2}\ln(C_k') + \epsilon_k \ln (\Card(\hat{\mathcal{A}})).
	\end{array}
	\end{equation}
	
	The symbol $1$ does not appear in $J^B=J^B(p,\ell'_k)$, so we can affirm
	\begin{displaymath}
	\dis \big\{ u \in \llbracket 1,n\rrbracket^2 : p(u)=1 \big\} \subset J^A  \bigsqcup \left( \llbracket 1,n \rrbracket^2 \setminus (J^A\bigsqcup J^B) \right).
	\end{displaymath}
	Since we are assuming (\ref{eq.Lemma15.2}) and using (\ref{eq.Lemma15.1}) we obtain that
	\begin{equation}
	\label{eq.Lemma15.4}
	\dis \Card(J^A) \geq n^2 \Big( \frac{1}{4} - \epsilon_k \Big) \quad\mbox{and}\quad \Card(J^B) \leq n^2 \Big( \frac{3}{4}+\epsilon_k \Big).
	\end{equation}
	
	By replacing the upper bound for $\Card(J^B)$ given in (\ref{eq.Lemma15.4}) and $\Card(J^A)\leq n^2$ in (\ref{eq.Lemma15.3}) we obtain that
	\begin{equation}
	\label{eq.Lemma15.5}
	\begin{array}{rcl}
	\dis \frac{1}{n^2} \ln(\Card(E_{p,S})) & \leq &	\dis \left( \frac{2}{N'_k}f_{k-1}^A + (1-N_{k-1}^{-1})^{-1}\left( \frac{3}{4}+\epsilon_k \right) f^B_{k-1}+ \epsilon_k \right)\ln(2) + \\
	   &   & \dis  + \frac{1}{\ell'_k} \ln(\Card(\tilde{\mathcal{A}})) +  \frac{1}{\ell'_k{}^2}\ln(C_k') + \epsilon_k \ln (\Card(\hat{\mathcal{A}})). \\
	\end{array}
	\end{equation}
	By integrating with respect to $\mu_{\beta_k}$ in both sides and taking the limit when $n\to+\infty$ we obtain item 2 of this lemma. Item 3 has an analogous proof.
\end{proof}

\begin{theorem}
\label{theorem.final}
Let $X=\Sigma^2(\mathcal{A},\mathcal{F})$ be the bidimensional SFT described before, which is generated by the finite set of forbidden patterns $\mathcal{F}\subset \mathcal{A}^{\llbracket 1,D\rrbracket^2}$ defined over the alphabet $\mathcal{A}$. Let $F$ be the cylinder generated by the set $\mathcal{F}$ as described in (\ref{eq.F}) and $\varphi:\Sigma^2(\mathcal{A})\to\RR$ be the locally constant potential defined as $\varphi = \mathds{1}_F$. Let $X_A$, respectively $X_B$, be the compact sets of configurations in $X$ that have only the symbol $1$, respectively $2$, in terms of the $\tilde{\mathcal{A}}$ alphabet, therefore, $X_A$ and $X_B$ are two disjoint invariant compact sets. Then there exists a sequence of inverse temperatures $(\beta_k)_{k\geq0}$ such that  for every  equilibrium measure $\mu_{\beta_k}$  associated to the potential $\beta_k\varphi$, the support of every  accumulation point $\mu_\infty^A$ or $\mu_\infty^B$, of the subsequence $(\mu_{\beta_{2k+1}})_{k\geq0}$ or $(\mu_{\beta_{2k}})_{k\geq0}$, is included in $X_A$ or $X_B$.
\end{theorem}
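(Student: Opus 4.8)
The plan is to sandwich the pressure $P(\beta_k\varphi)$ between the lower bound provided by the maximal entropy measure of $\langle A_k\rangle$ or $\langle B_k\rangle$ (Corollary~\ref{cor.bound.below.pressure}) and the upper bound coming from the entropy of $\mu_{\beta_k}$ (Lemmas~\ref{lemma.upper.bound.entropy} and~\ref{lemma.bounds}), and to read off from the numerical mismatch that $\mu_{\beta_k}$ must give almost all mass to the symbol $2$ when $k$ is even and to the symbol $1$ when $k$ is odd. We carry out the even case; the odd case is identical after permuting $A\leftrightarrow B$ and the symbols $1\leftrightarrow 2$, and then accumulation points of $(\mu_{\beta_{2k}})$ land in $X_B$ while those of $(\mu_{\beta_{2k+1}})$ land in $X_A$. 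Fix $(\beta_k)_{k\geq0}$ as in Definition~\ref{notation}; since $\beta_k\geq 2^{k\ell'_k}$ we have $\beta_k\to+\infty$. The first preliminary step is to check, using Lemma~\ref{lemma:bounds}, that there is a constant $c>0$ with $R_k'\leq e^{c\ell'_k}$ and $\ln(C'_k)\leq c\ell'_k$ for all large $k$; combined with $\beta_k\geq \ell_{k-1}^2 2^{k\ell'_k}(\rho_{k-1}^B)^{-2}$ and $\rho_{k-1}^B\leq\ell_{k-1}$, this gives $\epsilon_k=R_k'^{2}\beta_k^{-1}\ln(\Card\mathcal{A})\leq\ln(\Card\mathcal{A})\,e^{(2c-k\ln2)\ell'_k}\to0$, and more precisely $\epsilon_k/f_{k-1}^B\to0$ and $H(\epsilon_k)/f_{k-1}^B\to0$. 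Since $N'_k\geq k\,\rho_{k-1}^A/\rho_{k-1}^B=k\,f_{k-1}^A/f_{k-1}^B$ and $N_k\geq N'_k$, Lemma~\ref{lemma.frequence.max} yields $f_k^A=\tfrac{2}{N_k}f_{k-1}^A\leq\tfrac{2}{k}f_{k-1}^B\leq f_{k-1}^B=f_k^B$ for even $k\geq2$, so $\max(f_k^A,f_k^B)=f_k^B$ eventually.

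For the lower bound, Corollary~\ref{cor.bound.below.pressure} gives $P(\beta_k\varphi)\geq f_k^B\ln2-2D\beta_k/\ell_k=f_{k-1}^B\ln2-2D\beta_k/\ell_k$, and since $\ell_kf_{k-1}^B=N_k\rho_{k-1}^B\geq k\beta_k$ (by Definition~\ref{notation}) we have $2D\beta_k/(\ell_kf_{k-1}^B)\leq 2D/k\to0$. For the upper bound, fix $c\in(0,1)$ and suppose $\mu_{\beta_k}([1])>c$ (we write $\mu_{\beta_k}([\tilde a])=\mu_{\beta_k}(\Pi_\ast^{-1}[\tilde a])$ as in Lemma~\ref{lemma.bounds}). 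The proof of Lemma~\ref{lemma.bounds}(2) applies verbatim with $\tfrac14$ replaced by $c$, giving $\Card(J^A)\geq n^2(c-\epsilon_k)$, $\Card(J^B)\leq n^2(1-c+\epsilon_k)$, hence
\[
h_{rel}(\mu_{\beta_k})\leq\Big(\tfrac{2}{N'_k}f_{k-1}^A+(1-N_{k-1}^{-1})^{-1}(1-c+\epsilon_k)f^B_{k-1}\Big)\ln2+\tfrac{1}{\ell'_k}\ln(\Card\tilde{\mathcal{A}})+\tfrac{1}{(\ell'_k)^2}\ln(C'_k)+\epsilon_k\ln(2\Card\hat{\mathcal{A}}).
\]
Plugging this into Lemma~\ref{lemma.upper.bound.entropy}, using $\varphi\geq0$ so that $P(\beta_k\varphi)=h(\mu_{\beta_k})-\beta_k\int\varphi\,d\mu_{\beta_k}\leq h(\mu_{\beta_k})$, dividing the resulting inequality by $f_{k-1}^B$ and letting $k\to+\infty$, all correction terms vanish (here one uses $N'_k\rho_{k-1}^B\to\infty$, $N_{k-1}\to\infty$, $R_k'\geq2\ell'_k\to\infty$, $\ln(C'_k)\leq c\ell'_k$, and the two estimates $\epsilon_k/f_{k-1}^B\to0$, $H(\epsilon_k)/f_{k-1}^B\to0$ established above), so the inequality collapses to $\ln2\leq(1-c)\ln2$, a contradiction. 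Therefore there is $k_1(c)$ with $\mu_{\beta_k}([1])\leq c$ for every even $k\geq k_1(c)$; as $c\in(0,1)$ is arbitrary, $\mu_{\beta_k}([1])\to0$ along even $k$. Finally Lemma~\ref{lemma.bounds}(1), together with $f_{k-1}^A\to0$, $f_{k-1}^B\to0$ and $\epsilon_k\to0$, gives $\mu_{\beta_k}([0])\to0$.

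It remains to pass to the limit. Let $\mu_\infty^B$ be any weak${}^*$ accumulation point of $(\mu_{\beta_{2k}})_{k\geq0}$. The sets $F$, $\Pi_\ast^{-1}[0]$ and $\Pi_\ast^{-1}[1]$ are clopen, so by the three convergences just obtained (and $\mu_{\beta_k}(F)\leq\mu_{\beta_k}(\Sigma^2(\mathcal{A})\setminus[M'_k])\leq\epsilon_k\to0$) we get $\mu_\infty^B(F)=\mu_\infty^B(\Pi_\ast^{-1}[0])=\mu_\infty^B(\Pi_\ast^{-1}[1])=0$. By shift-invariance every translate $\sigma^{-u}$ of these clopen null sets is still $\mu_\infty^B$-null, and a countable union of open null sets is disjoint from the support, so
\[
\Supp(\mu_\infty^B)\subseteq\bigcap_{u\in\ZZ^2}\sigma^{-u}\Big(\Sigma^2(\mathcal{A})\setminus\big(F\cup\Pi_\ast^{-1}[0]\cup\Pi_\ast^{-1}[1]\big)\Big),
\]
which is exactly the set of configurations of $X=\Sigma^2(\mathcal{A},\mathcal{F})$ all of whose $\tilde{\mathcal{A}}$-symbols equal $2$, i.e. $X_B$. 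The symmetric argument for the odd subsequence gives $\Supp(\mu_\infty^A)\subseteq X_A$, proving the theorem.

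The main obstacle is the bookkeeping in the pressure sandwich: after dividing the upper bound by the extremely small quantity $f_{k-1}^B$, one must ensure that every correction term — most delicately $\epsilon_k$, $H(\epsilon_k)$ and $(\ell'_k)^{-2}\ln(C'_k)$ — still tends to $0$. This works only because the recursive definition of $\beta_k$ was designed to overpower simultaneously the reconstruction length $R'_k$ (controlled by the a priori bound of Lemma~\ref{lemma:bounds}) and the decay of $f_{k-1}^B$, through the factor $2^{k\ell'_k}/(\rho_{k-1}^B)^2$. A secondary, but essential, point is the observation that Lemma~\ref{lemma.bounds}(2) remains valid with the threshold $\tfrac14$ replaced by any $c\in(0,1)$: this upgrades the weak concentration estimate to $\mu_{\beta_k}([1])\to0$, which is what forces the limit measure to be supported exactly inside $X_B$ and not merely mostly there.
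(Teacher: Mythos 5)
Your proof is correct and follows the same core strategy as the paper: sandwich $P(\beta_k\varphi)$ between the lower bound of Corollary~\ref{cor.bound.below.pressure} and the upper bound assembled from Lemmas~\ref{lemma.upper.bound.entropy} and~\ref{lemma.bounds}, normalize by $f_{k-1}^B$, and use the recursive choice of $\beta_k$ in Definition~\ref{notation} to kill every correction term ($\epsilon_k$, $H(\epsilon_k)$, $(\ell_k')^{-2}\ln C_k'$, $1/R_k'$, $f_{k-1}^A/N_k'$, $2D\beta_k/\ell_k$). Where you genuinely depart from the paper is the endgame, and your version is the stronger one. The paper keeps the fixed threshold $\tfrac14$, concludes $\mu_{\beta_k}([1])\le\tfrac14$ for even $k$ (resp.\ $\mu_{\beta_k}([2])\le\tfrac14$ for odd $k$), and ends with $\liminf_k\mu_{\beta_{2k}}([2])\ge\tfrac34$, from which it only deduces non-convergence of $(\mu_{\beta_k})$; this by itself does not yield the support statement actually claimed in the theorem, since a limit measure with $\mu_\infty([2])\ge\tfrac34$ need not be carried by $X_B$. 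You instead observe that the $\tfrac14$ in Lemma~\ref{lemma.bounds}(2) is never used structurally and can be any $c\in(0,1)$ --- the contradiction $\ln 2\le(1-c)\ln 2$ persists --- which upgrades the conclusion to $\mu_{\beta_{2k}}([1])\to0$, and you combine this with $\mu_{\beta_{2k}}([0])\to0$ and $\mu_{\beta_{2k}}(F)\le\epsilon_k\to0$ and a clean weak${}^*$ argument (clopen null sets, shift-invariance, complement of the support as the union of open null sets) to pin $\Supp(\mu_\infty^B)$ inside $X_B$ exactly. This last passage is precisely the step the paper leaves implicit, so your write-up both matches the paper's quantitative machinery and completes its qualitative conclusion. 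All the auxiliary estimates you invoke ($\beta_k\ge 2^{k\ell_k'}$, $\epsilon_k/(f_{k-1}^B)^2\to0$, $\ell_k f_{k-1}^B\ge k\beta_k$, $f_k^A\le\tfrac2k f_k^B$ for even $k$) check out against Definition~\ref{notation} and Lemma~\ref{lemma.frequence.max}.
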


\begin{proof}
	We consider $X=\Sigma^2(\mathcal{A},\mathcal{F})$ the SFT as described before, $F$ as in (\ref{eq.F}) and $\varphi = \mathds{1}_F$. We denote by $\mu_{\beta_k}$ an equilibrium measure at inverse temperature $\beta_k$. We will prove that as $\beta_k \to+\infty$ the sequence $(\mu_{\beta_k})_{k >0}$ does not converge.
	
	Assume $k$ is an even number and $\mu_{\beta_k}([1]) > \frac{1}{4}$. Let $\mu_k^B$ be the measure of maximal entropy of the subshift $\langle L_k \rangle$.  On the one hand, from Corollary~\ref{cor.bound.below.pressure} we have that
	\begin{displaymath}
	\dis P(\beta_k \varphi) \geq h(\mu_k^B) -\int \! \beta_k \varphi \, d\mu_k^B \geq  f_k^B \ln(2) -2D \frac{\beta_k}{\ell_k}.
	\end{displaymath}
	
	By the item 3 of Definition~\ref{notation} we have that
	\begin{displaymath}
	\dis N_k\geq N_k'\cdot \frac{k\beta_k}{N_k' \rho_{k-1}^B}=\frac{k\beta_k}{\ell_k'f_{k-1}^B} \Rightarrow \frac{\beta_k}{\ell_k} \leq \frac{\beta_k}{\ell_k'} \leq \frac{1}{k}f_{k-1}^B.
	\end{displaymath}
	Since $k$ is even, $f_{k-1}^B = f_k^B$, one obtains,
	\begin{displaymath}
	\dis f_k^B\ln(2)-2D\frac{\beta_k}{\ell_k} \geq \frac{k \beta_k}{\ell_k'}-2D\frac{\beta_k}{\ell_k} > 0 \Rightarrow 2D\frac{\beta_k}{\ell_k}\leq f_k^B\ln(2).
	\end{displaymath}
	
	On the other hand
	\begin{multline*}
	P(\beta_k \varphi) \leq  
	\Big( \frac{2}{N'_k}  f_{k-1}^A + (1-N_{k-1}^{-1})^{-1} \Big( \frac{3}{4} + \epsilon_k \Big) f^B_{k-1}\Big) \ln(2) \\
	+ \frac{1}{\ell'_k} \ln(\Card(\tilde{\mathcal{A}}))  +  \frac{1}{\ell'_k{}^2}\ln(C(\hat X, \ell'_k)) + \epsilon_k \ln (2\Card(\hat{\mathcal{A}})) \\
	+ \Big( \frac{8}{R(\hat X, \ell'_k)}+ \epsilon_k  \Big) \ln(\Card(\tilde{\mathcal{A}}))+  H(\epsilon_k).
	\end{multline*}
	
	We have that
	\begin{displaymath}
	\dis \epsilon_k \ll f_{k-1}^B \ \ \text{and} \ \ H(\epsilon_k) \ll f_{k-1}^B.
	\end{displaymath}
	Indeed, from item 2 of Lemma~\ref{lemma:bounds} shows that there exist constants $\Xi,\xi$ such that
	\[
	\forall\, k\geq 1, \ R_k' \leq \Xi 2^{\xi \ell'_k}.
	\]
	Recalling the definition of $\epsilon_k=\frac{{R_k'}^2}{\beta_k}\ln( \Card(\mathcal{A}))$ given in (\ref{eq.epsilon_k}) and using item 2 of Definition~\ref{notation}, one gets,
	\begin{displaymath}
	\dis \frac{\epsilon_k}{(f_{k-1}^B)^2} \leq \frac{\epsilon_k \beta_k}{2^{k\ell'_k}}= \frac{{R_k'}^2\ln(\Card(\mathcal{A}))}{2^{k\ell'_k}} \leq \Xi^2 \ln(\Card(\mathcal{A})) 2^{(2\xi -k)\ell'_k} \ll 1, \\
	\end{displaymath}
	and therefore
	\begin{gather*}
	\frac{\epsilon_k}{f_{k-1}^B} \leq \frac{\epsilon_k}{(f_{k-1}^B)^2} \ \Rightarrow \ \epsilon_k \ll f_{k-1}^B \mbox{ and }  \\
	H(\epsilon_k) \leq 2 \epsilon_k \ln \Big( \frac{1}{\epsilon_k} \Big) \ll \sqrt{\epsilon_k} \ll f_{k-1}^B.
	\end{gather*}
	
	As $\ell_k f_k^B$ counts the number of $0$'s in the word $b_k$ and at each step of the construction the number is at least multiplied by 2, we have $\ell_{k-1}f_{k-1}^B\geq 2^{k-1}$,
	\begin{gather*}
	\frac{1}{\ell'_k} = \frac{1}{N'_k \ell_{k-1}} \ll f_{k-1}^B, \quad R_k' \geq \ell'_k, \quad \frac{1}{R(\hat X,\ell'_k)} \ll f_{k-1}^B.
	\end{gather*}
	Item 1 of Lemma~\ref{lemma:bounds} implies
	\begin{displaymath}
	\frac{1}{\ell'_k{}^2} \ln(C_k') \ll f_{k-1}^B.
	\end{displaymath}
	
	Item 1 of Definition\ref{notation} shows,
	\begin{displaymath}
	\frac{f_{k-1}^A}{N'_k} \leq  \frac{f_{k-1}^B}{k}, \quad \frac{f_{k-1}^A}{N'_k}  \ll f_{k-1}^B.
	\end{displaymath}
	We  proved that $P(\beta_k\phi)$ is bounded from below by a quantity equivalent to $f_{k}^B\ln(2)$ and bounded from above by a quantity equivalent to $\frac{3}{4} f_k^B \ln(2)$. We obtain a contradiction. We have proved that $\mu_{\beta_k}([1]) \leq  \frac{1}{4}$ for every even $k$ and every equilibrium measure $\mu_{\beta_k}$. Similarly $\mu_{\beta_k}([2]) \leq  \frac{1}{4}$ for every odd $k$ and every equilibrium measure $\mu_{\beta_k}$. As
	\begin{displaymath}
	\mu_{\beta_k}([0]) \leq  \frac{2}{N'_k}  f_{k-1}^A + (1-N_{k-1}^{-1})^{-1} f^B_{k-1} + (f_{k-1}^B)^2 \frac{R_k' \ln(\Card(\mathcal{A}))}{\exp(k \ell'_k)},
	\end{displaymath}
	we have proved
	\begin{gather*}
	\liminf_{k \to +\infty} \inf_\mu \big\{ \mu ([2]) : \mu \ \text{is an equilibrium measure at $\beta_{2k}$} \  \big\} \geq \frac{3}{4}, \\
	\liminf_{k \to +\infty} \inf_\mu \big\{ \mu ([1]) : \mu \ \text{is an equilibrium measure at $\beta_{2k+1}$} \  \big\} \geq \frac{3}{4},
	\end{gather*}
	and therefore $(\mu_{\beta_k})_{k\geq 0}$ does not converge.
\end{proof}

\appendix

\chapter{Computability results}
\label{appendix}

We thank Sebastian Barbieri for his help to compute the upper bounds for the relative complexity and for the reconstruction function. Sebastian stimulated us to prove that we can enumerate $\tilde{\mathcal{F}}$ in an increasing way and with a execution time that is at most exponential.

First we prove the upper bound for the relative complexity function given by Proposition~\ref{prop.Sebastian2}.

\begin{proof}[Proof of Proposition~\ref{prop.Sebastian2}]
	Let us denote by $C_n(\texttt{Layer}_k(\hat{X}))$ the complexity of the projection to the $k$-th layer. and by $C_n(\texttt{Layer}_k(\hat{X}) | \texttt{Layer}_j(\hat{X})  )$ the complexity of the projection to the $k$-th layer given that there is a fixed pattern on the $j$-th layer. Clearly we have that 
	\begin{multline*}
	C^{\hat{X}}(n) \leq C_n(\texttt{Layer}_1(\hat{X})) \cdot C_n(\texttt{Layer}_2(\hat{X})) \cdot C_n(\texttt{Layer}_3(\hat{X})| \texttt{Layer}_2(\hat{X}) ) \cdot \\ \cdot C_n(\texttt{Layer}_4(\hat{X})|\texttt{Layer}_2(\hat{X})).
	\end{multline*}
	
	\begin{itemize}
		\item \texttt{Layer} 1: As this layer is given by all $x\in \tilde{\mathcal{A}}^{\ZZ^2}$ so that $x_{u} = x_{u+(0,1)}$ for every $u \in \ZZ^2$, a trivial upper bound for the complexity is \[ C_n(\texttt{Layer}_1(\hat{X})) = \mathcal{O}(|\tilde{\mathcal{A}}|^n).   \] 
		
		In fact, as in the end the only configurations which are allowed are those whose horizontal projection lies in the effective subshift $\ZZ$, a better bound is given by $C_n(\texttt{Layer}_1(\hat{X})) = \mathcal{O}(\exp( n\ h_{\mbox{top}}(\hat{X})))$. For simplicity, we shall just keep the trivial bound.
		
		\item \texttt{Layer} 2: The complexity of every substitutive subshift in $\ZZ^2$ is $\mathcal{O}(n^2)$. To see this, suppose that the substitution sends symbols of some alphabet $\mathcal{A}_2$ to $n_1\times n_2$ arrays of symbols. By definition, every pattern of size $n$ occurs in a power of the substitution. If $k$ is such that $\min\{n_1,n_2\}^{k-1} \leq n \leq \min\{n_1,n_2\}^{k}$, then necessarily any pattern of size $n$ occurs in the concatenation of at most $4$ $k$-powers of the substitution. There are $|\mathcal{A}_2|^4$ choices for the $k$-powers and at most $(\max\{n_1,n_2\}^k)^2\leq (n\max\{n_1,n_2\})^2$ choices for the position of the pattern. It follows that there are at most $(|\mathcal{A}_2|^4\max\{n_1,n_2\}^2) n^2 = \mathcal{O}(n^2)$ patterns of size $n$. We obtain,
		\[ C_n(\texttt{Layer}_2(\hat{X})) = \mathcal{O}(n^2)\]
		
		\item \texttt{Layer} 3: It can be checked directly from the Aubrun-Sablik construction that the symbols on the third layer satisfy the following property: if the symbols on the substitution layer are fixed, then for every $u \in \ZZ^2$ the symbol at position $u$ is uniquely determined by the symbols at positions $u-(0,1), u-(1,1)$ and $u-(-1,1)$.  In consequence, it follows that knowing the symbols at positions in the ``U shaped region'' \[U = ( \{0\} \times \llbracket 1,n-1 \rrbracket ) \cup (\llbracket 0,n-1 \rrbracket\times \{0\}) \cup (\{n-1 \} \times \llbracket 1,n-1 \rrbracket) \]  completely determines the pattern. Therefore, if this layer has an alphabet $\mathcal{A}_3$, we have \[
		C_n(\texttt{Layer}_3(\hat{X})| \texttt{Layer}_2(\hat{X}) ) \leq |\mathcal{A}_3|^{3n-2} \leq \mathcal{O}(K_1^n),   \]
		for some positive integer $K_1$.
		
		\item \texttt{Layer} 4: $\mathcal{M}_{\texttt{Search}}$ The same argument for Layer $3$ holds for Layer 4. Therefore, if the alphabet of layer $4$ is $\mathcal{A}_4$ we have that for some positive integer $K_2$,
		\[
		C_n(\texttt{Layer}_4(\hat{X})| \texttt{Layer}_2(\hat{X}) ) \leq |\mathcal{A}_4|^{3n-2} \leq \mathcal{O}(K_2^n).   \]
	\end{itemize}
	
	Putting the previous bounds together, we conclude that there is some constant $K>0$ such that
	\[C^{\hat{X}}(n) = \mathcal{O}(n^2 K^n).\]	
\end{proof}

\begin{corollary}
Let $\hat{X}$ be the $\ZZ^2$-SFT in the Aubrun-Sablik construction. There is a constant $K_C >0$ such that
\[ \limsup_{n \to \infty} \frac{1}{n} \log(C_n(X)) \leq K_C.  \]
\end{corollary}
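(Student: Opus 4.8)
This is an immediate corollary of Proposition~\ref{prop.Sebastian2} (equivalently, of the bound $C^{\hat{X}}(n) = \mathcal{O}(n^2 K^n)$ established in its proof above), so there is no genuine obstacle: the plan is simply to take logarithms and let the polynomial factor wash out in the limit.

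Concretely, I would first invoke Proposition~\ref{prop.Sebastian2} to fix a polynomial $P(n)$ and a constant $K>0$ with $C^{\hat{X}}(n) \leq P(n)\,K^n$ for all $n\geq 1$. Since $P$ is a polynomial, there are constants $c>0$ and $m\in\NN$ with $P(n)\leq c\,n^m$ for $n$ large enough, so that
\[
\frac{1}{n}\log\big(C^{\hat{X}}(n)\big) \leq \frac{1}{n}\log\big(P(n)\big) + \log(K) \leq \frac{\log c}{n} + \frac{m\log n}{n} + \log(K).
\]
The first two terms tend to $0$ as $n\to+\infty$, hence
\[
\limsup_{n\to+\infty} \frac{1}{n}\log\big(C^{\hat{X}}(n)\big) \leq \log(K) =: K_C,
\]
which is finite and positive (enlarging $K$ if necessary to ensure $K>1$). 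This is exactly the claim, recalling that $C_n(X)$ in the statement denotes $C^{\hat{X}}(n)$.

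The only point worth a remark is that the polynomial bound of Proposition~\ref{prop.Sebastian2} is the substantive input — it is what rules out a super-exponential growth coming, a priori, from the combinatorics of the four superimposed layers of the Aubrun--Sablik construction — whereas the passage from $P(n)K^n$ to a finite exponential growth rate is entirely routine. The same argument, applied verbatim to Proposition~\ref{prop.Sebastian3} in place of Proposition~\ref{prop.Sebastian2}, yields item~2 of Lemma~\ref{lemma:bounds} for the reconstruction function.
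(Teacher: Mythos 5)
Your proposal is correct and is exactly the argument the paper intends: the corollary is stated without proof precisely because it follows immediately from the bound $C^{\hat{X}}(n)=\mathcal{O}(n^2K^n)$ of Proposition~\ref{prop.Sebastian2} by taking logarithms, dividing by $n$, and letting the polynomial contribution vanish. Nothing further is needed.
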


Now we will work on the upper bound for the reconstruction function. We fix a Turing machine $\mathcal{M}$ that enumerates $\tilde{\mathcal{F}}$ see below the set of forbidden words that define $\tilde{X}=\Sigma(\tilde{A},\tilde{\mathcal{F}})$. In general, the reconstruction function $R^{\hat{X}}$ as defined in (\ref{eq.Rkprime}) of the Aubrun-Sablik construction is not computable, but in our construction we can obtain the properties as stated in Proposition~\ref{proposition:algorithm} that we prove below.

\begin{proof}[Proof of Proposition~\ref{proposition:algorithm}]
If the integer $n\geq 1$ is such that $p=N_k$, then $\tilde{\mathcal{F}}'(n)=\tilde{\mathcal{F}}(n)$. We will consider now the case where the integer $n\geq 1$ is such that $p<N_k$. We have obviously $\tilde{\mathcal{F}}(n) \subseteq \tilde{\mathcal{F}}'(n)$. If we assume that $k$ is even, from Notation~\ref{notation:BaseLanguageBis} we have that
\[a_k = a_{k-1}(1^{\ell_{k-1}})^{N_k-2}a_{k-1}\quad \mbox{and}\quad b_k = (b_{k-1})^{N_k}.\]

We set
\[\overleftarrow{a_k}(1) = \overrightarrow{a_k}(1) = a_{k-1},\]
\[\overleftarrow{b_k}(1) = \overrightarrow{b_k}(1) = b_{k-1},\]
\[\overleftrightarrow{1_k}(1) = 1_{k-1}:=1^{\ell_{k-1}} \ \mbox{and} \ \overleftrightarrow{2_k}(1) = 2_{k-1}:=2^{\ell_{k-1}}.\]
Then we define by induction if $2 \leq p<N_k$ then
\[\overleftarrow{a_k}(p) = \overleftarrow{a_k}(p-1) 1_{k-1} = a_{k-1}(1_{k-1})^{p-1}\]
and
\[\overrightarrow{a_k}(p)= 1_{k-1}\overrightarrow{a_k}(p-1) = (1_{k-1})^{p-1} a_{k-1},\]
else \ $\overleftarrow{a_k}(N_k) = \overrightarrow{a_k}(N_k) = a_k$. We also define
\[\overleftarrow{b_k}(p)= \overleftarrow{b_k}(p-1) b_{k-1} = (b_{k-1})^{p},\]
\[\overrightarrow{b_k}(p) = b_{k-1}\overrightarrow{b_k}(p-1) = (b_{k-1})^{p},\]
\[\overleftrightarrow{1_k}(p)=  \overleftrightarrow{1_k}(p-1)1_{k-1}= (1_{k-1})^{p}\]
and
\[\overleftrightarrow{2_k}(p)=  \overleftrightarrow{2_k}(p-1)1_{k-1}= (2_{k-1})^{p}.\]

If $w$ has length less than $p\ell_{k-1}$ and is a sub-word of some $w_1w_2$, say $w_1=a_k$ and $w_2=b_k$, by dragging $w$ from the left end point of $w_1w_2$ to the right end point of $w_1w_2$, the word $w$ appears  successively as a sub-word of $\overleftarrow{a_k}(p+1)$, $\overleftrightarrow{1_k}(p+1)$, $\overrightarrow{a_k}(p+1)$,  $\overrightarrow{a_k}(p+1)\overleftarrow{b_k}(p+1)$, $\overleftarrow{b_k}(p+1)$.  A similar reasoning is also true for $w_1=b_k$ and $w_2=a_k$. We have proved $\tilde{\mathcal{F}}(n) =\tilde{\mathcal{F}}'(n)$.

We have also proved that $\tilde{X}=\Sigma^1(\tilde{\mathcal{A}},\tilde{F})$, because we have proved that it is enough to list all the forbidden words of length $n$ and for that it is sufficient to search in the concatenation of subwords of length $(p+1)\cdot\ell_{k-1}$ as described before. Thus \[\Sigma^1(\tilde{\mathcal{A}},\overline{\mathcal{F}})=\Sigma^1(\tilde{\mathcal{A}},\tilde{\mathcal{F}}).\]
	
To compute the time to enumerate successively the words of $\tilde{\mathcal{F}}(n) $ when $\ell_{k-1} < n  \leq \ell_k$, we produce an algorithm given in Table~\ref{table:algorithm}. The time to read/write on the tapes, to update the words $(\overleftarrow{a_k}(p),\overrightarrow{a_k}(p),\overleftarrow{b_k}(p),\overrightarrow{b_k}(p)$, $\overleftrightarrow{1_k}(p),\overleftrightarrow{2_k}(p))$ by adding a word of length $\ell_{k-1}$, to concatenate two words $\overrightarrow{w_1}\overleftarrow{w_2}$ from that list, and to check that a given word $w$ of length $n$ is a sub-word of $\overrightarrow{w_1}\overleftarrow{w_2}$ is polynomial in $n$. Therefore, the time to enumerate every word up to length $n$ in an alphabet $\tilde{\mathcal{A}}$ is bounded by $P(n)|\tilde{\mathcal{A}}|^n$ where $P(n)$ is a polynomial.
\end{proof}

\begin{table}[p]
	\begin{center}
	\bf A program enumerating the set of forbidden words
	\end{center}
	\begin{adjustbox}{width=\columnwidth,center}
	\begin{tabular}{l}
		\# Initialize $(\ell_{0},\beta_{0},\rho_{0}^A,\rho_{0}^B)$  \\ 
		$(\ell_{-},\beta_{-},\rho^A_{-},\rho^B_{-}) \leftarrow  (2,0,1,1)$   \\
		\# Allocate and Initialize 4 tapes $(a_k,b_k,1_k,2_k)$ \\
		$(a_-,b_-,1_-,2_-)  \leftarrow  (01,02,11,22)$  \\
		\# Allocate and Initialize 6 tapes $(\overleftarrow{a_k}(1),\overrightarrow{a_k}(1),\overleftarrow{b_k}(1),\overrightarrow{b_k}(1),\overleftrightarrow{1_k}(1),\overleftrightarrow{2_k}(1))$ \\
		$(\overleftarrow{a_+}, \overrightarrow{a_+}, \overleftarrow{b_+},\overrightarrow{b_+}, \overleftrightarrow{1_+}, \overleftrightarrow{2_+}) \leftarrow (a_-,a_-,b_-,b_-,1_-,2_-)$  \\
		\# Compute recursively the next length $\ell_1$ \\
		$(\ell_{+}, \beta_{+},\rho^A_{+},\rho^B_{+}) \leftarrow S(\ell_{-},\beta_{-},\rho^A_{-},\rho^B_{-} )$  \\
		$N_+ \leftarrow \ell_+/\ell_-$ ; $parity$ $\leftarrow$ $even$ ; $n \leftarrow 3$ ; $p \leftarrow 2$ \\
		\# Allocate and Initialize  an intermediate tape recording a possibly forbidden word \\
		$w \leftarrow \emptyset$ \\
		while ($n\geq1$) \\
		\quad if ($n =\ell_{+}+1)$ then \\
		\quad\quad \# Remember the previous $(\ell_{k-1},\beta_{k-1},\rho_{k-1}^A,\rho_{k-1}^B)$ and update the new one \\
		\quad\quad $(\ell_{-},\beta_{-},\rho^A_{-},\rho^B_{-}) \leftarrow (\ell_{+}, \beta_{+},\rho^A_{+},\rho^B_{+})$  ; $(\ell_{+}, \beta_{+},\rho^A_{+},\rho^B_{+}) \leftarrow S(\ell_{-},\beta_{-},\rho^A_{-},\rho^B_{-} )$  \\
		\quad\quad \# Remember $(a_{k-1},b_{k-1},1_{k-1}, 2_{k-1})$ \\
		\quad\quad $(a_-,b_-,1_-,2_-)  \leftarrow  (\overleftarrow{a_+},\overleftarrow{b_+},\overleftrightarrow{1_+},\overleftrightarrow{2_+})$ \\
		\quad\quad $N_+ \leftarrow \ell_+/\ell_-$ ; $parity$ $\leftarrow$ Permute($parity$) ; $p \leftarrow 2$ \\
		\quad end if \\
		\quad if ($n = (p-1) \ell_{-}+1$) then \\
		\quad\quad Update $(\overleftarrow{a_+}, \overrightarrow{a_+},\overleftrightarrow{1_+}, \overleftarrow{b_+},\overrightarrow{b_+},\overleftrightarrow{2_+})$ according to parity and the two particular \\
		\quad\quad\quad\quad cases  $p=2$ or $p=N_+$ by concatenating   words from  $(a_-,b_-,1_-,2_-) $ \\
		\quad\quad  \# Build the set of words obtained by concatenating two words of length $\ell_{k-1}$  \\
		\quad\quad $W \leftarrow \{ \overrightarrow{a_+}\overleftarrow{b_+}, \ \overrightarrow{a_+}\overleftrightarrow{2}, \ \overleftrightarrow{1}\overleftarrow{b_+}, \ \overleftrightarrow{1}\overleftrightarrow{2},  \overrightarrow{b_+}\overleftarrow{a_+}, \ \overrightarrow{b_+}\overleftrightarrow{1}, \ \overleftrightarrow{2}\overleftarrow{a_+}, \ \overleftrightarrow{2}\overleftrightarrow{1} \}$ \\
		\quad\quad $p \leftarrow p+1$ \\
		\quad end if \\
		\quad for ($m=0,3^n$ excluded) \\
		\quad\quad $w \leftarrow$ write $m$ in  base 3 with n letters in $\{0,1,2\}$ \\
		\quad\quad $is\_forbidden \leftarrow true$ \\
		\quad\quad for ($w_1w_2 \in W$) \\ 
		\quad\quad\quad if ($w$ is a sub-word of $w_1w_2$) then $is\_forbidden \leftarrow false$ \\
		\quad\quad end for \\
		\quad\quad if ($is\_forbidden$) then Print the word $w$ \\
		\quad end for \\
		\quad $n \leftarrow n+1$ \\
		end while \\
	\end{tabular}
\end{adjustbox}
\caption{Algorithm that enumerates $\tilde{\mathcal{F}}$.} \label{table:algorithm}
\end{table}

\vspace{0.5cm}

Denote by $R^{\tilde{X}} \colon \NN \to \NN$ the reconstruction functions of $\tilde{X}$ given $\tilde{\mathcal{F}}$. From Lemma~\ref{lemma.onedimensional} we know there exists a constant $C_1>0$ such that $R^{\tilde{X}}(n)\leq C_1 n$.

For $n \in \NN$, let $N = 2n+1$ be the length of the sides of the square $B_n:=\llbracket-n,n\rrbracket^2\subset \ZZ^2$, and let $k\in \NN$ such that $4^{k-1} < N \leq 4^{k}$. 

As before, let $\hat{X}=\Sigma(\hat{\mathcal{A}},\hat{\mathcal{F}})$ be the $\ZZ^2$-SFT in the Aubrun-Sablik construction associated to $\tilde{X}$ and the Turing machine $\mathcal{M}$. Now we will give estimates on the reconstruction function $R^{\hat{X}} \colon \NN\to \NN$ of $\hat{X}$ given $\hat{\mathcal{F}}$. Of course, a formal proof of these estimates would require a restatement of the construction of Aubrun-Sablik with all its details, which is out of the scope of this thesis. Instead, we shall argue that the bounds we give suffice, making reference to the properties of the Aubrun-Sablik construction.

A description of $\hat{\mathcal{F}}$ is given in~\cite{AS} in an (almost) explicit manner for all layers except the substitution layer. For the substitution layer, a description of the forbidden patterns can be extracted in an explicit way from the article of Mozes~\cite{Mozes}.

The behavior of layers 2,3 and 4 in the Aubrun-Sablik construction is mostly independent of layer 1, except for the detection of forbidden patterns which leads to the forbidden halting state of the machine in the third layer. Because of that reason the analysis of the reconstruction function $R^{\hat{X}}$ can be split into two parts:
\begin{enumerate}
	\item \textbf{Structural: } Assuming that the contents of the first layer are globally admissible (the configuration in the first layer is an extension of a configuration from $\tilde{X}$), we give a bound that ensures that the contents of layers $2,3$ and $4$ are globally admissible, that is:
	\begin{itemize}
		\item The contents of layer $2$ correspond to a globally admissible pattern in the substitutive subshift and the clock.
		\item The contents of layer $3$ and $4$ correspond to valid  space-time diagrams of Turing machines that correctly align with the clocks.
	\end{itemize}
	\item \textbf{Recursive: } A bound that ensures that the contents of the first layer are globally admissible. This bound will of course depend upon $R^{\tilde{X}}$ and $\tau$.
\end{enumerate}

Finally, we are able to prove the upper bound for the reconstruction function given by Proposition~\ref{prop.Sebastian3}.

\begin{proof}[Proof of Proposition~\ref{prop.Sebastian3}]
Let us begin with the structural part, as it is simpler and does not depend upon $\tilde{X}$. Let $p$ be a pattern with support $B_n$ and assume that the first layer of $p$ is thus globally admissible.
	
From Mozes's construction of SFT extensions for substitutions~\cite{Mozes} it can be checked that any locally admissible pattern of support $B_n$ of Mozes's SFT extension of a primitive substitution (The Aubrun-Sablik substitution is primitive) is automatically globally admissible. Let us take a support large enough such that the second layer of $p$ occurs within four $4^k \times 2^k$ macrotiles of the substitution in any locally admissible pattern of that support.
	
Next, a clock runs on every strip of the Aubrun-Sablik construction. By the previous argument, the largest zone which intersects $p$ in more than one position is of level at most $k$. Therefore its largest computation strip has horizontal length $2^k$. In order to ensure that the clock starts on a correct configuration on every strip contained in $p$, we need to witness this pattern inside a locally admissible pattern which stacks $2^{2^k}+2$ macrotiles of level $k$ vertically. Therefore, the pattern $p$ must occur inside four locally admissible patterns of length $4^k \times 2^k(2^{2^k}+2)$. This ensures that the clocks in $p$ are globally admissible.
	
Finally, if every clock occurring in $p$ starts somewhere, then the contents of the third layer are automatically correct, as they are determined by clock every time it restarts. To check that the fourth layer is correct, we just need extend the horizontal length of our pattern twice, so that the responsibility zone of the largest strip is contained in it. 
	
By the previous arguments, it would suffice to witness $p$ inside a locally admissible pattern which contains in its center a $4 \times 2$ array of macrotiles of size $4^k \times 2^k(2^{2^k}+2)$. As $4^{k-1} < N \leq 4^{k}$, there is a constant $C_0 >0$ such that an estimate for this part of the reconstruction function can be written as \[ R_{\mathrm{Struct}}^{\hat{X}}(n) = \mathcal{O}(\sqrt{n}C_0^{\sqrt{n}}).  \]

Let us now deal with the recursive part. We need to find a bound such that the word of length $N$ occurring in the first layer of $p$ is globally admissible. By definition of $R^{\tilde{X}}$, it suffices to have $p$ inside a pattern with support $B_{R^{\tilde{X}}(N)}$ and check that the first layer is locally admissible with respect to $\tilde{\mathcal{F}}$. In other words, we need to have the Turing machines check all forbidden words of length $R^{\tilde{X}}(N)$ in this pattern. Luckily, the number of steps in order to do this is already computed in Aubrun and Sablik's article. After Fact 4.3 of~\cite{AS} they show that, if $p_0,p_1,\dots,p_r$ are the first $r+1$ patterns enumerated by $\mathcal{M}$, then the number of steps $S(p_0,\dots,p_r)$ needed in a computation zone to completely check whether a pattern from $\{p_0,\dots,p_r\}$ occurs in its responsibility zone of level $m$ satisfies the bound,\[ S(p_0,\dots,p_r) \leq T(p_0,\dots,p_r) + (r+1)\max(|p_0|,\dots,|p_r|)m^22^{3m+1},   \]
where $T(p_0,\dots,p_r)$ is the number of steps needed by $\mathcal{M}$ to enumerate the patterns $p_0,p_1,\dots,p_r$.
	
Specifically in our construction, we may rewrite their formula so that the number $S(R^{\tilde{X}}(N))$ of steps needed to check that all forbidden patterns of length at most $R^{\tilde{X}}(N)$ in a responsibility zone of level $m$ satisfies the bound \begin{align*}
S(R^{\tilde{X}}(N)) & \leq \tau(R^{\tilde{X}}(N)) + |\tilde{\mathcal{A}}|^{R^{\tilde{X}}(N)+1}R^{\tilde{X}}(N)k^22^{3k+1}\\
&  \leq P(n)|\tilde{\mathcal{A}}|^{N} + |\tilde{\mathcal{A}}|^{C_1N + 1}C_1 N m^22^{3m+1}
\end{align*}
Simplifying the above bound, it follows that there exists constants $C_2,C_3>0$ such that
\[ S(R^{\tilde{X}}(N)) \leq C_2 m^22^{3m+C_3N}.    \]
	
As $N$ is constant, it follows that there is a smallest $\bar{m}=\bar{m}(N) \in \NN$ such that $2^{\bar{m}} \geq C_4N$ (so that the tape on the computation zone of level $\bar{m}$ can hold words of size $R_Z(N)$) and such that
\[C_2 \bar{m}^22^{3\bar{m}+C_3N} \leq 2^{2^{\bar{m}}}+2,\]
so that the number $2^{2^{\bar{m}}}+2$ of computation steps in the zone of level $\bar{m}$ is enough to check all the words of size $R^{\tilde{X}}(N)$. It follows that a bound for the recursive part of $R^{\hat{X}}$ is given by
\[ R_{\mathrm{recursive}}^{\hat{X}}(n) = \mathcal{O}(2^{\bar{m}+2^{\bar{m}(N)}}).   \]
	
In order to turn this into an explicit asymptotic expression we need to find a suitable bound for $\bar{m}(N)$. Notice that if $m \geq 6$ we simultaneously have that $m^2 \leq 2^m$ and $4m \leq 2^{m-1}$. We may then write for $m \geq 6$,
\[ C_2 m^22^{3m}e^{C_3N} \leq C_2 2^{4m+C_3N} \leq C_2 2^{C_3N}2^{2^{m-1}}.   \]
Therefore, it suffices to find $\bar{m} =\bar{m}(N)$ such that
\[ C_22^{C_3N} \leq 2^{2^{\bar{m}-1}}.   \]
From here, it follows that there is a constant $C_5 >0$ such that any value of $\bar{m}$ satisfying
\[ \bar{m} \geq C_5 + \log_2(N),    \]
satisfies the above bound. We get that
\[ R_{\mathrm{recursive}}^{\hat{X}}(n) = \mathcal{O}(N 2^{C_5N}) = n4^{C_5n}.   \]
	
Finally, putting together the structural and recursive asymptotics, we obtain that there is a constant $K >0$ such that
\[  R^{\hat{X}}(n) = \mathcal{O}(\max \{ \sqrt{n}C_0^{\sqrt{n}} , \mathcal{O}(n4^{C_5n})\}) = \mathcal{O}(nK^n).    \]
\end{proof}

\begin{corollary}
Under the same hypotheses as in Proposition~\ref{prop.Sebastian3}, there is a constant $K >0$ such that \[ \limsup_{n \to \infty}\frac{1}{n}\log(R_{\hat{X}}(n)) \leq K.  \]
\end{corollary}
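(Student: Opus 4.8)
The plan is to deduce the corollary directly from the asymptotic estimate established in Proposition~\ref{prop.Sebastian3}. That proposition provides a polynomial $P(n)$ and a constant $K_0 > 0$ such that $R^{\hat{X}}(n) = P(n) K_0^{\, n}$; in particular there is a constant $C > 0$ with $R^{\hat{X}}(n) \leq C\, n\, K_0^{\, n}$ for all $n$ large enough.

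First I would take logarithms on both sides of this inequality, obtaining
\[
\log\big(R^{\hat{X}}(n)\big) \leq \log C + \log n + n \log K_0
\]
for all sufficiently large $n$. Dividing through by $n$ gives
\[
\frac{1}{n}\log\big(R^{\hat{X}}(n)\big) \leq \frac{\log C}{n} + \frac{\log n}{n} + \log K_0,
\]
and since $\frac{\log C}{n} \to 0$ and $\frac{\log n}{n} \to 0$ as $n \to \infty$, passing to the limit superior yields
\[
\limsup_{n \to \infty} \frac{1}{n}\log\big(R^{\hat{X}}(n)\big) \leq \log K_0.
\]
Setting $K := \max(1,\log K_0)$ then finishes the argument, since the right-hand side above is bounded by $K$.

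There is no genuine obstacle here: the content of the corollary is entirely contained in Proposition~\ref{prop.Sebastian3}, and the only thing to do is to record that an exponential bound of the form $P(n)K_0^{\,n}$ has exponential growth rate $\log K_0$. The statement is isolated as a separate corollary simply because it is the form of the estimate actually used in the proof of the main Theorem~\ref{theorem.final} (compare item~2 of Lemma~\ref{lemma:bounds}), where one only needs the finiteness of $\limsup_{n}\frac1n\log R^{\hat X}(n)$ rather than the sharper expression $P(n)K_0^{\,n}$.
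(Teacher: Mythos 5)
Your proposal is correct and follows exactly the route the paper intends: the corollary is stated without proof as an immediate consequence of Proposition~\ref{prop.Sebastian3}, and your computation (take logarithms of the bound $P(n)K_0^n$, divide by $n$, note that $\tfrac{\log n}{n}\to 0$) is the standard one-line argument being left implicit. The only detail worth recording is the one you already handled, namely replacing $\log K_0$ by a positive constant when $K_0\leq 1$.
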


\bibliographystyle{acm}
\bibliography{bib}

\end{document}